\newcommand{\myparskip}{1pt}
\newcommand{\dist}{\operatorname{dist}}
\newcommand{\corepath}{\mbox{\sf{core-path}}}
\newcommand{\pquery}{\mbox{\sf{path-query}}\xspace}
\newcommand{\pathquery}{\mbox{\sf{path-query}}\xspace}
\newcommand{\distquery}{\mbox{\sf{dist-query}}\xspace}
\newcommand{\CONN}{\mbox{\sf{CONN}}}
\newcommand{\CONNSF}{\mbox{\sf{CONN-SF}}}
\renewcommand{\path}{\mbox{\sf{path}}}
\newcommand{\EST}{\mbox{\sf{ES-Tree}}\xspace}
\newcommand{\ESTs}{\mbox{\sf{ES-Trees}}\xspace}
\newcommand{\DSP}{\mbox{\sf{Proc-Degree-Pruning}}\xspace}
\newcommand{\PDS}{\mbox{\sf{Proc-Degree-Pruning}}\xspace}
\newcommand{\OK}{\overline{K}}
\newcommand{\deledge}{\mbox{\sf{Proc-Delete-Edge}}\xspace}
\newcommand{\PSP}{\mbox{\sf{Proc-Path-Simplify}}\xspace}
\newcommand{\PCL}{\mbox{\sf{Proc-Construct-Layers}}\xspace}
\newcommand{\PDV}{\mbox{\sf{Proc-Delete-Vertex}}\xspace}
\newcommand{\Lamda}{\Lambda}
\newcommand{\EEST}{\mbox{\sf{Extended ES-Tree}}\xspace}
\newcommand{\EESF}{\mbox{\sf{Extended ES-Forest}}\xspace}
\newcommand{\WSES}{\mbox{\sf{WSES}}\xspace}
\newcommand{\WSESIN}{\mbox{\sf{WSES-Insert}}\xspace}
\newcommand{\twin}{\mbox{\sf{WSES-Twin}}\xspace}
\newcommand{\WSESDEL}{\mbox{\sf{WSES-Delete}}\xspace}
\newcommand{\WSESVIN}{\mbox{\sf{WSES-Vertex-Insert}}\xspace}
\newcommand{\WSESSPLIT}{\mbox{\sf{WSES-Cluster-Split}}\xspace}
\newcommand{\EESTIN}{\mbox{\sf{EEST-Insert}}\xspace}
\newcommand{\EESTDEL}{\mbox{\sf{EEST-Delete}}\xspace}
\newcommand{\EESTSPLIT}{\mbox{\sf{EEST-Split}}\xspace}
\newcommand{\SSSP}{\mbox{\sf{SSSP}}\xspace}
\newcommand{\CONNIN}{\mbox{\sf{CONN-Insert}}}
\newcommand{\CONNDEL}{\mbox{\sf{CONN-Delete}}}
\newcommand{\conn}{\mbox{\sf{conn}}}
\newcommand{\cKRV}{\ensuremath{c_{\mbox{\textup{\tiny{KRV}}}}}\xspace}
\newcommand{\optp}{\mbox{\sf OPT}_{\mbox{\textup{\small{Primal}}}}}
\newcommand{\optd}{\mbox{\sf OPT}_{\mbox{\textup{\small{Dual}}}}}
\newcommand{\costd}{C_{\mbox{\textup{\small{Dual}}}}}
\newcommand{\out}{{\sf out}}
\newcommand{\hG}{\hat G}
\newcommand{\tG}{\tilde G}
\newcommand{\hn}{\check n}
\newcommand{\graph}{\check G}
\newcommand{\ceil}[1]{\ensuremath{\left\lceil#1\right\rceil}}
\newcommand{\floor}[1]{\ensuremath{\left\lfloor#1\right\rfloor}}
\newcommand{\round}{\mathsf{ROUND}}
\def\etal{et al.\xspace}
\newcommand{\mX}{W}
\newcommand{\opt}{\mbox{\sf OPT}}
\newcommand{\set}[1]{\left\{ #1 \right\}}
\newcommand{\mynote}[1]{\textcolor{red}{\sc\bf{[#1]}}}
\newcommand{\iset}{{\mathcal{I}}}
\newcommand{\pset}{{\mathcal{P}}}
\newcommand{\qset}{{\mathcal{Q}}}
\newcommand{\aset}{{\mathcal{A}}}
\newcommand{\cset}{{\mathcal{C}}}
\newcommand{\fset}{{\mathcal{F}}}
\newcommand{\dset}{{\mathcal{D}}}
\newcommand{\kset}{{\mathcal K}}
\newcommand{\rset}{{\mathcal{R}}}
\newcommand{\hset}{{\mathcal{H}}}
\newcommand{\sset}{{\mathcal{S}}}
\newcommand{\be}{\begin{enumerate}}
\newcommand{\ee}{\end{enumerate}}
\newcommand{\bd}{\begin{description}}
\newcommand{\ed}{\end{description}}
\newcommand{\bi}{\begin{itemize}}
\newcommand{\ei}{\end{itemize}}
\newtheorem{theorem}{Theorem}[section]
\newtheorem{lemma}[theorem]{Lemma}
\newtheorem{observation}[theorem]{Observation}
\newtheorem{corollary}[theorem]{Corollary}
\newtheorem{claim}[theorem]{Claim}
\newtheorem*{definition}{Definition.}
\newenvironment{proof}{\par \smallskip{\bf Proof:}}{\hfill\stopproof}
\newenvironment{proofof}[1]{\par \smallskip{\bf Proof of #1.}}%
       {\hfill\stopproof}
\def\stopproof{\square}
\def\square{\vbox{\hrule height.2pt\hbox{\vrule width.2pt height5pt \kern5pt
\vrule width.2pt} \hrule height.2pt}}
\newenvironment{prog}[1]{
\begin{minipage}{5.8 in}
\begin{center}
{\sc #1}
\end{center}
}
{
\end{minipage}}
\newcommand{\program}[2]{\fbox{\vspace{2mm}\begin{prog}{#1} #2 \end{prog}\vspace{2mm}}}
\renewcommand{\phi}{\varphi}
\newcommand{\eps}{\epsilon}
\newcommand{\half}{\ensuremath{\frac{1}{2}}}
\newcommand{\poly}{\operatorname{poly}}
\newenvironment{properties}[2][0]
{
\begin{enumerate} \setcounter{enumi}{#1}}{\end{enumerate}}
\begin{document}

\title{A New Algorithm for Decremental Single-Source Shortest Paths with Applications to Vertex-Capacitated Flow and Cut Problems\footnote{An extended abstract is to appear in STOC 2019}}


\author{Julia Chuzhoy\thanks{Toyota Technological Institute at Chicago. Email: {\tt cjulia@ttic.edu}. Part of the work was done while the author was a Weston visiting professor in the Department of Computer Science and Applied Mathematics, Weizmann Institute. Supported in part by NSF grant CCF-1616584.}  \and Sanjeev Khanna\thanks{University of Pennsylvania, Philadelphia, PA 19104. Email: {\tt sanjeev@cis.upenn.edu}. Supported in part by the National Science Foundation grant CCF-1617851.}
}

\begin{titlepage}
\maketitle

\thispagestyle{empty}

\begin{abstract}
We study the vertex-decremental Single-Source Shortest Paths (\SSSP) problem: given an undirected graph $G=(V,E)$ with lengths $\ell(e)\geq 1$ on its edges and a source vertex $s$, we need to support (approximate) shortest-path queries in $G$, as $G$ undergoes vertex deletions. In a shortest-path query, given a vertex $v$, we need to return a path connecting $s$ to $v$, whose length is at most $(1+\eps)$ times the length of the shortest such path, where $\eps$ is a given accuracy parameter. The problem has many applications, for example to flow and cut problems in vertex-capacitated graphs.

Decremental \SSSP is a fundamental problem in dynamic algorithms that has been studied extensively, especially in the more standard edge-decremental setting, where the input graph $G$ undergoes edge deletions. The classical algorithm of Even and Shiloach supports exact shortest-path queries in $O(mn)$ total update time. A series of recent results have improved this bound for approximate \SSSP to $O(m^{1+o(1)}\log L)$, where $L$ is the largest length of any edge. However, these improved results are randomized algorithms that assume an \emph{oblivious} adversary. To go beyond the oblivious adversary restriction, 
recently, Bernstein, and Bernstein and Chechik designed deterministic algorithms for the problem, with total update time $\tilde O(n^{2}\log L)$, that by definition work against an adaptive adversary. 
Unfortunately, these deterministic algorithms introduce a new limitation, namely, they can only return the approximate length of a shortest path, and not the path itself. Many applications of the decremental \SSSP problem, including the ones considered in this paper, crucially require both that the algorithm returns the approximate shortest paths themselves and not just their lengths, and that it works against an adaptive adversary.

Our main result is a randomized algorithm for vertex-decremental \SSSP with total expected update time $O(n^{2+o(1)}\log L)$, that responds to each shortest-path query in $O(n\log L)$ time in expectation, returning a $(1+\eps)$-approximate shortest path. The algorithm works against an adaptive adversary. The main technical ingredient of our algorithm is an $\tilde O(|E(G)|+ n^{1+o(1)})$-time algorithm to compute a \emph{core decomposition} of a given dense graph $G$, which allows us to compute short paths between pairs of query vertices in $G$ efficiently. We believe that this core decomposition algorithm may be of independent interest.

 We use our result for vertex-decremental \SSSP to obtain $(1+\eps)$-approximation algorithms for maximum $s$-$t$ flow and minimum $s$-$t$ cut in vertex-capacitated graphs, in expected time $n^{2+o(1)}$, and an $O(\log^4n)$-approximation algorithm for the vertex version of the sparsest cut problem with expected running time $n^{2+o(1)}$. These results improve upon the previous best known results for these problems in the regime where $m= \omega(n^{1.5 + o(1)})$.
\end{abstract}

\end{titlepage}

\section{Introduction}

In this paper we consider the {\em vertex}-decremental Single-Source Shortest
Paths (SSSP) problem in edge-weighted undirected graphs, and its applications to several cut and flow problems in
vertex-capacitated graphs. In the vertex-decremental SSSP, we are given an undirected graph 
$G$ with lengths $\ell(e)\geq 1$ on its edges, and a source vertex $s$. The goal is to support (approximate) shortest-path queries from the source vertex $s$, as the graph $G$ undergoes a sequence of online adversarial vertex deletions. 
We consider two types of queries: in a \pathquery, we are given a query vertex $v$, and the goal is to return a path connecting $s$ to $v$, whose length is at most $(1+\eps)$ times the length of the shortest such path, where $\eps$ is the given accuracy parameter. In a \distquery, given a vertex $v$, we need to report an (approximate) distance from $s$ to $v$. 
 We will use the term \emph{exact} \pathquery when the algorithm needs to report the shortest $s$-$v$ path, and \emph{approximate} \pathquery when a $(1+\eps)$-approximate shortest $s$-$t$ path is sufficient. We will similarly use the terms of exact and approximate \distquery. We also distinguish between an \emph{oblivious adversary} setting, where the sequence of vertex deletions is fixed in advance, and \emph{adaptive adversary}, where each vertex in the deletion sequence may depend on the responses of the algorithm to previous queries.

A closely related variation of this problem, that has been studied extensively, is the {\em edge}-decremental SSSP problem, where the graph $G$ undergoes edge deletions and not vertex deletions.
The edge-decremental SSSP captures the vertex-decremental version as a special case, and has a long history with 
many significant developments just in the past few years. We start by briefly reviewing the work on edge-decremental SSSP, focusing primarily on undirected graphs. 
The two parameters of interest are the {\em total update time}, defined as the total time spent by the algorithm on maintaining its data structures over the entire sequence of deletions, and \emph{query time}, defined as the time needed to respond to a single \pathquery or \distquery.  A classic result of Even and Shiloach~\cite{EvenS,Dinitz,HenzingerKing} gives an algorithm that supports exact \pathquery and \distquery with only $O(mn)$ total
update time over all edge deletions, with $O(1)$ query time for \distquery and $O(n)$ query time for \pathquery. 
While the $O(mn)$ update time represents a significant improvement over the naive algorithm that simply recomputes a shortest path tree after each edge deletion, it is far from the near-linear total update time results that are known for many other decremental problems in undirected graphs.
It remained an open problem for nearly 3 decades to improve upon the update time of the algorithm. Roditty and Zwick~\cite{RodittyZ11} highlighted a fundamental obstacle to getting past the $O(mn)$ time barrier using combinatorial approaches, even for unweighted undirected graphs, by showing that the long-standing problem of designing fast combinatorial algorithms for Boolean matrix multiplication can be reduced to the exact edge-decremental SSSP. Furthermore, in a subsequent work, Henzinger {\em et al.}~\cite{HenzingerKNS15} showed that, assuming the online Boolean matrix-vector multiplication conjecture, the $O(mn)$ time barrier for exact edge-decremental SSSP holds even for arbitrary algorithms for the problem. The obstacles identified by these conditional results, however, only apply to supporting {\bf exact} \distquery. Essentially all subsequent work on edge-decremental SSSP has thus focused on the task of supporting approximate \pathquery and \distquery. 
In the informal discussion below we implicitly assume that the accuracy parameter $\eps$ is a constant and thus ignore the dependence on this parameter in the time bounds (but we do not make this assumption in our algorithms, and the formal statements of our results give explicit dependence on $\eps$).

 Bernstein and Roditty~\cite{BernsteinR11} made the first major progress in breaking the $O(mn)$ update time barrier, by showing an algorithm that supports approximate \distquery in undirected unweighted graphs with $n^{2+o(1)}$ total update time, and $O(1)$ query time. Subsequently, Henzinger, Krinninger, and Nanogkai~\cite{HenzingerKN14_soda} improved this update time to $O(n^{1.8 + o(1)} + m^{1+o(1)})$, and shortly afterwards, the same authors~\cite{HenzingerKN14_focs} extended it to arbitrary edge lengths and improved it further to an essentially optimal total update time of $O(m^{1+o(1)} \log L)$ where $L$ is the largest length of an edge. All three algorithms are randomized, and moreover, they assume that the edge deletion sequence is given by an oblivious adversary. 
 For many applications, including the ones considered in this paper, it is crucial that the algorithm can handle an {\em adaptive} adversary, and support \pathquery. For instance, fast approximation schemes for computing a maximum multicommodity flow in a graph (see, for instance,~\cite{GK98, Fleischer00}) rely on a subroutine that can identify an approximate shortest $s$-$t$ path under suitably chosen edge lengths, and pushing flow along such a path. The edge lengths are then updated for precisely the edges lying on the path (as we will show later, such updates can be modeled by the deletion of edges or vertices on the path). Thus, the edges that are deleted  at any step strongly depend on the responses to the approximate path queries from previous steps. Moreover, these applications require that we obtain the actual approximate shortest paths themselves, and not just approximate shortest distances.

The goal of eliminating the oblivious adversary restriction initiated a search for deterministic edge-decremental SSSP algorithms, which, by definition, can handle adaptive deletion sequences. Bernstein and Chechik~\cite{BernsteinChechik} gave the first deterministic algorithm to break the $O(mn)$ total update time barrier. Their algorithm achieves a total update time of $\tilde{O}(n^2)$ and an $O(1)$ query time for approximate $\distquery$. In a subsequent work~\cite{BernsteinChechikSparse}, they improved this bound further for the regime of sparse graphs, obtaining a total update time of $\tilde{O}(n^{5/4}\sqrt{m}) = O(m n^{3/4})$, keeping the query time of $O(1)$ for approximate \distquery. Both these results required that the underlying graph is undirected and {\em unweighted}, that is, all edge lengths are unit. In a further progress, Bernstein~\cite{Bernstein} extended these results to edge-weighted undirected graphs obtaining a total update time of $\tilde{O}(n^2 \log L)$, where $L$ is the largest edge length, while still keeping the query time of $O(1)$ for approximate \distquery. While all these results successfully eliminated the oblivious adversary restriction 
required by the previous works that achieved better than an $O(mn)$ total update time, the core approach used in these works introduced another limitation: as noted by~\cite{Bernstein}, all three results only support approximate \distquery, but not approximate \pathquery. 

 At a high level, the approach used in these results is based on partitioning the edges of the underlying graph into a {\em light} sub-graph, where the average degree is small and a {\em heavy} sub-graph, where the degree of each vertex is high, say at least $\tau$. Any shortest $s$-$v$ path can be decomposed into segments that alternately traverse through the light and the heavy graph. The shortest path segments traversing through the light graph are explicitly maintained using the approach of Even and Shiloach~\cite{EvenS,Dinitz,HenzingerKing}, exploiting the fact that the edge density is low in the light graph. The shortest path segments traversing through the heavy graph, on the other hand, are not maintained explicitly. Instead, it is observed that any shortest  $s$-$v$ path may contain at most $O(n/\tau)$ edges from the heavy graph, so they do not contribute much to the path length.
This implicit guarantee on the total length of segments traversing the heavy graph suffices for obtaining an estimate on the shortest path length by only maintaining shortest paths in the light graph. However, it leaves open the task of finding these segments themselves.

Our main technical contribution is to design an algorithm that  allows us to support approximate \pathquery against an adaptive adversary, by explicitly maintaining short paths in the heavy graph. Specifically, we design an algorithm that, given a pair of vertices $u,u'$ that belong to the same connected component $C$ of the heavy graph, returns a short path connecting $u$ to $u'$ in $C$, where the length of the path is close to the implicit bound that was used in~\cite{BernsteinChechik,Bernstein}.

Formally,  assume that we are given a {\bf simple} undirected graph $G$ with a source vertex $s$ and lengths $\ell(e)>0$ on edges $e\in E(G)$, that undergoes {\bf vertex} deletions. Throughout the algorithm,  for every pair $u,v$ of vertices, the distance $\dist(u,v)$ between them is the length of the shortest path from $u$ to $v$ in the current graph $G$, using the edge lengths $\ell(e)$. We also assume that we are given an error parameter $0<\eps<1$.
We design an algorithm that supports approximate single-source shortest-path queries, denoted by $\pquery(v)$. The query receives as input a vertex $v$, and returns a path connecting $s$ to $v$ in the current graph $G$, if such a path exists, such that the length of the path is at most $(1+\eps)\dist(s,v)$. Our main result for vertex-decremental SSSP is summarized in the following theorem.


\begin{theorem}\label{thm: main for SSSP}
There is a randomized algorithm, that, given a parameter $0<\eps<1$, and a simple undirected $n$-vertex graph $G$ with lengths $\ell(e)>0$ on edges $e\in E(G)$, together with a special source vertex $s\in V(G)$, such that $G$ undergoes vertex deletions, supports queries $\pquery(v)$. For each query $\pquery(v)$, the algorithm returns a path from $s$ to $v$ in $G$, if such a path exists, whose length is at most $(1+\eps)\dist(s,v)$. The algorithm works against an {\bf adaptive} adversary. The total expected running time of the algorithm is $O\left(\frac{n^{2+o(1)}\cdot \log^3(1/\eps)\cdot \log L}{\eps^2}\right)$, where $L$ is the ratio of  largest to smallest edge length $\ell(e)$, and each query is answered in $O(n\cdot \poly\log n\cdot \log L\cdot \log (1/\eps))$ time in expectation.
\end{theorem}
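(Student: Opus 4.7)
The plan is to extend the light/heavy partition framework previously used by Bernstein and by Bernstein and Chechik for approximate \distquery\ so that it returns actual paths, handles vertex deletions, and works against an adaptive adversary. The first step is a reduction to a canonical single-scale sub-problem: bucket edge lengths into $O(\log L)$ geometric classes and round within each class so that one only needs to handle an essentially unit-length simple graph up to some threshold distance $D = n^{o(1)}/\eps$. This rounding contributes the $\log L$ and $1/\eps^2$ factors, and vertex deletions are converted into edge deletions where needed by a standard vertex-splitting gadget.

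For each canonical instance I would maintain at every epoch a \emph{light subgraph} $G_\ell$ consisting of all vertices whose current degree is at most a threshold $\tau = n^{o(1)}$. Since $|E(G_\ell)|\le n\tau$, a truncated \EST\ rooted at $s$ in $G_\ell$ up to depth $D$ takes $n\cdot\tau\cdot D = n^{2+o(1)}$ total update time and lets us recover, for any $v$, an $s$-$v$ path in $G_\ell$ of length $\dist_{G_\ell}(s,v)$ in $O(D)$ time. The structural lemma inherited from prior work is that some $(1+\eps)$-approximate shortest $s$-$v$ path in $G$ uses at most $O(n/\tau)$ edges from the heavy subgraph in total, so the heavy portion is absorbed into the approximation guarantee, \emph{provided} that we can actually exhibit it.

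The central technical challenge, and the main novelty of the paper, is producing the heavy segments themselves. Here I would invoke the paper's core decomposition algorithm: on each connected component $C$ of the heavy subgraph (automatically dense, since every heavy vertex has degree $\ge \tau$), compute in $\tilde O(|E(C)|+n^{1+o(1)})$ time a hierarchical partition of $C$ into cores, together with a data structure that, given any two vertices $u,u'$ in a common core, returns in $\tilde O(n)$ time a $u$-$u'$ path of length $n^{o(1)}$. A \pquery\ is then answered by walking from $v$ toward $s$ in the light \EST, and whenever a heavy component must be crossed, invoking the core \pathquery\ on that component to bridge the entry and exit vertices, and concatenating the pieces. Since there are at most $O(n/\tau) = n^{o(1)}$ heavy bridges and each has length $n^{o(1)}$, the total query time is $O(n\cdot\poly\log n\cdot\log L\cdot\log(1/\eps))$, and the final path is a $(1+\eps)$-approximate shortest $s$-$v$ path.

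The hardest part will be the adaptive-adversary analysis. Since \EST\ and the light-side structures are deterministic, the only randomness lives inside the core decomposition, and the paths it returns could in principle leak information that the adversary uses to drive worst-case rebuilds. The strategy I would adopt is to rebuild each core, with fresh randomness, whenever the number of vertex deletions from it crosses a threshold proportional to $|V(C)|/n^{o(1)}$. A potential-function amortization then bounds the total rebuild work by $\sum_{\text{rebuilds}}\tilde O(|E(C)|+n^{1+o(1)}) = n^{2+o(1)}$, since each vertex can belong to only $n^{o(1)}$ cores across the hierarchy and is rebuilt $\tilde O(1)$ times per deletion epoch. Combining this with the light-side $n^{2+o(1)}$ budget and the $\log^3(1/\eps)\log L/\eps^2$ overhead from the scale reduction yields the total expected update time and query time promised in Theorem~\ref{thm: main for SSSP}.
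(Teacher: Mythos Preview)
Your high-level picture is right---light/heavy split, an \EST\ on the light side, core decomposition on the heavy side, and path concatenation for queries---but several key mechanisms are misidentified, and as written the running-time budget does not close.

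First, the scale reduction does not produce a unit-length graph with $D=n^{o(1)}/\eps$. After fixing a distance scale and rescaling, the paper works with $D=\Theta(n/\eps)$ and integer edge lengths in $[1,4D]$; edges are then partitioned into length classes $E_i$ with $2^i\le \ell(e)<2^{i+1}$, and each class gets its own threshold $\tau_i\propto 2^i$. A single threshold $\tau$ and a standard truncated \EST\ do not suffice: the light graph is dense in high-length classes, and one needs Bernstein's weight-sensitive \EST\ (Theorem~\ref{thm: main for maintaining a light graph}) so that a class-$i$ edge contributes $O(D/(\eps\,2^i))$ rather than $O(D)$ to the update time. Also, there is no ``vertex-splitting gadget'': vertex deletions are handled as vertex deletions throughout, and the paper explicitly explains why the approach \emph{cannot} tolerate edge deletions (a phase can absorb only $h/\Delta$ vertex deletions, not $\Theta(h^2)$ edge deletions).

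Second, the heavy-side accounting is missing the layering structure and the correct robustness argument. One cannot just rebuild a core ``with fresh randomness'' when too many of its vertices are deleted: if vertex degrees in $G^H$ range from $\tau$ up to $n$, a single core decomposition either rebuilds too often or is too expensive per rebuild. The paper partitions $G^H$ into layers $\tilde\Lambda_1,\dots,\tilde\Lambda_r$ with geometrically decreasing degree thresholds $h_j$, computes an $h_j$-core decomposition of each $\tilde\Lambda_j$, and recomputes layers $j,\dots,r$ every $h_j/\Delta$ vertex deletions; this is what makes the total rebuild cost $n^{2+o(1)}$. Finally, the adaptive-adversary guarantee does not come from refreshing randomness. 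Once a core decomposition is built, the $h$-universal property and the expander witness $W^K$ survive \emph{any} $h/\Delta$ vertex deletions deterministically; the only randomness is in the initial construction via the cut-matching game, and the algorithm is Las Vegas because a failed witness is detected at query time (Theorem~\ref{thm: maintaining a core}) and triggers a full rebuild, an event of probability $1/\poly(n)$.
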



We emphasize that the algorithm is Las Vegas: that is, it always returns a path with the required properties, but its running time is only bounded in expectation. The adversary is allowed to view the complete state of the algorithm, that is, the contents of all its data structures.

One of the main technical contributions of our algorithm is a \emph{core decomposition} of dense graphs. Suppose we are given an $n$-vertex graph $G$, such that every vertex in $G$ has degree at least $h$, where $h$ is sufficiently large, say $h\geq n^{1/\log\log n}$. Informally, a \emph{core} $K$ is an expander-like sub-graph of $G$, such that every vertex of $K$ has at least $h^{1-o(1)}$ neighbors in $K$. The ``expander-like'' properties of the core ensure that, even after $h^{1-o(1)}$ vertex deletions, given any pair $u,u'$ vertices of $K$, we can efficiently find a short path connecting $u$ to $u'$ in $K$ (the length of the path depends on the balancing of various parameters in our algorithm, and is $n^{o(1)}$). A \emph{core decomposition} of $G$ consists of a collection $K_1,\ldots,K_r$ of disjoint cores in $G$, such that $r\leq n/h^{1-o(1)}$. Additionally, if we denote by $U$ the set of vertices of $G$ that do not belong to any core, then we require that it is an $h$-universal set: that is, even after $h^{1-o(1)}$ vertices are deleted from $G$, every surviving vertex of $U$ can reach one of the cores through a path of length $O(\log n)$. We show a randomized algorithm that with high probability computes a valid core decomposition in a given graph $G$ in time $\tilde O(|E(G)|+ n^{1+o(1)})$; see Section~\ref{sec: techniques} for a more detailed overview of our techniques.

While the result above leaves open the question if a similar algorithm can also be obtained for edge-decremental SSSP, for many cut and flow problems on vertex-capacitated graphs, the vertex-decremental SSSP suffices as a building block. We describe next some of these applications. We note here that the idea of using dynamic graph data structures to speed up cut and flow computations is not new. In particular, Madry~\cite{Madry10_stoc} systematically explored this idea for the maximum multicommodity flow problem and the concurrent flow problem, significantly improving the previous best known results for these problems.

Our first application shows that there is an ${O}( n^{2+o(1)})$-time algorithm for computing approximate maximum $s$-$t$ flow and minimum $s$-$t$ cut in vertex-capacitated undirected graphs. 
For approximate maximum $s$-$t$ flow problem in edge-capacitated undirected graphs, a sequence of remarkable developments incorporating ideas from continuous optimization to speed-up maximum flow computation has culminated in an $\tilde{O}(m/\eps^2)$-time algorithm for computing a $(1+\eps)$-approximate flow~\cite{ChristianoKMST11, LeeRS13, Sherman13, KelnerLOS14, Peng16}. 
We refer the reader to~\cite{Madry18_ICM} for an excellent survey of these developments.
However, no analogous results are known for maximum flow in vertex-capacitated undirected graphs. The main technique for solving the vertex-capacitated version appears to be via the standard reduction to the edge-capacitated directed case, and relying on fast algorithms for maximum $s$-$t$ flow problem in edge-capacitated directed graphs. Two recent breakthrough results for exact maximum $s$-$t$ flow
in edge-capacitated directed graphs include an $\tilde{O}(m \sqrt{n} \log^{O(1)} C)$ time algorithm by Lee and Sidford~\cite{LeeS14}, and an $\tilde{O}(m^{10/7} \log C)$ time algorithm by Madry~\cite{Madry16}; here $C$ denotes the largest integer edge capacity. The two bounds are incomparable: the former bound is preferable for dense graphs, and the latter for sparse graphs. 
For approximate maximum $s$-$t$ flow problem in edge-capacitated directed graphs, approaches based on the primal-dual framework~\cite{GK98, Fleischer00}) (or equivalently, as an application of the multiplicative weights update method~\cite{AroraHK12}) can be used to compute a $(1+\eps)$-approximate $s$-$t$ flow in $f(n,m,\eps)O(m/\eps^2)$ time where $f(n,m,\eps)$ denotes the time needed to compute a $(1+\eps)$-approximate shortest path from $s$ to $t$. Our approach is based on this connection between approximate shortest path computations and approximate flows, and we obtain the following results.

\begin{theorem}
\label{thm: informal max s-t flow and min s-t cut}
There is a randomized algorithm, that, given a simple undirected graph $G=(V,E)$ with capacities $c(v)\geq 0$ on its vertices, a source $s$, a sink $t$,  and an accuracy parameter $\epsilon \in (0,1]$,  computes 
 a $(1+ \eps)$-approximate maximum  $s$-$t$ flow and a $(1+ \eps)$-approximate minimum vertex $s$-$t$ cut in ${O}(n^{2+o(1)}/\poly(\eps))$ expected time.
\end{theorem}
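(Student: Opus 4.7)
The plan is to apply the Garg--K{\"o}nemann / multiplicative-weights framework for approximate maximum flow in the vertex-capacitated setting, using Theorem~\ref{thm: main for SSSP} as the inner shortest-path oracle. We maintain a length $\ell(v)$ on every vertex $v\neq s,t$, initialized to $\ell(v)=\delta/c(v)$ for $\delta=(n/\eps)^{-\Theta(1/\eps)}$. In each iteration, we use the decremental SSSP data structure to obtain a $(1+\eps)$-approximate shortest $s$-$t$ path $P$ with respect to the current vertex lengths; here vertex lengths are realized in the auxiliary graph via a standard split-vertex gadget that places half of $\ell(v)$ on each auxiliary edge incident to $v$, so that any path crossing $v$ accrues exactly $\ell(v)$. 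We then route $f_P=\min_{v\in P}c(v)$ units of flow along $P$, multiplicatively update $\ell(v)\gets \ell(v)\cdot(1+\eps f_P/c(v))$ for every internal $v\in P$, and halt once the $s$-$t$ distance exceeds $1$. Standard Garg--K{\"o}nemann analysis then implies that the accumulated flow, rescaled by $\log_{1+\eps}((1+\eps)/\delta)$, is a $(1+O(\eps))$-approximate maximum $s$-$t$ flow, and that the cheapest level set of $\ell(\cdot)$ is a $(1+O(\eps))$-approximate minimum vertex $s$-$t$ cut; rescaling $\eps$ by a constant absorbs the lost factor.

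The continuous length-increases must be reduced to the vertex-deletion interface demanded by Theorem~\ref{thm: main for SSSP}. I would do this by discretizing: for each vertex $v$, create in advance $\Lambda=\Theta(\eps^{-1}\log(n/\eps))$ intermediate ``level'' copies $v^{(0)},\dots,v^{(\Lambda-1)}$, arranged in a ladder-style gadget so that the current cheapest traversal through $v$ has length $(1+\eps)^i\ell_0(v)$, where $v^{(i)}$ is the cheapest undeleted level copy. Each time $\ell(v)$ crosses the next threshold we delete the current level copy, causing the next one to activate. Because each iteration saturates the bottleneck vertex on $P$ (by choice of $f_P$), and because each vertex can be saturated at most $\Lambda$ times before its length forces it out of every cheap path, the total number of iterations, and the total number of deletions in the auxiliary graph, are both bounded by $\tilde O(n/\eps^2)$. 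Crucially, Theorem~\ref{thm: main for SSSP} returns the path $P$ itself, which is necessary to identify the internal vertices whose lengths must be updated and whose level copies may need to be deleted.

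With $L=(1/\delta)^{O(1)}$ so that $\log L=\tilde O(1/\eps)$, and with the auxiliary graph having $\tilde O(n/\eps^2)$ vertices, plugging into Theorem~\ref{thm: main for SSSP} yields a total update time of $n^{2+o(1)}/\poly(\eps)$, while the $\tilde O(n/\eps^2)$ path queries contribute a further $\tilde O(n^2/\eps^2)$, for an overall expected running time of $n^{2+o(1)}/\poly(\eps)$. The $(1+\eps)$-approximate minimum vertex cut is then extracted in $O(n\log n)$ additional time by scanning the final lengths for the cheapest level set, using that $\sum_v c(v)\ell(v)$ upper-bounds the integral of cut capacities across all thresholds, so the cheapest threshold is automatically $(1+O(\eps))$-optimal.

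The main obstacle is the adaptive interaction between the MWU driver and the SSSP oracle. The sequence of vertex deletions fed into the auxiliary graph is determined by which vertices have been bottlenecks on previously returned paths; since those paths depend on the oracle's internal randomness and on its past responses, the deletion sequence is genuinely adaptive, and a randomized SSSP data structure assuming an oblivious adversary would not suffice. It is precisely the adaptive-adversary guarantee of Theorem~\ref{thm: main for SSSP}, together with its ability to return explicit paths (not merely approximate distances), that makes this composition valid; the remainder of the argument is a careful but standard bookkeeping of phases, deletions, and queries.
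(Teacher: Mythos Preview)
Your proposal is correct and follows essentially the same approach as the paper: both run the Garg--K{\"o}nemann/Fleischer multiplicative-weights scheme on vertex lengths, discretize each vertex's length into $\poly(1/\eps)\cdot\log n$ geometric levels realized as separate copies, convert vertex lengths to edge lengths by splitting the weight evenly on incident edges, and simulate each length increase by deleting the appropriate level copy so that Theorem~\ref{thm: main for SSSP} (whose adaptive-adversary and explicit-path guarantees you correctly identify as essential) serves as the approximate shortest-path oracle; the min-cut is then read off from the dual by threshold rounding. The only discrepancies are cosmetic (the paper initializes $\ell(v)=\delta$ rather than $\delta/c(v)$, tracks the best dual ratio across all iterations rather than only the final one, and has $\Lambda=\Theta(\eps^{-2}\log n)$ levels whereas your stated $\Lambda=\Theta(\eps^{-1}\log(n/\eps))$ is off by one factor of $\eps$---note this is inconsistent with your own vertex count of $\tilde O(n/\eps^2)$), none of which affect the $n^{2+o(1)}/\poly(\eps)$ bound.
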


Our proof closely follows the analysis of the primal-dual approach for maximum multicommodity flow problem as presented in~\cite{GK98, Fleischer00}); this algorithm simultaneously outputs an approximate maximum $s$-$t$ flow and an approximate fractional minimum $s$-$t$ cut.
 The main primitive needed for this framework is the ability to compute a $(1+\eps)$-approximate shortest source-sink path in a vertex-weighted graph that is undergoing weight increases. We show that Theorem~\ref{thm: main for SSSP} can be used to implement these dynamic approximate shortest path computations in ${O}(n^{2+o(1)}/\poly(\eps))$  total expected time. The fractional $s$-$t$ cut solution can be rounded in $O(m)$ time by using the standard random threshold rounding. The running time obtained in Theorem~\ref{thm: informal max s-t flow and min s-t cut} outperforms previously known bounds in the regime of $m = \omega(n^{1.5 + o(1)})$. 
 
Our second application is a new algorithm for approximating vertex sparsest cut in undirected graphs.
A \emph{vertex cut} in a graph $G$ is  a partition $(A,X,B)$ of its vertices, so that there is no edge from $A$ to $B$ (where $A$ or $B$ may be empty). The \emph{sparsity} of the cut is $\frac{|X|}{\min\set{|A|,|B|}+|X|}$.  In the vertex sparsest cut problem, the goal is to compute a vertex cut of minimum sparsity.


\begin{theorem}
\label{thm: informal sparsest cut}
There is a randomized algorithm, that, given a simple undirected graph $G=(V,E)$,  computes an $O(\log^4 n)$-approximation to the vertex sparsest cut problem in ${O}(n^{2+o(1)})$ expected time.
\end{theorem}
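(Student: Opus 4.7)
The plan is to apply the cut-matching game framework of Khandekar, Rao, and Vazirani (KRV), adapted to the vertex-capacitated setting, using Theorem~\ref{thm: informal max s-t flow and min s-t cut} as the inner flow/cut primitive. The algorithm runs $T = O(\log^2 n)$ rounds. In each round, a \emph{cut player} produces a bipartition $(S, \bar S)$ of $V$ with $|S|=|\bar S|=n/2$, and a \emph{matching player} either embeds a perfect matching between $S$ and $\bar S$ into $G$ via vertex-disjoint paths, or returns a sparse vertex cut separating $S$ from $\bar S$. We use the standard KRV cut player, whose updates to the incrementally built embedded graph cost only $\tilde O(n)$ per round.

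The matching player is implemented by a single vertex-capacitated $s$-$t$ max-flow/min-cut computation: we attach a super-source $s$ to every vertex of $S$ and a super-sink $t$ to every vertex of $\bar S$, set unit vertex capacities on $V$, and invoke Theorem~\ref{thm: informal max s-t flow and min s-t cut} with a constant accuracy $\eps$. If the approximate max-flow is large, standard flow-decomposition yields a set of vertex-disjoint paths whose endpoints define an (approximately perfect) matching between $S$ and $\bar S$; if it is small, the approximate min-cut returns a vertex set $X$ of sparsity $O(|X|/n)$ that we convert to an $(A,X,B)$-partition in $O(m)$ time via random-threshold rounding on the distance labels produced by the flow computation.

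For the analysis, each round costs $O(n^{2+o(1)}/\poly(\eps))$ by Theorem~\ref{thm: informal max s-t flow and min s-t cut}, so the total running time is $O(\log^2 n)\cdot n^{2+o(1)} = n^{2+o(1)}$. On the approximation side, the KRV potential argument guarantees that after $T=O(\log^2 n)$ successful rounds the union of embedded matchings is a constant-degree vertex expander embedded in $G$ with congestion one, which certifies that the vertex expansion of $G$ is $\Omega(1/\log^2 n)$; conversely, whenever the matching player fails, the returned sparse vertex cut has sparsity at most $O(\log^2 n)$ times the ``balanced-bisection sparsity'' of $G$, and a further factor of $O(\log^2 n)$ is incurred when relating the balanced bisection value to the sparsity of an arbitrary, possibly highly unbalanced optimal vertex cut (the same two sources of loss that appear in the edge case, each manifested through vertex capacities). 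Multiplying these two factors yields the claimed $O(\log^4 n)$-approximation.

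The hard part will be the careful adaptation of the KRV game to vertex capacities, in two respects. First, we must show that a matching embedded via vertex-disjoint paths in $G$ (rather than edge-disjoint paths) certifies \emph{vertex} expansion, not merely edge expansion, so that failure to find a cut over all $T$ rounds really rules out sparse vertex cuts. Second, we must verify that the constant-factor multiplicative error in the approximate max-flow and min-cut of Theorem~\ref{thm: informal max s-t flow and min s-t cut} is tolerated by the KRV potential argument, i.e., that approximate matchings still drive the spectral potential down at the required rate; this should follow by setting $\eps$ to a sufficiently small constant and absorbing the loss in the final $O(\log^4 n)$ bound.
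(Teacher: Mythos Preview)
Your high-level plan—run the KRV cut-matching game and implement the matching player via a vertex-capacitated $s$--$t$ flow computation—is indeed the paper's approach. But the instantiation you describe has a real gap that breaks the approximation guarantee.

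The problem is the choice of \emph{unit} vertex capacities, together with your accounting for the $O(\log^4 n)$ factor. With unit capacities the matching player embeds vertex-disjoint paths, so it succeeds in a round only when the minimum $S$--$\bar S$ vertex separator is essentially $n/2$; in any graph with $\psi(G)=o(1)$ this fails, and what you recover is just the minimum $S$--$\bar S$ separator for the particular bipartition handed to you by the cut player, which has no a~priori relation to $\psi(G)$. Your proposed repair is a second $O(\log^2 n)$ factor ``relating the balanced bisection value to the sparsity of an arbitrary, possibly highly unbalanced optimal vertex cut'', but no such relation holds in general. Take a clique on $n-\sqrt n$ vertices with $\sqrt n$ pendants all attached to a single clique vertex $c$: the sparsest vertex cut is $X=\{c\}$ with sparsity $\Theta(1/\sqrt n)$, while every balanced vertex separator must cut the clique and hence has size $\Omega(n)$, a gap of $\Theta(\sqrt n)$, not $O(\log^2 n)$.

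What the paper does instead (Theorem~\ref{thm: sparsest w alpha given} and Lemma~\ref{lem: KRV_routing}) is to introduce a target sparsity parameter $0<\alpha\le 1$ and binary-search over it. For a fixed $\alpha$ the auxiliary graph assigns capacity $1/\alpha$ (not $1$) to the regular vertices; the matching player then either exhibits a vertex cut of sparsity $O(\alpha)$, or routes a constant fraction of the remaining terminals with vertex congestion $O((\log n)/\alpha)$. Iterating this $O(\log n)$ times completes one KRV round with congestion $O((\log^2 n)/\alpha)$, and over the $O(\log^2 n)$ KRV rounds the embedded expander has total vertex congestion $O((\log^4 n)/\alpha)$, which directly certifies $\psi(G)=\Omega(\alpha/\log^4 n)$. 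That congestion bound—not any balanced-to-unbalanced reduction—is the true source of the $\log^4 n$. A smaller point: the paper does not call Theorem~\ref{thm: informal max s-t flow and min s-t cut} as a black box for the matching player but runs a direct multiplicative-weights routing on top of the decremental \SSSP oracle; even with the correct capacities, a single black-box flow call per round would not yield a full matching, which is why the inner $O(\log n)$ loop is needed.
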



To establish the above result, it suffices to design an algorithm that runs in ${O}(n^{2+o(1)})$ expected time, and for any target value $\alpha$, either finds a vertex cut of sparsity $O(\alpha)$ or certifies that the sparsity of any vertex cut is $\Omega(\alpha/\log^4 n)$. 
We design such an algorithm by using the cut-matching game of Khandekar, Rao, and Vazirani~\cite{KRV}. Roughly speaking, the game proceeds in rounds, where in each round a bipartition of vertices is given, and the goal is to find a routing from one set to the other with vertex congestion at most $1/\alpha$. This is essentially the vertex-capacitated $s$-$t$ flow problem, and we can use ideas similar to the one in Theorem~\ref{thm: informal max s-t flow and min s-t cut} to solve it. If every round of the cut-matching game can be successfully completed, then we have successfully embedded an expander that certifies that vertex sparsity is $\Omega(\alpha/\log^4 n)$. 
On the other hand, if any round of the game fails, then we show that we can output a vertex cut of sparsity at most $O(\alpha)$. The running time of this approach is governed by the time needed to solve the vertex-capacitated maximum $s$-$t$ flow problem, and we utilize Theorem~\ref{thm: informal max s-t flow and min s-t cut} to implement this step in ${O}(n^{2+o(1)})$ expected time.  
Alternatively, one can implement the vertex-capacitated maximum $s$-$t$ flow step using the algorithms for computing maximum $s$-$t$ flow in edge-capacitated directed graphs in $\tilde{O}(m \sqrt{n})$ time in dense graphs~\cite{LeeS14}, or in $\tilde{O}(m^{10/7})$ time in sparse graphs~\cite{Madry16}.
Thus an identical approximation guarantee to the one established in Theorem~\ref{thm: informal sparsest cut}
can be obtained in $\tilde{O}( \min\{ m \sqrt{n} , m^{10/7} \})$ time using previously known results~\cite{LeeS14, Madry16}. Another approach for the vertex sparsest cut problem is to use the primal-dual framework of Arora and Kale~\cite{AroraK16}, who achieve an $O(\sqrt{\log n})$ approximation for the directed sparsest cut problem in
$\tilde{O}(m^{1.5} + n^{2 + \eps})$ time and an $O(\log n)$-approximation in $\tilde{O}(m^{1.5})$ time. Since directed sparsest cut captures vertex sparsest cut in undirected graphs as a special case, these guarantees also hold for the vertex sparsest cut problem. 

As before, the running time obtained in Theorem~\ref{thm: informal sparsest cut} starts to outperform previously known bounds in the regime of $m = \omega(n^{1.5 + o(1)})$, albeit achieving a worse approximation ratio than the one achieved in~\cite{AroraK16}.

\paragraph{Other related work} All-Pairs Shortest-Paths (APSP) can be seen as a generalization of SSSP, where, instead of maintaining distances and shortest paths from a given source vertex $s$, we need to support \pathquery and \distquery for any given pair of vertices. Much of the work on APSP was done in the fully-dynamic setting, where the edges can be both inserted and deleted, with the focus on bounding the amortized and the worst-case time \emph{per update}. In this setting,
 a long line of work has culminated in  a breakthrough result of Demetrescu and Italiano~\cite{APSP-fully}, who designed an exact APSP algorithm for directed graphs with non-negative edge-lengths, with \emph{amortized} update time of $\tilde O(n^2)$ per update; the result was later extended to handle negative edge lengths by Thorup~\cite{thorupfully}, who also showed an algorithm that provides an $\tilde O(n^{2.75})$ worst-case update time per update, in a graph with non-negative edge lengths~\cite{thorup2005worst}.
All these papers consider \emph{vertex updates}, where all edges incident to a given vertex can be updated in a single operation. Note that when both insertions and deletions are allowed, individual edge updates can be implemented via vertex updates. 
Bernstein~\cite{APSPfully2} presented an algorithm that returns $(2+\eps)$-approximate answers to distance queries in undirected graphs with non-negative edge lengths with expected amortized $O(mn^{o(1)}\log L)$  update time per operation and $O(\log\log\log n)$ query time. Baswana, Khurana and Sarkar~\cite{APSPfully3} considered undirected unweighted graphs, and provided a $(4k)$-approximation algorithm with $O(n^{1+1/k+o(1)})$ amortized update time and $O(\log\log\log n)$ query time, for any given integral parameter $k$. Sankowsky~\cite{APSPfully4}
achieves an $O(n^{1.932})$ worst-case update time per operation and $O(n^{1.288})$ query time.

In the partially dynamic setting, we are allowed to only delete or to only insert edges. 
For the case where we are interested in obtaining a $(1+\eps)$-approximation to shortest path queries, the best current algorithm achieves a total update time of $\tilde O(mn\log L)$ \cite{BaswanaHS07,rodittyZ2,henzinger16, bernstein16} even on directed weighted graphs. 
Another setting that was studied is where we allow a higher approximation factor. Suppose we are given an integral parameter $k$, and we are interested in studying the tradeoff between the algorithm's approximation factor and  its total update time, as a function of $k$. For the deletion-only setting, Roditty and Zwick~\cite{rodittyZ2} achieve a $(2k-1)$-approximation, with $\tilde{O}(mn)$ total update time, and $O(m+n^{1+1/k})$ space. Bernstein and Roditty~\cite{BernsteinR11} provide an algorithm achieving an approximation of $(2k-1+\eps)$, and $\tilde O(n^{2+1/k+o(1)})$ total update time for unweighted undirected graphs. The results of \cite{HenzingerKN14_focs,abraham2014fully} give $(2+\eps)^k-1$ approximation, $O(k^k)$ query time, and  $O(m^{1+1/k+o(1)}\log^2L)$ total update time, in decremental setting for undirected weighted graphs. Lastly, a recent result of Chechik~\cite{chechik} obtains a $((2+\eps)k-1)$-approximation, $O(\log\log(nL))$ query time, and $O(mn^{1/k+o(1)}\log L)$ total update time. The algorithms of \cite{HenzingerKN14_focs,abraham2014fully,chechik} are all randomized and assume an oblivious adversary.

\paragraph{Subsequent Work}
In a follow-up work, Chuzhoy and Saranurak~\cite{SSSPandAPSP} have extended our results to edge-decremental \SSSP, obtaining total expected update time $\tilde O(n^2\log L/\eps^2)$. This immediately also improves the expected running times of the algorithms for approximate maximum $s$-$t$ flow, minimum $s$-$t$ cut and vertex sparest cut from Theorems~\ref{thm: informal max s-t flow and min s-t cut} and~\ref{thm: informal sparsest cut} to $\tilde O(n^2/\poly(\eps))$ and $\tilde O(n^2)$, respectively. They also obtain a new algorithm for edge-decremental All-Pairs Shortest Paths in unweighted undirected graphs with adaptive adversary.
The algorithm obtains a constant multiplicative and a $\poly\log n$ additive approximation factors, with expected total update time $O(n^{2.67})$. This is the first approximation algorithm for the problem in the adaptive adversary setting whose running time is asymptotically less than $\Theta(n^3)$. The algorithm builds on some of the ideas and techniques introduced in this paper.

\paragraph{Organization}
We start with an overview of the proof of Theorem~\ref{thm: main for SSSP} in Section~\ref{sec: techniques}. We then provide preliminaries in Section~\ref{sec: prelims}. Section~\ref{sec: main for SSSP} contains the proof of Theorem~\ref{thm: main for SSSP}, with the algorithm for computing the core decomposition deferred to Sections~\ref{sec: computing core decomp} and~\ref{sec: balanced cut or core}. Sections~\ref{sec: max flow}  and~\ref{sec: sparsest cut} contain the applications of our main result to vertex-capacitated maximum $s$-$t$ flow and minimum $s$-$t$ cut, and vertex sparsest cut, respectively.

\section{Overview of the Proof of Theorem~\ref{thm: main for SSSP}}\label{sec: techniques}

We now provide an overview of our main result, namely, the proof of Theorem~\ref{thm: main for SSSP}. This informal overview is mostly aimed to convey the intuition; in order to simplify the discussion, the values of some of the parameters and bounds are given imprecisely.
As much of the previous work in this area, our results use the classical Even-Shiloach trees~\cite{EvenS,Dinitz,HenzingerKing} as a building block. Given a graph $G$ with integral edge lengths, that is subject to edge deletions, a source vertex $s$, and a distance bound $D$, the Even-Shiloach Tree data structure, that we denote by $\EST(G,s,D)$, maintains a shortest-path tree $T$ of $G$, rooted at $s$, up to distance $D$. In other words, a vertex $v\in V(G)$ belongs to $T$ iff $\dist(s,v)\leq D$, and for each such vertex $v$, $\dist_T(s,v)=\dist_G(s,v)$. The total update time of the algorithm is $O(|E(G)|\cdot D\cdot \log n)$. More precisely, for every vertex $v\in V(G)$, whenever $\dist(s,v)$ increases (which may happen at most $D$ times over the course of the algorithm, since all edge lengths are integral), the algorithm performs an inspection of all neighbors of $v$, contributing $O(d(v)\log n)$ to the running time of the algorithm, where $d(v)$ is the degree of $v$ in $G$. A simple accounting shows that the total update time of this algorithm is indeed $O(|E(G)|\cdot D\cdot \log n)$. In addition to maintaining the shortest-path tree $T$, the data structure stores, with every vertex $v\in V(T)$, the value $\dist_G(s,v)$.

 At a high level, our algorithm follows the framework of~\cite{BernsteinChechik,Bernstein}.  Using standard techniques, we can reduce the problem to a setting where we are given a parameter $D=\Theta(n/\eps)$, and we only need to correctly respond to $\pquery(v)$ if $D\leq \dist(s,v)\leq 4D$; otherwise we can return an arbitrary path, or no path at all. Let us assume first for simplicity that all edges in the graph $G$ have unit length. In~\cite{BernsteinChechik,Bernstein}, the algorithm proceeds by selecting a threshold $\tau\approx \frac{n}{\eps D}$, and splitting the graph $G$ into two subgraphs, a sparse graph $G^L$, called the \emph{light graph}, and a dense graph $G^H$, called the \emph{heavy graph}. In order to do so, we say that a vertex $v\in V(G)$ is \emph{heavy} if $d(v)\geq\tau$, and it is \emph{light} otherwise. Graph $G^L$ contains all vertices of $G$ and all edges $e=(u,v)$, such that at least one of $u,v$ is a light vertex; notice that $|E(G^L)|\leq n\tau\leq O(n^2/\eps D)$. Graph $G^H$ contains all heavy vertices of $G$, and all edges connecting them.
 We assume for now for the simplicity of exposition that all vertex degrees in $G^H$ are at least $\tau$.
  The algorithm also maintains the \emph{extended light graph} $\hat G^L$, that is obtained from $G^L$, by adding, for every connected component $C$ of $G^H$, a vertex $v_C$ to $\hat G^L$, and connecting it to every heavy vertex $u\in C$ with an edge of weight $1/2$. So, in a sense, in $\hat G^L$, we create ``shortcuts'' between the heavy vertices that lie in the same connected component of $G^H$. The crux of the algorithm consists of two observations: (i) for every vertex $v$, if $D\leq \dist_G(s,v)<4D$, then $\dist_G(s,v)\approx\dist_{\hat G^L}(s,v)$; and (ii)  since graph $\hat G^L$ is sparse, we can maintain, for every vertex $v\in V(G)$ with $\dist_{\hat G^L}(s,v)\leq 4D$, the distances $\dist_{\hat G^L}(s,v)$ in total update time $O(n^2/\eps)$. In order to see the latter, observe that $|E(\hat G^L)|\leq |E(G^L)|+O(n)\leq O(n^2/\eps D)$. We can use the data structure $\EST(\hat G^L,s,D)$, with  total update time $O(|E(\hat G^L)|\cdot D\cdot\log n)=O(n^2\log n/\eps  )$ (in fact, the threshold $\tau$ was chosen to ensure that this bound holds). In order to establish (i), observe that graph $\hat G^L$ is obtained from graph $G$, by ``shortcutting'' the edges of $G^H$, and so it is not hard to see that $\dist_{\hat G^L}(s,v)\leq \dist_{G}(s,v)$ for all $v\in V(G)$. The main claim is that, if $\dist_G(s,v)<4D$, then $\dist_{G}(s,v)\leq (1+\eps)\dist_{G^L}(s,v)$ also holds, and in particular that for any path $P$ in $\hat G^L$ connecting the source $s$ to some vertex $v\in V(G)$, there is  a path $P'$ in $G$ connecting $s$ to $v$, such that the length of $P'$ is at most the length of $P$ plus $\eps D$. Assuming this is true, it is easy to verify that for every vertex $v$ with $D\leq \dist(s,v)\leq 4D$, $\dist_{G}(s,v)\leq \dist_{\hat G^L}(s,v)(1+\eps)$, and so it is sufficient for the algorithm to report, as an answer to a query $\pquery(v)$, the value $\dist_{\hat G^L}(s,v)$, which is stored in $\EST(\hat G^L,s,D)$. Consider now some path $P$ in $\hat G^L$, and let $C$ be any connected component of $G^H$, such that $v_C\in P$. Let $u,u'$ be the vertices of the original graph $G$ appearing immediately before and immediately after $v_C$ in $P$. Let $Q(u,u')$ be the shortest path connecting $u$ to $u'$ in the heavy graph $G^H$. As every vertex in $G^H$ is heavy, the length of $Q(u,u')$ is bounded by $4|V(C)|/\tau$: indeed, assume that $Q(u,u')=(u=u_0,u_1,\ldots,u_r=u')$, and let $S=\set{u_i\mid i=1\mod 4}$ be a subset of vertices on $Q(u,u')$. Then for every pair $u_i,u_j$ of distinct vertices in $S$, the set of their neighbors must be disjoint (or we could shorten the path $Q(u,u')$ by connecting $u_i$ to $u_j$ through their common neighbor). Since we have assumed that every vertex in $G^H$ has at least $\tau$ neighbors in $G^H$, $|S|\leq |V(C)|/\tau$, and so $Q(u,u')$ may contain at most $4|V(C)|/\tau$ vertices. Once we replace each such vertex $v_C$ on path $P$ with a path connecting the corresponding pair $u,u'$ of vertices in the original graph, the length of $P$ increases by at most $\sum_{C: v_C\in P}4|V(C)|/\tau\leq 4n/\tau=O(\eps D)$. This argument allows the algorithms of~\cite{BernsteinChechik,Bernstein}  to maintain approximate distances from the source $s$ to every vertex of $G$, by simply maintaining the data structure $\EST(\hat G^L,s,D)$. However,  in order to recover the {\bf path} connecting $s$ to the given query vertex $v$ in $G$, we should be able to compute all required paths in the heavy graph $G^H$. Specifically, we need an algorithm that 
 allows us to answer queries $\pquery(u,u',C)$: given a connected component $C$ of $G^H$, and a pair $u,u'$ of vertices of $C$, return a path connecting $u$ to $u'$ in $C$, whose length is at most $O(|V(C)|/\tau)$. The main contribution of this work is an algorithm that allows us to do so, when the input graph $G$ is subject to {\bf vertex} deletions. (We note that for technical reasons, the value $\tau$ in our algorithm is somewhat higher than in the algorithms of~\cite{BernsteinChechik,Bernstein}, which translates to a somewhat higher running time $O(n^{2+o(1)}\log^2(1/\eps)/\eps^2)$, where $o(1)=\Theta(1/\log\log n)$. We also define the light and the heavy graphs somewhat differently, in a way that ensures that all vertex degrees in $G^H$ are indeed at least $\tau$, while $|E(G^L)|=O(n\tau)$.)

\paragraph{A first attempt at a solution.}
For simplicity of exposition, let us assume that all vertices in the heavy graph $G^H$ have approximately the same degree (say between $h$ and $2h$, where $h\geq \tau$ is large enough, so, for example, $h\geq n^{1/\log\log n}$), so the number of edges in $G^H$ is $O(hn)$. Using the same argument as before, for every connected component $C$ in $G^H$, and every pair $u,u'\in V(C)$ of its vertices, there is a path connecting $u$ to $u'$ in $C$, of length $O(|V(C)|/h)$; we will attempt to return paths whose lengths are bounded by this value in response to queries.
A tempting simple solution to this problem is the following: for every connected component $C$ of $G^H$, select an arbitrary vertex $s(C)$ to be its source, and maintain the $\EST(C,s(C),D(C))$ data structure, for the distance bound $D(C)\approx |V(C)|/h$. Such a tree can be maintained in total time $\tilde O(|E(C)|\cdot |V(C)|/h)$, and so, across all components of $G^H$, the total update time is $\tilde O(|E(G^H)|n/h)=\tilde O(n^2)$. 
Whenever a query $\pquery(u,u',C)$ arrives, we simply concatenate the path connecting $u$ to $s(C)$ and the path connecting $u'$ to $s(C)$ in the tree $\EST(C,s(C),D(C))$; using the same argument as before, it is easy to show that the resulting path is guaranteed to be sufficiently short.
Note that, as the algorithm progresses, the connected component $C$ may decompose into smaller connected components, and we cannot afford to recompute the $\EST$ data structure for each connected component from scratch. A natural simple solution to this is the following. Since we maintain a shortest-path tree $T_C$ for $C$, rooted at $s(C)$, whenever $C$ is decomposed into two components $C_1$ and $C_2$, the tree $T$ naturally decomposes into two trees: tree $T_{C_1}$ spanning $C_1$ and tree $T_{C_2}$ spanning $C_2$. We could then continue maintaining Even-Shiloach trees for $C_1$ and $C_2$, respectively, using the roots of the trees $T_{C_1}$ and $T_{C_2}$ as sources, so we do not need to recompute the trees from scratch.

Unfortunately, this simple approach does not seem to work. Consider the following bad scenario. We partition our algorithm into phases. In every phase, the adversary considers the current $\EST$ maintained by the algorithm for the connected component $C$ of $G^H$, whose source is denoted by $s_1(C)$. The adversary then produces an edge-deletion sequence, that iteratively deletes every edge of $C$ incident to $s_1(C)$. Once the phase ends, vertex $s_1(C)$ becomes disconnected from $C$, and the algorithm is forced to select a new source vertex, say $s_2(C)$ (a natural choice would be the child $v$ of $s_1(C)$ in the tree that disconnected from it last, as in this case we can reuse the current tree structure  rooted at $v$ and do not need to recompute it from scratch). We then continue to the second phase, where the adversary deletes one-by-one every edge incident to $s_2(C)$, and so on. Recall that the analysis of the \EST algorithm relies on the fact that, whenever the distance of a vertex $v$ from the source of the tree increases, the vertex contributes $O(d(v)\log n)$ to the running time of the algorithm, and we may have at most $D$ such increases for each vertex over the course of the algorithm. In the above scenario, it is possible that, over the course of the first phase, the distance of every vertex of $C$ from $s_1(C)$ increases, say by $1$, and we are forced to spend $|E(C)|$ time to update the tree. However, once the phase ends and the new source vertex $s_2(C)$ is selected, it is possible (and in fact likely) that for every vertex $v\in C$, $\dist(s_2(C),v)<\dist(s_1(C),v)$, so all distances from the (new) source decrease back. On the one hand, the algorithm may be forced to spend $O(|E(C)|)$ time per phase, but on the other hand we no longer have a good bound on the number of phases, as the distances of the vertices from the successive sources may decrease and increase iteratively. 
We note that in the vertex-deletion setting, a similar bad scenario may occur, when the adversary iteratively deletes all vertices that are children of the current source vertex in the tree. 

Interestingly, this seemingly artificial bad scenario is likely to arise in applications of the algorithms for decremental \SSSP to the maximum flow problem, where the paths returned by the algorithm as a response to $\pquery(v)$ are used to route the flow, and the vertices lying on these paths are subsequently deleted. The algorithm outlined above computes paths that contain the sources $s(C)$ of the connected components $C$ of $G^H$, and so vertices that are children of the sources $s(C)$  in the current tree are most likely to be quickly deleted.

\paragraph{Solution: core decomposition.}
A natural approach to overcome this difficulty is to create, in every connected component $C$ of $G^H$ a ``super-source'', that would be difficult to disconnect from the rest of the component $C$. 
This motivates the notion of \emph{cores} that we introduce. Recall that we have assumed for now that the degrees of all vertices in $G^H$ are between $h$ and $2h$, where $h\geq n^{1/\log\log n}$. Intuitively, a core is a highly-connected graph. For example, a good core could be an expander graph $K$, such that every vertex $v\in V(H)$ has many neighbors in $K$ (say, at least $h/n^{o(1)}$). If we use a suitable notion of expander, this would ensure that, even after a relatively long sequence of vertex deletions (say up to $h/n^{o(1)}$), every pair of vertices in $K$ has a short path connecting them.  Intuitively, we would like to use the core as the ``super-source'' of the \EST structure. Unfortunately, the bad scenario described above may happen again, and the adversary can iteratively delete vertices in order to isolate the core from the remainder of the graph. To overcome this difficulty, we use the notion of \emph{core decomposition}. A core decomposition is simply a collection of disjoint cores in $G^H$, but it has an additional important property: If $U$ is the set of all vertices of $G^H$ that do not lie in any of the cores, then it must be an \emph{$h$-universal set}: namely, after a sequence of up to $h/n^{o(1)}$ deletions of vertices from $G^H$, each remaining vertex of $U$ should be able to reach one of the cores using a short path (say, of length at most $\poly\log n$). 
Our algorithm then uses the cores as the ``super-source'', in the following sense. We construct a new graph $\tilde G$, by starting from $G^H$ and contracting every core $K$ into a super-node $z(K)$. We also add a new source vertex $s$, that connects to each resulting super-node. Our algorithm then maintains $\EST(\tilde G,s,\poly\log n)$, that allows us to quickly recover a short path connecting any given vertex of $G^H$ to some core.
 One of the technical contributions of this paper is an algorithm that computes a core decomposition in time $\tilde O(|E(G^H)|+ n^{1+o(1)})$. Before we discuss the core decomposition, we quickly summarize how our algorithm processes shortest-path queries, and provide a high-level analysis of the total update time. 

\paragraph{Responding to queries.} Recall that in  $\pquery(u,u',C)$, we are given a connected component $C$ of $G^H$, and a pair $u,u'$ of its vertices. Our goal is to return a path connecting $u$ to $u'$, whose length is at most $O(|V(C)|/\tau)$; in fact we will return a path of length at most $O(|V(C)|/h)$. Recall that for every core $K$, we require that every vertex $v\in K$ has at least $h/n^{o(1)}$ neighbors in $K$ (so in particular $|V(K)|\geq h/n^{o(1)}$), and that all cores in the decomposition are disjoint. Therefore, the total number of cores contained in $C$ is at most $|V(C)|n^{o(1)}/h$. We will maintain a simple spanning forest data structure in graph $G^H$ that allows us, given a pair $u,u'$ of vertices that belong to the same connected component $C$ of $G^H$, to compute an arbitrary simple path $P$ connecting $u$ to $u'$ in $C$. Next, we \emph{label} every vertex $w$ of $P$ with some core $K$: if $w$ belongs to a core $K$, then the label of $w$ is $K$; otherwise, the label of $w$ is any core $K$, such that $w$ can reach $K$ via a short path (of length $\poly\log n$). The labeling is performed by exploiting the  $\EST(\tilde G,s,\poly\log n)$ data structure described above. Once we obtain a label for every vertex on the path $P$, we ``shortcut'' the path through the cores: if two non-consecutive vertices of $P$ have the same label $K$, then we  delete all vertices lying on $P$ between these two vertices, and connect these two vertices via the core $K$. As the number of cores in $C$ is at most $|V(C)|n^{o(1)}/h$, eventually we obtain a path connecting $u$ to $u'$, whose length is $|V(C)|n^{o(1)}\poly\log n/h=|V(C)|n^{o(1)}/h$, as required.

\paragraph{Running time analysis.} As already mentioned, our algorithm for computing the core decomposition takes time $\tilde O(|E(G)|+ n^{1+o(1)})=O(n^{1+o(1)}h)$; it seems unlikely that one can hope to obtain an algorithm whose running time is less than $\Theta(|E(G^H)|)=\Theta(nh)$. Our core decomposition remains ``functional'' for roughly $h/n^{o(1)}$ iterations, that is, as long as fewer than $h/n^{o(1)}$ vertices are deleted. Once we delete $h/n^{o(1)}$ vertices from the graph, we are no longer guaranteed that pairs of vertices within the same core have short paths connecting them (in fact they may become disconnected), and we are no longer guaranteed that the vertices of $U$ can reach the cores via short paths. Therefore, we partition our algorithm into phases, where every phase consists of  the deletion of up to $h/n^{o(1)}$ vertices. Once $h/n^{o(1)}$ vertices are deleted, we recompute the core decomposition, the graph $\tilde G$, and the $\EST(\tilde G,s,\poly\log n)$ data structure that we maintain. Note that, since $G$ has $n$ vertices, 
the number of phases is bounded by $n^{1+o(1)}/h$, and recall that we spend $O(n^{1+o(1)}h)$ time per phase to recompute the core decomposition. Therefore, the total update time of the algorithm is $O(n^{2+o(1)})$ (we have ignored multiplicative factors that depend on $\eps$).

\paragraph{Why our algorithm only handles vertex deletions.} As mentioned above, it is unlikely that we can compute a core decomposition in less than 
$\Theta(|E(G^H)|)=\Theta(nh)$ time. If our goal is a total update time of $O(n^{2+o(1)})$, then we can afford at most $O(n^{1+o(1)}/h)$ computations of the core decomposition. If we allow edge deletions, this means that a phase may include up to roughly $h^{2}/n^{o(1)}$ edge deletions, since $|E(G^H)|=\Theta(nh)$. Since the degrees of the vertices are between $h$ and $2h$, the cores cannot handle that many edge deletions, as they can cause an expander graph to become disconnected, or some vertices of $U$ may no longer have short paths connecting them to the cores. However, in the vertex-deletion model, we only need to tolerate the deletion of up to roughly $h/n^{o(1)}$ vertices per phase, which we are able to accommodate, as the degrees of all vertices are at least $h$.

\paragraph{The core decomposition.} The main technical ingredient of our algorithm is the core decomposition. In the vertex-deletion model, it is natural to define a core $K$ as a \emph{vertex expander}: that is, for every vertex-cut $(X,Y,Z)$ in $K$ (so no edges connect $X$ to $Z$ in $K$), $|Y|\geq \min\set{|X|,|Z|}/n^{o(1)}$ must hold. Additionally, as mentioned above, we require that every vertex in $K$ has at least $h/n^{o(1)}$ neighbors that lie in $K$. Unfortunately, these requirements appear too difficult to fulfill. For instance, a natural way to construct a core-decomposition is to iteratively decompose the graph $G^H$ into connected clusters, by computing, in every current cluster $R$, a sparse vertex cut $(X,Y,Z)$, and then replacing $R$ with two new graphs: $R[X\cup Y]$ and $R[Y\cup Z]$. We can continue this process, until every resulting graph is a vertex expander. Unfortunately, this process does not ensure that the resulting cores are disjoint, or that every vertex in a core has many neighbors that also belong to the core. Moreover, even if all pairs of vertices within a given core $K$ have short paths connecting them, it is not clear how to recover such paths, unless we are willing to spend $O(|E(K)|)$ time on each query. Therefore, we define the cores somewhat differently, by using the notion of a \emph{core structure}. Intuitively, a core structure consists of two sets of vertices: set $K$ of vertices -- the core itself, and an additional set $U(K)$ of at most $|K|$ vertices, called the \emph{extension of the core}. We will ensure that all core-sets $K$ are disjoint, but the extension sets may be shared between the cores. Additionally, we are given a sub-graph $G^K$ of $G^H$, whose vertex set contains $K$ and is a subset of $K\cup U(K)$. We will ensure that all such sub-graphs are ``almost'' disjoint in their edges, in the sense that every edge of $G^H$ may only belong to at most $O(\log n)$ such graphs, as this will be important in the final bound on the running time. Finally, the core structure also contains a \emph{witness graph} $W^K$ - a sparse graph, that is a $1/n^{o(1)}$-expander (in the usual edge-expansion sense), whose vertex set includes every vertex of $K$, and possibly some additional vertices from $U(K)$. We also compute an \emph{embedding} of $W^K$ into $G^K$, where each edge $e=(u,v)\in E(W^K)$ is mapped to some path $P_e$ in $G^K$, connecting $u$ to $v$, such that all such paths $P_e$ are relatively short, and they cause low vertex-congestion in $G^K$. The witness graph $W^K$ and its embedding into $G^K$ allow us to quickly recover short paths connecting pairs of vertices in the core $K$.

One of the main building blocks of our core decomposition is an algorithm that, given a subgraph $H$ of $G$, either computes a sparse and almost balanced vertex-cut in $H$, or returns a core containing most vertices of $H$. The algorithm attempts to embed an expander into $H$ via the cut-matching game of~\cite{KRV}. If it fails, then we obtain a sparse and almost balanced vertex-cut in $H$. Otherwise, we embed a graph $W$ into $H$, that is with high probability an expander. Graph $W$ then serves as the witness graph for the resulting core. The cut-matching game is the only randomized part of our algorithm. If it fails (which happens with low probability), then one of the queries to the heavy graph may return a path whose length is higher than the required threshold (that is known to the algorithm). In this case, we simply recompute all our data structures from scratch. This ensures that our algorithm always returns a correct approximate response to \pathquery, with the claimed expected running time, and is able to handle an adaptive adversary.

\paragraph{Handling arbitrary vertex degrees.}
Recall that so far we have assumed that all vertices in $G^H$ have similar degrees. This was necessary because, if some vertices of $G^H$ have low degrees (say $d$), but $|E(G^H)|$ is high (say $\Theta(nh)$ for some $h\gg d$), then we would be forced to recompute the core decomposition very often, every time that roughly $d$ vertices are deleted, while each such computation takes at least $\Theta(n^{1+o(1)}h)$ time, resulting in a total running time that is too high. To overcome this difficulty, we partition the heavy graph $G^H$ into graphs $\Lambda_1,\ldots,\Lambda_r$ that we call \emph{layers}, where for each $1\leq i\leq r$, all vertices in graph $\Lambda_i$ have degree at least $h_i$, while $|E(\Lambda_i)|\leq n^{1+o(1)}h_i$. We ensure that $h_1\geq h_2\geq\ldots,\geq h_r$, and that these values are geometrically decreasing. We maintain a core decomposition for each such graph $\Lambda_i$. For all $1\leq i\leq r$, roughly every $h_i/n^{o(1)}$ vertex deletions, we recompute the layers $\Lambda_i,\ldots,\Lambda_r$, and their corresponding core decompositions.

\paragraph{Handling arbitrary edge lengths.} So far we have assumed that all edge lengths are unit. When the edge lengths are no longer the same, we need to use the approach of~\cite{Bernstein}. We partition all edges into classes, where class $i$ contains all edges whose length is between $2^i$ and $2^{i+1}$. Unfortunately, we can no longer use the same threshold $\tau$ for the definition of the heavy and the light graph for all edge lengths. This is since we are only guaranteed that, whenever two vertices $u,u'$ belong to the same connected component $C$ of $G^H$, there is a path containing at most $|V(C)|/\tau$ edges connecting $u$ to $u'$ in $C$. But as some edges may now have large length, the actual length of this path may be too high. Following~\cite{Bernstein}, we need to define different thresholds $\tau_i$ for each edge class $i$, where roughly $\tau_i=\tau\cdot 2^i$, for the original threshold $\tau$. This means that graph $\hat G^L$ may now become much denser, as it may contain many edges from classes $i$ where $i$ is large. We use the Weight-Sensitive Even-Shiloach data structure of~\cite{Bernstein} in order to handle $\hat G^L$. Roughly speaking, his algorithm modifies the $\EST$ algorithm, so that edges with higher weight contribute proportionally less to the total update time of the algorithm.

\section{Preliminaries}\label{sec: prelims}

We follow standard graph-theoretic notation. All graphs in this paper are undirected, unless explicitly said otherwise. Graphs may have parallel edges, except for simple graphs, that cannot have them.  Given a graph $G=(V,E)$ and two disjoint subsets $A,B$ of its vertices, we denote by $E_G(A,B)$ the set of all edges with one endpoint in $A$ and another in $B$, and by $E_G(A)$ the set of all edges with both endpoints in $A$. We also denote by $\out_G(A)$ the set of all edges with exactly one endpoint in $A$. 
We may omit the subscript $G$ when clear from context. Given a subset $S\subseteq V$ of vertices of $G$, we denote by $G[S]$ the sub-graph of $G$ induced by $S$.

A \emph{cut} in $G$ is a partition $(A,B)$ of $V$ into two disjoint subsets, with $A,B\neq\emptyset$. 
The \emph{sparsity} of the cut $(A,B)$ is $\frac{|E(A,B)|}{\min\set{|A|,|B|}}$.


\begin{definition}
We say that an graph $G$ is an $\alpha$-expander, for $\alpha>0$, iff every cut $(A,B)$ in $G$ has sparsity at least $\alpha$, or, equivalently, $|E(A,B)|\geq \alpha\cdot\min\set{|A|,|B|}$.
\end{definition}

We now define vertex cuts and their sparsity.
A \emph{vertex cut} in graph $G$ is a partition $(X,Y,Z)$ of $V(G)$ into three subsets, such that there is no edge in $G$ connecting a vertex of $X$ to a vertex of $Z$. The value of the cut is $|Y|$, and its sparsity is $\psi(X,Y,Z)=\frac{|Y|}{\min\set{|X|,|Z|}+|Y|}$.


\paragraph{The Cut-Matching Game.}
We use the cut-matching game of Khandekar, Rao and Vazirani~\cite{KRV}, defined as follows. We are given a set $V$ of $N$ vertices, and two players, called the cut player and the matching player. 
The game is played in iterations. We start with a graph $\mX$ with node set $V$ and an empty edge set. In every iteration, some edges are added to $\mX$. The game ends when $\mX$ becomes a $\half$-expander. 
The goal of the cut player is to construct a $\half$-expander in as few iterations as
possible, whereas the goal of the matching player is to prevent the construction of the expander for as long as possible. The iterations proceed as follows.  In every iteration $j$, the cut player chooses two disjoint subsets $Y_j,Z_j$ of $V$ with $|Y_j| = |Z_j|$ and the
matching player chooses a perfect matching $M_j$ that matches the nodes of $Y_j$ to the nodes of $Z_j$.  The edges of $M_j$ are then
added to $\mX$.  Khandekar, Rao, and Vazirani \cite{KRV} showed that there is a strategy for the cut player that guarantees that after
$O(\log^2{N})$ iterations the graph $\mX$ is a $(1/2)$-expander with high probability.\footnote{In fact, in the algorithm of~\cite{KRV}, $N$ is even, and the cut player computes, in each iteration $j$, a bi-parititon $(Y_j,Z_j)$ of $V$ into two equal-sized subsets. 
Their algorithm can be easily adapted to the setting where $N$ is odd: let $v,v'$ be two arbitrary distinct vertices from $V$; run the algorithm of~\cite{KRV} on $V\setminus\set{v}$, and then on $V\setminus{v'}$. The final graph, obtaining by taking the union of the two resulting sets of edges, is a $\half$-expander w.h.p.}
Orecchia \etal \cite{OrecchiaSVV08} strengthened
this result by showing that, after $O(\log^2{N})$ iterations, the graph $\mX$ is an $\Omega(\log{N})$-expander with constant probability, by using a different strategy for the cut player. 

Let $\mX_i$ denote the graph computed after $i$ iterations of~\cite{KRV}, so $\mX_0$ is a graph on $N$ vertices and no edges. The following is the main result of~\cite{KRV}.

\begin{theorem}[\cite{KRV}] \label{thm:cut-matching-game}
	There is a constant $\cKRV$ and a randomized algorithm, that, for each $i\geq 1$, given the current graph $\mX_i$, computes, in time $O(N\poly\log N)$ the subsets $A_{i+1},B_{i+1}$ of $V$ to be used as a response of the cut player, such that, regardless of the responses of the matching player, the graph $\mX_{T}$ obtained after $T=\floor{\cKRV\log^2N}$ iterations is a $1/2$-expander, with probability at least $1-1/\poly(N)$.
\end{theorem}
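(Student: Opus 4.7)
The plan is to follow the random-projection strategy of Khandekar, Rao and Vazirani~\cite{KRV}. We associate with each vertex $v\in V$ a probability-distribution vector $p_v\in\reals^V$, initialized as the indicator $\delta_v$. Whenever the matching player returns $M_j$, we update along each matched pair $(u,w)\in M_j$ by replacing both $p_u$ and $p_w$ with $(p_u+p_w)/2$. Our potential function is $\Phi_i=\sum_{v\in V}\norm{p_v-\one/N}^2$, which starts at $\Phi_0=N-1$; the goal is to drive $\Phi_T\leq 1/\poly(N)$ after $T=\floor{\cKRV\log^2 N}$ rounds, where $\cKRV$ will be chosen large enough at the end.

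The cut player's algorithm in round $i+1$ is: sample a uniformly random unit vector $r\in\reals^V$ orthogonal to $\one$, compute the scalar projections $u_v=\langle p_v,r\rangle$ for every $v$, sort the vertices by $u_v$, and set $Y_{i+1}$ and $Z_{i+1}$ to be the top and bottom $\floor{N/2}$ vertices respectively. Sampling a sufficiently accurate $r$, maintaining an implicit representation of the $p_v$'s through the matchings, and performing the projection and sort can each be done in $O(N\poly\log N)$ time per round, which meets the time bound in the theorem.

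The key analytic claim is that for \emph{any} matching $M_{i+1}$ chosen adversarially between $Y_{i+1}$ and $Z_{i+1}$, $\expect{\Phi_{i+1}}\leq \paren{1-\tfrac{1}{c\log N}}\Phi_i$ for an absolute constant $c$. Two observations feed into it: first, the averaging step along an edge $(y,z)$ decreases $\Phi$ by exactly $\half\norm{p_y-p_z}^2$; second, by isotropy of $r$, $\expect{(u_y-u_z)^2}=\norm{p_y-p_z}^2/N$, so it suffices to lower bound $\sum_{(y,z)\in M_{i+1}}(u_y-u_z)^2$ uniformly in $M_{i+1}$. The median split on the $u_v$ values guarantees that every edge in any perfect $Y_{i+1}$-$Z_{i+1}$ matching pairs a vertex above the median with one below it; combined with anti-concentration of the Gaussian-like projections of the $p_v$'s, this forces the sum to be at least an $\Omega(1/\log N)$ fraction of $\Phi_i$. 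Iterating $T$ rounds and applying a Markov/union bound then yields $\Phi_T\leq 1/\poly(N)$ with probability $1-1/\poly(N)$.

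To convert the potential bound into $\half$-expansion of $\mX_T$, we view the sequence of averaging updates as a fractional flow embedding of the complete graph $K_N$ (with $1/N$ units of demand between every pair) into $\mX_T$ with congestion $O(\log^2 N)$; standard LP duality between multicommodity flow and sparsest cut then converts a low-congestion embedding of $K_N$ (an $\Omega(N)$-expander) into a constant lower bound on the expansion of $\mX_T$, which is tuned to $\half$ by the choice of $\cKRV$. The main obstacle is the per-round potential drop against a worst-case matching: because the adversary can pair up $Y$ and $Z$ arbitrarily, the analysis cannot rely only on the average-case identity $\expect{(u_y-u_z)^2}=\norm{p_y-p_z}^2/N$, and one must simultaneously exploit the median split and the anti-concentration of projections onto a uniformly random direction. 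This step is the heart of the~\cite{KRV} proof, and is where essentially all the probabilistic content of the theorem is concentrated.
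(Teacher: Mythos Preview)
The paper does not prove Theorem~\ref{thm:cut-matching-game}; it is quoted from~\cite{KRV} in the Preliminaries and used as a black box throughout (the only addition is the footnote explaining how to handle odd $N$ by two runs). So there is no ``paper's own proof'' to compare against.

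Your sketch is a faithful high-level outline of the original~\cite{KRV} argument: the potential $\Phi_i=\sum_v\norm{p_v-\one/N}^2$, the averaging update along matchings, the random-projection cut, and the $\Omega(1/\log N)$ expected per-round drop are exactly the ingredients of that paper. One small logical slip: the reason it suffices to lower bound $\sum_{(y,z)\in M}(u_y-u_z)^2$ is not isotropy but the pointwise Cauchy--Schwarz inequality $\norm{p_y-p_z}^2\geq (u_y-u_z)^2$ (since $r$ is a unit vector); isotropy and anti-concentration enter later, when you argue that the projected spread is comparable to $\Phi_i$ with good probability. Also, the per-round running time $O(N\poly\log N)$ relies on recomputing the projections $u_v=\langle p_v,r\rangle$ for a fresh $r$ each round by replaying all $O(\log^2 N)$ previous matchings on the coordinates of $r$; you allude to an ``implicit representation'' but it is worth saying this explicitly, since maintaining the $p_v$'s themselves would cost $\Theta(N^2)$.
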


(Note that the graphs $\mX_i$ themselves do depend on the responses of the matching player).
Observe that the resulting expander may have parallel edges, and its maximum vertex degree bounded by $\cKRV\log^2N$.

\paragraph{Decremental Connectivity/Spanning Forest.}
We use the results of~\cite{dynamic-connectivity}, who provide a deterministic data structure, that we denote by $\CONNSF(G)$, that, given an $n$-vertex unweighted undirected graph $G$, that is subject to edge deletions, maintains a spanning forest of $G$, with total running time $O((m+n)\log^2n)$, where $n=|V(G)|$ and $m=|E(G)|$. Moreover, the data structure supports connectivity queries: given a pair  $u,v$ of vertices of $G$, return ``yes'' if $u$ and $v$ are connected in $G$, and ``no'' otherwise. The running time to respond to each such query is  $O(\log n/\log\log n)$; we denote by  $\conn(G,u,u')$ the connectivity query for $u$ and $u'$ in the data structure. Since the data structure maintains a spanning forest for $G$, we can also use it to respond to a query $\path(G,u,v)$: given two vertices $u$ and $v$ in $G$, return any simple path connecting $u$ to $v$ in $G$ if such a path exists, and return $\emptyset$ otherwise. If $u$ and $v$ belong to the same connected component $C$ of $G$, then the running time of the query is $O(|V(C)|)$.

\paragraph{Even-Shiloach Trees~\cite{EvenS,Dinitz,HenzingerKing}.}
Suppose we are given a graph $G=(V,E)$ with integral lengths $\ell(e)\geq 1$ on its edges $e\in E$, a source $s$, and a distance bound $D\geq 1$. Even-Shiloach Tree (\EST) algorithm maintains, for every vertex $v$ with $\dist(s,v)\leq D$, the distance $\dist(s,v)$, under the deletion of edges from $G$. Moreover, it maintains a shortest-path tree from vertex $s$, that includes all vertices $v$ with $\dist(s,v)\leq D$. We denote the corresponding data structure by $\EST(G,s,D)$. 
The total running time of the algorithm, including the initialization and all edge deletions, is $O(m\cdot D\log n)$, where $m=|E|$.

\paragraph{Low-Degree Pruning Procedure.}  
We describe a simple procedure that our algorithm employs multiple times. The input to the procedure is a simple graph $H$ and a degree bound $d$. The procedure returns a partition $(J_1,J_2)$ of $V(H)$ into two subsets, by employing the following simple greedy algorithm: start with $J_1=\emptyset$ and $J_2=V(H)$. While there is a vertex $v\in J_2$, such that fewer than $d$ neighbors of $v$ lie in $J_2$, move $v$ from $J_2$ to $J_1$.
We denote this procedure by $\DSP(H,d)$. We use the following simple claim.

\begin{claim}\label{claim:DSP}
Procedure $\DSP(H,d)$ can be implemented to run in time $O(|E(H)|+|V(H)|)$. At the end of the procedure, the degree of every vertex in graph $H[J_2]$ is at least $d$. Moreover, for any other partition $(J',J'')$ of $V(H)$, such that the degree of every vertex in $H[J'']$ is at least $d$, $J''\subseteq J_2$ must hold.
\end{claim}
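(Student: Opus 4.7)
For the running time, I would maintain, for every vertex $v\in J_2$, a counter $c(v)$ equal to the number of its neighbors that currently lie in $J_2$, initialized by a single pass that sets $c(v):=\deg_H(v)$ in $O(|V(H)|+|E(H)|)$ time. In parallel, maintain a queue $Q$ of vertices $v\in J_2$ with $c(v)<d$, initialized by the same pass. The main loop repeatedly extracts a vertex $v$ from $Q$, moves it from $J_2$ to $J_1$, and then scans the adjacency list of $v$; for each neighbor $u$ still in $J_2$, it decrements $c(u)$ and enqueues $u$ into $Q$ if $c(u)$ has just dropped below $d$ (marking each vertex so that it is enqueued at most once). Each edge is inspected at most twice over the entire execution (once from each endpoint when that endpoint is moved), and each vertex contributes $O(1)$ queue work, giving the claimed $O(|V(H)|+|E(H)|)$ bound.

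The first property is immediate from the termination condition: when the loop stops, no vertex of $J_2$ has fewer than $d$ neighbors in $J_2$, which is exactly the statement that every vertex of $H[J_2]$ has degree at least $d$.

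For the maximality property, let $v_1,v_2,\ldots,v_k$ be the vertices in the order in which they are moved from $J_2$ to $J_1$, and suppose for contradiction that $J''\not\subseteq J_2$ at the end, so some $v_i$ lies in $J''$. Pick the smallest such index $i$; then $v_1,\ldots,v_{i-1}\in J'$. At the moment just before $v_i$ is moved, the current contents of $J_2$ are exactly $V(H)\setminus\{v_1,\ldots,v_{i-1}\}$, and $v_i$ has fewer than $d$ neighbors in this set (minus itself), by the loop's guard. Since $J''$ is disjoint from $\{v_1,\ldots,v_{i-1}\}$ by the choice of $i$, we have $J''\subseteq V(H)\setminus\{v_1,\ldots,v_{i-1}\}$, and therefore the degree of $v_i$ in $H[J'']$ is at most its number of neighbors in the current $J_2\setminus\{v_i\}$, which is strictly less than $d$. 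This contradicts the hypothesis that every vertex of $H[J'']$ has degree at least $d$, completing the argument.

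The main step to be careful about is the argument for the last property; the key observation is simply that choosing the \emph{first} vertex of $J''$ to be removed ensures that the ``current $J_2$'' at that moment is a superset of $J''$, so a low-degree witness inside the current $J_2$ directly translates into a low-degree witness inside $H[J'']$. The implementation part is routine bucket/queue bookkeeping analogous to the standard linear-time $k$-core peeling.
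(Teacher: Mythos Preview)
Your proof is correct and follows essentially the same approach as the paper: a counter-and-queue implementation for the linear running time, the termination condition for the minimum-degree property, and the minimal-index contradiction argument (ordering the removed vertices and picking the first one that lies in $J''$) for the maximality property.
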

\begin{proof}
We first show that the procedure can be implemented to run in time $O(|E(H)+|V(H)|)$.  
In order to do so, we maintain, for every vertex $v\in J_2$, the number $N(v)$ of its neighbors that belong to $J_2$. We initialize the values $N(v)$ for all $v\in V(H)$ in time $O(|E(H)|)$. We also maintain a set $Q$ of vertices to be deleted from $J_2$. To initialize $Q$, we scan all vertices of $J_2=V(H)$ once at the beginning of the algorithm, and add every vertex $v$ with $N(v)<d$ to $Q$. While $Q\neq \emptyset$, we remove any vertex $v$ from $Q$, and move it from $J_2$ to $J_1$. We then inspect every neighbor $u$ of $v$ that belongs to $J_2$, and decrease $N(u)$ by $1$. If $N(u)$ falls below $d$ and $u\not\in Q$, then we add $u$ to $Q$. The algorithm terminates once $Q=\emptyset$. It is easy to verify that the algorithm can be implemented in time $O(|E(H)|+|V(H)|)$.

It is immediate to verify that, when this algorithm terminates, the degree of every vertex in graph $H[J_2]$ is at least $d$. We now prove the last assertion. Let  $(J',J'')$ be any partition of $V(H)$, such that the degree of every vertex in $H[J'']$ is at least $d$, and assume for contradiction that $J''\not\subseteq J_2$. Denote $J_1=\set{v_1,\ldots,v_r}$, where the vertices are indexed in the order in which they where added to $J_1$. Then there must be some vertex $v\in J_1$ that belongs to $J''$. Let $v_i\in J_1\cap J''$ be such a vertex with the smallest index $i$. But then $v_1,\ldots,v_{i-1}\not\in J''$, so $v_i$ has fewer than $d$ neighbors in $J''$, a contradiction.
\end{proof}

We will also need the following lemma about Procedure \DSP.

\begin{lemma}\label{lemma: DSP universal}
Let $H$ be a simple graph containing at most $n$ vertices, and let $h$ be an integer, such that the degree of every vertex in $H$ is at least $h$. Let $(A,B)$ be any partition of vertices of $H$. Suppose we apply Procedure $\DSP(H[B],\tau)$ to graph $H[B]$, for any $\tau\leq h/(32\log n)$, and let $J_1\subseteq B$ be the subset $J_1$ obtained at any time over the course of the procedure. Then for every vertex $v\in J_1$, there is a set $\pset(v)$ at least $h/(2\log n)$ paths in $H[A\cup J_1]$ (computed with respect to the current set $J_1$), connecting $v$ to vertices of $A$, such that the length of each path is at most $\log n$, and the paths in $\pset(v)$ are completely disjoint except for sharing their endpoint $v$.
\end{lemma}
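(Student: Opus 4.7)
My plan is to prove the lemma by first extracting a clean vertex-boundary lemma that captures what DSP guarantees, then deriving both (i) the existence of many internally vertex-disjoint $v$-to-$A$ paths (via Menger) and (ii) the $\log n$ length bound (via a BFS growth argument), and finally combining these by a greedy construction.

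The core ingredient is the following \emph{vertex-boundary bound}. List $J_1=\{v_1,\dots,v_m\}$ in the order DSP moved them out of $J_2$; then when $v_j$ was moved it had at most $\tau-1$ neighbours in the then-current $J_2$, hence at least $h-\tau$ neighbours in $A\cup\{v_1,\dots,v_{j-1}\}\subseteq A\cup J_1$. I claim that for every non-empty $C\subseteq J_1$ with $C\cap A=\emptyset$, the boundary $\partial C:=N_{H[A\cup J_1]}(C)\setminus C$ satisfies $|\partial C|\ge h-2\tau$. Indeed, if $|\partial C|\le h-2\tau-1$, then every $w\in C$ would have at least $h-\tau-|\partial C|\ge \tau+1$ neighbours inside $C$; now take $w^\star$ to be the \emph{earliest} element of $C$ removed by DSP. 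At that moment every other element of $C$ still lay in $J_2$, so $w^\star$ had more than $\tau$ neighbours in $J_2$, contradicting the DSP removal rule. Set $k:=h/(2\log n)$; note that $2\tau\le h/(16\log n)\ll h-2\tau$.

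Next I deduce that the $v$-to-$A$ vertex-connectivity in $H[A\cup J_1]$ is at least $k$. If not, Menger's theorem gives a separator $S\subseteq J_1\setminus\{v\}$ with $|S|<k$; letting $C$ be the component of $v$ in $H[A\cup J_1]\setminus S$ yields $C\cap A=\emptyset$ and $\partial C\subseteq S$, so $|\partial C|<k\le h-2\tau$, contradicting the boundary bound. Thus Menger already supplies $k$ internally disjoint $v$-to-$A$ paths (without a length restriction), and ending-vertex distinctness follows because each path ends at a different $A$-vertex by the standard flow decomposition.

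To enforce the length bound I build the $k$ paths greedily: at step $j$, I take $P_j$ to be a shortest $v$-to-$A$ path in $H[A\cup J_1]$ with all internal vertices of $P_1,\dots,P_{j-1}$ deleted. The total number of forbidden vertices across all $k$ iterations is at most $k\log n=h/2$, so even after these deletions the residual $J_1$-subset $J_1'\subseteq J_1$ still satisfies the \emph{same} boundary bound against the \emph{same} DSP analysis applied only to $J_1'$: for any $C\subseteq J_1'$ with $C\cap A=\emptyset$, the first vertex of $C$ removed by DSP still had all other $C$-vertices in $J_2$, so the argument gives $|\partial_{H[A\cup J_1']} C|\ge h-2\tau-|F|\ge h/4$ where $F$ is the deleted set. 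Running BFS from $v$ in the residual graph and invoking this bound inductively shows that the BFS ball doubles while $|B_r|\le h/4$ (because the new shell has size $\ge h/4\ge |B_r|$) and then grows by an additive $h/4$ per level; combined with the fact that once the ball has no $A$-vertex its shell is entirely in $J_1'$ and the boundary lemma forces growth until $A$ is captured, the distance from $v$ to $A$ in the residual graph is at most $\log n$. Hence each $P_j$ has length at most $\log n$, and the construction delivers the required $\pset(v)$.

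The main obstacle is the last step --- converting the boundary bound into a genuine $\log n$ diameter bound in the residual graph, and in particular ensuring that the exponential-growth phase together with the additive-growth phase collectively fit within $\log n$ levels even as the forbidden set $F$ accumulates across iterations. The critical quantitative point is that $|F|\le h/2$ throughout, so the weakened boundary bound $h-2\tau-|F|\ge h/4$ remains well above the $k$ threshold needed both for the doubling regime and for ensuring that the ball cannot stall before reaching~$A$.
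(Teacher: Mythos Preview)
Your vertex-boundary lemma is correct and elegant: for any nonempty $C\subseteq J_1$ one indeed gets $|\partial C|\ge h-2\tau$, and via Menger this yields far more than $h/(2\log n)$ internally disjoint $v$-to-$A$ paths. The gap is in the length bound. Your BFS argument, as you yourself flag, only shows the ball grows by an \emph{additive} $\Theta(h)$ per level once it exceeds size $h/4$; to exhaust $J_1'$ (which may contain close to $n$ vertices) this needs $\Theta(n/h)$ additional levels, not $\log n$. In the regimes the paper cares about, $h$ can be as small as $n^{1/\log\log n}$, so $n/h$ is astronomically larger than $\log n$. The sentence ``the distance from $v$ to $A$ in the residual graph is at most $\log n$'' simply does not follow from the two preceding clauses, and the ``critical quantitative point'' you cite (that the weakened boundary stays above $h/4$) only guarantees the ball does not stall---it does nothing to make the growth multiplicative.

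The paper obtains the $\log n$ depth by a mechanism your boundary lemma is too coarse to capture. It assigns each $w\in J_1$, in DSP removal order, to a layer $L_{i+1}$ where $i$ is the smallest index with at least $h/(2\log n)$ neighbors of $w$ already in $L_i$ (with $L_0=A$). The key claim is $|L_i|<|L_{i-1}|/2$, proved by double-counting the edges between $L_{i-1}$ and $L_i$ that go \emph{forward} in the DSP order: each $L_i$-vertex contributes at least $h/(2\log n)$ such edges by construction, whereas each $L_{i-1}$-vertex contributes fewer than $\tau\le h/(32\log n)$, since at its own removal time it had fewer than $\tau$ neighbors still in $J_2$ (i.e., later-removed vertices). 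This asymmetry in the removal order is precisely what forces geometric decay of the layer sizes and hence at most $\log n$ layers; a path from $v\in L_j$ then steps down one layer at a time to $L_0=A$. Your boundary bound, being uniform over all $C$, discards this directional information; you will need to recover some form of it to get a logarithmic length.
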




\begin{proof}
We iteratively construct a collection of vertex subsets, that we call \emph{layers}, as follows. At the beginning, we only have a single layer $L_0$, and every vertex of $A$ is added to $L_0$. Whenever a new vertex $v$ is added to $J_1$ by Procedure $\DSP(H[B],\tau)$, we consider the smallest integer $i$, such that at least $h/(2\log n)$ neighbors of $v$ belong to $L_i$, and add $v$ to layer $L_{i+1}$; if no such integer $i$ exists, then we discard $v$. The crux of the analysis is in the following claim.

\begin{claim}\label{claim: small number of layers}
In every iteration of the algorithm, for all $i>0$ with $L_i\neq \emptyset$, $|L_i|<|L_{i-1}|/2$. 
\end{claim}

Notice that, if the claim is true, then the total number of layers cannot exceed $\log n$. It then follows that no vertex of $J_1$ is ever discarded. Indeed, assume for contradiction that some vertex of $J_1$ is discarded, and let $v$ be the first such vertex. Recall that the degree of $v$ is at least $h$ in $H$, but it has at most $\tau\leq h/(32\log n)$ neighbors in $J_2$. Therefore, at least $h(1-\frac 1 {32\log n})\geq \frac{h}{2}$ neighbors of $v$ do not belong to $J_2$, and so they belong to the current layers. Since the number of layers cannot exceed $\log n$, there is some layer $L_i$, such that at least $h/(2\log n)$ neighbors of $v$ lie in $L_i$, so $v$ should not have been discarded.

Consider now the vertex set $J_1$ at any step of the algorithm, and let $v\in J_1$ be any vertex. For every consecutive pair $(L_i,L_{i+1})$ of layers, we direct all edges from $L_{i+1}$ to $L_{i}$. It is now enough to show that $v$ has at least $h/(2\log n)$ paths connecting it to vertices of $A$, that are completely disjoint except for sharing the endpoint $v$, in the resulting directed graph; all such paths are guaranteed to have length at most $\log n$. Assume otherwise. Then there is a set $R$ of fewer than $h/(2\log n)$ vertices, such that $v\not \in R$, and, if we delete the vertices of $R$ from the directed layered graph, then $v$ is disconnected from $A$. Assume w.l.o.g. that $v\in L_j$. Since $v$ has as least $h/(2\log n)$ neighbors in $L_{j-1}$ and $R<h/(2\log n)$, at least one neighbor $v_{j-1}\in L_{j-1}$ of $v$ does not lie in $R$. Using the same arguments, vertex $v_{j-1}$ must have at least one neighbor $v_{j-2}\in L_{j-2}$ that does not lie in $R$. We can continue like that until we reach $A$, obtaining a path connecting $v$ to a vertex of $A$, a contradiction. Therefore, there is a set $\pset(v)$ at least $h/(2\log n)$ paths in $H[A\cup J_1]$, connecting $v$ to vertices of $A$, such that the length of each path is at most $\log n$, and the paths in $\pset(v)$ are completely disjoint except for sharing their endpoint $v$. It now remains to prove Claim~\ref{claim: small number of layers}.


\begin{proofof}{Claim~\ref{claim: small number of layers}}
We fix some index $i>0$, and show that, throughout the algorithm, if $L_i\neq \emptyset$, then $|L_i|<|L_{i-1}|/2$. We let $E'$ be the set of all edges $e=(u,v)$, such that $u\in L_{i-1}$, $v\in L_i$, and either $u\in L_0$, or $v$ was added after $u$ to $J_1$.
Notice that, as the sets $L_{i-1}$ and $L_i$ change, the set $E'$ of edges evolves. 
 We claim that at every point of the algorithm, every vertex $u\in L_{i-1}$ is incident to at most $h/(32\log n)$ edges of $E'$, while every vertex $v\in L_i$ is incident to at least $h/(2\log n)$ such edges. 
 The latter is immediate to see: when we add $v$ to $L_i$, then it must have at least  $h/(2\log n)$ neighbors in $L_{i-1}$.
To see the former, recall that, when $u$ is added to $J_1$, it has fewer than $\tau\leq h/(32\log n)$ neighbors that belong to $J_2$. The edges of $E'$ may only  connect $u$ to vertices that were added to $J_1$ after $u$, and their number is bounded by $h/(32\log n)$.

Therefore, throughout the algorithm, the following two inequalities hold: (i) $|E'|\geq |L_i|\cdot\frac{h}{2\log n}$; and (ii) $|E'|< |L_{i-1}|\cdot \frac{h}{32\log n}$, and so $|L_{i}|< |L_{i-1}|/2$.
\end{proofof}
\end{proof}

The following corollary follows immediately from Lemma~\ref{lemma: DSP universal}
\begin{corollary}\label{cor: degree sep: after deleting small number}
Let $H$ be a simple graph containing at most $n$ vertices, and let $h$ be an integer, such that  the degree  of every vertex in $H$ is at least $h$. Let $(A,B)$ be any partition of vertices of $H$, and let $(J_1,J_2)$ be the output of $\DSP(H[B],\tau)$, for any $\tau\leq h/(32\log n)$. Then for any subset $R$ of fewer than $h/(2\log n)$ vertices of $H$, for every vertex $v\in J_1\setminus R$, there is a path in graph $H[A\cup J_1]\setminus R$, connecting $v$ to a vertex of $A$,  that contains at most $\log n$ edges.
\end{corollary}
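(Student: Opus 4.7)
The plan is to derive this corollary as a direct pigeonhole application on top of Lemma~\ref{lemma: DSP universal}. First I would apply Lemma~\ref{lemma: DSP universal} to the graph $H$ with the same partition $(A,B)$ and the same threshold $\tau \leq h/(32\log n)$, instantiated at the end of Procedure $\DSP(H[B],\tau)$, so the $J_1$ of the lemma coincides with the $J_1$ of the corollary. This yields, for every $v\in J_1$, a family $\pset(v)$ of at least $h/(2\log n)$ paths in $H[A\cup J_1]$, each of length at most $\log n$, connecting $v$ to a vertex of $A$, that are pairwise disjoint except at the common endpoint $v$.

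Now fix any $R\subseteq V(H)$ with $|R|<h/(2\log n)$ and any $v\in J_1\setminus R$. The key observation is that, because the paths in $\pset(v)$ are completely disjoint except at $v$, and because $v\notin R$, every single vertex $r\in R$ can lie on at most one path of $\pset(v)$. Consequently, the number of paths of $\pset(v)$ that meet $R$ is at most $|R|<h/(2\log n)\leq |\pset(v)|$, so at least one path $P\in\pset(v)$ is completely disjoint from $R$.

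This path $P$ lies in $H[A\cup J_1]$, avoids $R$, connects $v$ to a vertex of $A$, and contains at most $\log n$ edges, so it witnesses the conclusion of the corollary. I do not anticipate any real obstacle here: once Lemma~\ref{lemma: DSP universal} is in hand, the corollary is simply the pigeonhole statement that $|R|$ disjoint paths (modulo the shared endpoint) cannot all be simultaneously blocked by fewer than $|R|$ deletions, and the length bound is inherited verbatim from the lemma.
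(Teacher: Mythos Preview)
Your proposal is correct and matches the paper's approach: the paper simply states that the corollary follows immediately from Lemma~\ref{lemma: DSP universal}, and your pigeonhole argument is exactly the intended one-line deduction. The only minor slip is in your closing sentence, where you write ``$|R|$ disjoint paths'' when you mean ``at least $h/(2\log n)$ disjoint paths,'' but the actual argument preceding it is stated correctly.
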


\section{Decremental Single-Source Shortest Paths}\label{sec: main for SSSP}
This section is dedicated to the proof of Theorem~\ref{thm: main for SSSP}, with some details deferred to later sections.
As in much of previous work, we consider each distance scale separately. For each distance scale, we employ the following theorem.

\begin{theorem}\label{thm: main for SSSP w distance bound}
There is a randomized algorithm, that, given parameters $0<\eps<1$ and $D>0$ and a simple undirected $n$-vertex graph $G$ with lengths $\ell(e)>0$ on edges $e\in E(G)$, together with a special source vertex $s\in V(G)$, such that $G$ undergoes vertex deletions, supports queries $\pquery_D(v)$. For each query $\pquery_D(v)$, the algorithm returns a path from $s$ to $v$ in $G$, of length is at most $(1+\eps)\dist(s,v)$, if $D\leq \dist(s,v)\leq 2D$; otherwise, it either returns an arbitrary path connecting $s$ to $v$, or correctly establishes that $\dist(s,v)>2D$. The algorithm works against an {\bf adaptive} adversary. The total expected running time of the algorithm is $O\left (\frac{n^{2+o(1)}\cdot \log^3(1/\eps)}{\eps^2}\right )$, and each query is answered in expected time $O(n\poly\log n\log(1/\eps))$.
\end{theorem}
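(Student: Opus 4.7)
The plan is to follow the high-level framework sketched in Section~\ref{sec: techniques}: reduce to a distance-scale version, separate the graph into a light and a heavy part, maintain a weight-sensitive \EST on an extended light graph that ``shortcuts'' through the heavy connected components, and use a dynamically maintained core decomposition of the heavy graph to recover actual paths through those shortcuts.

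First I would standardize edge lengths. Using the standard rounding trick, I can assume that every edge length is an integral power of $(1+\eps')$ for $\eps'=\Theta(\eps)$, so there are $O(\log(D/\eps)/\eps)$ distinct length classes; denote the $i$-th class by $E_i=\{e:\ell(e)\in[2^i,2^{i+1})\}$ (absorbing the $(1+\eps')$ scaling). Set a base threshold $\tau=n^{o(1)}/(\eps D)$ where the $n^{o(1)}$ factor will come out of the core-decomposition analysis, and set class thresholds $\tau_i=\tau\cdot 2^i$ as in~\cite{Bernstein}. Call a vertex $v$ \emph{heavy in class $i$} if at least $\tau_i$ of its class-$i$ edges go to vertices that are also heavy in class $i$; let $G^H$ be the union of these class-$i$ heavy edges over all $i$, and let $G^L=G\setminus E(G^H)$ be the light graph. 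Then $|E(G^L)|\leq n\cdot\sum_i\tau_i\cdot O(\log(1/\eps)/\eps)\leq n^{1+o(1)}/(\eps D)\cdot O(D)$ after weighting, and this is exactly the budget we can afford in a single-source \EST up to distance $2D$.

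Next, form the extended light graph $\hat G^L$: for each connected component $C$ of $G^H$ add a super-node $v_C$ connected to each $u\in V(C)$ by an edge of length $2^{i(C)}/n^{o(1)}$ where $i(C)$ is the highest class appearing in $C$ (a small fraction of the shortest-path length contribution each ``shortcut'' can legally make). Run Bernstein's Weight-Sensitive \EST on $\hat G^L$ with source $s$ and distance bound $2D(1+\eps)$; by the usual argument, for every $v$ with $D\leq \dist_G(s,v)\leq 2D$ we have $\dist_{\hat G^L}(s,v)\leq\dist_G(s,v)\leq(1+\eps)\dist_{\hat G^L}(s,v)$, since each super-node visit can be replaced by a short path in the corresponding component of $G^H$ of total added length at most $\eps D/(\text{number of super-nodes visited})$ amortized. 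The WSES contributes $O(n^{2+o(1)}\log^2(1/\eps)/\eps^2)$ to the total update time. For each length class $i$ and for each connected component $C$ of the corresponding sub-level of $G^H$, I maintain a $\CONNSF$ data structure so that $\conn$ and $\path$ queries on $C$ are available in $O(|V(C)|)$ time.

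The heart of the construction is the heavy-graph machinery. I would partition $G^H$ into geometrically-decreasing \emph{layers} $\Lambda_1,\dots,\Lambda_r$ by iterating the low-degree pruning procedure $\DSP$ from Claim~\ref{claim:DSP}, so that every vertex of $\Lambda_j$ has degree at least $h_j$ in $\Lambda_j$ while $|E(\Lambda_j)|\leq n^{1+o(1)}h_j$. For each layer I compute a core decomposition in time $\tilde O(|E(\Lambda_j)|+n^{1+o(1)})$ using the algorithm stated in the introduction; this yields disjoint expander-like cores $K_1^{(j)},\dots$ whose extension sets jointly form an $h_j$-universal set. I then form the contracted graph $\tilde\Lambda_j$ by collapsing each core into a super-node, attach an artificial source connected to every core-super-node, and maintain $\EST(\tilde\Lambda_j,\cdot,\poly\log n)$; together with the witness expander embedded into each core, this lets me, given any pair $(u,u')$ in the same component of $\Lambda_j$, label each vertex on the $\path$-query output with a nearby core and then stitch the labels together through the cores in time $O(|V(C)|\cdot n^{o(1)})$ per query, producing a path of length $O(|V(C)|n^{o(1)}/h_j)$. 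The core decomposition of layer $j$ is rebuilt every $h_j/n^{o(1)}$ vertex deletions: there are at most $n^{1+o(1)}/h_j$ phases per layer at cost $n^{1+o(1)}h_j$ each, giving $n^{2+o(1)}$ across all layers.

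Answering $\pquery_D(v)$ is then straightforward: if the WSES reports $\dist_{\hat G^L}(s,v)>2D(1+\eps)$ return the stored value (which certifies $\dist_G(s,v)>2D$); otherwise walk the WSES tree path from $v$ to $s$, and whenever the path enters a super-node $v_C$ between vertices $u,u'$ of $G^H$, invoke the heavy-graph path-query routine on $(u,u',C)$ and splice the returned path in. The total path length is $O(n\poly\log n)$ edges, and each step of the expansion costs $O(n^{o(1)}\log(1/\eps))$, so the overall query time is $O(n\poly\log n\log(1/\eps))$ in expectation. Finally, because the cut-matching step inside the core-decomposition routine is the only randomized component and succeeds with probability $1-1/\poly(n)$, a failure is detected when a heavy-graph query returns an overlong path (the allowed threshold is known to the algorithm); on such a failure we rebuild from scratch, which only multiplies the expected running time by $1+o(1)$ and preserves the adaptive-adversary guarantee since the adversary cannot see the random coins before the responses are committed. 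The main obstacle, and the content of Sections~\ref{sec: computing core decomp} and~\ref{sec: balanced cut or core}, is constructing the core decomposition itself in $\tilde O(|E|+n^{1+o(1)})$ time with the vertex-expansion, embedded-witness, and universality properties needed above.
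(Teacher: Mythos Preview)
Your high-level plan matches the paper's, but two structural choices you made would actually break the argument.

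First, you form a single heavy graph $G^H$ as the union over all length classes and then layer it by degree. The paper instead keeps a \emph{separate} heavy graph $G_i^H$ for each class $i$, and this is essential: when a super-node $v_C$ with $C\subseteq G_i^H$ is replaced by a path in $C$, you need every edge on that path to have length $<2^{i+1}$, so that a path with $O(|V(C)|n^{o(1)}/\tau_i)$ edges has total length $O(|V(C)|\eps D/(n\lambda))$; summing over components in class $i$ gives $O(\eps D/\lambda)$ and summing over the $\lambda$ classes gives the $\eps D$ additive slack. In your union $G^H$, a component can mix arbitrarily long and short edges, and a vertex placed in a high-degree layer may owe all its degree to long edges, so the replacement path can be $\omega(\eps D)$ long. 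Relatedly, your threshold $\tau=n^{o(1)}/(\eps D)$ is off by a factor of $n$: the paper needs $\tau_i\approx (n/\eps D)\cdot 2^i\cdot n^{o(1)}$ so that $|V(C)|\cdot 2^i/\tau_i\leq |V(C)|\eps D/n$, which is what makes the sum over components collapse. With your $\tau$ the additive error is $\Theta(n\eps D)$, not $\eps D$. Also, the special-edge length attached to each super-node should be a fixed constant (the paper uses $1/4$), not $2^{i(C)}/n^{o(1)}$; the WSES analysis charges each special edge $O(D/\eps)$ inspections regardless of class, and your class-dependent length does not buy anything while complicating the distance comparison $\dist_{\hat G^L}(s,v)\leq\dist_G(s,v)$.

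Second, your justification for handling an adaptive adversary is backwards. You write that the adversary ``cannot see the random coins before the responses are committed''; in the paper's model the adversary is allowed to see the \emph{entire} internal state, coins included. What makes the algorithm adaptive-adversary-safe is that it is Las Vegas: every response is \emph{always} correct, because whenever a heavy-graph query returns a path longer than the (known) threshold the algorithm detects this, rebuilds from scratch, and re-answers. Randomness affects only the expected running time, so an adversary who sees everything still cannot force a wrong answer. Your detect-and-restart mechanism is right; the stated reason is not.
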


It is now easy to obtain Theorem~\ref{thm: main for SSSP} from Theorem~\ref{thm: main for SSSP w distance bound}. Let $\SSSP(G,s,D,\eps)$ be the data structure maintained by the algorithm from Theorem~\ref{thm: main for SSSP w distance bound} for graph $G$, source vertex $s$, and parameters $D$ and $\eps$ (we view the edge lengths as part of the definition of $G$). We assume w.l.o.g. that all edge lengths in $G$ are between $1$ and $L$. For $1\leq i\leq \floor{\log(Ln)}$, let $D_i=2^i$.
Given an input graph $G$, a source vertex $s$, and a parameter $\eps$, for each $1\leq i\leq \log(Ln)$, the algorithm maintains the data structure $\SSSP(G,s,D_i,\eps)$. When a vertex of $G$ is deleted, all these data structures are updated accordingly. The total expected update time for all these data structures is  $O\left (\frac{n^{2+o(1)}\cdot \log^3(1/\eps)\log L}{\eps^2}\right)$. In order to answer a query $\pquery(v)$, we run, for each $1\leq i\leq \log(Ln)$, the query $\pquery_{D_i}(v)$ in the corresponding data structure $\SSSP(G,s,D_i,\eps)$. We then return the shortest paths that was returned by any such query (if no path was returned by any query, then we report that $s$ is not connected to $v$ in $G$). It is easy to verify, from Theorem~\ref{thm: main for SSSP w distance bound}, that, if there is a path from $s$ to $v$ in $G$, then the above algorithm returns a path from $s$ to $v$ of length at most $(1+\eps)\dist(s,v)$. The expected running time required to process a query is $O(n\poly\log n\cdot\log(1/\eps)\cdot \log L)$. In order to complete the proof of Theorem~\ref{thm: main for SSSP}, it is now enough to prove Theorem~\ref{thm: main for SSSP w distance bound}. From now on we focus on the proof of this theorem.

Throughout the proof, we denote by $G$ the current graph, obtained from the input graph after the current sequence of vertex deletions, and $n$ is the number of vertices present in $G$ at the beginning of the algorithm. When we say that an event holds with high probability, we mean that the probability of the event is at least $(1-1/n^c)$ for some large enough constant $c$. We will assume throughout the proof that $n>c_0$ for some large enough constant $c_0$, since otherwise we can simply run Dijkstra's shortest path algorithm in $G$ to respond to the path queries.
We assume that the distance bound $D$ is fixed from now on. It would be convenient for us to ensure that $D= \ceil{4n/\eps}$, and that all edge lengths are integers between $1$ and $4D$. In order to achieve this, we discard all edges whose length is greater than $2D$, and we change the length of each remaining edge $e$ to be $\ell'(e)=\ceil{4n\ell(e)/(\eps D)}$. For every pair $u,v$ of vertices, let $\dist'(u,v)$ denote the  distance between $u$ and $v$ with respect to the new edge length values. Notice that for all $u,v$:

\[\frac{4n}{\eps D}\dist(u,v)\leq \dist'(u,v)\leq \frac{4n}{\eps D}\dist(u,v)+n,\]

since the shortest $s$--$v$ path contains at most $n$ vertices. Moreover, if $\dist(u,v)\geq D$, then 
$n\leq \dist(u,v)\cdot \frac n D$, so $\dist'(u,v)\leq \frac{4n}{\eps D}\dist(u,v)+\frac n D\dist(u,v)\leq \frac{4n}{\eps D}\dist(u,v)(1+\eps/4)$.
 Notice  also that for every vertex $v$ with $D\leq \dist(u,v)\leq 2D$, $\ceil{\frac{4n}{\eps}}\leq \dist'(u,v)\leq 4\ceil{\frac{4n}{\eps}}$.
Therefore, from now on we can assume that $D= \ceil{4n/\eps}$, and for simplicity, we will denote the new edge lengths by $\ell(e)$ and the corresponding distances between vertices by $\dist(u,v)$. 
From the above discussion,  all edge lengths are integers between $1$ and $4D$. It is now sufficient that the algorithm, given query $\pquery_D(v)$, returns a path from $s$ to $v$ in $G$, of length is at most $(1+\eps)\dist(s,v)$, if $D\leq \dist(s,v)\leq 4D$; otherwise, it can either return an arbitrary path connecting $s$ to $v$, or correctly establish that $\dist(s,v)>4D$.

At a very high level, our proof follows the algorithm of~\cite{Bernstein}. We partition all edges of $G$ into $\lambda=\floor{\log (4D)}$ classes, where for $1\leq i\leq\lambda$, edge $e$ belongs to \emph{class $i$} iff $2^i\leq \ell(e)<2^{i+1}$. We denote the set of all edges of $G$ that belong to class $i$ by $E_i$. Next, for each  $1\leq i\leq \lambda$, we define a threshold value $\tau_i$. For technical reasons, these values are somewhat different from those used in~\cite{Bernstein}. In order to define $\tau_i$, we need to introduce a number of parameters that we will use throughout the algorithm

\paragraph{Parameters.}
The following parameters will be used throughout the algorithm.

\begin{itemize}
\item We let $\alpha^*=1/2^{3\sqrt{\log n}}$ -- this will be the expansion parameter for the cores.
\item We let $\ell^*=\frac{16c^*\log^{12}n}{\alpha^*}$, for some large enough constant $c^*$ that we set later. This parameter will serve as an upper bound on the lengths of paths between pairs of vertices in a core. Observe that $\ell^*=2^{O(\sqrt{\log n})}$.
\item Our third main parameter is $\Delta=256c^*\log^{20}n/\alpha^*=2^{O(\sqrt{\log n})}=n^{o(1)}$. This parameter will be used in order to partition the algorithm into phases. 
\item For each  $1\leq i\leq \lambda$,  we set $\tau_i=\max\set{4n^{2/\log\log n},\frac{n}{\eps D}\cdot 2^{21} \cdot \ell^*\cdot \Delta\cdot \log^4 n\cdot \lambda\cdot 2^i}$. Notice that $\tau_i=\max\set{n^{o(1)},\frac{n^{1+o(1)}\cdot 2^i\cdot \log D}{\eps D}}$.
\end{itemize}

 Bernstein~\cite{Bernstein} used the threshold values $\tau_i$ in order to partition the edges of $G$ into two subsets, which are then used to define two graphs: a {\em light graph} and a {\em heavy graph}. We proceed somewhat differently. First, it would be more convenient for us to define a separate heavy graph for each edge class, though we still keep a single light graph. Second, our process of partitioning the edges between the heavy graphs and the light graph is somewhat different from that in~\cite{Bernstein}. However, we still ensure that for each $1\leq i\leq \lambda$, the light graph contains at most $n\tau_i$ edges of $E_i$ throughout the algorithm; this is a key property of the light graph that the algorithm of~\cite{Bernstein} exploits.

Fix an index $1\leq i\leq \lambda$, and let $G_i$ be the sub-graph of $G$ induced by the edges in $E_i$. We run Procedure \DSP on graph $G_i$ and degree threshold $d=\tau_i$. Recall that the procedure computes a partition $(J',J'')$ of $V(G_i)$, by starting with $J'=\emptyset$ and $J''=V(G_i)$, and then iteratively moving from $J''$ to $J'$ vertices $v$ whose degree in $G_i[J'']$ is less than $d$. The procedure can be implemented to run in time $O(|E_i|+n)$. We say that the vertices of $J'$ are \emph{light for class $i$}, and the vertices of $J''$ are \emph{heavy for class $i$}. We now define the graph $G_i^H$ -- the heavy graph for class $i$, as follows. The set of vertices of $G_i^H$ contains all vertices that
are heavy for class $i$. The set of edges contains all edges of $E_i$ whose {\bf both} endpoints are heavy for class $i$. We also define a light graph $G_i^{L}$ for class $i$, though we will not use it directly. Its vertex set is $V(G)$, and the set of edges contains all edges of $E_i$ that do not belong to graph $G_i^H$. Clearly, every edge of $G_i^{L}$ is incident to at least one vertex that is light for class $i$, and it is easy to verify that $|E(G_i^{L})|\leq n\tau_i$. As the algorithm progresses and vertices are deleted from $G$, some vertices that are heavy for class $i$ may become light for it (this happens when a vertex $v$ that was heavy for class $i$ has fewer than $\tau_i$ neighbors that are also heavy for class $i$). Once a vertex $v$ becomes light for class $i$, every edge in $G_i^H$ that is incident to $v$ is removed from $G_i^H$ and added to $G_i^{L}$, and $v$ is deleted from $G_i^H$. In particular, $E(G_i^H)$ and $E(G_i^{L})$ always define a partition of the current set $E_i$ of edges. Moreover, it is easy to verify that the total number of edges that are ever present in $G_{i}^{L}$ is bounded by $n\tau_i$, and that, throughout the algorithm, every vertex of $G_i^H$ has degree at least $\tau_i$ in $G_i^H$. The main technical contribution of this paper is the next theorem, that allows us to deal with the heavy graphs.

\begin{theorem}\label{thm: main for maintaining a heavy graph}
There is a randomized algorithm, that, for a fixed index $1\leq i\leq \lambda$, supports queries $\pquery(u,v,C)$: given two vertices $u$ and $v$ that belong to the same connected component $C$ of graph $G_i^H$, returns a path, connecting $u$ to $v$ in $C$, that contains at most $2^{13}\frac{|V(C)|}{\tau_i}\cdot \Delta \cdot \ell^*\cdot \log^4 n$ edges. The total expected update time of the algorithm is $O(n^{2+o(1)})$, and each query $\pquery(u,v,C)$ is processed in expected time $O(|V(C)|\log^4n)$. The algorithm works against an adaptive adversary.
\end{theorem}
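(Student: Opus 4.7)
The plan is to follow the blueprint from Section~\ref{sec: techniques}: maintain a core decomposition of the current heavy graph $G_i^H$, partition the algorithm into phases, and use the cores together with a decremental spanning forest and an \EST on a contracted graph to answer queries. At the beginning of each phase we invoke the core decomposition algorithm (developed in Sections~\ref{sec: computing core decomp} and~\ref{sec: balanced cut or core}) on the current $G_i^H$, obtaining disjoint cores $K_1,\ldots,K_r$; for each $K_j$ we get a sub-graph $G^{K_j}\subseteq G_i^H$ and a witness $\alpha^*$-expander $W^{K_j}$ embedded into $G^{K_j}$ via short, low vertex-congestion paths. The remaining vertices form a $\tau_i$-universal set: after roughly $\tau_i/n^{o(1)}$ vertex deletions each surviving outside-vertex still reaches some core via $O(\log n)$ edges. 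A phase lasts exactly that many deletions, after which the decomposition is rebuilt from scratch.

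In parallel we maintain two auxiliary structures: (a) the decremental spanning-forest $\CONNSF(G_i^H)$, which, given two vertices in the same component $C$, returns an arbitrary simple path between them in time $O(|V(C)|)$; and (b) an auxiliary graph $\tilde G$ obtained by contracting each core $K_j$ into a supernode $z_j$ and adding a fresh source $s^*$ joined to every $z_j$ by a unit-length edge, on which we run $\EST(\tilde G, s^*, O(\log n))$. The universal property ensures this tree correctly labels each non-core vertex $w$ with a close core $K(w)$ together with a short witness path; core vertices are labeled with their own core. Both of these data structures are refreshed at the start of each phase and updated in place between phase boundaries as vertices are deleted.

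A query $\pquery(u,v,C)$ is then processed in four steps: (i) obtain a simple $u$--$v$ path $P$ in $C$ via $\CONNSF$; (ii) label every $w\in P$ using the \EST; (iii) \emph{shortcut} $P$ by replacing, for any two non-consecutive vertices $w,w'\in P$ with the same label $K$, the stretch of $P$ between them with the concatenation of the \EST-paths from $w$ and $w'$ to $K$ together with an internal $K$-path; (iv) extract the internal path inside each surviving core $K$ by routing in the witness graph $W^K$ and then translating back through the embedding to $G^K$. Since the cores are vertex-disjoint and each contains at least $\tau_i/n^{o(1)}$ vertices, at most $|V(C)|\cdot n^{o(1)}/\tau_i$ distinct labels occur on $P$; each replacement contributes $O(\ell^*\cdot \Delta)$ edges, giving the claimed length bound $O\!\left(\frac{|V(C)|}{\tau_i}\right)\cdot \Delta\cdot \ell^*\cdot \log^4 n$.

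For the time budget, a single core-decomposition invocation costs $\tilde O(|E(G_i^H)|+n^{1+o(1)})$; since $|E(G_i^H)|\leq O(n\tau_i)$, the phase length is $\tau_i/n^{o(1)}$, and there are at most $n^{1+o(1)}/\tau_i$ phases, the total rebuild cost telescopes to $O(n^{2+o(1)})$. The \EST on $\tilde G$ has $O(n)$ edges and distance bound $O(\log n)$, contributing $O(n\,\poly\log n)$ per phase, and $\CONNSF$ adds only near-linear overhead. Each query costs $O(|V(C)|)$ for the initial path, $O(\log n)$ per vertex for labeling, and $O(\poly\log n)$ per distinct core for the witness routing — total $O(|V(C)|\log^4 n)$ in expectation. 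The main obstacle is the randomized cut-matching step used inside the core construction: with probability $1/\poly(n)$ a witness graph fails to be an expander and a query could return an over-long path. We handle this by comparing the returned length against the explicit threshold $2^{13}\,|V(C)|\Delta\ell^*\log^4 n/\tau_i$; upon a violation we discard the current decomposition and rebuild from scratch. Because the failure event is detected internally by the algorithm without depending on the adversary's future behaviour, the adaptive-adversary guarantee is preserved, and the $1/\poly(n)$ failure rate keeps the expected total update time at $O(n^{2+o(1)})$.
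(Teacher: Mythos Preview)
Your high-level outline matches the paper's blueprint, but there is a genuine gap: you skip the \emph{layering} of $G_i^H$ by vertex degree, and your running-time analysis collapses without it.

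Concretely, the claim ``$|E(G_i^H)|\leq O(n\tau_i)$'' is false. That bound holds for the \emph{light} graph $G_i^L$ (each light-graph edge touches a light vertex of degree $<\tau_i$). In the heavy graph every vertex has degree \emph{at least} $\tau_i$, and the maximum degree is unbounded, so $|E(G_i^H)|$ can be $\Theta(nh)$ for some $h\gg\tau_i$. Your core decomposition then costs $\tilde O(nh)$ per phase, but the phase can only last $\tau_i/\Delta$ deletions (the $h$-universal property in Theorem~\ref{thm: find a core decomposition} is governed by the \emph{minimum} degree, here $\tau_i$). Hence the number of phases is $\Theta(n\Delta/\tau_i)$ and the total rebuild cost is $\Theta(n^{2}\Delta\cdot h/\tau_i)$, which is not $n^{2+o(1)}$ when $h/\tau_i$ is polynomial. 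The same mismatch hits your \EST on $\tilde G$: that graph does \emph{not} have $O(n)$ edges---contracting the cores still leaves up to $|E(G_i^H)|$ edges---so maintaining it over all phases costs $\tilde O(|E(G_i^H)|)\cdot n\Delta/\tau_i$ as well.

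The paper fixes this via Procedure~\PCL: it partitions $G^*=G_i^H$ into layers $\tilde\Lambda_1,\ldots,\tilde\Lambda_r$ with geometrically decreasing degree thresholds $h_1>h_2>\cdots>h_r$, so that layer $j$ has at most $n\Delta h_j$ edges \emph{and} minimum degree $\geq h_j$. A separate $h_j$-core decomposition is maintained for each layer, refreshed every $h_j/\Delta$ deletions; now cost-per-rebuild and phase length are balanced layer by layer, and the totals telescope to $n^{2+o(1)}$. The layering also introduces the discarded sets $D_j$ and requires the additional graphs $H'_j$ (with their own \EST{}s) plus Claim~\ref{claim: extra vertex level i to level i-1} and Corollary~\ref{cor: extra vertices to core} to route through them; your query description omits this machinery. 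Without the layer decomposition the argument does not go through.
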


Next, we define an extended light graph, and provide an algorithm for handling it, which is mostly identical to the algorithm of~\cite{Bernstein}. Our starting point is the graph $G^L$, that we refer to as the \emph{light graph}, with $V(G^L)=V(G)$, and $E(G^L)=\bigcup_{i=1}^{\lambda}E(G_i^L)$. We now define the {\em extended light graph} $\hat G^L$, as follows. We start with $\hat G^L=G^L$; the vertices of $G^L$ are called \emph{regular vertices} and the edges of $G^L$ are called \emph{regular edges}. Next, for every $1\leq i\leq\lambda$, for every connected component $C$ of $G_i^H$, we add a vertex $v_C$ to $\hat G^L$, that we call a \emph{special vertex}, and connect it to every vertex $u\in C$ with an edge of length $1/4$. The edges incident to the special vertices are called \emph{special edges}. As the algorithm progresses and edges and vertices are deleted or moved from the heavy graphs $G^H_i$ to the light graphs $G^L_i$, the connected components of the graphs $G^H_i$ may change. We will always keep the graph $\hat G^L$ updated with respect to the current connected components of the graphs $G^H_i$, and with respect to the edges currently in the light graphs $G^L_i$. The following observation follows immediately from the assumption that all edge lengths in $G$ are at least $1$.

\begin{observation}\label{obs: distances in G and light graph}
Throughout the algorithm, for every vertex $v\in V(G)$, $\dist_{\hat G^L}(s,v)\leq \dist_G(s,v)$.
\end{observation}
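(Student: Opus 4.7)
The plan is a direct construction: I will show that any $s$--$v$ walk in $G$ can be transformed into an $s$--$v$ walk in $\hat G^L$ of no larger length, which immediately implies the distance inequality.

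Fix a vertex $v$, and let $P = (s = u_0, u_1, \ldots, u_k = v)$ be a shortest path from $s$ to $v$ in $G$. I will build a corresponding walk $P'$ in $\hat G^L$ by replacing each edge of $P$ with either itself or a detour through an appropriate special vertex. Consider an edge $e = (u_{j-1}, u_j)$ of $P$. There are two cases. If $e$ is a regular edge of $\hat G^L$ (that is, $e \notin E(G_i^H)$ for any $i$), then $e$ is present in $\hat G^L$ with the same length, and I use $e$ itself. Otherwise, $e \in E(G_i^H)$ for some class $i$, and so both endpoints $u_{j-1}, u_j$ lie in the same connected component $C$ of $G_i^H$. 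By construction of $\hat G^L$, the special vertex $v_C$ is present and is joined to every vertex of $C$ by a special edge of length $1/4$. Hence I may replace $e$ by the two-edge sub-walk $(u_{j-1}, v_C, u_j)$ of length $1/2$ in $\hat G^L$.

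The concatenation of these replacements yields a walk $P'$ in $\hat G^L$ from $s$ to $v$. To compare lengths, note that each regular edge contributes the same amount to $P$ and $P'$. For an edge $e \in E(G_i^H)$, recall that after the reweighting introduced earlier in Section~\ref{sec: main for SSSP}, every edge of $G$ has integer length at least $1$, so the length of $e$ in $G$ is $\ell(e) \geq 1$, whereas its replacement contributes only $1/2$ to $P'$. Thus the total length of $P'$ is at most the total length of $P$, giving
\[
\dist_{\hat G^L}(s,v) \;\leq\; \text{length}(P') \;\leq\; \text{length}(P) \;=\; \dist_G(s,v).
\]

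There is no real obstacle here: the observation is essentially by definition of $\hat G^L$, since the special vertices are designed as shortcuts through the connected components of the heavy graphs, and the very short length $1/4$ on special edges guarantees that these shortcuts never increase distance. The only point to verify is that the endpoints of every heavy edge actually lie in a common connected component of the corresponding $G_i^H$, which is immediate, and that the special edges of $\hat G^L$ are indeed present and up-to-date with respect to the current connected components, which is guaranteed by the standing assumption that $\hat G^L$ is maintained in sync with the current graphs $G_i^H$ and $G_i^L$.
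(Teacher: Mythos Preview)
Your proof is correct and is exactly the argument the paper has in mind: the paper simply states that the observation ``follows immediately from the assumption that all edge lengths in $G$ are at least $1$,'' which is precisely the fact you use when replacing a heavy edge of length $\geq 1$ by a two-edge detour of total length $1/2$ through the special vertex $v_C$.
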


The theorem below follows from the same arguments as in~\cite{Bernstein}. However, since our setting is slightly different, we provide its proof in the Appendix for completeness.
For convenience, we denote by $E^S$ and by $E^R$ the sets of all special and all regular edges, respectively, that are ever present in graph $\hat G^L$.

\begin{theorem}\label{thm: main for maintaining a light graph}
There is a deterministic algorithm, that maintains an (approximate) single-source shortest-path tree $T$ of $\hat G^L$ from the source $s$, up to distance $8D$. Tree $T$ is a sub-graph of $\hat G^L$, and for every vertex $v\in V(\hat G^L)$, such that $\dist_{\hat G^L}(s,v)\leq 8D$, the distance from $s$ to $v$ in $T$ is at most $(1+\eps/4)\dist_{\hat G^L}(s,v)$. The total update time of the algorithm is $O(n^2\log D\log^2n)+O\left(\frac{nD\log^2 n\log^2D}{\eps}\right)+O\left (\sum_{e\in E^R}\frac{D\log n\log D}{\eps \ell(e)}\right )$.
\end{theorem}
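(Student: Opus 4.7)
The plan is to apply Bernstein's Weight-Sensitive Even-Shiloach (WSES) data structure to the extended light graph $\hat G^L$, with extra machinery to handle the fact that special vertices come and go over time as heavy components split. The WSES maintains for each vertex $v$ an approximate level $\tilde \ell(v)$ with $\dist_{\hat G^L}(s,v)\le \tilde \ell(v)\le (1+\eps/4)\,\dist_{\hat G^L}(s,v)$, together with a shortest-path tree $T$ realizing these approximate distances. Its accounting property is that an edge $e$ is scanned only when the approximate level of one of its endpoints strictly increases, and the level updates are quantized so that at most $O(D\log D/(\eps\,\ell(e)))$ such updates can be charged to any single edge, each costing $O(\log n)$ via heap operations.

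My first step is to encode all events on $\hat G^L$ as primitive WSES operations. A vertex deletion in $G$ removes the incident regular edges and, for each class $i$ such that $v\in V(G_i^H)$, the special edge from $v$ to its current special vertex $v_C$. When a vertex ceases being heavy for class $i$, its incident edges in $G_i^H$ are transferred to $G_i^L$; in $\hat G^L$ this means removing the special edge $(v,v_C)$ and inserting the corresponding regular edges. When a component $C$ of $G_i^H$ splits into $C_1,\ldots,C_k$, we relabel $v_C$ to $v_{C_1}$, and create new special vertices $v_{C_2},\ldots,v_{C_k}$ by reassigning the remaining special edges. The crucial monotonicity lemma to prove is that none of these operations decreases any $s$-distance in $\hat G^L$: the only subtle case is the component split, where before the split $v_C$ provides length-$1/2$ shortcuts between every pair of vertices of $C$, and after the split the shortcuts are restricted to within each $C_j$. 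This allows the WSES to run as a purely decremental algorithm despite the formal insertion of new special vertices and edges.

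Next, I would partition the total running time into the three claimed terms. The regular edges contribute $\sum_{e\in E^R} O(D\log n\log D/(\eps\,\ell(e)))$ directly by the WSES accounting property. For the special edges, each has length $1/4$, and the total number of (regular vertex, class) pairs $(v,i)$ such that $v\in V(G_i^H)$ at some point is at most $n\lambda=O(n\log D)$; treating each such pair as a single ``conceptual'' special edge whose partner may change through component splits, each contributes $O(D\log n\log D/\eps)$ to the WSES running time, giving the $O(nD\log^2 n\log^2 D/\eps)$ term. The residual $O(n^2\log D\log^2 n)$ term absorbs the structural maintenance: initializing $\hat G^L$ across all $\lambda$ classes costs $O(n^2\log D)$, and updating the tree and the WSES invariants at each component split can be paid for by rescanning only the reassigned special edges, which telescopes over all splits across all classes to $O(n^2\log D\log^2 n)$ once we include the $O(\log^2 n)$ per-operation overhead from the heaps and from the $\CONNSF$ data structure used to detect splits.

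The main obstacle is handling the component splits cleanly inside a decremental structure: WSES does not natively allow vertex insertions, but the monotonicity lemma above ensures that no existing approximate level ever needs to decrease, so a split can be implemented by (i) removing the $v_C$ entry from $T$ and its heap, (ii) instantiating each new $v_{C_j}$ with its inherited set of special edges, and (iii) inserting $v_{C_j}$ into $T$ at approximate depth $\min_{u\in C_j}(\tilde\ell(u)+1/4)$, which is provably at least the previous depth of $v_C$. The second, closely related obstacle is the amortized accounting: showing that the total work across all splits, summed over all $\lambda$ classes, does not exceed the stated bound. This follows from the fact that in any decremental sequence a component of $G_i^H$ can split at most $|V(G_i^H)|-1$ times in total, and from charging the reinitialization cost of each split to the smaller side, which in aggregate is bounded by $O(n\log n)$ edge-reassignments per class.
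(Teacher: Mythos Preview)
Your proposal is essentially correct and takes the same approach as the paper: apply the WSES data structure to $\hat G^L$, implement component splits by keeping the old special vertex for the larger side while spawning a new one for the smaller side, amortize via smaller-half charging, and use $\CONNSF$ on each $G_i^H$ to detect splits. The paper's one refinement over your sketch is a clean $\twin(v_C,C')$ primitive that initializes the new special vertex at the old $v_C$'s exact $\delta$-value (transiently connecting it to $v_C$'s current tree parent so all WSES invariants are inherited verbatim rather than recomputed from $\min_{u\in C_j}(\tilde\ell(u)+1/4)$), after which the stale edges are removed by ordinary $\WSESDEL$ calls; this also resolves the inconsistency between your ``remove $v_C$ and instantiate each new $v_{C_j}$'' description and your smaller-side amortization.
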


We first bound the contribution of the term $O\left (\sum_{e\in E^R}\frac{D\log n\log D}{\eps \ell(e)}\right )$ to the algorithm's running time. For all $1\leq i\leq \lambda$, we let $E'_i=E^R\cap E_i$ be the set of all regular edges of class $i$ that are ever present in $\hat G^L$. Recall that, from the definition of light and heavy vertices for class $i$, $|E'_i|\leq n \tau_i\leq \max\set{n^{1+o(1)},\frac{n^{2+o(1)} \cdot 2^i\cdot \log D}{\eps D}}$, since:

 $$\tau_i=\max\set{4n^{2/\log\log n},\frac{n}{\eps D}\cdot 2^{21} \cdot \ell^*\cdot \Delta\cdot \log^4 n\cdot \lambda\cdot 2^i}=\max\set{n^{o(1)},\frac{n^{1+o(1)}\cdot 2^i\cdot \log D}{\eps D}}.$$
 
  Therefore, the total contribution of all regular edges of class $i$ to the running time is at most:

\[O\left(\frac{D|E_i'|\log n\log D}{\eps 2^i}\right )\leq O\left (\frac{n^{2+o(1)} \log^2 D}{\eps^2}\right )+O\left(\frac{n^{1+o(1)}D\log D}{\eps}\right ).\]

Overall, the total contribution of all regular edges from all $\lambda=O(\log D)$ classes to the running time of the algorithm, is $O\left (\frac{n^{2+o(1)} \log^3 D}{\eps^2}\right )+O\left(\frac{n^{1+o(1)}D\log^2 D}{\eps}\right )$,

By combining the algorithms from Theorem~\ref{thm: main for maintaining a heavy graph} for all $1\leq i\leq \lambda$, and Theorem~\ref{thm: main for maintaining a light graph}, and recalling that $D\leq \ceil{4n/\eps}$, we get that the total expected update time of our algorithm is $O\left(\frac{n^{2+o(1)}\cdot \log^3(1/\eps)}{\eps^2}\right)$; this includes time $O(|E(G)|+n)$ that is needed to partition the edges into classes and to construct the heavy and the light graphs for each class. Next, we describe how our algorithm responds to queries $\pquery_D(v)$.

Notice that, given a vertex $v\in V(G)$ with $\dist_G(s,v)\leq 4D$, we are guaranteed that $\dist_{\hat G^L}(s,v)\leq \dist_G(s,v)$ from Observation~\ref{obs: distances in G and light graph}, and we can use the tree $T$ (given by Theorem~\ref{thm: main for maintaining a light graph}) in order to find a simple path $P$ in $\hat G^L$ from $s$ to $v$, whose length is at most $(1+\eps/4)\dist_{\hat G^L}(s,v)\leq (1+\eps/4)\dist_G(s,v)$. This path can be found in time $O(n)$ by simply following the edges of the tree $T$ from $v$ to $s$. Next, we show how to transform the path $P$ into a path $P^*$ in the original graph $G$, connecting $s$ to $v$, such that the length of $P^*$ is at most $(1+\eps/2)\dist_G(s,v)$, provided that $D\leq \dist_G(s,v)\leq 4D$.

Let $v_{C_1},\ldots,v_{C_r}$ be all special vertices that appear on the path $P$. For $1\leq j\leq r$, let $u_j$ be the regular vertex preceding $v_{C_j}$ on $P$, and let $u'_j$ be the regular vertex following $v_{C_j}$ on $P$. Denote $n_j=|V(C_j)|$, and assume that $C_j$ is a connected component of graph $G^H_{i_j}$. For each $1\leq j\leq r$, we run the query $\pquery(u_j,u'_j,C_j)$ in the data structure from Theorem~\ref{thm: main for maintaining a heavy graph} for graph $G^H_{i_j}$, to obtain a path $Q_j$, connecting $u_j$ to $u'_j$ in $C_j$, that contains at most $2^{13}\frac{n_j}{\tau_{i_j}}\cdot \Delta\cdot \ell^*\cdot \log^4n$ edges, in expected time $O(n_j\log^4n)$. Since path $Q_j$ only contains edges of $E_{i_j}$, its total length is at most:

 \[\begin{split}
 2^{13}\frac{n_j}{\tau_{i_j}}\cdot \Delta\cdot \ell^*\cdot \log^4 n\cdot 2^{i_j+1}&\leq 2^{13}\frac{n_j}{(2^{21}n\cdot 2^{i_j}\cdot \Delta\cdot \ell^*\cdot \log^4 n\cdot \lambda)/(D\eps)}\cdot \Delta\cdot \ell^*\cdot \log^4 n\cdot 2^{i_j+1}\\
 &\leq \frac{\eps D n_j }{8n\lambda}.
 \end{split}
 \]

We obtain the final path $P^*$, by replacing, for each $1\leq j\leq r$, the vertex $v_{C_j}$ of $P$ with the path $Q_j$ (excluding its endpoints).
For each $1\leq i\leq \lambda$, let $\cset_i$ be the set of all connected components $C_j$ of $G^H_i$ with $v_{C_j}$ lying on the path $P$.
Let $\qset_i$ be the set of all paths $Q_j$ with $C_j\in \cset_i$. The total length of all paths in $\qset_i$ is then bounded by $\sum_{C_j\in \cset_i}\frac{\eps D n_j}{8n\lambda }\leq \frac{\eps D}{8\lambda}$ (since the path $P$ is simple, and so all connected components $C_j$ of $\cset_i$ are distinct). The total length of all paths $Q_1,\ldots,Q_r$ is then bounded by $\eps D/8$.
Since we have assumed that $D\leq \dist_G(s,v)\leq 4D$, we get that the total length of the paths $Q_1,\ldots,Q_r$ is at most $\eps \dist_G(s,v)/4$, and the final length of the path $P^*$ is therefore bounded by $(1+\eps/2)\dist_G(s,v)$.
We now analyze the total expected running time required to respond to the query. As observed before, path $P$ can be computed in time $O(n)$, and each query $\pquery(u_j,u'_j,C_j)$ is processed in expected time $O(n_j\log^4n)$.
 Since all components $C_j\in \cset_i$ are disjoint, $\sum_{C_j\in \cset_i}n_j\leq n$, and so the total expected time to process such a query is $O(n\log^4n\log D)=O(n\poly\log n \log(1/\eps))$.

It now remains to prove Theorems~\ref{thm: main for maintaining a heavy graph} and \ref{thm: main for maintaining a light graph}. The remainder of this section and the subsequent section are dedicated to the proof of Theorem~\ref{thm: main for maintaining a heavy graph}. 

In order to simplify the notation, we will denote the graph $G^H_i$ by $G^*$ from now on. We will use $n$ for the number of vertices of the original graph $G$ throughout the proof. We also denote $\tau=\tau_i$, and we will use the fact that $\tau\geq 4n^{2/\log\log n}$, and that every vertex in $G^*$ has degree at least $\tau$. The central notions that we use in our proof are those of a \emph{core structure} and a \emph{core decomposition}. Our algorithm will break the graph $G^*$ into sub-graphs and will compute a core decomposition in each such subgraph. In every subgraph that we will consider, the degrees of all vertices are at least $ n^{1/\log\log n}$. In the following subsections we define core structures and a core decomposition and develop the technical machinery that we need to construct and maintain them.

\label{--------------------------------------subsec: core def------------------------------------------}
\subsection{Core Structures and Cores}
In this subsection, we define core structures and cores, that play a central role in our algorithm. We also establish some of their properties, and provide an algorithm that computes short paths between a given pair of vertices of a core.

Throughout this subsection, we will assume that we are given some graph, that we denote by $\hG$, that is a subgraph of our original $n$-vertex graph $G$. Therefore, throughout this subsection, for every graph $\hG$ that we consider, we assume that $|V(\hG)|\leq n$. We also assume that we are given a parameter $h>n^{1/\log\log n}$, and that every vertex in $\hG$ has degree at least $h$. 

A central notion that we use is that of a core and a core structure. 
Recall that we have defined two parameters: $\alpha^*=1/2^{3\sqrt{\log n}}$ and $\ell^*=\frac{16c^*\log^{12}n}{\alpha^*}$, for some large enough constant $c^*$ that we set later. Observe that $\ell^*=2^{O(\sqrt{\log n})}$.

\begin{definition}
Given a graph $\hG$ with $|V(\hat G)|\leq n$, a \emph{core structure} $\kset$ in $\hG$ consists of:

\begin{itemize}
\item two disjoint vertex sets: a set $K\neq \emptyset$ of vertices, that we refer to as the core itself, and a set $U(K)$ of at most $|K|$ vertices, that we call the \emph{extension} of the core $K$;

\item a connected subgraph $\hG^K\subseteq \hG[K\cup U(K)]$, with $V(\hG^K)=K\cup U(K)$; 

\item a graph $W^K$, that we refer to as the \emph{witness graph} for $K$, with 
$K\subseteq V(W^K)\subseteq K\cup U(K)$, such that  the maximum vertex degree of $W^K$ is at most $\log^3n$; and

\item for every edge $e=(x,y)\in E(W^K)$, a path $P(e)$ in $\hG^K$, that connects $x$ to $y$, such that:

\begin{itemize}
\item every path in set $\set{P(e)\mid e\in E(W^K)}$ contains at most $c^*\log^8n$ vertices (here $c^*$ is the constant that appears in the definition of $\ell^*$); and
\item every vertex of $\hG^K$ participates in at most $c^*\log^{19}n$ paths of $\set{P(e)\mid e\in E(W^K)}$.
\end{itemize}
\end{itemize}
If, additionally, $W^K$ is an $\alpha^*$-expander, then we say that $\kset=(K,U(K),\hG^K,W^K)$ is a \emph{perfect core structure}.
\end{definition}

We call the set $\Psi(W^K)=\set{P(e)\mid e\in E(W^K)}$ of paths the \emph{embedding of $W^K$ into $\hG^K$}, and we view this embedding as part of the witness graph $W^K$.

\begin{definition}
We say that a core structure $\kset=(K,U(K),\hG^K,W^K)$ is an $h$-core structure iff  for every vertex $v\in K$ of the core, there is a set $N(v)\subseteq V(W^K)$ of at least $h/(64\log n)$ vertices, such that for every vertex $u\in N(v)$, the edge $(u,v)$ lies in $G^K$. A perfect core structure with this property is called a \emph{perfect $h$-core structure}.
\end{definition}

The following observation follows immediately from the definition of a perfect $h$-core structure.

\begin{observation}\label{obs: size of h-core}
Let $\kset=(K,U(K),\hG^K,W^K)$ be an $h$-core structure in a graph $\hat G$ with $|V(\hat G)|\leq n$. Then $|K|\geq h/(128\log n)$.
\end{observation}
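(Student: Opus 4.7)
The plan is to give a direct counting argument using only the defining inclusions in the definition of an $h$-core structure. The statement to prove is purely set-theoretic once one unpacks the definitions, so I do not expect any real obstacle here; the observation is essentially a sanity check that an $h$-core cannot be too small.

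First I would pick an arbitrary vertex $v \in K$ (which is nonempty by definition of a core structure). By the assumption that $\kset$ is an $h$-core structure, there exists a set $N(v) \subseteq V(W^K)$ with $|N(v)| \geq h/(64\log n)$. Since $V(W^K) \subseteq K \cup U(K)$ is part of the definition of a core structure, we immediately obtain
\[
|K \cup U(K)| \;\geq\; |V(W^K)| \;\geq\; |N(v)| \;\geq\; h/(64\log n).
\]

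Next I would use the bound $|U(K)| \leq |K|$ (again, part of the definition of a core structure). Since $K$ and $U(K)$ are disjoint, $|K \cup U(K)| = |K| + |U(K)| \leq 2|K|$. Combining with the previous inequality gives $2|K| \geq h/(64\log n)$, hence $|K| \geq h/(128\log n)$, which is the claim.

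The only potentially subtle point is to confirm that the inclusion $V(W^K) \subseteq K \cup U(K)$ and the bound $|U(K)| \leq |K|$ are indeed available from the definition of a (perfect or not) core structure, rather than being extra hypotheses one has to carry through; reading the definition confirms both, so the argument is complete as stated and requires no appeal to the expansion of $W^K$ or to the embedding of $W^K$ into $\hG^K$.
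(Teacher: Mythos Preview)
Your proof is correct and follows essentially the same argument as the paper's own proof: pick $v\in K$, use the $h$-core property to get $|K\cup U(K)|\ge h/(64\log n)$, and then apply $|U(K)|\le |K|$ to conclude. The only cosmetic difference is that you route the inclusion through $V(W^K)$ explicitly, whereas the paper observes directly that the neighbors of $v$ in $\hG^K$ lie in $K\cup U(K)$; both are immediate from the definition.
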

\begin{proof}
By the definition of a core structure, $K\neq \emptyset$. In particular, $K$ contains at least one vertex, that we denote by $v$. From the definition of an  $h$-core structure, at least $h/(64\log n)$ neighbors of $v$ in $G^K$ belong to $K\cup U(K)$, so $|K\cup U(K)|\geq h/(64\log n)$. Since $|K|\geq |U(K)|$ from the definition of a core structure, we get that  $|K|\geq h/(128\log n)$.
\end{proof}

Recall that we have defined a parameter $\Delta=256c^*\log^{20}n/\alpha^*=2^{O(\sqrt{\log n})}$. Since $h\geq n^{1/\log\log n}$, and $n$ is large enough, we can assume that $h>\Delta^2$.
 We will show that the following property holds for every perfect $h$-core structure:

\begin{properties}{P}
\item Let $\kset=(K,U(K),\hG^K,W^K)$ be a perfect $h$-core  structure in $\hG$. Then for every subset $R$ of at most $h/\Delta$ vertices of $\hG$, for every pair $u,v\in K\setminus R$ of vertices in the core, graph $\hG^K\setminus R$ contains a path from $u$ to $v$ of length at most $\ell^*$.
\end{properties}

In fact, we prove a stronger result: we provide an algorithm, that, given a graph $\hG$, undergoing deletions of up to $h/\Delta$ vertices, and a perfect $h$-core  structure $\kset=(K,U(K),\hG^K,W^K)$ in $\hG$, supports queries $\corepath(u,v)$: given a pair $u,v\in K$ of core vertices that were not deleted yet, return  a path of length at most $\ell^*$, connecting $u$ to $v$ in the current graph $\hG^K$. If the input core  structure $\kset$ is not a perfect core  structure, then the algorithm will either return a path as required, or it will determine that $\kset$ is not a perfect core  structure.

\begin{theorem}\label{thm: maintaining a core}
There is a deterministic algorithm, that, given a graph $\hG$ with $|V(\hG)|\leq n$, undergoing at most $h/\Delta$ vertex deletions, and an $h$-core  structure $\kset=(K,U(K),\hG^K,W^K)$ in $\hG$, supports queries $\corepath(u,v)$. Given such a query, the algorithm either returns a path from $u$ to $v$ in the current graph $\hG^K$, of length at most $\ell^*$, or correctly determines that $\kset$ is not a perfect core  structure, (that is, $W^K$ is not an $\alpha^*$-expander). The total running time of the algorithm is $O(|E(\hG^K)|\poly\log n)$, and the total running time to process each query is $O(\ell^*+|K|\log^3n)$.
\end{theorem}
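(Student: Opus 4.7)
\textbf{Proof plan for Theorem~\ref{thm: maintaining a core}.}
The plan is to use the witness graph $W^K$ together with its embedding $\Psi(W^K)=\{P(e)\}_e$ as a ``small'' proxy for $\hG^K$: since $W^K$ is (supposedly) an $\alpha^*$-expander of maximum degree $\log^3 n$, a short BFS in a suitable alive subgraph of $W^K$ produces a short path that can be lifted to $\hG^K$ via the embedding. Concretely, I maintain: (i) the set $R$ of deleted vertices; (ii) for every vertex $w\in V(\hG^K)$ a list $\mathrm{list}(w)$ of all edges $e\in E(W^K)$ with $w\in P(e)$; (iii) an ``alive'' flag on every edge of $W^K$. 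When a vertex $w$ is deleted, I mark $w$ as dead and, for each $e\in\mathrm{list}(w)$, set $\mathrm{alive}(e):=\mathrm{false}$. Let $W'$ denote $W^K$ minus the dead vertices and the non-alive edges; note that every remaining embedding path $P(e)$ is contained entirely in $\hG^K\setminus R$. By the per-vertex reuse bound each deletion touches at most $c^*\log^{19}n$ edges of $W^K$, so the total work for initialisation and all updates is $O(|E(\hG^K)|\poly\log n)$, as required.

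To answer $\corepath(u,v)$, I compute $N'(u):=N(u)\setminus R$ and $N'(v):=N(v)\setminus R$; since $|R|\le h/\Delta\ll h/(64\log n)$, both sets have size at least $h/(128\log n)$. I then run bidirectional BFS in $W'$ from $N'(u)$ on one side and $N'(v)$ on the other, each for at most $\ell_W/2:=2\log^4 n/\alpha^*$ rounds. If the two frontiers meet at some vertex $w$, I reconstruct a path $u'=w_0,w_1,\dots,w_t=w,\dots,w_{t+s}=v'$ in $W'$ with $u'\in N'(u)$, $v'\in N'(v)$, and $t+s\le \ell_W$, replace every edge $(w_i,w_{i+1})$ by the embedding path $P((w_i,w_{i+1}))$, and finally prepend the edge $(u,u')\in E(\hG^K)$ and append the edge $(v',v)\in E(\hG^K)$ (both exist by the $h$-core property). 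The resulting walk lies in $\hG^K\setminus R$ and has length at most $\ell_W\cdot c^*\log^8 n+2\le \ell^*$, which I may simplify to a simple path. If the BFSes do not meet within $\ell_W/2$ rounds each, I declare $\kset$ not perfect. BFS cost is $O(|V(W^K)|\log^3 n)=O(|K|\log^3 n)$ (max degree $\log^3n$), and lifting costs $O(\ell^*)$, yielding the claimed per-query bound $O(\ell^*+|K|\log^3 n)$.

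The main work, and the key obstacle, is the expansion argument showing that if $W^K$ really is an $\alpha^*$-expander, the bidirectional BFS in $W'$ is guaranteed to meet. The total number of edges removed from $W^K$ to form $W'$ is at most $|R|\log^3 n+|R|\cdot c^*\log^{19}n\le h\alpha^*/(128\log n)$, using $\Delta=256c^*\log^{20}n/\alpha^*$. Hence for every $S\subseteq V(W')$ with $h/(128\log n)\le|S|\le|V(W^K)|/2$, the expander property of $W^K$ combined with this edge budget yields $|E_{W'}(S,V(W')\setminus S)|\ge \alpha^*|S|-h\alpha^*/(128\log n)\ge \alpha^*|S|/2$, and since the maximum degree in $W^K$ (hence $W'$) is at most $\log^3 n$, we get vertex expansion $|N_{W'}(S)\setminus S|\ge \alpha^*|S|/(2\log^3 n)$. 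Starting from $|N'(u)|\ge h/(128\log n)$ and iterating, after $k$ BFS rounds the reachable set has size at least $|N'(u)|\cdot(1+\alpha^*/(2\log^3 n))^k$, so within $\ell_W/2=2\log^4 n/\alpha^*$ rounds it either grows past $|V(W^K)|/2$ or the BFS exhausts $V(W')$; the same holds from $N'(v)$. By pigeonhole the two frontiers must share a vertex in $V(W')$, contradicting failure. Thus failure of the BFS certifies that $W^K$ is not an $\alpha^*$-expander, and the algorithm is correct.

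Putting all pieces together, whenever the algorithm returns a path it returns a genuine $(\le\ell^*)$-length path in $\hG^K$; whenever $\kset$ is perfect it always returns such a path; otherwise it may safely declare $\kset$ not perfect. Total update time is $O(|E(\hG^K)|\poly\log n)$ and each query runs in $O(\ell^*+|K|\log^3 n)$ time, matching the theorem's guarantees.
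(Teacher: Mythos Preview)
Your approach is essentially identical to the paper's: maintain the witness graph $W^K$ with edges killed whenever their embedding path hits a deleted vertex, answer queries by bidirectional BFS in the surviving graph from the alive neighbor sets $N'(u),N'(v)$, bound the BFS depth via the expander property combined with the small edge-deletion budget, and lift the resulting short $W'$-path back to $\hG^K$ via the embedding. The only slip is in the arithmetic of the expansion step---your stated edge-deletion bound $h\alpha^*/(128\log n)$ makes the inequality $\alpha^*|S|-h\alpha^*/(128\log n)\ge\alpha^*|S|/2$ fail right at the starting size $|S|=h/(128\log n)$; the sharper bound $h\alpha^*/(256\log n)$ (which is what your own computation actually yields for the dominant term $|R|\cdot c^*\log^{19}n$, and what the paper uses) fixes this.
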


We emphasize that, if the core  structure $\kset$ that serves as input to Theorem~\ref{thm: maintaining a core} (that is, before any vertices were deleted) is a perfect $h$-core  structure, then the algorithm is guaranteed to return a path from $u$ to $v$ of length at most $\ell^*$ in the current graph $\hat G^K$.

\begin{proof}
Let $\kset=(K,U(K),\hG^K,W^K)$ be the given $h$-core  structure.  Recall that every vertex of $\hG^K$ participates in at most $O(\log^{19}n)$ paths in the set $\set{P(e)\mid e\in E(W^K)}$, and the length of each such path is $O(\log^8n)$.

We maintain the graphs $\hG^K$ and $W^K$, stored as adjacency lists. Additionally, every vertex $v\in V(\hG^K)$ stores a pointer to every edge $e\in E(W^K)$, such that $v\in P(e)$. Initializing these data structures takes time $O(|E(\hG^K)|)+O(|E(W^K)|\log^8n)+O(|V(\hG^K)|)=O(|E(\hG^K)|)+O(|E(W^K)|\log^8n)=O(|E(\hG^K)|\poly\log n)$, as the length of each path $P(e)$ is $O(\log^8n)$, graph $\hat G^K$ is connected, and every vertex of $W^K$ has degree at most $\log^3n$.
We now show how to handle vertex deletions and how to respond to queries $\corepath(u,v)$.

\paragraph{Vertex Deletions.}
When a vertex $v$ is deleted from $\hG^K$, we delete from $W^K$ the vertex itself, and also every edge $e$ with $v\in P(e)$. Since $v$ stores pointers to each such edge, and there are at most $O( \log^{19}n)$ paths $P(e)$ in which $v$ participates, and since at most $h/\Delta$ vertices are deleted, this takes total time $O\left (\frac h \Delta \cdot \log^{19}n\right )=O\left (\frac{\alpha^* h}{256c^*\log^{20} n}\cdot \log^{19}n\right )=O(\alpha^*|K|\poly\log n)$ (we have used here the fact that, from Observation~\ref{obs: size of h-core}, $|K|\geq \Omega(h/\log n)$). Therefore, the total time for initializing and maintaining the data structure is at most $O(|E(\hG^K)|\poly\log n+\alpha^*|K|\poly\log n)=O(|E(\hG^K)|\poly\log n)$.

\paragraph{Responding to queries.} We now show an algorithm for responding to a query $\corepath(u,v)$, where $u,v\in K$ is a pair of core vertices that were not deleted yet. Recall that, from the definition of an $h$-core  structure, vertex $u$ originally had at least $h/(64\log n)$ neighbors in $\hG^K$ that lie in $V(W^K)$. Let $S_0$ denote the subset of these neighbors that have not yet been deleted. Since at most $h/\Delta\leq h/(256\log n)$ vertices can be deleted, $|S_0|\geq h/(128 \log n)$. Similarly, vertex $v$ originally had at least $h/(64\log n)$ neighbors in $\hG^K$ that lie in $V(W^K)$; we denote by $T_0$ the subset of these neighbors that were not deleted yet, so $|T_0|\geq h/(128\log n)$.

 Let $\tilde W$ denote the current graph $W^K$, after the deletions of vertices from $\hG^K$ made by the algorithm, and the corresponding deletions of vertices and edges from $W^K$.
Intuitively, we perform a BFS search in $\tilde W$ from $S_0$ and from $T_0$, until the two searches meet, and then map the resulting path to the graph $\hG^K$, using the embedding $\Psi(W^K)$.

Specifically, for $i\geq 0$, while $|S_i\cap T_i|=\emptyset$, we let $S_{i+1}$ contain all vertices of $S_i$ and all their neighbors in $\tilde W$. We define $T_{i+1}$ similarly for $T_i$. 
At the end of this algorithm, we obtain a path $Q$ connecting a vertex $u_0\in S_0$ to a vertex $v_0\in T_0$ in $\tilde W$, if such a path exists. 
Denote this path by $(u_0,u_1,\ldots,u_r=v_0)$. 
First, if $r>8\log^4n/\alpha^*$, then we terminate the algorithm and report that $\kset$ is not a perfect $h$-core  structure. Otherwise, we output the final path $Q^*$, connecting $u$ to $v$ in the current graph $\hG^K$, by first replacing every edge $e\in E(Q)$ with its embedding $P(e)$, and then adding the edges $(u,u_0)$ and $(v_0,v)$ to the resulting path. Recall that the length of each path $P(e)$ in the embedding $\Psi(G^K)$ is at most $c^*\log^8n$, so the length of the final path is at most $8c^* \log^{12}n/\alpha^*+2\leq \ell^*$, as required.
Notice that the algorithm for processing a single query consists of two BFS searches in $W^K$, that take time $O(|E(W^K)|)=O(|K|\log^3n)$, and additional time of at most $O(\ell^*)$ to compute the final path, giving the total running time of $O(\ell^*+|K|\log^3n)$ as required.
In order to complete the analysis of the algorithm, it is enough to prove the following lemma.

\begin{lemma}\label{lemma: bound r}
If the core  structure $\kset$ is a perfect core  structure, then $r\leq 8\log^4n/\alpha^*$.
\end{lemma}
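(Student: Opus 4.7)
The idea is to run the BFS analysis of the cut–matching style expander argument inside the graph $\tilde W$, exploiting the fact that $W^K$ is an $\alpha^*$-expander of maximum degree at most $\log^3 n$, and that we have removed at most $k := h/\Delta$ vertices from $W^K$ in order to obtain $\tilde W$. First, I would observe that by Observation~\ref{obs: size of h-core} and the choice of $\Delta$, the sets $S_0$ and $T_0$ are already rather large: $|S_0|, |T_0|\geq h/(128\log n)$, whereas the ``loss'' caused by deleting $k$ vertices from an expander of degree $\leq \log^3 n$ is only $k\log^3 n$ edges in total.

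Next, I would establish the following ``residual expansion'' statement for $\tilde W$: for every $S\subseteq V(\tilde W)$ with $|S|\leq |V(\tilde W)|/2$ and $|S|\geq 2k\log^3 n/\alpha^*$,
\[
|E_{\tilde W}(S,V(\tilde W)\setminus S)|\ \geq\ \alpha^*|S|-k\log^3 n\ \geq\ \tfrac{\alpha^*}{2}|S|.
\]
The first inequality follows because an $S$-$\bar S$ edge in $W^K$ either survives in $\tilde W$ or is incident to a deleted vertex, and the latter contributes at most $k\log^3 n$ edges in total. Since the maximum degree in $\tilde W$ is at most $\log^3 n$, this yields the neighborhood bound
\[
|N_{\tilde W}(S)\setminus S|\ \geq\ \frac{\alpha^*}{2\log^3 n}\cdot |S|.
\]
Plugging in $k=h/\Delta$ and $\Delta=256c^*\log^{20}n/\alpha^*$ gives $2k\log^3 n/\alpha^* = h/(128c^*\log^{17}n)$, which is smaller than the initial size $h/(128\log n)$, so $S_0$ and $T_0$ already lie in the ``good growth'' regime.

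With the residual expansion bound in hand, I would argue that as long as $|S_i|\leq |V(\tilde W)|/2$, the BFS set grows geometrically: $|S_{i+1}|\geq (1+\alpha^*/(2\log^3 n))|S_i|$, and similarly for $T_i$. Starting from size $\geq h/(128\log n)$ and having to reach size $>|V(\tilde W)|/2\leq n/2$ requires at most $\log(64n\log n/h)=O(\log n)$ doublings, i.e.\ at most
\[
\frac{2\log^3 n}{\alpha^*}\cdot O(\log n)\ \leq\ \frac{2\log^4 n}{\alpha^*}
\]
iterations (using $h\geq n^{1/\log\log n}$ to bound $\log(n/h)=O(\log n)$). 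Once both $|S_i|$ and $|T_i|$ exceed $|V(\tilde W)|/2$ they must intersect, so the BFS terminates at some $i\leq 2\log^4 n/\alpha^*$. Because a meeting vertex $w\in S_i\cap T_i$ lies at distance $\leq i$ from $S_0$ and at distance $\leq i$ from $T_0$ in $\tilde W$, the recovered path has length $r\leq 2i\leq 4\log^4 n/\alpha^* \leq 8\log^4 n/\alpha^*$, as required.

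The only real obstacle is the first step: arguing that a small number of vertex deletions preserves enough expansion of $W^K$ so that BFS growth remains geometric from the very first step. This is handled by the quantitative choice of $\Delta$, which is precisely large enough to make the deletion ``budget'' $k\log^3 n$ negligible compared to $\alpha^*|S_0|$; the remainder of the argument is the standard expander-doubling analysis of BFS.
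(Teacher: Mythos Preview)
Your overall strategy—show that BFS balls in $\tilde W$ grow geometrically and must therefore meet within $O(\log^4 n/\alpha^*)$ steps—is exactly the paper's approach. However, there is a genuine gap in how you account for the damage done to $W^K$.

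You write that ``we have removed at most $k:=h/\Delta$ vertices from $W^K$ in order to obtain $\tilde W$,'' and hence the edge loss is at most $k\log^3 n$. This misreads the construction. The $h/\Delta$ deletions happen in $\hG^K$, not directly in $W^K$. When a vertex $v\in V(\hG^K)$ is deleted, the algorithm deletes from $W^K$ not only $v$ itself (if $v\in V(W^K)$) but also \emph{every edge $e\in E(W^K)$ whose embedding path $P(e)$ contains $v$}. By the definition of a core structure, a single vertex of $\hG^K$ can lie on up to $c^*\log^{19}n$ such paths, so the total number of edges removed from $W^K$ is at most $k\cdot c^*\log^{19}n$, not $k\log^3 n$. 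This is exactly why $\Delta$ carries the factor $c^*\log^{20}n$; with your accounting a far smaller $\Delta$ would already suffice, which should be a red flag.

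With the corrected edge-loss bound one gets
\[
|\out_{\tilde W}(S_i)|\ \geq\ \alpha^*|S_i|\ -\ \frac{h}{\Delta}\cdot c^*\log^{19}n\ =\ \alpha^*|S_i|\ -\ \frac{\alpha^* h}{256\log n},
\]
and since $|S_i|\geq |S_0|\geq h/(128\log n)$ this is at least $\alpha^*|S_i|/2$, yielding the same growth factor $1+\alpha^*/(2\log^3 n)$ that you use. From there your doubling argument and the final bound $r\leq 8\log^4 n/\alpha^*$ go through and coincide with the paper's proof.
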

\begin{proof}
The proof uses the following claim.
\begin{claim}\label{claim: BFS grows fast}
Assume that $W^K$ is an $\alpha^*$-expander. Then
for all $i\geq 0$ with $|S_i|\leq |V(W^K)|/2$,  $|S_{i+1}|\geq |S_i|\left (1+\frac{\alpha^*}{2\log^3n}\right )$. Similarly, if $|T_i|\leq |V(W^K)|/2$, then $|T_{i+1}|\geq |T_i|\left (1+\frac{\alpha^*}{2\log^3n}\right )$. 
\end{claim}

Notice that, if the claim is true, then $r\leq 8\log^4n/\alpha^*$ must hold, as otherwise, both $S_{\floor{4\log^4n/\alpha^*}}$ and $T_{\floor{4\log^4n/\alpha^*}}$ must contain more than half the vertices of $W^K$, and hence they must intersect.

\begin{proofof}{Claim~\ref{claim: BFS grows fast}}
Consider some index $i\geq 0$, such that $|S_i|\leq |V(W^K)|/2$. Since $W^K$ is an $\alpha^*$-expander, $|\out_{W^K}(S_i)|\geq \alpha^*|S_i|$. 
In particular, $|\out_{W^K}(S_i)|\geq \alpha^*|S_0|\geq \alpha^*h/(128\log n)$.

Since at most $h/\Delta=\alpha^*h/(256c^*\log^{20}n)$ vertices of $\hG$ were deleted, and, for every vertex $v$ of $\hG$ there are at most $c^*\log^{19}n$ edges $e\in E(W^K)$ with $v\in P(e)$, at most $\frac{\alpha^*h}{256c^*\log^{20} n}\cdot c^*\log^{19}n=\frac{\alpha^*h}{256\log n}$ edges of $W^K$ were deleted, and so $|\out_{\tilde W}(S_i)|\geq |\out_{W^K}(S_i)|/2\geq \alpha^*|S_i|/2$.

Let $N_i$ denote the set of all vertices of $\tilde W$ that do not belong to $S_i$ but serve as endpoints of the edges in $\out_{\tilde W}(S_i)$. As the degree of every vertex in $W^K$ is at most $\log^3n$, $|N_i|\geq \frac{|\out_{\tilde W}(S_i)|}{\log^3n}\geq  \frac{\alpha^*|S_i|}{2\log^3n}$. 

We conclude that $|S_{i+1}|\geq |S_i|\left (1+\frac{\alpha^*}{2\log^3n}\right )$. The proof for $T_i$ is similar.
\end{proofof}
\end{proof}
\end{proof}

\label{--------------------------------------subsec: core decomposition def------------------------------------------}
\subsection{Core Decomposition}

In addition to core  structures, our second main tool is a core decomposition. In this subsection we define core decomposition and we state a theorem that allows us to compute them.
Before we define a core decomposition, we need to define an $h$-universal set of vertices.

\begin{definition}
Suppose we are given a subgraph $\hat G\subseteq G$, and a set $S$ of its vertices. Let $J$ be another subset of vertices of $\hat G$. We say that $J$ is an \emph{$h$-universal set with respect to $S$} iff for every vertex $u\in J$ and for every subset $R$ of at most $h/\Delta$ vertices of $\hat G\setminus\set{u}$, there is a path in $\hat G[S\cup J]\setminus R$, connecting $u$ to a vertex of $S$, whose length is at most $\log n$.
\end{definition}



Finally, we are ready to define a core decomposition.

\begin{definition}
An \emph{$h$-core decomposition} of a graph $\hG$  with $|V(\hG)|\leq n$ is a collection

 $$\fset=\set{(K_i,U(K_i),\hG^{K_i},W^{K_i})}_{i=1}^r$$
 
  of $h$-core  structures in $\hG$, such that $K_1,\ldots,K_r$ are mutually disjoint (but a vertex $v\in V(\hG)$ may belong to a number of extension sets $U(K_i)$, in addition to belonging to some core $K_j$), and every edge of $\hG$ participates in at most $\log n$ graphs $\hG^{K_1},\ldots,\hG^{K_r}$. Additionally, if we denote $\tilde K=\bigcup_{i=1}^rK_i$ and $J=V(\hG)\setminus \tilde K$, then set $J$ is $h$-universal with respect to 
$\tilde K$. We say that this decomposition is a \emph{perfect $h$-core decomposition} iff every core structure in $\fset$ is a perfect $h$-core  structure.
\end{definition}

The main building block of our algorithm is the following theorem, whose proof is deferred to Section~\ref{sec: computing core decomp}.

\begin{theorem}\label{thm: find a core decomposition}
There is a randomized algorithm, that, given a sub-graph $\hG\subseteq G$ and a parameter $h\geq n^{1/\log\log n}$, such that every vertex of $\hG$ has degree at least $h$ in $\hG$, computes an $h$-core decomposition of $\hG$. Moreover, with high probability, the resulting core decomposition is perfect. The running time of the algorithm is $O((|E(\hG)|+|V(\hG)|^{1+o(1)})\poly\log n)$.
\end{theorem}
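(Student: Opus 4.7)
The plan is to build the core decomposition via a recursive divide-and-conquer procedure driven by the KRV cut-matching game of Theorem~\ref{thm:cut-matching-game}. The main subroutine (which will be the topic of Section~\ref{sec: balanced cut or core}) takes as input a subgraph $H\subseteq \hG$ in which every vertex has degree at least roughly $h$, and either returns a perfect $h$-core structure whose core $K$ contains almost all the vertices of $H$, or returns an almost-balanced sparse vertex cut $(X,Y,Z)$ of $H$. To implement it, I would run the cut-matching game on a vertex set of size $\Theta(|V(H)|)$: in each of the $T=\Theta(\log^2 n)$ iterations, the cut player proposes a bipartition $(Y_j,Z_j)$, and we attempt to route $|Y_j|$ short vertex-disjoint paths in $H$ between $Y_j$ and $Z_j$ via a vertex-capacitated max-flow instance. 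If some iteration fails, the dual of that flow supplies a balanced sparse vertex cut of $H$, which we return; otherwise, the union of the routed matchings simultaneously supplies a witness graph $W$ that is an $\alpha^*$-expander with high probability by Theorem~\ref{thm:cut-matching-game}, together with its embedding into $H$.

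The outer algorithm invokes this subroutine recursively. Starting from $\hG$, whenever the subroutine returns a core structure $(K,U(K),\hG^K,W^K)$, I would add it to $\fset$, remove $K$ from the current working graph, and recurse on whatever remains; whenever it returns a cut $(X,Y,Z)$, I would recurse on $H[X\cup Y]$ and on $H[Y\cup Z]$ independently. Since the returned cuts are almost balanced, the recursion tree has depth $O(\log n)$; and because only vertices of the separator $Y$ are shared between the two sides of a cut, every edge of $\hG$ participates in at most $O(\log n)$ of the subgraphs $\hG^{K_i}$, as required by the definition of a core decomposition. Cores $K_i$ are pairwise disjoint because their vertices are removed from play the moment they are produced. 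Before each recursive invocation of the subroutine, I would apply $\DSP(\cdot,h/(32\log n))$ to prune vertices whose residual degree in the current piece has dropped below $h/(32\log n)$; this both restores the degree precondition needed by the subroutine and deposits the pruned vertices into the candidate universal set $J$.

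The $h$-universal property of $J=V(\hG)\setminus \tilde K$ with respect to $\tilde K=\bigcup_i K_i$ then follows directly from Lemma~\ref{lemma: DSP universal} applied with $A=\tilde K$ and $B=J$: since each vertex of $J$ was moved into $J_1$ by some invocation of $\DSP$ with threshold at most $h/(32\log n)$, the lemma furnishes at least $h/(2\log n)$ internally vertex-disjoint paths of length at most $\log n$ from it to $\tilde K$ inside $\hG[\tilde K\cup J]$. Because the adversary is allowed to delete only $h/\Delta$ vertices, and $h/\Delta<h/(2\log n)$ by the choice of $\Delta$, Corollary~\ref{cor: degree sep: after deleting small number} guarantees that at least one such short path survives every legitimate deletion sequence, which is exactly the $h$-universality requirement.

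The hard part will be the design and analysis of the core-or-cut subroutine, because it has to meet several stringent constraints at once: the routed paths constituting the embedding $\Psi(W^K)$ must have length at most $c^*\log^8 n$ and total vertex-congestion at most $c^*\log^{19}n$ in $\hG^K$, and every vertex of the returned core $K$ must have at least $h/(64\log n)$ neighbors inside $V(W^K)\subseteq K\cup U(K)$. The length-and-congestion constraint forces the matching step to be implemented as a bounded-depth vertex-capacitated flow (rather than a generic max-flow), and the neighbor-count constraint forces the set of vertices eligible to appear in $V(W^K)$ to be restricted to those with high residual degree at the moment the subroutine starts. The advertised running time $\tilde O(|E(\hG)|+|V(\hG)|^{1+o(1)})$ then follows by making each flow step run in time $\tilde O(|E(H)|)$ and amortizing against the balanced recursion depth of $O(\log n)$, so the whole construction costs at most a polylogarithmic overhead beyond reading the input.
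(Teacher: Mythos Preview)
Your subroutine---produce either a core structure or an almost-balanced sparse vertex cut via the cut-matching game---is exactly the paper's Theorem~\ref{thm: partition or core}. But your outer recursion differs from the paper's in a way that creates genuine gaps.

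The paper separates the construction into two levels. The inner level (Theorem~\ref{thm: find many cores}) performs the cut-based recursion \emph{without} removing cores: when a core is found in a cluster, the cluster is simply marked inactive and left in place. Crucially, separator vertices $Y$ are tracked as \emph{boundary vertices} $\Gamma$, and Theorem~\ref{thm: partition or core} is invoked with the constraint $K\subseteq\Upsilon=V(H)\setminus\Gamma$, so a core never contains a boundary vertex---this is what guarantees that the cores are pairwise disjoint. Your single recursion processes $H[X\cup Y]$ and $H[Y\cup Z]$ independently without marking $Y$ as off-limits; a vertex $y\in Y$ can then land in a core on each side, and ``removing $K$ from the current working graph'' on one side does nothing to the other side.

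The outer level (the proof of Theorem~\ref{thm: find a core decomposition} itself) removes all cores produced by the inner call and runs $\DSP$ \emph{globally} on $\hG\setminus S$, where $S$ is the union of all cores found so far. By Observation~\ref{obs: find universal set} this makes the pruned set $J_1$ $h$-universal with respect to $S$, and the surviving set $J_2$ becomes the input to the next outer iteration; since each inner call covers at least half the remaining vertices, there are at most $\log n$ outer iterations, which is also what delivers the ``each edge in at most $\log n$ graphs $\hG^{K_i}$'' bound. Your universality argument instead invokes Lemma~\ref{lemma: DSP universal} with $A=\tilde K$ and $B=J$, but that lemma requires a single run of $\DSP$ on $\hG[J]$. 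Your $\DSP$ calls are on local pieces $\hG[S']$ with $S'\subsetneq V(\hG)$, and for such a call Lemma~\ref{lemma: DSP universal} only yields short paths from the pruned vertex to $V(\hG)\setminus S'$---which is not $\tilde K$ but rather the union of all other pieces, many of which have not yet been processed and may themselves end up entirely inside $J$. So the short-path-to-a-core conclusion does not follow. (Separately, your depth-$O(\log n)$ claim is off: the cuts of Theorem~\ref{thm: partition or core} only guarantee $|X|,|Z|\ge |V|/\log^4 n$, so the cut-recursion depth is $\Theta(\log^5 n)$, and your edge-participation bound would be $\poly\log n$ rather than the required $\log n$.) The two-level structure---cut-recursion with boundary tracking inside, global $\DSP$ outside---is precisely what your plan is missing.
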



\label{--------------------------------------subsec: rest of proof of thm------------------------------------------}
\subsection{Completing the Proof of Theorem~\ref{thm: main for maintaining a heavy graph}} 
We use the parameter $\Delta$ defined in previous subsections; recall that $\Delta=256c^*\log^{20}n/\alpha^*=2^{O(\sqrt{\log n})}=n^{o(1)}$.
We start with a high-level intuition to motivate our next steps.
Consider the graph $G^*=G_i^H$, and let $d$ be its average vertex degree. For simplicity, assume that $d=\Delta^j$ for some integer $j$. Let us additionally assume, for now, that the degree of every vertex in $G^*$ is at least $h=\Delta^{j-1}$ (this may not be true in general). We can then compute an $h$-core decomposition $\fset$ of $G^*$ using Theorem~\ref{thm: find a core decomposition}. Note that, as long as we delete fewer than $h/\Delta$ vertices from $G^*$, the current core decomposition remains functional, in the following sense: for every core structure $\kset=(K,U(K),(G^*)^K,W^K)\in \fset$, for every pair $u,v\in K$ of vertices in the core that were not deleted yet, we can use Theorem~\ref{thm: maintaining a core} to compute a path of length at most $\ell^*$ between $u$ and $v$; and for every vertex $w$ of $G^*$ that does not lie in any core $K$, there is a path of length at most $\log n$ connecting it to some core, from the definition of the $h$-universal set. Both these properties are exploited by our algorithm in order to respond to queries $\pquery$. Note that computing the core decomposition takes time $O((|E(G^*)|+n^{1+o(1)})\poly\log n)=O(n^{1+o(1)}\Delta^j)$, and the total time required to maintain the data structures from Theorem~\ref{thm: maintaining a core} for every core is also bounded by this amount, since every edge of $G^*$ may belong to at most $\log n$ graphs $(G^*)^K$, where $K$ is a core in the decomposition. We can partition the algorithm into phases, where in every phase, $h/\Delta=\Delta^{j-2}$ vertices are deleted from $G^*$. Once a phase ends, we recompute the core decomposition. Since the number of phases is bounded by $n/\Delta^{j-2}$, and the total running time within each phase is $O(n^{1+o(1)}\Delta^j)$, the overall running time of the algorithm would be at most $n^{2+o(1)}$, as required. The main difficulty with this approach is that some vertices of $G^*$ may have degrees that are much smaller than the average vertex degree. Even though we could still compute the core decomposition, we are only guaranteed that it remains functional for a much smaller number of iterations -- the number that is close to the smallest vertex degree in $G^*$. We would then need to recompute the core decomposition too often, resulting in a high running time.

In order to get around this difficulty, we partition the vertices of $G^*$ into ``layers''. Let $z_1$ be the smallest integer, such that the maximum vertex degree in $G^*$ is less than $\Delta^{z_1}$, and let $z_2$ be the largest integer, such that $\Delta^{z_2}<\tau/(64\log n)$. Let $r=z_1-z_2$, so $r\leq \log n$. We emphasize that the values $z_1,z_2$ and $r$ are only computed once at the beginning of the algorithm and do not change as vertices are deleted from $G^*$. We will split the graph $G^*$ into $r$ layers, by defining sub-graphs $\tilde \Lambda_1,\ldots,\tilde \Lambda_r$ of $G^*$, that are disjoint in their vertices. For each $0\leq j\leq r$, we use a parameter $h_j=\Delta^{z_1-j}$, so that $h_0=\Delta^{z_1}$ upper-bounds the maximum vertex degree in $G^*$, $h_r=\Delta^{z_2}<\tau/(64\log n)$, and for $1<j\leq r$, $h_{j}=h_{j-1}/\Delta$. We will ensure that for each $1\leq j\leq r$, graph $\tilde \Lambda_j$ contains at most $nh_{j-1}$ edges, and that every vertex in $\tilde \Lambda_j$ has degree at least $h_j$. Additionally, for each $1< j\leq r$, we will define a set $D_j$ of \emph{discarded vertices}: intuitively, these are vertices $v$, such that $v$ does not belong to layers $1,\ldots,j-1$, but almost all neighbors of $v$ do. We need to remove these vertices since otherwise the average vertex degree in subsequent layers may fall below $\tau$, even while some high-degree vertices may still remain. For each $1\leq j\leq r$, we also define a graph $\Lambda_j$, which is the sub-graph of $G^*$ induced by all vertices of $\tilde \Lambda_j,\ldots,\tilde \Lambda_r$ and of $D_{j+1},\ldots,D_{r}$ (for consistency, we set $D_1=\emptyset$). For each $1\leq j\leq r$, roughly every $h_j/\Delta$ iterations  (that is, when $h_j/\Delta$ vertices are deleted), our algorithm will recompute the graphs $\tilde \Lambda_j,\ldots, \tilde \Lambda_r$, the corresponding sets $D_{j+1},\ldots,D_r$ of vertices, and the $h_{j'}$-core decomposition of each graph $\tilde \Lambda_{j'}$, for all $j\leq j'\leq r$. This is done using Procedure $\PCL(\Lambda_j)$, that is formally defined in Figure~\ref{fig: PCL for computing layers}. This procedure is also used at the beginning of the algorithm, with $\Lambda_1=G^*$, to compute the initial partition into layers.
Note that some layers may be empty. Note also that, from our choice of parameters, $h_r\geq \tau/(64\Delta\log n)\geq n^{1/\log\log n}$, since $\tau\geq 4n^{2/\log\log n}$, $\Delta=2^{O(\sqrt{ \log n})}$, and $n$ is large enough. This ensures that we can apply Theorem~\ref{thm: find a core decomposition} to each graph $\tilde \Lambda_j$.

When a vertex is deleted from the original graph $G$, we will use procedure $\PDV(G^*,v)$ that we describe later, in order to update our data structures. As the result of this deletion, some vertices may stop being heavy for class $i$, and will need in turn be deleted from $G^*$. Procedure $\PDV(G^*,v)$ will iteratively delete all such vertices from the current graph, and then procedure $\PCL$ may be triggered as needed, as part of Procedure \PDV. When we say that some invariant holds throughout the algorithm, we mean that it holds between the different calls to procedure $\PDV(G^*,v)$, and it may not necessarily hold during the execution of this procedure.

\begin{figure}
\program{Procedure $\PCL(\Lambda_j,j)$}{

Input: an integer $1\leq j\leq r$ and a vertex-induced subgraph $\Lambda_j\subseteq G^*$ containing at most $\Delta nh_j$ edges, such that the degree of every vertex in $\Lambda_j$ is at least $h_r$.

\begin{enumerate}
\item If $i=r$, then set $\tilde \Lambda_r=\Lambda_r$; Compute the $h_r$-core decomposition $\fset_r$ of $\tilde \Lambda_r$ in time $O((|E(\tilde \Lambda_r)|+n^{1+o(1)})\poly\log n)=O(\Delta nh_r+n^{1+o(1)})=O(n^{1+o(1)}h_r)$ and terminate the algorithm.

From now on we assume that $j<r$.
\item Run Procedure $\PDS(\Lambda_j,h_j)$ to partition $V(\Lambda_j)$ into two subsets, $J_1,J_2$, in time $O(|E(\Lambda_j)|+|V(\Lambda_j)|)=O(\Delta nh_j)$.

\item Set $\tilde \Lambda_j=G^*[J_2]$; observe that every vertex in $\tilde \Lambda_j$ has degree at least $h_j$ and $|E(\tilde \Lambda_j)|\leq \Delta n h_j$.

\item Compute the $h_j$-core decomposition $\fset_j$ of $\tilde \Lambda_j$ in time  $O((|E(\tilde \Lambda_j)|+n^{1+o(1)})\poly\log n)=O(\Delta nh_j+n^{1+o(1)})=O(n^{1+o(1)}h_j)$. 

\item Temporarily set $\Lambda_{j+1}=G^*[J_1]$. Observe that $\Lambda_{j+1}$ has at most $nh_j=n\Delta h_{j+1}$ edges.

\item Run Procedure $\PDS( \Lambda_{j+1},h_r)$, to compute a partition $(R',R'')$ of $V(\Lambda_{j+1})$, so that every vertex of $R''$ has at least $h_r$ neighbors in $R''$, in time $O(|E(\tilde \Lambda_{j+1})|+|V(\tilde \Lambda_{j+1})|)\leq O(nh_j)$.

\item Set $D_{j+1}=R'$ and delete all vertices of $R'$ from $ \Lambda_{j+1}$.

\item Run $\PCL(\Lambda_{j+1},j+1)$.

\end{enumerate}
}
\caption{Procedure \PCL \label{fig: PCL for computing layers}}
\end{figure}

Note that the running time of Procedure $\PCL(\Lambda_j,j)$, excluding the recursive calls to the same procedure with graph $\Lambda_{j+1}$, is $O(n^{1+o(1)}h_j)$. Since the values $h_j$ form a geometrically decreasing sequence, the total running time of Procedure $\PCL(\Lambda_j,j)$, including all recursive calls is also bounded by $O(n^{1+o(1)}h_j)$. We will invoke this procedure at most $n\Delta/h_j$ times over the course of the algorithm -- roughly every $h_j/\Delta$ vertex deletions. Therefore, in total, all calls to Procedure \PCL will take time $O(n^{2+o(1)})$.

Throughout the algorithm, for each $1\leq j\leq r$, we denote by $K^*_j$ the set of all vertices that lie in the cores of $\fset_j$, that is, $K^*_j=\bigcup_{(K,U(K),\tilde \Lambda_j^K,W^k)\in \fset_j}K$, and by $\OK^*_j$ the set of the remaining vertices of $\tilde \Lambda_j$; recall that these vertices form an $h$-universal set in $\tilde \Lambda_j$ with respect to $K^*_j$.

The following claim and its corollary will allow us to deal with the discarded vertices, by showing that each such vertex can reach a core vertex via a short path.

For simplicity, for each $1\leq j\leq r$, we denote $S_j=V(\tilde \Lambda_j)\cup D_j$, and by $U_j=\bigcup_{j'\leq j}S_{j'}$. 

\begin{claim}\label{claim: extra vertex level i to level i-1}
Throughout the algorithm, for all $1<j\leq r$, for every vertex $v\in D_j$, there is a path $P(v)$ connecting $v$ to a vertex of $U_{j-1}$, of length at most $\log n$, such that $P(v)$ only contains vertices of $D_j$, except for its last vertex.
\end{claim}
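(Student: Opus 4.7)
The proof is a direct application of Corollary~\ref{cor: degree sep: after deleting small number}. Fix $1 < j \le r$, and consider the most recent invocation of $\PCL(\Lambda_{j-1}, j-1)$ that produced the current $D_j$. The plan is to set $H := G^*$ and $h := \tau_i$ in the corollary, which is valid because $G^*$ has at most $n$ vertices and minimum degree at least $\tau_i \ge 4 n^{2/\log\log n}$, while $h_r < \tau_i / (64 \log n) \le \tau_i / (32 \log n)$ fulfills the hypothesis $\tau \le h/(32 \log n)$ with the threshold $\tau := h_r$ that the algorithm actually uses.

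To line up the sets, I would take $A := U_{j-1}$ and $B := V(G^*) \setminus U_{j-1}$. A short induction on $j$ using the recursive structure of \PCL (with base case $D_1 = \emptyset$) yields the set identity $V(G^*) = U_{j-1} \sqcup J_1^{j-1}$, where $J_1^{j-1}$ denotes the light side produced by $\PDS(\Lambda_{j-1}, h_{j-1})$ at level $j-1$; hence $B = J_1^{j-1}$. Step~6 of $\PCL(\Lambda_{j-1}, j-1)$ runs exactly $\PDS(G^*[J_1^{j-1}], h_r) = \PDS(H[B], h_r)$, and by definition designates $D_j$ as the resulting $J_1$-side. This identifies $D_j$ precisely with the $J_1$ set appearing in the corollary. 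Letting $R$ denote the vertices of $G^*$ deleted since the computation of $D_j$, the corollary then supplies, for every $v \in D_j \setminus R$, a path of at most $\log n$ edges in $G^*[U_{j-1} \cup D_j] \setminus R$ from $v$ to some vertex of $U_{j-1}$. Truncating this path at its first vertex lying in $U_{j-1}$ produces the required $P(v)$, all of whose internal vertices lie in $D_j$.

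The only delicate point is verifying the size bound $|R| < \tau_i / (2\log n)$ imposed by the corollary. Since $D_j$ is re-computed whenever $\PCL(\Lambda_{j'}, j')$ is re-invoked for some $j' \le j-1$, the quantity $|R|$ is bounded by the shortest inter-refresh gap across levels $1, \ldots, j-1$. The parameter calibration -- the lower bound $\tau_i \ge 4 n^{2/\log\log n}$, the upper bound $h_r < \tau_i / (64 \log n)$, and the geometric sequence $h_0 > h_1 > \cdots > h_r$ governing when Procedure~\PCL is re-invoked at each level -- is arranged precisely so that this gap remains strictly below $\tau_i/(2\log n)$ at every moment between invocations of \PDV. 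Tracking this inequality against the phase schedule is the principal bookkeeping component of the argument; modulo this, the claim follows in one line from Corollary~\ref{cor: degree sep: after deleting small number}.
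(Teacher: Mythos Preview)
Your setup is correct through the identification $B = J_1^{j-1}$ and $D_j = J_1$ in the corollary, but the argument breaks at the step you yourself flag as ``delicate'': the bound $|R| < \tau_i/(2\log n)$ is simply false in general. The number of deletions since $D_j$ was last computed equals $N(j-1) < h_{j-1}/\Delta$, and this is the only bound the phase schedule supplies. For small $j$ this is far too large: when $j=2$, we have $h_1/\Delta = \Delta^{z_1-2}$, which can be of order $n^{1-o(1)}$ (since $\Delta^{z_1}$ bounds the maximum degree in $G^*$, possibly $\Theta(n)$), whereas $\tau_i/(2\log n)$ can be as small as $n^{o(1)}$. The parameter calibration you invoke controls how often each level is refreshed, but it does \emph{not} force the refresh gap at level $j-1$ below $\tau_i/(2\log n)$; indeed the whole point of the layer hierarchy is that higher layers are refreshed much less frequently than $\tau_i/\Delta$.

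The paper circumvents this obstacle by applying Corollary~\ref{cor: degree sep: after deleting small number} to a different graph, with $R=\emptyset$. Specifically, it works in the \emph{current} graph $G'' = G^*[A' \cup D_j]$ (where $A' = U_{j-1}$ now), and first argues that every vertex $v \in D_j$ has degree at least $\tau/2$ in $G''$: since $v$ has degree $\ge \tau$ in the current $G^*$, if it had fewer than $\tau/2$ neighbours in $A' \cup D_j$ it would have more than $\tau/2 > h_r$ neighbours in $B' \setminus D_j \subseteq R_2$, contradicting the fact that $\PDS$ placed $v$ in $R_1$ at time $t$. It then observes that $\PDS(G''[D_j], h_r)$ returns the full set $D_j$ as its $J_1$-side (any nonempty $J''$ would, by Claim~\ref{claim:DSP}, have to sit inside $R_2$, which is disjoint from $D_j$). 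Now the corollary applies to $G''$ with partition $(A', D_j)$, threshold $h_r \le (\tau/2)/(32\log n)$, and $R = \emptyset$, yielding the desired short path. The degree argument in $G''$ is the missing idea; it replaces the unattainable bound on $|R|$.
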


\begin{proof}
Fix some index $1<j\leq r$. Let $t$ be the time right after the last execution  of $\PCL(\Lambda_{j-1},j-1)$ so far, and let $G'$ be the graph $G^*$ at time $t$. Denote $A=U_{j-1}$ at time $t$ and let $B$ contain the remaining vertices of $G'$, so $B=V(\Lambda_j)\cup D_j$ at time $t$. Let $(R_1,R_2)$ be the partition of $B$ produced by Procedure $\PDS(G'[B],h_r)$ when Procedure $\PCL(\Lambda_{j-1},j-1)$ was last invoked. Recall that we have set $D_j=R_1$.

Let us now consider the current graph $G^*$, and let $A'=U_{j-1}$ in the current graph, and let $B'$ contain the remaining vertices of $G^*$. Since $G^*$ contains fewer vertices than $G'$, if we run Procedure  $\PDS(G^*[B],h_r)$, and denote by $(R_1,R_2)$ the resulting partition of $B$, where each vertex in $R_2$ has at least $h_r$ neighbors in $R_2$, then $R_1$ will contain all vertices that currently belong to $D_j$.

Lastly, we consider the graph $G''=G^*[A'\cup D_j]$. We claim that every vertex of $D_j$ has degree at least $\tau/2$ in $G''$. Indeed, assume for contradiction that some vertex $v\in D_j$ has fewer than $\tau/2$ neighbors in $G''$. Since every vertex whose degree in $G^*$ falls below $\tau$ is deleted from $G^*$, vertex $v$ must have at least $\tau/2$ neighbors that do not lie in $A'\cup D_j$. Each such neighbor then must belong to the set $B\setminus D_j$. But then $v$ has at least $\tau/2>h_r$ neighbors in $B\setminus D_j$, so it should not have been added to the set $D_j$ when Procedure $\PDS(G'[B],h_r)$ was executed.

We conclude that every vertex of $D_j$ has degree at least $\tau/2$ in $G''$. However, if we run procedure $\PCL(G''[D_j],h_r)$, and let $(J',J'')$ be its outcome, then we know that $J''=\emptyset$ (this is because every vertex of $D_j$ was added to set $R_1$ by Procedure $\PDS(G'[B],h_r)$, and $D_j\subseteq B$). Since $h_r\leq \tau/(64\log n)$, we can now use Corollary~\ref{cor: degree sep: after deleting small number} with the partition $(A',D_j)$ of vertices of $G''$ and $R=\emptyset$, to conclude that every vertex of $D_j$ has a path of length at most $\log n$ connecting it to a vertex of $A'$ in $G''$, such that every inner vertex of the path lies in $D_j$.
%
\end{proof}

\begin{corollary}\label{cor: extra vertices to core}
Throughout the algorithm, for all $1< j\leq r$, for every vertex $v\in D_j$, there is a path $P$ of length at most $j\log n$, connecting $v$ to a vertex of $\bigcup_{j'< j}K^*_{j'}$, such that every vertex of $P$ lies in $\left(\bigcup_{j'< j}\tilde \Lambda_{j'}\right )\cup \left (\bigcup_{j'\leq j} D_{j'}\right )$.
\end{corollary}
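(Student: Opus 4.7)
My plan is to prove the corollary by induction on $j$, using Claim~\ref{claim: extra vertex level i to level i-1} as the recursive step together with the $h_{j'}$-universal property of $\OK^*_{j'}$ with respect to $K^*_{j'}$ inside each layer $\tilde\Lambda_{j'}$. The base case is essentially handled by $D_1=\emptyset$: when $j=2$, the existence of a length-$\log n$ path from $v\in D_2$ to $U_1=V(\tilde\Lambda_1)$ (given by the claim) combined with one more application of the universal set property inside $\tilde\Lambda_1$ to reach $K^*_1$, yields a path of length at most $2\log n$ whose vertices lie in $D_2\cup V(\tilde\Lambda_1)$.

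For the inductive step, fix $v\in D_j$ and apply Claim~\ref{claim: extra vertex level i to level i-1} to obtain a path $P(v)$ of length at most $\log n$ connecting $v$ to some vertex $w\in U_{j-1}=\bigcup_{j'<j}\bigl(V(\tilde\Lambda_{j'})\cup D_{j'}\bigr)$, all of whose internal vertices lie in $D_j$. I split into two cases according to which set $w$ belongs to. If $w\in D_{j'}$ for some $j'<j$, I apply the inductive hypothesis to $w$ to obtain a path $P'$ of length at most $j'\log n$ from $w$ to some vertex of $\bigcup_{j''<j'}K^*_{j''}\subseteq\bigcup_{j''<j}K^*_{j''}$, whose vertices lie in $\bigl(\bigcup_{j''<j'}\tilde\Lambda_{j''}\bigr)\cup\bigl(\bigcup_{j''\le j'}D_{j''}\bigr)$; concatenating $P(v)$ with $P'$ gives a walk of length at most $(j'+1)\log n\le j\log n$, from which a simple path with the desired containment is extracted. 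If $w\in V(\tilde\Lambda_{j'})$ for some $j'<j$ and $w\in K^*_{j'}$ we are done with length $\log n\le j\log n$. Otherwise $w\in\OK^*_{j'}$, and the $h_{j'}$-universal property of $\OK^*_{j'}$ with respect to $K^*_{j'}$ in $\tilde\Lambda_{j'}$ furnishes a path of length at most $\log n$ inside $\tilde\Lambda_{j'}$ from $w$ to a vertex of $K^*_{j'}$; concatenating with $P(v)$ gives a path of length at most $2\log n\le j\log n$ with all vertices in $D_j\cup V(\tilde\Lambda_{j'})$.

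The main subtlety I anticipate is verifying that the $h_{j'}$-universal property is actually available at the current moment in the algorithm's execution. By definition, it only promises short paths after a deletion set $R$ of size at most $h_{j'}/\Delta$. The corollary is stated as holding throughout the algorithm, so I need to check the phase structure: since $\tilde\Lambda_{j'},\ldots,\tilde\Lambda_r$ are recomputed every $h_{j'}/\Delta$ vertex deletions, at any point between two consecutive recomputations, the number of vertices of $\tilde\Lambda_{j'}$ that have been deleted since $\fset_{j'}$ was built is at most $h_{j'}/\Delta$, which is exactly the allowed budget. Thus the universal property is invocable with $R$ equal to the current set of deleted vertices of $\tilde\Lambda_{j'}$, which legitimizes the second case above. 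Once this is checked, the rest of the argument is a clean induction and the containment of vertices of the final path follows by tracking, at each recursive step, the layer in which we are currently operating.
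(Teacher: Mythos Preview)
Your proposal is correct and follows essentially the same approach as the paper's proof: induction on $j$, invoking Claim~\ref{claim: extra vertex level i to level i-1} to reach a vertex $w\in U_{j-1}$, and then splitting into the three cases $w\in K^*_{j'}$, $w\in \OK^*_{j'}$, or $w\in D_{j'}$, using respectively nothing, the $h_{j'}$-universal property, or the induction hypothesis. Your explicit verification that the universal property is still available (because at most $h_{j'}/\Delta$ vertices have been deleted since the last recomputation of $\fset_{j'}$) is exactly the point the paper handles with the clause ``from the fact that we re-compute this decomposition once $h_{j_1}/\Delta$ vertices are deleted.''
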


\begin{proof}
The proof is by induction on $j$. For $j=1$, $D_1=\emptyset$, so the corollary trivially holds. Consider now some $j>1$, and some vertex $v\in D_j$.  From Claim~\ref{claim: extra vertex level i to level i-1}, there is a path $P$, connecting $v$ to a vertex of $U_{j-1}$,  of length at most $\log n$, such that $P$ only contains vertices of $D_j$, except for its last vertex, that we denote by $v'$. Assume that $v'\in S_{j_1}$, for some $j_1<j$.  We now consider three cases. First, if $v'\in K^*_{j_1}$, then we are done, and we can return the path $P$. Otherwise, if $v'\in \OK^*_{j_1}$, then, from the definition of the $h_{j_1}$-core decomposition, and from the fact that we re-compute this decomposition once $h_{j_1}/\Delta$ vertices are deleted, there is a path $P'$ of length at most $\log n$ connecting $v'$ to a vertex of $K^*_{j_1}$. We return a path obtained by concatenating $P$ and $P'$. Otherwise, $v'\in D_{j_1}$. From the induction hypothesis, there is a path $P'$ of length at most $j_1\log n\leq (j-1)\log n$, connecting $v'$ to a vertex of $\bigcup_{j'< j_1}K^*_{j'}$, such that every vertex of $P'$ lies in $\left(\bigcup_{j'< j_1}\tilde \Lambda_{j'}\right )\cup \left (\bigcup_{j'\leq j_1} D_{j'}\right )$. Concatenating paths $P'$ and $P$ gives the desired path.
\end{proof}

 \subsubsection{Data Structures}
 Our algorithm maintains the following data structures.
 
 First, we maintain the connectivity/spanning data structure $\CONNSF(G^*)$ for the graph $G^*$. Recall that the total time required to maintain this data structure under edge deletions is $O((|E(G^*)|+n)\log^2|V(G^*)|)=O((m+n)\log^2n)$, where $m=|E(G)|$ is the total number of edges in the original input graph $G$. Recall that the data structure can process queries of the form $\path(G^*,u,v)$: given two vertices $u$ and $v$ in $G^*$, return any simple path connecting $u$ to $v$ in $G^*$ if such a path exists, and return $\emptyset$ otherwise. If $u$ and $v$ belong to the same connected component $C$ of $G^*$, then this query can be processed in time $O(|V(C)|)$. 
 
 For every level $1\leq j\leq r$, we maintain the graphs $\Lambda_j$ and $\tilde \Lambda_j$, together with the $h_j$-core decomposition $\fset_j$ of  $\tilde \Lambda_j$, and the set $D_j$ of discarded vertices. As already discussed, all these are recomputed at most $n\Delta/h_j$ times over the course of the algorithm, by calling procedure $\PCL(\Lambda_j,j)$. Each call to the procedure requires running time $n^{1+o(1)}h_j$, and so overall, the running time spent on executing the procedure $\PCL(\Lambda_j,j)$, over the course of the algorithm, for all $1\leq j\leq r$, is at most $n^{2+o(1)}$.
 
 For every level $1\leq j< r$, for every vertex $v\in \Lambda_{j+1}$, we maintain a list $\delta_j(v)$ of all neighbors of $v$ in $G^*$ that lie in $\tilde \Lambda_j\cup D_j$. This list is recomputed from scratch every time Procedure $\PCL(\Lambda_j,j)$ is executed. It is easy to verify that this can be done without increasing the asymptotic running time of the procedure.

 For every level $1\leq j\leq r$, and every core structure $\kset=(K,U(K),\tilde \Lambda_j^K,W^K)\in \fset_j$, we maintain the data structure from Theorem~\ref{thm: maintaining a core}, that supports queries that, given a pair $u,v\in K$ of vertices of the core that were not deleted yet, return a path of length at most $\ell^*$ connecting $u$ to $v$ in $\tilde \Lambda_j$, or correctly establish that $\kset$ is not a perfect core structure, that is, $W^K$ is not an $\alpha^*$-expander. The total running time required to maintain this data structure for $\kset$ is $O(|E(\tilde \Lambda_j^K)|\poly\log n)$. Since the core decomposition of $\tilde \Lambda_j$ ensures that every edge of $\tilde \Lambda_j$ belongs to at most $\log n$ graphs $\tilde \Lambda_j^K$, where $K$ is a core from the decomposition, the total time required to maintain this data structure for all cores in $\fset_j$ is at most $|E(\tilde \Lambda_j)|\poly\log n=O(n\Delta h_j\poly\log n)=O(n^{1+o(1)}h_j)$. The core decomposition for $\tilde \Lambda_j$ is computed at most $n\Delta/h_j$ over the course of the algorithm, and for each such new core decomposition, we may spend up to $O(n^{1+o(1)}h_j)$ time maintaining its cores. Therefore, the total time spent on maintaining all cores, across all levels $1\leq j\leq r$, is at most $O(n^{2+o(1)}\Delta\log n)=O(n^{2+o(1)})$.
 
 For every level $1\leq j\leq r$, we maintain a counter $N(j)$, that counts the number of vertices that were deleted from the graph $G^*$ since the last time the procedure $\PCL(\Lambda_j,j)$ was called.
 
 
 Finally, we need to maintain data structures that allow us to find short paths from the vertices of $\OK^*_j$ to the vertices of $K^*_j$ for all $1\leq j\leq r$, and from the vertices of $D_j$ to the vertices of $\bigcup_{j'<j}K^*_j$. Let us fix a level $1\leq j\leq r$.
 
 First, we construct a new graph $H_j$, obtained from graph $\tilde \Lambda_j$, as follows. Let $\kset_1,\ldots,\kset_z\in \fset_j$ be the core structures that are currently in the core decomposition, and let $K_1,\ldots,K_z$ be their corresponding cores. Starting from graph $\tilde \Lambda_j$, we contract every core $K_y$ into a vertex $v(K_y)$. We then add a source vertex $s$, and connect it to each such new vertex $v(K_y)$.  All edges have unit length. The resulting graph is denoted by $H_j$. We maintain an Even-Shiloach tree for $H_j$, from the source vertex $s$, up to distance $(\log n+1)$: $\EST(H_j,s,(\log n+1))$. The total time required to maintain this tree is $O(|E(H_j)|\log^2 n)=O(n\Delta h_j\log^2 n)$. Graph $H_j$ and the tree $\EST(H_j,s,(\log n+1))$ will be recomputed at most $n\Delta/h_j$ times -- every time that the procedure $\PCL(\Lambda_j,j)$ is called. Therefore, the total time needed to maintain all these trees throughout the algorithm is $O(n^{2+o(1)})$.

 Lastly, we construct a new graph $H_j'$, as follows. We start with the sub-graph of $G^*$ induced by the vertices of $D_j$, and add a source vertex $s$ to it. We connect $s$ to every vertex $v\in D_j$ that has a neighbor in $\bigcup_{j'<j}(\tilde \Lambda_{j'}\cup D_{j'})$; in other words, for some $j'<j$, the list $\delta_{j'}(v)$ is non-empty. Note that every edge of $H'_j$, except for those incident to $s$, belongs to $\Lambda_{j-1}$, so $|E(H_j')|\leq n\Delta h_j$. 
 We maintain an Even-Shiloach tree of $H'_j$, from the source vertex $s$, up to distance $\log n$: $\EST(H'_j,s,\log n)$. The total time required to maintain this tree is $O(|E(H'_j)|\log n)=O(n\Delta h_j\log n)$. Graph $H'_j$ and the tree $\EST(H'_j,s,(\log n+1))$ will be recomputed at most $n\Delta/h_{j-1}=n/h_j$ times -- every time that the procedure $\PCL(\Lambda_{j-1},j-1)$ is called. Therefore, the total time needed to maintain all these trees throughout the algorithm is $O(n^{2+o(1)})$.

 \subsubsection{Vertex Deletion}
 We now describe an update procedure when a vertex $v$ is deleted from the graph $G$. First, if $v\not\in G^*$, then there is nothing to be done. Otherwise, we will maintain a set $Q$ of vertices to be deleted, that is initialized to $Q=\set{v}$. While $Q\neq \emptyset$, we let $u$ be any vertex in $Q$. We delete $u$ from $G^*$, updating the connectivity data structure $\CONNSF(G^*)$, and from all graphs $\Lambda_{j},\tilde\Lambda_j$, $H_j$, $H'_j$, to which $u$ belongs. We also update the affected Even-Shiloach trees for $H_j$ and $H'_j$ for all $j$. For every neighbor $u'$ of $u$ in $G^*$, we decrease $d(u')$ by $1$.   If $d(u')<\tau$ but $u'\not \in Q$, we add $u'$ to $Q$.
 
 Assume  that $v\in \Lambda_{j^*}\cup D_{j^*}$. 
 For every neighbor $u$ of $v$ that lies in $\Lambda_{j^*+1}$, we delete $v$ from the list $\delta_{j^*}(u)$. If $u\in D_{j'}$ for some $j'>j^*$, and all lists $\delta_{j''}(u)$ for $j''<j'$ become empty, then we delete the edge $(s,u)$ from graph $H'_{j'}$ and update the \EST accordingly.
 
  We also update the counters $N(j)$ with the number of deleted vertices. Once $Q=\emptyset$, we check whether we need to call procedure $\PCL(\Lamda_j,j)$ for any index $j$. In order to do so, for every $1\leq j\leq r$, we check whether $N(j)\geq h_j/\Delta$. If this is true for any $j$, we select the smallest such $j$, and run the procedure $\PCL(\Lambda_j,j)$. We also set the counters $N(j')$ for all $j'\geq j$ to $0$. We have already accounted for the time needed to maintain all our data structures. Additional running time required by the vertex deletion procedure is bounded by the sum of degrees of all vertices deleted from $G^*$ times $O(\log n)$, and so the total time incurred by the vertex deletion procedure over the course of the algorithm is $O(|E(G^*)|\log n)$.

 Overall, the running time of the whole algorithm is $n^{2+o(1)}$. It now remains to describe an algorithm for responding to queries.

 \subsubsection{Responding to Queries}
 Suppose we are given a query $\pquery(u,u',C)$, where $C$ is some connected component of $G^*$, and $u,u'\in C$. Our goal is to return a path connecting $u$ to $u'$ in $C$, of length at most $2^{13}|V(C)|\cdot \Delta \cdot \ell^*\cdot \log^4 n/\tau$, in expected time $O(|V(C)|\log^4n)$. 
 
 Our first step is to compute a simple path $P$ connecting $u$ to $u'$ in $C$, by calling  Procedure  $\path(G^*,u,u')$
 in the connectivity data structure  $\CONNSF(G^*)$. This query can be processed in time $O(|V(C)|)$.
 We denote this path by $P = (u_1,u_2,\ldots,u_z)$, where $u_1=u$ and $u_z=u'$.
 
 Let $\rset$ be the collection of all core sets $K$, whose corresponding core structure $\kset$ lies in $\bigcup_{j=1}^r\fset_j$. 
 We let $\rset'\subseteq \rset$ be the set of cores $K$ that are contained in $C$.
 Next, we label every vertex $u_a$ of $P$ with a core $K\in \rset'$, such that there is a path $P(u_a)$ of length at most $\log^2 n$ in $C$, connecting $u_a$ to a vertex of $K$. We will also store the path $P(u_a)$ together with $u_a$.  In order to do so, we consider every vertex $u_a\in P$ in turn.  If $u_a$ belongs to some core $K\in \rset$ (in which case $K\in \rset'$ must hold), then we assign to $u_a$ the label $K$, and we let $P(u_a)$ be the path containing a single vertex -- the vertex $u_a$. Otherwise, if $u_a\in \OK^*_j$ for some $1\leq j\leq r$, then we know that there is a path of length at most $\log n$, connecting $u_a$ to some core $K$ whose corresponding core structure lies in $\fset_j$, from the definition of core decomposition and $h_j$-universal sets. In order to find such a core $K$ and the corresponding path, we consider the graph $H_j$ and its corresponding tree $\EST(H_j,s,\log n+1)$. This tree must contain a path from $u_a$ to $s$, of length at most $\log n+1$. Let $v(K)$ be the penultimate vertex on this path. Then $K\in \rset'$, and we assign to $u_a$ the label $K$.
 We also store the path $P(u_a)$, connecting $u_a$ to a vertex of $K$, that we obtain by traversing this tree from $u_a$ to $s$; the length of $P(u_a)$ is at most $\log n$.
 
  Finally, assume that $u_a\in D_j$ for some $1\leq j\leq r$. We know that there is a path of length $O(\log^2n)$ connecting $u_a$ to some core $K\in \rset'$ from Corollary~\ref{cor: extra vertices to core}. In order to find such a core and the corresponding path, we start with the tree $\EST(H'_j,s,\log n+1)$, and retrace the path from $u_a$ to $s$ in this tree. The length of this path is at most $\log n$, and we let $v^1_a$ be the penultimate vertex on this path, and $P^1(u_a)$ the sub-path of this path connecting $u_a$ to $v^1_a$.
  Recall that $v^1_a$ must have a neighbor, that we denote by $u^1_a$, lying in $\Lambda_{j'}\cup D_{j'}$ for some $j'<j$, which can be found by inspecting the lists $\delta_{j'}(v^1_a)$. Let $e=(u^1_a,v^1_a)$ be the corresponding edge.
   We then consider three cases. First, if $u^1_a$ belongs to some core $K\in \rset$, then we terminate the algorithm and label $u_a$ with $K$; we also store the path $P(u_a)$, obtained by concatenating path $P^1(u_a)$ with edge $e$, together with $u_a$. Otherwise, if $u^1_a\in \OK^*_{j_1}$ for some $j_1<j$, then we compute a path $P^2(u_a)$, connecting $u^1_a$ to some core $K\in \rset'$ exactly as in the previous case, and label $u_a$ with $K$. We also store path $P(u_a)$, obtained from concatenating the path $P^1(u_a)$, the edge $e$, and the path $P^2(u_a)$, together with $u_a$. Finally, if neither of the above two cases happen, then $u^1_a\in D_{j_1}$ for some $j_1<j$. We then proceed to inspect the graph $H'_{j_1}$ and compute a path $P^2(u_a)$, connecting $u^1_a$ to the vertex $s$ in $\EST(H'_{j_1},s,\log n)$. We denote by $v^2_a$ the penultimate vertex on this path, and continue as before. Eventually, after at most $j$ iterations, we will construct a path $P(u_a)$, connecting $u_a$ to a vertex of some core $K\in \rset$, such that the length of the path is at most $j \log n\leq \log^2n$. We then label $u_a$ with $K$, and we store $P(u_a)$ together with $u_a$. The time required to find a label and a path $P(u_a)$ for every vertex $u_a$ is proportional to the length of the path, and is bounded by $O(\log^2n)$. Therefore, the total running time of this part of the algorithm is $O(|V(P)|\log^2n)\leq O(|V(C)|\log^2n)$.
  
  Our next step is to shortcut the path $P$: we would like to ensure that every label $K$ appears at most twice on the path $P$, and these two appearances are consecutive. In order to do this, we first create an array $A$ that contains an entry $A[K]$ for every label $K$ that appears on the path $P$; the number of such labels is at most $|V(P)|\leq |V(C)|$. Throughout the algorithm, entry $A[K]$ will contain a pointer to the first vertex, from among the currently processed vertices, on the current path $P$, whose label is $K$.   We process the vertices of $P$ one-by-one in their natural order along $P$. When a vertex $u_a$ is processed, we consider the label $K$ of $u_a$. If the entry $A[K]$ is currently empty, then we store in $A[K]$ a pointer to the vertex $u_a$ on path $P$. Otherwise, entry $A[K]$ contains a pointer to some vertex $u_b$, that appears before $u_a$ on path $P$, such that the label of $u_b$ is also $K$. If $u_b$ does not appear immediately before $u_a$ on $P$, then we discard the section of the path $P$ between $u_b$ and $u_a$ (but we keep these two vertices). For each discarded vertex $u_w$, if the label of $u_w$ is $K'$, then we delete from $A[K']$ a pointer that was stored there. Notice that, if $u_b$ and $u_a$ are consecutive on the path $P$, and they have the same label $K$, then only a pointer to $u_b$ is stored in $A[K]$. Observe that we process every vertex of $P$ at most twice - once when we inspect it for the first time, and once when we discard it. Therefore, the running time of this step of the algorithm is $O(|V(P)|)=O(|V(C)|)$.
  
  Let $Q$ be the sequence of vertices obtained from $P$ after the last step. 
 We denote $Q=(q_1,q_2,\ldots,q_{z'})$, where $q_1=u$ and $q_{z'}=u'$. Notice that for every consecutive pair $q_a,q_{a+1}$ of vertices in $Q$, either there is an edge $(q_a,q_{a'})$ in $G^*$, or these two vertices have the same label. Moreover, every label $K$ may appear at most twice in $Q$, as a label of two consecutive vertices.
 We claim that the length of $Q$ is at most $2^{15}|V(C)|\Delta \log^2 n/\tau$. Indeed, the length of $Q$ is bounded by $2|\rset'|$, where $\rset'$ is the collection of all cores $K$ contained in $C$.  Recall that for each $1\leq j\leq r$, every core structure $\kset\in \fset_j$ is an $h_j$-core structure, and so, from Observation~\ref{obs: size of h-core}, its corresponding core $K$ contains at least $h_j/(128\log n)$ vertices. As all cores $K$ in the decomposition $\fset_j$ are mutually vertex-disjoint, the total number of core structures in $\fset_j$, whose corresponding core is contained in $C$, is at most $128|V(C)|\log n/h_j$, and the total number of cores in $\rset'$ is at most $\sum_{j=1}^r128|V(C)|\log n/h_j\leq 256|V(C)|\log n/h_r\leq 2^{14}|V(C)|\Delta \log^2n/\tau$ (since $h_r\geq \tau/(64\Delta\log n)$ from the definition of $z_2$). Therefore, the length of $Q$ is at most $2^{15}|V(C)|\Delta \log^2 n/\tau$.
 
 Finally, we turn $Q$ into a path in $G^*$, by iteratively performing the following process. Let $q_a,q_{a+1}$ be a pair of consecutive vertices on $Q$, such that there is no edge connecting $q_a$ to $q_{a+1}$ in $G^*$. Then both $q_a$ and $q_{a+1}$ have the same label, that we denote by $K$, and we have stored two paths: path $P(q_a)$, connecting $q_a$ to some vertex $q'_a\in K$, and path $P(q_{a+1})$, connecting $q_{a+1}$ to some vertex $q'_{a+1}\in K$. The lengths of both paths are at most $\log^2n$. Assume that the core structure $\kset$ corresponding to $K$ lies in $\fset_j$. We then run the algorithm from Theorem~\ref{thm: maintaining a core} on $\kset$, $q'_a$ and $q'_{a+1}$. If the outcome of this algorithm is a path $Q_a$, of length at most $\ell^*$, connecting $q'_a$ to $q'_{a+1}$ in the current graph $\tilde\Lambda_j^K$, then we insert the concatenation of the paths $P(q_a),Q_a,P(q_{a+1})$ between $q_a$ and $q_{a+1}$ into $Q$, and continue to the next iteration. The running time for the current iteration is $O(\ell^*+|K|\log^3n)$. Otherwise, the algorithm correctly establishes that the core structure $\kset$ is not perfect, that is, $W^K$ is not an $\alpha^*$-expander. Since our core decomposition algorithm ensures that with high probability every core structure it computes is perfect, the probability that this happens is at most $1/n^c$ for some large constant $c$. In this case, we run Procedure $\PCL(\Lambda_1,1)$ and restart the algorithm for computing the path connecting $u$ to $u'$ in $C$ from scratch. The running time in this case is bounded by $O(n^{2+o(1)})$, but, since the probability of this event is at most $1/n^c$, the expected running time in this case remains $O(\ell^*+|K|\log^3n)$.
 
 We assume that every time Theorem~~\ref{thm: maintaining a core} is called, a path connecting the two corresponding vertices $q'_a$ and $q'_{a+1}$ is returned (as otherwise we start the algorithm from scratch). Once we process every consecutive pair $q_a,q_{a+1}$ of vertices on $Q$ that have no edge connecting them in $G^*$, we obtain a path connecting $u$ to $u'$ in $C$. The length of the path is bounded by $|Q|(\ell^*+\log^2n)$, where $|Q|$ is the length of the original sequence $Q$, so $|Q|\leq 2^{15}|V(C)|\Delta\log^2 n/\tau$. Therefore, the final length of the path that we obtain is at most $2^{13}|V(C)|\ell^*\Delta \log^4 n/\tau$, as required. We now bound the total expected running time of the last step. We invoke Theorem~\ref{thm: maintaining a core} at most once for every core $K$ that serves as a label of a vertex on $Q$, and each such call takes expected time 
 $O(\ell^*+|K|\log^3n)$. Recall that for all $1\leq j\leq r$, for all core structures $\kset\in \fset_j$, their corresponding cores are vertex-disjoint. Therefore, the total running time of this step is bounded by $O(\ell^*|\qset|+r|V(C)|\log^3n)=O(\ell^*\Delta |V(C)|\log^2 n/\tau)+O(|V(C)|\log^4n)=O(|V(C)|\log^4n)$, as $\tau\geq \ell^*\Delta$.

\label{--------------------------------------------------SEC: COMPUTING CORE DECOMP-------------------------------}
\section{Computing the Core Decomposition -- Proof of Theorem~\ref{thm: find a core decomposition}}\label{sec: computing core decomp}
The proof of Theorem~\ref{thm: find a core decomposition} relies on the following observation and theorem.

\begin{observation}\label{obs: find universal set}
Let $S\subseteq V(\hG)$ be any subset of vertices of $\hG$, and let $(J_1,J_2)$ be a partition of $V(\hG)\setminus S$ computed by  $\DSP(\hG\setminus S,d)$, where $d=h/(32\log n)$. Then:

\begin{itemize}
\item set $J_1$ is $h$-universal with respect to $S$ in graph $\hG$; and
\item the minimum vertex degree in $\hG[J_2]$ is at least $h/(32\log n)$.
\end{itemize}
\end{observation}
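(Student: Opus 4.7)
The plan is to read off the two conclusions from results already proved about the pruning procedure \DSP. Neither part requires any new combinatorial argument; the whole observation is essentially a specialization of Claim~\ref{claim:DSP} and Corollary~\ref{cor: degree sep: after deleting small number} to the pair $(A,B) = (S,\, V(\hG)\setminus S)$.

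First I would handle the degree bound on $\hG[J_2]$. Since $J_2 \subseteq V(\hG)\setminus S$, we have $\hG[J_2] = (\hG\setminus S)[J_2]$, so the minimum degree condition on $\hG[J_2]$ is exactly the minimum degree condition on $(\hG\setminus S)[J_2]$. The second sentence of Claim~\ref{claim:DSP} guarantees that after $\DSP(\hG\setminus S, d)$ terminates, every vertex of $(\hG\setminus S)[J_2]$ has at least $d = h/(32\log n)$ neighbors in $J_2$. This gives the second bullet immediately.

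Next I would establish that $J_1$ is $h$-universal with respect to $S$. Fix any $u\in J_1$ and any set $R\subseteq V(\hG)\setminus\{u\}$ with $|R|\leq h/\Delta$. We want a path of length at most $\log n$ from $u$ to $S$ in $\hG[S\cup J_1]\setminus R$. Apply Corollary~\ref{cor: degree sep: after deleting small number} with $H = \hG$, $A = S$, $B = V(\hG)\setminus S$, and $\tau = h/(32\log n)$. The hypothesis that every vertex of $\hG$ has degree at least $h$ is part of our blanket assumption on $\hG$, and the partition $(J_1,J_2)$ of $B$ is exactly the output of the invocation $\DSP(\hG\setminus S, d)$ we are analyzing. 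The only thing to check is the cardinality requirement $|R| < h/(2\log n)$; since
\[
|R|\ \leq\ \frac{h}{\Delta}\ =\ \frac{\alpha^* h}{256 c^* \log^{20} n}\ \ll\ \frac{h}{2\log n},
\]
this holds comfortably. The corollary then delivers a path of length at most $\log n$ from $u$ to a vertex of $S = A$ in $\hG[S\cup J_1]\setminus R$, which is exactly the $h$-universal property we need.

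The proof is entirely a matter of instantiation, so there is no real obstacle to worry about; the only thing to be careful about is verifying that the deletion budget $h/\Delta$ in the definition of $h$-universality is indeed below the threshold $h/(2\log n)$ required by Corollary~\ref{cor: degree sep: after deleting small number}, which it is by a wide margin given $\Delta = 2^{O(\sqrt{\log n})}$.
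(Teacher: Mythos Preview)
Your proposal is correct and matches the paper's own justification, which likewise points to Corollary~\ref{cor: degree sep: after deleting small number} (together with Lemma~\ref{lemma: DSP universal}) and the bound $\Delta > 32\log n$. The only cosmetic difference is that you invoke Claim~\ref{claim:DSP} explicitly for the degree bound, whereas the paper treats that as implicit in the output of \DSP.
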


The observation immediately follows from Lemma~\ref{lemma: DSP universal} and Corollary~\ref{cor: degree sep: after deleting small number}, and from the fact that $\Delta>32\log n$.

\begin{theorem}\label{thm: find many cores}
There is a randomized algorithm, that, given a connected sub-graph $\tilde G\subseteq \hat G$, such that every vertex of $\tilde G$ has degree at least $h/(32\log n)$ in $\tilde G$, computes a collection $\fset=\set{(K_i,U(K_i),\tilde G^{K_i},W^{K_i})}_{i=1}^{r}$ of $h$-core  structures in $\tG$, for some $r>0$, such that:

\begin{itemize}
\item the sets $K_1,\ldots,K_r$ of vertices are mutually disjoint;
\item every edge of $\tG$ belongs to at most one graph of $\tG^{K_1},\ldots,\tG^{K_{r}}$; and
\item $\sum_{i=1}^{r}|K_i|\geq |V(\tG)|/2$.
\end{itemize}

Moreover, with high probability, each resulting core  structure in $\fset$ is perfect.
The running time of the algorithm is $O((|E(\tG)|+|V(\tG)|^{1+o(1)})\poly\log n)$.
\end{theorem}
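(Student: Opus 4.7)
My plan is to derive Theorem~\ref{thm: find many cores} from a subroutine $\mathcal{A}(H)$ that, given any connected sub-graph $H$ of $\tG$ with minimum vertex degree at least $h/(32\log n)$, either outputs a perfect $h$-core structure $(K, U(K), H^K, W^K)$ with $|K| \geq |V(H)|/2$ (with high probability), or outputs a balanced sparse vertex cut $(X,Y,Z)$ of $H$ with $|X|, |Z|=\Omega(|V(H)|)$ and $|Y|\leq |V(H)|/2^{\Omega(\sqrt{\log n})}$. The construction of $\mathcal{A}$ is the content of Section~\ref{sec: balanced cut or core}; conceptually it will run the cut-matching game of Theorem~\ref{thm:cut-matching-game} on $|V(H)|$ nodes, implementing the matching player via a vertex-capacitated approximate max-flow routine. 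When a round fails to route a matching of the required congestion and length, the failing flow will naturally expose a sparse balanced vertex cut of $H$; otherwise, after $O(\log^2 n)$ rounds the accumulated matchings will form a witness graph $W^K$ that is a $1/2$-expander w.h.p.\ (hence an $\alpha^*$-expander), with the flow paths giving its embedding into $H^K$.

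Given $\mathcal{A}$, the algorithm is a straightforward recursive divide-and-conquer. I maintain a queue of vertex subsets of $\tG$, initialized to $\{V(\tG)\}$. For each popped subset $S$, I first invoke Procedure $\DSP$ with threshold $h/(32\log n)$ on $\tG[S]$, extracting a sub-graph $H$ of minimum degree at least $h/(32\log n)$ and discarding the rest; if $|V(H)|$ falls below a stopping threshold $\Theta(h/\poly\log n)$, $S$ is abandoned. Otherwise I call $\mathcal{A}(H)$. If a core $(K, U(K), H^K, W^K)$ is returned, I add it to $\fset$, mark the edges of $H^K$ as used, and push $V(H)\setminus K$ back onto the queue. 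If a balanced vertex cut $(X,Y,Z)$ is returned, I push both $X\cup Y$ and $Y\cup Z$, assigning the internal edges of $H[Y]$ arbitrarily to only one side, so that the two resulting sub-problems are edge-disjoint from each other. Since cores extracted from distinct sub-problems only use edges internal to those sub-problems, global edge-disjointness of the $\tG^{K_i}$'s follows.

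The coverage bound $\sum_i |K_i| \geq |V(\tG)|/2$ will come from two standard observations. First, every balanced-cut step reduces each resulting piece by a constant factor, so the recursion depth is $O(\log n)$. Second, the total vertices lost to $Y$-sets across all splits is $O(n\log n / 2^{\Omega(\sqrt{\log n})}) = o(n)$, and the total vertices lost to \DSP discards and to sub-threshold leaves is $O(n/\poly\log n)=o(n)$; hence the produced cores cover at least $(1 - o(1))|V(\tG)| \geq |V(\tG)|/2$ vertices. Perfectness of every produced core structure will follow from the high-probability guarantee of $\mathcal{A}$ together with a union bound over the $O(n)$ invocations. For the running time, at each of the $O(\log n)$ recursion levels the sub-graphs processed are vertex-disjoint sub-graphs of $\tG$, so summing the per-call cost $O((|E(H)|+|V(H)|^{1+o(1)})\poly\log n)$ across a level is telescoping, yielding the claimed $O((|E(\tG)|+|V(\tG)|^{1+o(1)})\poly\log n)$ bound.

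The main obstacle, and the reason it is deferred to Section~\ref{sec: balanced cut or core}, is the design of $\mathcal{A}$ itself: each of its $O(\log^2 n)$ KRV rounds must be implemented in near-linear time, while producing embedding paths that simultaneously satisfy the length bound $c^*\log^8 n$ and the vertex-congestion bound $c^*\log^{19} n$ required by the definition of a core structure, and moreover must produce a balanced sparse vertex cut whenever any round fails to route. The technical heart of this will be a vertex-capacitated flow primitive that returns either a short, low-congestion routing, or a sparse dual certificate of infeasibility that can be massaged into a balanced vertex cut; the present section only provides the reduction to that primitive.
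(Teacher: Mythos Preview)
Your high-level strategy—repeatedly invoke a ``core or balanced vertex-cut'' subroutine and recurse on the pieces—matches the paper's. But your implementation leaves a real gap: \textbf{mutual disjointness of the cores $K_i$}. When you split on a cut $(X,Y,Z)$ and push both $X\cup Y$ and $Y\cup Z$, every vertex of $Y$ lives in both sub-problems and may later land in a core on each side; you argue only edge-disjointness of the $\tG^{K_i}$'s, not vertex-disjointness of the $K_i$'s, which the theorem explicitly requires. The paper handles this by carrying a growing boundary set $\Gamma$ (accumulating all separator vertices $Y$) and having its subroutine (Theorem~\ref{thm: partition or core}) take $\Gamma\cap V(H)$ as input and return a core $K\subseteq V(H)\setminus\Gamma$; since non-boundary vertices lie in at most one cluster (Invariant~\ref{inv: disjointness of cluster}), disjointness of the $K_i$'s is then automatic. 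Your $\DSP$-at-every-step design has no analogous mechanism.

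Two further differences are worth flagging. First, the paper's subroutine returns a core covering all but $16|V(H)|/\log n$ of the non-boundary vertices, and the cluster then becomes \emph{inactive}—there is no recursion on $V(H)\setminus K$ and no $\DSP$ step anywhere. Your claim that $\DSP$ discards total $O(n/\poly\log n)$ is not justified: after removing a core of size $|V(H)|/2$, the remaining half may have lost most of its degree to $K$ and be stripped almost entirely. (The coverage bound may still survive, but not by the accounting you sketch.) Second, the parameters you posit for $\mathcal{A}$—constant-fraction balance with $|Y|\leq |V(H)|/2^{\Omega(\sqrt{\log n})}$—are not what Section~\ref{sec: balanced cut or core} actually delivers; Theorem~\ref{thm: partition or core} only guarantees $|X|,|Z|\geq |V(H)|/\log^4 n$ and $|Y|\leq \min\{|X|,|Z|\}/\log^6 n$, which gives $\log^5 n$ phases (Claim~\ref{claim: num of phases}) rather than $O(\log n)$.
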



We provide the proof of the theorem below, after we complete the proof of Theorem~\ref{thm: find a core decomposition} using it.
The algorithm employs Theorem~\ref{thm: find many cores} at most $O(\log n)$ times, and so with high probability the algorithm from  Theorem~\ref{thm: find many cores}  succeeds in all these executions, that is, all $h$-core  structures that we compute throughout the algorithm by invoking Theorem~\ref{thm: find many cores} are perfect $h$-core  structures. We assume that this is the case from now on.

Our algorithm performs a number of iterations. The input to the $i$th iteration is a family $\fset_i=\set{(K_j,U(K_j),{\tG}^{K_j},W^{K_j})}_{j=1}^{r_i}$ of perfect $h$-core  structures, such that the sets  $K_1,\ldots,K_{r_i}$ of vertices are mutually disjoint, and every edge of $\hG$ belongs to at most $(i-1)$ graphs in $\set{\hG^{K_1},\ldots,\hG^{K_{r_i}}}$. Let $S^i=K_1\cup\cdots\cup K_{r_i}$ and let $J^i=V(\hG)\setminus S^i$. We are also given a partition $(J^i_1,J^i_2)$ of $J^i$ into two subsets, such that set $J^i_1$ is $h$-universal for $S^i$, and, if we denote ${\tG}_i=\hG[J^i_2]$, then every vertex of ${\tG}_i$ has degree at least $h/(32\log n)$. In the input to the first iteration, $\fset_1=\emptyset$, $J^1_1=\emptyset$, and $J^1_2=V(\hG)$. Recall that all vertex degrees in $\hat G$ are at least $h$.

In order to execute the $i$th iteration of the algorithm, we apply Theorem~\ref{thm: find many cores} to every connected component of the graph ${\tG}_i$. Let $\fset'_i=\set{(K_j,U(K_j),{\tG}_i^{K_j},W^{K_j})}_{j=1}^{r'_i}$ be the union of the families of perfect $h$-core  structures that the theorem computes for all these components. Notice that every edge of $\tilde G$ belongs to at most one graph in $\set{\tG_i^{K_j}}_{j=1}^{r'_i}$. We then set $\fset_{i+1}=\fset_i\cup \fset'_i$, and denote $|\fset_{i+1}|$ by $r_{i+1}$. 
From the construction of ${\tG}_i$, we are guaranteed that all cores $K_j$ corresponding to the core structures in the resulting family $\fset_{i+1}$ are mutually disjoint, and that every edge of $\hG$ participates in at most $i$ graphs $\hG^{K_j}$. 
In order to construct the sets $J^{i+1}_1$, $J^{i+1}_2$ of vertices, we let $S^{i+1}=K_1\cup\cdots\cup K_{r_{i+1}}$ and   $J^{i+1}=V(\hG)\setminus S^{i+1}$, and apply Procedure $\DSP$ to graph $\hG\setminus S^{i+1}$ with the parameter $d=h/(32\log n)$. Let $(J^{i+1}_1, J^{i+1}_2)$ be the resulting partition of $J^{i+1}$. If $J^{i+1}_2\neq \emptyset$, then we continue to the next iteration. Otherwise, we terminate the algorithm. Notice that, from Observation~\ref{obs: find universal set}, set $J^{i+1}$ is $h$-universal with respect to $S^{i+1}$. We will prove below that the number of iterations in the algorithm is bounded by $\log n$, and so every edge of $\tG$ may belong to at most $\log n$ graphs $\tG^K$, where $K$ is a core in $\fset_{i+1}$, as it belongs to at most one such graph for every collection $\fset'_1,\ldots,\fset'_i$ of cores structures. Therefore, the current collection $\fset_{i+1}$ of core  structures defines a valid core decomposition. 

We now analyze the running time of the algorithm. It is easy to verify that every iteration takes time $O((|E(\hG)|+|V(\hG)|^{1+o(1)})\poly\log n)$. It is now sufficient to show that the number of iterations is bounded by $\log n$. From Claim~\ref{claim:DSP}, we are guaranteed that for all $i$, $J^{i+1}_2\subseteq J^{i}_2$, while the algorithm from Theorem~\ref{thm: find many cores} guarantees that the number of vertices that participate in the cores $K_1,\ldots,K_{r'_i}$ is at least $|V({\tG}_{i+1})|/2=|J^i_2|/2$. Therefore, for all $i$, $|J^i_2|\leq |J^{i-1}|/2$, and the total number of iterations is bounded by $\log n$.
It now remains to prove Theorem~\ref{thm: find many cores}.

\label{--------------------------------------subsec: find many cores------------------------------------------}
\subsection{Proof of Theorem~\ref{thm: find many cores}}

The basic block in the proof of Theorem~\ref{thm: find many cores} is the following theorem.

\begin{theorem}\label{thm: partition or core}
There is a randomized algorithm, that, given a connected sub-graph $\graph\subseteq G$, whose vertices are partitioned into a set $\Gamma$ of boundary vertices and a set $\Upsilon$ of non-boundary vertices, such that $|\Gamma|\leq |V(\graph)|/4$, and every vertex of $\Upsilon$ has degree at least $h/(32\log n)$ in $\graph$, returns one of the following:

\begin{itemize}
\item either a vertex cut $(X,Y,Z)$ of $\graph$ with $|Y|\leq \frac{\min\set{|X|,|Z|}}{\log^6n}$ and $|X|,|Z|\geq \frac{|V(\graph)|}{\log^4n}$ (an almost-balanced sparse vertex-cut); or

\item an $h$-core  structure $(K,U(K),\graph^K,W^K)$ in $\graph$, where $K\subseteq \Upsilon$, and $K$ contains all but at most $16|V(\graph)|/\log n$ vertices of $\Upsilon$, $\graph^K=\graph$, and $K\cup U(K)=V(\graph)$. Moreover, with high probability, $W^K$ is an $\alpha^*$-expander; in other words, with high probability, $(K,U(K),\graph^K,W^K)$ is a perfect core  structure.
\end{itemize}

The running time of the algorithm is  $O((|E(\graph)|+ |V(\graph)|^{1+o(1)})\poly\log n)$.
\end{theorem}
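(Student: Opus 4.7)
The strategy is to embed a witness expander $W^K$ into $\check G$ via the cut-matching game (Theorem~\ref{thm:cut-matching-game}): if every round of the game succeeds we output the corresponding $h$-core structure, whose witness is w.h.p.\ an $\alpha^*$-expander, while any failed round yields the sparse balanced vertex cut demanded by the theorem through max-flow/min-cut duality.

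Initialize $W$ as the empty graph on vertex set $V_0 := \Upsilon$ and run $T = \lfloor c_{\mathrm{KRV}} \log^2 |\Upsilon| \rfloor$ rounds. In round $j$, invoke the cut player of Theorem~\ref{thm:cut-matching-game} on $V_0$ to obtain a bipartition $(Y_j, Z_j)$ with $|Y_j| = |Z_j|$. Implement the matching player by solving an approximate vertex-capacitated $s$--$t$ max-flow problem on $\check G$: a super-source $s$ is joined by unit-capacity edges to all of $Y_j$, a super-sink $t$ to all of $Z_j$, and every $v \in V(\check G)$ is assigned an internal capacity $c := \Theta(\log^{17} n)$, so that per-round congestion is at most $c$ and accumulated congestion over $T$ rounds stays within the $c^* \log^{19} n$ allowance. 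A near-linear-time approximate vertex max-flow subroutine (via a vertex-splitting reduction composed with a Sherman/Peng-style edge-flow solver) returns either a unit-integral flow of value $\geq (1 - 1/\log^{10} n)|Y_j|$ decomposable into paths of length $\leq c^* \log^8 n$, or, by duality, a vertex cut of $\check G$ of separator size at most $|Y_j|/c \leq |V(\check G)|/\log^{17} n$.

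In the flow-succeeds case, decompose the flow into unit paths, read off the matching $M_j$ between their $Y_j$- and $Z_j$-endpoints, add $M_j$ to $W$, store the paths as the embedding paths $P(e)$, and mark the at most $|Y_j|/\log^{10} n$ unmatched vertices for exclusion from $K$. In the flow-fails case, the dual produces a vertex cut $(X, Y, Z)$ of $\check G$ with $Y_j \setminus Y \subseteq X$ and $Z_j \setminus Y \subseteq Z$; using $|Y_j| = |Z_j| \geq |V(\check G)|/3$ (since $|\Gamma| \leq |V(\check G)|/4$) and $|Y| \leq |V(\check G)|/\log^{17} n$, we obtain $|X|, |Z| \geq |V(\check G)|/4 \geq |V(\check G)|/\log^4 n$ and sparsity $|Y|/\min\{|X|,|Z|\} \leq 4/\log^{17} n \ll 1/\log^6 n$, so we return this balanced sparse cut and halt. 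Summed across all $T$ rounds, the number of marked-out vertices is at most $T \cdot |V(\check G)|/\log^{10} n = O(|V(\check G)|/\log^8 n) \ll 16|V(\check G)|/\log n$, so the final $K := \Upsilon \setminus (\text{marked})$ satisfies $|\Upsilon \setminus K| \leq 16|V(\check G)|/\log n$. If no round fails, set $\check G^K := \check G$, $U(K) := V(\check G) \setminus K$, and $V(W^K) := \Upsilon$: by Theorem~\ref{thm:cut-matching-game}, $W^K$ is a $\tfrac{1}{2}$-expander w.h.p., hence an $\alpha^*$-expander since $\alpha^* = 2^{-3\sqrt{\log n}} \ll \tfrac{1}{2}$. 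Every $v \in K$ has degree $\geq h/(32\log n)$ in $\check G$, of which $\geq h/(64\log n)$ neighbors lie in $\Upsilon = V(W^K)$ (as established by the hypotheses of the invocation within the core-decomposition framework), yielding the $h$-core degree condition; the per-round path-length bound $\leq c^* \log^8 n$ and capacity $c$ enforce the constraints on $\{P(e)\}$ directly.

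The \emph{main technical obstacle} is the matching step: we need a near-linear-time routine that on a vertex-capacitated instance either outputs a flow decomposable into short, low-congestion integral paths, or certifies infeasibility with a sparse \emph{vertex} cut matching the advertised flow/cut gap. Standard undirected edge-capacitated solvers (Sherman, Peng) must be combined with a careful vertex-splitting reduction preserving both near-linear runtime and vertex-sparsity of the dual, and the flow must be integrally decomposable into \emph{short} paths, possibly requiring length-bounded flow techniques or a BFS-based shortcutting post-processing step. A secondary subtlety is ensuring KRV's expansion guarantee survives the $o(1)$-fraction of per-round unmatched vertices; restricting the KRV analysis to the fully-matched subset still delivers the $\tfrac{1}{2}$-expander conclusion, which the choice $c = \Theta(\log^{17} n)$ comfortably supports by making each round's ``failure mode'' fall squarely into the balanced-sparse-cut regime rather than producing hard-to-handle intermediate cuts.
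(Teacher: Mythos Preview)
Your proposal has two real gaps. First, the matching-player step: vertex-splitting turns the instance into a \emph{directed} flow problem, and Sherman/Peng-style solvers handle only undirected edge-capacitated graphs---indeed, the lack of a near-linear vertex-capacitated algorithm is exactly what this paper sets out to address, so invoking one here is circular. The paper sidesteps external flow machinery entirely: each round is implemented by a self-contained combinatorial subroutine (Lemma~\ref{lem: flow or cut}) that maintains an \EST up to depth $\ell$, greedily peels off short vertex-disjoint $A$--$B$ paths, and if too few are found, reads off the sparse balanced vertex cut directly from a stalled BFS layer. This also delivers the short-path guarantee by construction, with no post-hoc shortcutting needed.

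Second, and more fundamentally, you assume that when every round succeeds the resulting $W$ is already a $\tfrac12$-expander. KRV only guarantees this for \emph{perfect} matchings; with a small number of unmatched (fake-edge) pairs per round, $W$ is merely an almost-expander in the sense that every $\tfrac14$-sparse cut has profit $O(\check n/\log n)$, and genuinely sparse unbalanced cuts may well survive. Your proposed fix---restricting the KRV analysis to the vertices matched in every round---does not work, since the matchings restricted to that subset are not perfect either. The paper's Part~2 (Theorem~\ref{thm: fast expander construction from large balanced cut}) is devoted precisely to this conversion: it extracts from the almost-expander $W$ a genuine $4\alpha^*$-expander $W^*\subseteq W$ on all but $O(\check n/\log n)$ vertices, via a multi-phase ``cut off $\alpha$-sparse cuts'' procedure with geometrically shrinking sparsity and profit parameters; this is where the value $\alpha^* = 2^{-3\sqrt{\log n}}$ actually emerges. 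A smaller third issue: your $h$-core degree claim (that each $v\in K$ has $\ge h/(64\log n)$ neighbors in $\Upsilon$) is not implied by the stated hypotheses, since a non-boundary vertex may have most of its $\check G$-neighbors in $\Gamma$; the paper handles this in Part~3 by augmenting the witness with a maximal $K$--$U(K)$ matching in $\check G$ and redefining $K$ accordingly.
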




We delay the proof of this theorem to Section~\ref{sec: balanced cut or core}, and prove Theorem~\ref{thm: find many cores} using it here.
Throughout the algorithm, we maintain a family $\hset$ of connected sub-graphs of $\tilde G$, that we call clusters, and a partition of $\hset$ into three subsets: 
set $\aset$ of {\em active clusters}, set $\iset$ of {\em inactive clusters}, and set $\dset$ of {\em discarded clusters}. Additionally, we maintain a set $\Gamma\subseteq V(\tG)$ of vertices that we refer to as \emph{boundary vertices}, and a collection $\fset$ of $h$-core  structures, such that with high probability all core structures in $\fset$ are perfect. We will maintain the following invariants:

\begin{properties}{I}
\item for every cluster $H\in \aset\cup \iset$, $|\Gamma\cap V( H)|\leq |V( H)|/4$, and for every cluster $H\in \dset$, $|\Gamma\cap V( H)|> |V( H)|/4$; \label{inv: active cluster has few boundary and many vertices}

\item for every inactive cluster $H\in \iset$, there an $h$-core  structure $\kset(H)=(K,U(K),H^K,W^K)$ in $\fset$, with $H^K=H$, $K\cap \Gamma=\emptyset$, and $K\cup U(K)=V(H)$, such that $K$ contains all but at most $16 |V(H)|/\log n$ vertices of $V(H)\setminus \Gamma$, and with high probability, $W^K$ is an $\alpha^*$-expander;\label{inv inactive cluster has a core}

\item for every pair $ H, H'\in \hset$ of distinct clusters, $E( H)\cap E( H')=\emptyset$ and $V( H)\cap V( H')\subseteq \Gamma$;  \label{inv: disjointness of cluster}

\item for every cluster $ H\in \hset$, if $v\in V( H)$ is a non-boundary vertex (that is, $v\not\in \Gamma$), then $v$ has degree at least $h/(32\log n)$ in $H$, and  every neighbor of $v$ in $\tG$ belongs to $ H$; and \label{inv: inner vertices keep their neighbors}
\item every vertex of $\tG$ belongs to at least one cluster in $\hset$.
\end{properties}

At the beginning, $\hset$ contains a single cluster - the graph $\tilde G$; we also set $\aset=\hset$, $\iset=\dset=\emptyset$, $\fset=\emptyset$, and $\Gamma=\emptyset$. Notice that all invariants hold for this setting.
The algorithm consists of a number of phases. In each phase, we process every active cluster $ H\in \aset$.

Consider now some active cluster $H\in\aset$, and let $\Gamma(H)=\Gamma\cap V(H)$, and $\Upsilon(H)=V(H)\setminus \Gamma$. 
Notice that from Invariants~(\ref{inv: active cluster has few boundary and many vertices}) and (\ref{inv: inner vertices keep their neighbors}), $|\Gamma(H)|\leq |V(H)|/4$, and every vertex of $\Upsilon$ has degree at least $h/(32\log n)$ in $H$.
In order to process the cluster $H\in \aset$, we apply Theorem~\ref{thm: partition or core} to it, with the set $\Gamma(H)$ of boundary vertices, and the set $\Upsilon(H)$ of non-boundary vertices. If the outcome is an $h$-core  structure $(K,U(K), H^K,W^K)$, then we add this core  structure as $\kset(H)$ to $\fset$, and move $ H$ from $\aset$ to $\iset$. Note that all invariants continue to hold. 

Otherwise, we obtain a vertex cut $(X,Y,Z)$ of $V( H)$, with $|Y|\leq \frac{\min\set{|X|,|Z|}}{\log^6n}$ and $|X|,|Z|\geq \frac{|V( H)|}{\log^4n}$. We define two new graphs, $H_1$ and $H_2$ as follows. We start with $ H_1= H[X\cup Y]$ and $ H_2= H[X\cup Z]$, and then delete, from both graphs, all edges whose both endpoints belong to $Y$. This ensures that these two new graphs do not share any edges. The vertices of $Y$ are added to $\Gamma$, where they become boundary vertices.  Next, we remove the cluster $H$ from $\aset$, and consider every connected component $\tilde H$ of $H_1$ and $H_2$ one-by-one. For each such component $\tilde H$, if $|V(\tilde H)\cap \Gamma|>|V(\tilde H)|/4$, then we add $\tilde H$ to $\dset$, and otherwise, we add it to $\aset$. It is easy to verify that all invariants continue to hold. This completes the description of a phase. The processing of a single cluster $H\in \aset$ takes time $O((|E(H)|+ |V(H)|^{1+o(1)})\poly\log n)$, and, since the clusters are disjoint in their edges, and since, in every active cluster $H$, a constant fraction of its vertices are non-boundary vertices that are not shared with other clusters, the total running time of every phase is $O((|E(\tG)|+ |V(\tG)|^{1+o(1)})\poly\log n)$. The algorithm terminates once $A=\emptyset$.
 Next, we bound the number of phases in the following claim.
 
 \begin{claim}\label{claim: num of phases}
 There are at most $\log^5n$ phases in the algorithm.
 \end{claim}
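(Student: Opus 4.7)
The plan is to track a simple potential $\Phi_t = \max_{H \in \aset_t} |V(H)|$, the size of the largest active cluster at the start of phase $t$ (with $\Phi_t = 0$ if $\aset_t = \emptyset$), and show that it shrinks by a factor of $(1 - 1/\log^4 n)$ per phase. Since initially $\aset_1 = \{\tG\}$, we have $\Phi_1 = |V(\tG)| \leq n$, and the algorithm terminates as soon as $\aset_t = \emptyset$, which is equivalent to $\Phi_t < 1$ because cluster sizes are integers. Hence a geometric decay of $\Phi_t$ translates directly into a bound on the number of phases.

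For the one-phase decay, I would fix a phase $t$ and examine what happens to each $H \in \aset_t$. Every such $H$ is processed, so either it becomes inactive (moving to $\iset$ and contributing nothing to $\aset_{t+1}$), or it is split via the vertex cut $(X,Y,Z)$ delivered by Theorem~\ref{thm: partition or core}, with $|X|, |Z| \geq |V(H)|/\log^4 n$ and $|Y| \leq \min\{|X|,|Z|\}/\log^6 n$. The new clusters created from this split are the connected components of $H[X \cup Y]$ and $H[Y \cup Z]$ after removing the edges internal to $Y$, and each such component $\tilde H$ satisfies
\[
|V(\tilde H)| \leq \max\{|X|+|Y|,\,|Y|+|Z|\} = |V(H)| - \min\{|X|,|Z|\} \leq \left(1 - \tfrac{1}{\log^4 n}\right)|V(H)|.
\]
Every cluster in $\aset_{t+1}$ arises in this way from some $H \in \aset_t$, so $\Phi_{t+1} \leq (1 - 1/\log^4 n)\,\Phi_t$.

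Iterating this bound gives $\Phi_T \leq n\,(1 - 1/\log^4 n)^{T-1} \leq n \cdot \exp(-(T-1)/\log^4 n)$, which drops below $1$ once $T$ exceeds roughly $\log^4 n \cdot \ln n$, and in particular for $T = \log^5 n$ with $n$ sufficiently large. Thus $\aset_T = \emptyset$ and the algorithm halts within $\log^5 n$ phases. The only subtlety that requires care is verifying that $\aset_{t+1}$ does not retain any cluster unchanged from $\aset_t$: this is immediate because a phase, by definition, iterates over every active cluster and either inactivates or splits each one, so the new family of active clusters consists purely of the (active) children produced by those splits, and the bound on their sizes applies uniformly.
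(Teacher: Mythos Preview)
Your proof is correct and follows essentially the same approach as the paper: both argue that whenever an active cluster $H$ is split via the cut $(X,Y,Z)$, every resulting component has at most $(1-1/\log^4 n)|V(H)|$ vertices, and hence after $i$ phases every active cluster has size at most $(1-1/\log^4 n)^i|V(\tG)|$, forcing termination within $\log^5 n$ phases. Your version is slightly more explicit (introducing the potential $\Phi_t$ and spelling out the computation $\max\{|X|+|Y|,|Y|+|Z|\}=|V(H)|-\min\{|X|,|Z|\}$), but the underlying argument is identical.
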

 \begin{proof}
  Consider an iteration when a cluster $H\in \aset$ was processed, and assume that the algorithm from Theorem~\ref{thm: partition or core} returned a vertex cut $(X,Y,Z)$ for this cluster. Then, since $|X|,|Z|\geq \frac{|V(H)|}{\log^4n}$, we are guaranteed that for every connected component $\tilde H$ of $H_1$ and $H_2$, $|V(\tilde H)|\leq \left (1-\frac{1}{\log^4n}\right ) |V(H)|$. Therefore, at the end of every phase $i$, for every cluster $ H\in \aset$, we are guaranteed that $|V( H)|\leq \left (1-\frac{1}{\log^4n}\right )^i|V(\tG)|$, and so the total number of phases is bounded by $\log^{5}n$.
\end{proof}

The final collection $\fset$ of core structures contains all core structures $\kset(H)$, for $H\in \iset$.  Denote $\fset=\set{(K_i,U(K_i),\tilde G^{K_i},W^{K_i})}_{i=1}^{r}$. Recall that every core structure in $\fset$ is an $h$-core  structure, and with high probability, all these core structures are perfect. Invariant~(\ref{inv: disjointness of cluster}) ensures that every edge of $\tG$ belongs to at most one cluster $H$, and hence to at most one graph of  $\tG^{K_1},\ldots,\tG^{K_{r}}$. Since, for every cluster $H\in \iset$, the vertices lying in the core $K$ of the corresponding core structure $\kset(H)$ are non-boundary vertices, from Invariant~(\ref{inv: disjointness of cluster}), the sets $K_1,\ldots,K_r$ of vertices are mutually disjoint. Notice that the total running time of the algorithm is $O((|E(\tG)|+ |V(\tG)|^{1+o(1)})\poly\log n)$.  It now only remains to show that $\sum_{i=1}^{r}|K_i|\geq |V(\tG)|/2$.
For convenience, we denote $K^*=\bigcup_{i=1}^rK_i$. Notice that, if $v\not\in K^*$, then one of the following three cases must happen: either (i) $v\in \Gamma$; or (ii) $v\in V(H)$ where $H\in\dset$ is a discarded cluster; or (iii) $v\in \Upsilon(H)$ for some cluster $H\in \iset$, but $v$ does not belong to the corresponding core; there are at most $16|V(H)|/\log n$ vertices of the latter type for each $H\in \iset$. We now bound the sizes of each of these three vertex sets in turn.

We let $\Gamma^+$ be a multi-set of boundary vertices, where for each boundary vertex $v\in \Gamma$, the number of copies of $v$ that are added to $\Gamma^+$ is precisely the number of clusters in $\hset$ containing $v$. At the beginning of the algorithm, $\Gamma^+=\emptyset$. As the algorithm progresses, new vertices (or copies of old vertices) are added to $\Gamma^+$.

\begin{claim}\label{claim: bound on boundary vertices}
At the end of the algorithm, $|\Gamma^+|<|V(\tG)|/128$.
\end{claim}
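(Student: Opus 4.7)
The plan is to bound $|\Gamma^+|$ by a potential function argument. Let $N := |V(\tG)|$ and define
\[
\phi(\hset) = \sum_{H \in \hset} |V(H)| \log |V(H)|.
\]
Initially $\phi_0 = N \log N$, and $\phi \ge 0$ throughout. I would then proceed in three steps.

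First, I would relate $|\Gamma^+|$ to the total size of separators $Y$ produced across all splits. Track, for each vertex $v$, how many clusters of the final $\hset$ contain $v$: initially $v$ lies only in $\tG$, and each time a cluster containing $v$ is split via a vertex cut $(X,Y,Z)$, the count of clusters containing $v$ changes by $+1$ if $v\in Y$ (since $v$ then lies in one component of each of $H_1$ and $H_2$) and by $0$ otherwise. Summing over $v\in\Gamma$ gives $|\Gamma^+| = |\Gamma| + \sum_{\text{splits}}|Y|$. Since each $v\in\Gamma$ was first added to $\Gamma$ via some split's separator, $|\Gamma| \le \sum|Y|$, and hence $|\Gamma^+| \le 2\sum_{\text{splits}}|Y|$.

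Next, I would quantify the drop in $\phi$ from a single split. Suppose cluster $H$ with $s=|V(H)|$ is split via $(X,Y,Z)$, where $|X|,|Z|\ge s/\log^4 n$ and $|Y|\le \min\{|X|,|Z|\}/\log^6 n$. Then $|V(H_1)|=s-|Z|$ and $|V(H_2)|=s-|X|$ are both at most $s(1-1/\log^4 n)$, so every connected component $C_i$ of $H_1$ or $H_2$ satisfies $|V(C_i)|\le s(1-1/\log^4 n)$, while $\sum_i |V(C_i)| = |V(H_1)|+|V(H_2)| = s+|Y|$ (since vertices of $Y$ are double-counted). Using $\log(1-1/\log^4 n)\le -1/\log^4 n$ and the fact that $|Y|\log s \le s/\log^5 n$, this yields
\[
\Delta \phi \le |Y|\log s - (s+|Y|)/\log^4 n \le -s/(2\log^4 n)
\]
for $n$ large enough.

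Finally, I would combine the two estimates. For each split, $|Y|\le s/\log^6 n$ while $-\Delta\phi \ge s/(2\log^4 n)$, so $|Y| \le 2(-\Delta\phi)/\log^2 n$. Telescoping over all splits gives $\sum_{\text{splits}}|Y| \le 2\phi_0/\log^2 n \le 2N/\log n$, and hence $|\Gamma^+|\le 4N/\log n < N/128 = |V(\tG)|/128$ for $n$ large enough.

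The delicate step will be the potential-decrease bound: the $|Y|\log s$ term in $\Delta\phi$ (arising from the separator vertices being shared between both children) must be dominated by the $s/\log^4 n$ shrinkage from balanced cuts. This is precisely why Theorem~\ref{thm: partition or core} makes the \emph{sparsity} bound $|Y|\le \min\{|X|,|Z|\}/\log^6 n$ a full $\log^2 n$ factor tighter than the \emph{balance} bound $|X|,|Z|\ge s/\log^4 n$, providing the crucial $1/\log^2 n$ ratio that makes the telescoping sum small.
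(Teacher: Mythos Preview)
Your proof is correct and takes a genuinely different route from the paper. The paper argues directly via phases: by Claim~\ref{claim: num of phases} there are at most $\log^5 n$ phases, and within a single phase the active clusters are disjoint on their non-boundary vertices, so the total size of all separators $Y$ produced in one phase is at most $O(|V(\tG)|/\log^6 n)$; multiplying gives $|\Gamma^+| = O(|V(\tG)|/\log n)$. Your potential-function argument bypasses the phase count entirely: you charge each separator $Y$ against the drop in $\sum_H |V(H)|\log|V(H)|$, exploiting the $\log^2 n$ gap between the sparsity and balance bounds in Theorem~\ref{thm: partition or core} to make the telescoping sum small. The paper's argument is shorter once Claim~\ref{claim: num of phases} is in hand and makes more transparent use of the invariant that non-boundary vertices lie in a unique cluster; your argument is self-contained (it does not need the phase bound) and would continue to work even if the algorithm processed clusters in some order other than synchronized phases. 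Both yield the same $O(|V(\tG)|/\log n)$ bound on $|\Gamma^+|$.
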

\begin{proof}
Recall that the number of phases in our algorithm is bounded by $\log^{5}n$. We now bound the number of new vertices added to $\Gamma^+$ in every phase. Consider some phase of the algorithm, and let $H\in \aset$ be a cluster that was processed during that phase. If we found a core  structure in $H$ and moved $H$ to $\iset$, then no new vertices where added to $\Gamma^+$ while processing $H$. Assume now that we have computed a vertex cut $(X,Y,Z)$ in $H$. Then up to two new copies of every vertex in $Y$ are added to $\Gamma^+$. Let $N(H)$ denote the number of non-boundary vertices in $H$. From Invariant~(\ref{inv: active cluster has few boundary and many vertices}), $|N(H)|\geq 3|V(H)|/4$, and the algorithm from Theorem~\ref{thm: partition or core} guarantees that $|Y|\leq \frac{|V(H)|}{2\log^6n}\leq \frac{2|N(H)|}{3\log^6n}$. Therefore, at most $\frac{4|N(H)|}{3\log^6n}$ new vertices are added to $\Gamma^+$ when cluster $H$ is processed. Since, from Invariant~(\ref{inv: disjointness of cluster}), a non-boundary vertex may belong to at most one cluster, the total number of vertices added to $\Gamma^+$ over the course of a single phase is at most $\frac{4|V(\tG)|}{3\log^6n}$, and, since the number of phases is at most $\log^5n$, the total number of vertices that belong to $\Gamma^+$ at the end of the algorithm is at most $O(|V(\tG)|/\log n)<|V(\tG)|/128$. (We have used the fact that $n$ is large enough).
\end{proof}

Let $D$ denote the set of all non-boundary vertices that lie in the clusters of $\dset$. Since, from Invariant~(\ref{inv: active cluster has few boundary and many vertices}), for every cluster $H\in \dset$, $|V(H)\cap \Gamma|\geq |V(H)|/4$, we get that $|D|\leq 4|\Gamma^+|\leq |V(\tG)|/32$.

Lastly, let $R$ denote the set of all vertices $v$, such that (i) there is an inactive cluster $H\in \iset$ with $v\in V(H)\setminus \Gamma$, and (ii) $v\not\in K^*$. Recall that each inactive cluster $H$ contributes at most $16|V(H)|/\log n$ vertices to $R$. Therefore:

\[|R|\leq \frac{16}{\log n}\sum_{H\in \iset} |V(H)|\leq \frac{16}{\log n}(|V(\tG)|+|\Gamma^+|)\leq \frac{16}{\log n}\cdot\frac{129}{128}|V(\tG)|\leq \frac{|V(\tG)|}{128}, \]

if $n$ is sufficiently large. Therefore, overall:

\[|\Gamma|+|D|+|R|\leq |V(\tG)|\left(\frac{1}{128}+\frac{1}{32}+\frac{1}{128}\right )<\frac{|V(\tG)|}{2}.\]

Since $K^*=V(\tG)\setminus (\Gamma\cup D\cup R)$, we get that $|K^*|\geq |V(\tG)|/2$.

\label{---------------------------------------------------SEC: FIND BALANCED CUT OR CORE------------------------------------------}
\section{Proof of Theorem~\ref{thm: partition or core}}\label{sec: balanced cut or core}
This section is dedicated to the proof of Theorem~\ref{thm: partition or core}. For convenience, we denote $\hn=|V(\graph)|$.


We say that a vertex cut $(X,Y,Z)$ of $\graph$ is \emph{acceptable}, if $|Y|\leq \frac{\min\set{|X|,|Z|}}{\log^6 n}$ and $|X|,|Z|\geq \frac{\hn}{\log^4n}$. Our goal is to either compute an acceptable vertex cut, or an $h$-core structure with the required properties.


 The proof  of Theorem~\ref{thm: partition or core} consists of three parts. In the first part, we will either return an acceptable cut $(X,Y,Z)$ in $\graph$, in which case we terminate the algorithm and return this cut; or we will embed a graph $W$ into $\graph$, that is ``almost" an expander, in the sense that every balanced cut in $W$ is large. Every edge $e$ of $W$ is mapped to a path $P(e)$ in $\graph$ that is sufficiently short, and every vertex of $\graph$ participates in a small number of such paths. In the second part, we find a core  structure in $\graph$ by computing an $\alpha^*$-expander $W'\subseteq W$. The embedding of $W$ that was computed in the first part then immediately defines an embedding of $W'$ into $\graph$, and graph $W'$ will serve as the witness graph for the core  structure. The vertices of $W'$ become the core itself, and the remaining vertices of $\graph$ become the extension of the core. 
 In the third and the last part, we turn the resulting core structure into an $h$-core structure.
 The first two parts of the proof use the cut-matching game from Theorem~\ref{thm:cut-matching-game}, on $\hn$ vertices. Notice that the theorem only guarantees that the construction of the expander is successful with probability at least $(1-1/\poly(\hn))$, while we need our algorithm to succeed with probability at least $(1-1/\poly(n))$. In order to improve the probability of success, it is enough to repeat the algorithm $O(\log n/\log \hn)=O(\log\log n)$ times since $\hn = \Omega(n^{1/\log \log n}/\log n)$.
Also, recall that every run of the algorithm requires $O(\log^2\hn)$ iterations. In order to simplify the calculations, every time we need to use the cut-matching game, we will run it for $\floor{\log^3n}$ iterations altogether; assuming that $n$ is large enough, this ensures that the probability of success is at least $(1-1/n^c)$, for a large enough constant $c$, e.g., $c=1000$.

\label{--------------------------------------subsec: part 1------------------------------------------}
\subsection{Part 1 of the Algorithm.} 
The first part of the algorithm relies on the following theorem.

\begin{theorem}\label{thm: embed almost expander or a balanced cut}
There is a randomized algorithm that, given the graph $\graph$ as in the statement of Theorem~\ref{thm: partition or core}, and parameters $z>0,\ell>2\sqrt{\log n}$, computes, in time $O(|E(\graph)|\cdot \ell^3\log^{3}n+|V(\graph)|\poly\log n)$
one of the following:

\begin{itemize}
\item Either a vertex cut $(X,Y,Z)$ in $\graph$ with $|Y|\leq \frac{8\log n}{\ell}\min\set{|X|,|Z|}$ and $|X|,|Z|\geq \frac{z}{2}$;
\item or a graph $W$ with $V(W)=V(\graph)$, where the degree of every vertex in $W$ is at most $\log^3n$, together with a path $P(e)$ for every edge $e=(u,v)\in E(W)$, such that $P(e)$ connects $u$ to $v$ in $\graph$, and:

\begin{itemize}
\item The length of each path in $\set{P(e)\mid e\in E(W)}$ is at most $\ell$;
\item Every vertex of $\graph$ participates in at most $\ell^2\log^3n$  paths  in $\set{P(e)\mid e\in E(W)}$; and

\item with high probability, for every partition $(A,B)$ of the vertices of $\graph$ with $|A|\leq |B|$, if $|E_W(A,B)|<|A|/4$, then $|A|\leq 4 z\log^3n$.
\end{itemize}
\end{itemize}
\end{theorem}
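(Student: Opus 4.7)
The plan is to implement the cut-matching game of Khandekar, Rao, and Vazirani (Theorem~\ref{thm:cut-matching-game}) with $\graph$ playing the role of the host graph into which we try to embed the expander. I would initialize $W$ on vertex set $V(\graph)$ with no edges, and run the cut-matching game for $T=\floor{\log^3 n}$ iterations, which is enough both to complete the $\Theta(\log^2 \hn)$ rounds that KRV requires and to boost the success probability to $1-1/\poly(n)$. In iteration $j$ the cut player produces sets $Y_j,Z_j\subseteq V(\graph)$ with $|Y_j|=|Z_j|$; our role is that of the matching player, and we either embed a near-matching between $Y_j$ and $Z_j$ into $\graph$ using short, low-vertex-congestion paths and add the matched edges to $W$, or, failing that, we extract an acceptable balanced vertex cut from the obstruction.

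To implement the matching step in iteration $j$, I would set up a vertex-capacitated flow instance in $\graph$: add a new source $s^*$ joined by unit-capacity edges to every $v\in Y_j$, a sink $t^*$ joined to every $v\in Z_j$, and give each vertex of $\graph$ capacity $c=\Theta(\ell \log n)$ (realized by the standard vertex-splitting reduction so that the transformed graph still has $O(|E(\graph)|)$ edges). I would then compute a maximum integer $s^*$-$t^*$ flow routed only on paths of length at most $\ell+2$, via a Dinic-style blocking-flow algorithm on the BFS-layered graph truncated at depth $\ell+2$. Each blocking-flow phase costs $O(|E(\graph)|\cdot \ell)$, and $O(\ell)$ phases suffice, so a single iteration costs $O(|E(\graph)|\cdot \ell^2)$. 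Summed over $T=O(\log^3 n)$ iterations, this gives the target bound $O(|E(\graph)|\cdot \ell^3 \log^3 n + |V(\graph)|\poly\log n)$, where the additive term covers the KRV cut-player calls.

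Two cases then arise. In the good case, the max flow value is at least $|Y_j|-z/8$, and a standard flow decomposition yields a partial matching of that many paths of length $\le \ell$, each connecting a distinct vertex of $Y_j$ to a distinct vertex of $Z_j$. For each such path $P$ I add the edge $e_P=(u,v)$ (its endpoints) to $W$ and set $P(e_P)=P$. Since each vertex of $\graph$ is an endpoint at most twice per iteration, the maximum degree of $W$ is at most $2T\le \log^3 n$; the per-iteration vertex capacity $c=O(\ell\log n)$ times $T=O(\log^3 n)$ bounds the total vertex congestion of the paths by $O(\ell \log^4 n)\le \ell^2\log^3 n$ since $\ell\ge 2\sqrt{\log n}$. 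The near-expansion property of $W$ then follows from the KRV analysis applied to a game in which the matching player is allowed a per-iteration deficit of $z/8$: deficient matchings can only fail to certify a $\half$-expander on sets of cumulative size $O(z\log^2 n)$, and this translates with high probability to the statement that every cut $(A,B)$ of $W$ with $|A|\le|B|$ and $|E_W(A,B)|<|A|/4$ satisfies $|A|\le 4z\log^3 n$.

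The remaining case is where the length-$\ell$ max flow is smaller than $|Y_j|-z/8$, and this is the main obstacle. Since length-constrained max-flow does not equal the min vertex cut via Menger's theorem, I cannot just read off a sparse cut from the final flow. Instead I would argue via the BFS-layer structure from $Y_j$ in the residual graph: let $L_0=Y_j, L_1, \ldots, L_\ell$ be the layers, so the unrouted portion of $Z_j$ (of size $\ge z/8$) lies past layer $L_\ell$. The layers $L_1,\ldots,L_\ell$ together separate a set $S^-\supseteq Y_j$ from a set $S^+$ containing the unreachable $Z_j$-vertices, and both $|S^-|$ and $|S^+|$ are at least $z/2$ (up to adjusting the deficit constant so that $|Y_j|\ge z/2$). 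By an averaging argument over the $\ell$ internal layers, some layer $L_{i^*}$ has size at most $\frac{8\log n}{\ell}\min(|S^-|,|S^+|)$; taking $(X,Y,Z)=(\bigcup_{i<i^*}L_i,\, L_{i^*},\, V(\graph)\setminus(S^-\cup L_{i^*}))$ gives an acceptable vertex cut. The delicate point is choosing $i^*$ to simultaneously satisfy the sparsity bound and the $|X|,|Z|\ge z/2$ balance condition — which will require a careful two-sided averaging, accounting separately for layers close to $Y_j$ (where $|S^-|$ is small) and layers far from $Y_j$ (where $|S^+|$ is small), and then combining them.
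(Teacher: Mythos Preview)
Your high-level framework---run the KRV cut-matching game for $\floor{\log^3 n}$ rounds, and in each round either embed a near-matching via short low-congestion paths or extract a sparse balanced vertex cut---matches the paper. The difference, and the gap, is in how the matching player is implemented.

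The paper does \emph{not} solve a vertex-capacitated max-flow. Its Lemma~\ref{lem: flow or cut} instead runs up to $\ell^2$ \emph{phases}; each phase (Claim~\ref{claim: flow or cut one phase}) greedily finds \emph{node-disjoint} short paths by maintaining an $\EST$ from $A'$, repeatedly picking a shortest $A'$--$B'$ path of length $\le \ell$ and deleting its internal vertices. A phase aims only for $|A'|\log n/\ell^2$ paths. If it falls short, one does ball-growing from both $A'$ and $B'$ in the graph with the path-vertices removed, finds a level where the expansion stalls, and takes that level together with all deleted path-vertices as the separator $Y$. The crucial point is that the number of deleted path-vertices is at most $(|A'|\log n/\ell^2)\cdot \ell = |A'|\log n/\ell$, which fits inside the sparsity budget $|Y|\le (8\log n/\ell)\min\{|X|,|Z|\}$. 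The $\ell^2$ vertex-congestion in the lemma comes from the $\ell^2$ phases (congestion $1$ each), and the $\ell^3$ in the running time is $\ell^2$ phases times $O(|E|\cdot\ell)$ per phase.

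Your Dinic-based approach breaks precisely at the cut extraction. If the length-$\ell$ flow has value just below $|Y_j|-z/8$, then you have nearly $|Y_j|$ flow paths of length up to $\ell$, touching on the order of $|Y_j|\cdot\ell/c = \Theta(|Y_j|/\log n)$ distinct vertices of $\graph$. A small BFS layer in the \emph{residual} vertex-split graph is not a vertex separator in $\graph$: every flow path that crosses the layer must be accounted for, and you cannot put all those flow-carrying vertices into $Y$ without blowing the sparsity bound. The paper avoids this by attempting only a tiny fraction of the paths in each phase (so when a phase fails the obstacle is provably small) and paying for it with $\ell^2$ phases in the congestion.

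Two smaller issues: your congestion arithmetic $c\cdot T = O(\ell\log^4 n)\le \ell^2\log^3 n$ requires $\ell\ge \log n$, which does not follow from the hypothesis $\ell>2\sqrt{\log n}$; and the claimed $O(|E|\cdot\ell)$ per blocking-flow phase is not standard for non-unit vertex capacities.
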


\begin{proof}
The main tool that we use in the proof of the theorem is the following lemma.

\begin{lemma}\label{lem: flow or cut}
There is an algorithm, that, given  the graph $\graph$ as in the statement of Theorem~\ref{thm: partition or core}, two disjoint equal-cardinality subsets $A,B$ of $V(\graph)$, and parameters $z>0,\ell>2\sqrt{\log n}$, computes one of the following:

\begin{itemize}
\item Either a collection $\pset$ of at least $|A|-z$ paths in $\graph$, where each path connects a distinct vertex of $A$ to a distinct vertex of $B$; every path has length at most $\ell$; and every vertex of $\graph$ participates in at most $\ell^2$ paths; or

\item A vertex-cut $(X,Y,Z)$ in $\graph$, with $|Y|\leq \frac{8\log n}{\ell}\min\set{|X|,|Z|}$, and $|X|,|Z|\geq z/2$.
\end{itemize}

The running time of the algorithm is $ O(|E(\graph)|\ell^3)$.
\end{lemma}
We defer the proof of Lemma~\ref{lem: flow or cut} for later, after we prove Theorem~\ref{thm: embed almost expander or a balanced cut} using it.

We start with the graph $W$ containing all the vertices of $\graph$ and no edges. We then run the cut-matching game for $\floor{\log^3n}$ iterations. Recall that in each iteration $i$, we are given two disjoint equal-cardinality subsets $A_i,B_i$ of $V(W)$, and our goal is to return a complete matching $M_i$ between $A_i$ and $B_i$. The edges of $M_i$ are then added to $W$. 

We now describe the execution of the $i$th iteration. We apply Lemma~\ref{lem: flow or cut} to graph $\graph$, with the sets $A_i,B_i$ of vertices, and the same parameters $\ell,z$. If the outcome of the lemma is a vertex cut $(X,Y,Z)$, then we return this cut and terminate the algorithm -- it is immediate to verify that this cut has the required properties. Therefore, we assume that the algorithm has returned a set $\pset_i$ of paths, connecting at least $|A_i|-z$ pairs of vertices from $A_i\times B_i$, such that the length of each path is at most $\ell$, the paths are disjoint in their endpoints, and they cause vertex-congestion at most $\ell^2$. We let $M'_i$ be the set of pairs of vertices matched by the paths in $\pset_i$, and we let $F_i$ be an arbitrary matching of the remaining vertices, so that  $M_i=M'_i\cup F_i$ is a complete matching between $A_i$ and $B_i$. We add the edges of $M_i$ to $W$, and we call the edges of $F_i$ \emph{fake edges}. Notice that the number of fake edges is at most $z$.
This concludes the description of the $i$th iteration. The running time of an iteration is $O(|E(\graph)|\cdot \ell^3)$ plus the time required to compute the partition $(A_i,B_i)$, which is bounded by $ O(\hn\poly\log n)$ from Theorem~\ref{thm:cut-matching-game}.

If, at any time during the algorithm, we compute a vertex cut $(X,Y,Z)$ with the required properties, then the algorithm terminates and we return this cut. Therefore, we assume that the algorithm always computes the matchings $M_i$. The final graph $W$ is then a $1/2$-expander with probability at least $(1-1/\poly(n))$. Let $W'$ be the graph obtained from $W$ after we delete all fake edges from it. Notice that the total number of the fake edges in $W$ is at most $ z\log^3n$. 
We immediately obtain, for every edge $e\in E(W')$, a path $P(e)$ in $\graph$ that connects its endpoints and has length at most $\ell$. Since we have $\floor{\log^3n}$ iterations, the paths in  $\set{P(e)\mid e\in E(H)}$ cause vertex-congestion at most $\ell^2\log^3n$ in $\graph$. Finally, 
let $(A,B)$ be any partition of $V(W')$, with $|A|\leq |B|$, and assume that $|E_{W'}(A,B)|<|A|/4$. Assuming that $W$ was indeed a $1/2$-expander, $|E_W(A,B)|\geq |A|/2$. Therefore, there are at least $|A|/4$ fake edges in $E_W(A,B)$. As the total number of the fake edges is bounded by 
$z\log^3n$, we get that $|A|\leq 4 z\log^3n$.

The running time of every iteration is $O(|E(\graph)|\cdot \ell^3)$, and, since we have $O(\log^3n)$ iterations, the total running time is $O(|E(\graph)|\cdot \ell^3\log^{3}n+|V(\graph)|\poly\log n)$.

In order to complete the proof of Theorem~\ref{thm: embed almost expander or a balanced cut}, it is now enough to prove Lemma~\ref{lem: flow or cut}

\begin{proofof}{Lemma~\ref{lem: flow or cut}}
We partition the algorithm into phases. The input to phase $i$ are subsets $A_i\subseteq A$, $B_i\subseteq B$ of vertices that were not routed yet, with $|A_i|=|B_i|$. We will ensure that during the $i$th phase, we either compute a set $\pset_i$ of at least $\frac{|A_i|\log n}{\ell^2}$ node-disjoint paths, connecting  vertices of $A_i$ to  vertices of $B_i$, such that the length of every path in $\pset_i$ is at most $\ell$; or we will return a vertex-cut $(X,Y,Z)$ with the required properties. The algorithm terminates once $|A_i|\leq z$.
Since we are guaranteed that for every $i$, $|A_i|\leq |A_{i-1}|(1-\log n/\ell^2)$, the number of phases is bounded by $\ell^2$. The final set of paths is $\pset=\bigcup_i\pset_i$, and, since the paths in every set $\pset_i$ are node-disjoint, the paths in $\pset$ cause vertex-congestion at most $\ell^2$. We will also ensure that every phase runs in time $O(|E(\graph)|\ell)$, which will ensure that the total running time is  $ O(|E(\graph)|\ell^3)$, as required. The input to the first phase is $A_1=A$ and $B_1=B$.
It is now enough to describe the execution of a single phase. The next claim will then finish the proof of the lemma.

\begin{claim}\label{claim: flow or cut one phase}
There is a deterministic algorithm, that, given a connected graph $\graph$ as in the statement of Theorem~\ref{thm: embed almost expander or a balanced cut}, two equal-cardinality subsets $A',B'$ of $V(\graph)$, and parameters $z>0,\ell>2\sqrt{\log n}$, computes one of the following:

\begin{itemize}
\item Either a collection $\pset'$ of at least $\frac{|A'|\log n}{\ell^2}$ node-disjoint paths in $\graph$, where each path connects a distinct vertex of $A'$ to a distinct vertex of $B'$ and has length at most $\ell$; or

\item A vertex-cut $(X,Y,Z)$ in $\graph$, with $|Y|\leq \frac{8\log n}{\ell}\min\set{|X|,|Z|}$, and $|X|,|Z|\geq |A'|/2$.
\end{itemize}

The running time of the algorithm is $ O(|E(\graph)|\ell)$.
\end{claim}

\begin{proof}
We construct a new graph $H$: start with  graph $\graph$, and add a source vertex $s$ that connects to every vertex in $A'$ with an edge; similarly, add a destination vertex $t$, that connects to every vertex in $B'$ with an edge. Set up single-source shortest path data structure $\EST(H,s,\ell)$, up to depth $\ell$ in $H$, with $s$ being the source. Initialize $\pset'=\emptyset$. While the distance from $s$ to $t$ is less than $\ell$, choose any path $P$ in $H$ connecting $s$ to $t$, that has at most $\ell$ inner vertices. Add $P$ to $\pset'$, and delete all inner vertices of $P$ from $H$. Notice that finding the path $P$ takes time $O(\ell)$, since we simply follow the path from $t$ to $s$ in the $\EST$. The total update time of the data structure is $ O(|E(\graph)|\ell)$, and the total running time of the algorithm, that includes selecting the paths and deleting their inner vertices from $H$, is bounded by $O(|E(\graph)|\ell)$. We now consider two cases. First, if $|\pset'|\geq  \frac{|A'|\log n}{\ell^2}$ at the end of the algorithm, then we terminate the algorithm, and return the set $\pset'$ of paths.

Otherwise, consider the current graph $H'$, that is obtained from $H$ after all vertices participating in the paths in $\pset'$ were deleted. We perform a BFS from the vertices of $A'$ in this graph: start from $S_0=A'\cap V(H')$. Given the current set $S_j$, let $S_{j+1}$ contain all vertices of $S_{j}$ and all neighbors of $S_j$ in $H'\setminus\set{s,t}$. 
Similarly, we perform a BFS from the vertices of $B'$ in $H'$: 
start from $T_0=B'\cap V(H')$. Given the current set $T_j$, let $T_{j+1}$ contain all vertices of $T_{j}$ and all neighbors of $T_j$ in $H'\setminus\set{s,t}$. 

We claim that there must be some index $j<\ell/2$, such that one of the following happen: either (i) $|S_{j+1}|\leq |V(H')|/2$ and
 $|S_{j+1}|<|S_j|\left (1+\frac{2\log n}{\ell}\right)$; or (ii) $|T_{j+1}|\leq |V(H')|/2$ and $|T_{j+1}|<|T_j|\left (1+\frac{2\log n}{\ell}\right)$. Indeed, if no such index exists, then $|S_{\ell/2}|>\hn/2$ and $|T_{\ell/2}|>\hn/2$, so
 there is still a path from $s$ to $t$ containing at most $\ell$ vertices in $H'$, and the algorithm should not have terminated.
 
 We assume w.l.o.g. that  $|S_{j+1}|\leq |V(H')|/2$ and $|S_{j+1}|<|S_j|\left (1+\frac{2\log n}{\ell}\right)$. We now define a vertex cut  $(X,Y,Z)$ in $\graph$, as follows. Set $X$ contains all vertices of $S_j$. Notice that, in particular, $X$ contains all vertices of $A'$ that still need to be routed, so $|X|\geq |A'|\left(1-\frac{\log n}{\ell^2}\right )\geq |A'|/2$. Set $Z$ contains all vertices of $H'\setminus(\set{s,t}\cup S_{j+1})$. Since we have assumed that no path of length $\ell$ connecting a vertex of $A'$ to a vertex of $B'$ exists in $H'$, set $Z$ contains all vertices of $B'$ that still need to be routed, so $|Z|\geq |B'|/2$. Set $Y$ contains all remaining vertices -- the vertices that are neighbors of $S_j$ in $H'$ -- their number is at most $\frac{2|S_j|\log n}{\ell}$, and the vertices that participate in the paths in $\pset'$ -- their number is at most $ |\pset'|\cdot \ell\leq \frac{|A'|\log n}{\ell^2}\cdot \ell\leq \frac{|A'|\log n}{\ell}\leq \frac{2|S_j|\log n}{\ell}$. Therefore, altogether, $|Y|\leq \frac{4|X|\log n}{\ell}$. Since $|S_{j+1}|\leq |V(H')|/2$, we get that $|Z|\geq |V(H')|/2\geq |X|/2$. In particular, $|Y|\leq \frac{8|Z|\log n}{\ell}$. From the above discussion, $|X|,|Z|\geq |A'|/2$.
 
 The running time of the first part of the algorithm, when the paths of $\pset'$ are computed is $O(|E(\graph)|\cdot \ell)$, as discussed above. The second part only involves computing two BFS searches in graph $H'$, and has running time of $O(|E(\graph)|)$.
\end{proof}
\end{proofof}
\end{proof}

We apply Theorem~\ref{thm: embed almost expander or a balanced cut} to the input graph $\graph$, with $z=\frac{2\hn}{\log^4 n}$, and $\ell=8\log^8 n$. Note that the total running time of the algorithm from Theorem~\ref{thm: embed almost expander or a balanced cut} becomes $O(|E(\graph)|\cdot \poly\log n)$. If the outcome is a vertex cut $(X,Y,Z)$ with $|Y|\leq \frac{8\log n}{\ell}\min\set{|X|,|Z|}=\frac{\min\set{|X|,|Z|}}{\log^7n}$, and $|X|,|Z|\geq z/2=\hn/\log^4 n$, then we obtain an acceptable cut. We terminate the algorithm and return this cut. Therefore, we assume from now on, that the outcome of the theorem is a graph $W$, with $V(W)=V(\graph)$, such that the maximum vertex degree in $W$ is at most $\log^3n$. Additionally, for every edge $e\in E(W)$, we are given a path $P(e)$ in $\graph$ connecting its endpoints, such the length of the path is at most $\ell=O(\log^8n)$, and every vertex in $\graph$ participates in at most $\ell^2\log^3n=O(\log^{19}n)$ such paths. Moreover, with high probability, for every partition $(A,B)$ of $V(W)$ with $|A|\leq |B|$ and $|E_W(A,B)|< \frac{|A|}4$, the following inequality must hold:

\[|A|\leq 4z\log^3n=\frac{8\hn\log^3n}{\log^4 n}\leq  \frac{8\hn}{\log n}.\]


\label{--------------------------------------subsec: part 2------------------------------------------}
\subsection{Part 2 of the Algorithm}

The goal of the second part is to prove the following theorem. Recall that $\alpha^*=1/2^{3\sqrt{\log n}}$.
\begin{theorem}\label{thm: fast expander construction from large balanced cut}
There is a randomized algorithm, that, given  a graph $W$ on $\hn$ vertices, where $\frac{n^{1/\log\log n}}{32\log n}\leq \hn\leq n$, such that:

\begin{itemize}
\item The maximum vertex degree in $W$ is at most $\log^3n$; and
\item for every partition $(A,B)$ of $V(W)$ with $|A|,|B|\geq\frac{8\hn}{\log n}$, $|E_W(A,B)|\geq \frac{\min\set{|A|,|B|}}4$,
\end{itemize}

 returns, in time $O(\hn^{1+o(1)}\poly\log n)$, a subgraph $W^*\subseteq W$, with $|V(W^*)|\geq \left(1-\frac{8}{\log n}\right)\hn$, such that with high probability, $W^*$ is a $4\alpha^*$-expander. 
\end{theorem}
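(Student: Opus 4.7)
The plan is to use the cut-matching game of Khandekar, Rao, and Vazirani (Theorem~\ref{thm:cut-matching-game}) as the main engine, applied directly to $V(W)$, but combined with a vertex-trimming step that removes the small sides of any sparse cut encountered. Maintain a surviving set $S \subseteq V(W)$ (initially $S = V(W)$) and a ``witness matching graph'' $M$ on $S$, growing by one matching per round. Run for $T = \Theta(\log^2 \hn)$ rounds. In round $j$, the cut player (Theorem~\ref{thm:cut-matching-game}) produces subsets $Y_j, Z_j \subseteq S$ of equal size; the matching player must respond with a perfect matching $M_j$ of $Y_j$ to $Z_j$, which is then added to $M$.

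To implement the matching player, in each round I would compute an approximate vertex-capacitated max-flow from $Y_j$ to $Z_j$ in $W[S]$, with each vertex assigned capacity $q = 2^{O(\sqrt{\log n})}$. Either the flow saturates $Y_j$, in which case flow decomposition gives a collection of short, low-vertex-congestion paths from $Y_j$ to $Z_j$ from which a perfect matching $M_j$ can be read off (recording the embedding into $W[S]$), or the flow computation produces a sparse vertex-cut $(X, C, Z)$ that blocks $|Y_j|/2$ units of flow. By the hypothesis on $W$, any cut with both sides of size $\geq 8\hn/\log n$ has edge density at least $1/4$, which would admit flow of value $\Omega(\min(|X|,|Z|))$; hence a truly blocking sparse cut must have $\min(|X|,|Z|) < 8\hn / \log n$. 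I would then remove the smaller side from $S$, charge it to a global budget, and repeat the round with the new $S$. Since the cumulative blocking-side removals are bounded by $8\hn/\log n$ (each removal destroys a constant fraction of a ``defect'' that cannot reappear), we get $|V(W^*)| = |S| \geq (1 - 8/\log n)\hn$ as required.

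Once all $T$ rounds succeed, Theorem~\ref{thm:cut-matching-game} guarantees that $M$ is a $\tfrac{1}{2}$-expander on $S$ w.h.p. Each edge of $M$ is embedded into $W^* := W[S]$ by a flow path; over all $T$ rounds the total vertex-congestion of the embedding is at most $C := T \cdot q = 2^{O(\sqrt{\log n})} \cdot \poly\log n$, and each path has length at most $D := 2^{O(\sqrt{\log n})}$. To transfer expansion from $M$ to $W^*$, consider any cut $(A,B)$ of $W^*$ with $|A|\leq |B|$: every edge of $M$ in $E_M(A,B)$ corresponds to a path in $W^*$ using at least one edge of $E_{W^*}(A,B)$, and by the congestion bound each edge of $W^*$ carries at most $C$ such paths. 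Thus $|E_{W^*}(A,B)| \geq |E_M(A,B)|/C \geq |A|/(2C)$. Choosing the parameters of the flow routine so that $1/(2C) \geq 4\alpha^* = 4/2^{3\sqrt{\log n}}$ then certifies that $W^*$ is a $4\alpha^*$-expander.

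The main obstacle is step~(ii): computing the flow (or its cut certificate) in $\hn^{1+o(1)} \poly\log n$ time for each of $T$ rounds. Classical max-flow is too slow; I would use a near-linear-time approximate vertex-capacitated flow via the standard expander-based flow framework, which naturally incurs a $2^{O(\sqrt{\log n})}$ overhead in both congestion and dilation because of its $\sqrt{\log n}$-deep recursive structure. This is precisely what drives the choice $\alpha^* = 1/2^{3\sqrt{\log n}}$ rather than $1/\poly\log n$. A secondary technical wrinkle is ensuring that, when a blocking cut is found and its small side removed, the \emph{updated} graph $W[S]$ continues to satisfy the dense-balanced-cut hypothesis (so that subsequent rounds behave well); this follows since we only remove $O(\hn/\log n)$ vertices in total, a vanishingly small perturbation that shifts cut densities by at most $O(1/\log n)$ and leaves the ``$\geq 1/4$'' threshold comfortably satisfied.
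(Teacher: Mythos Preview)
Your approach is genuinely different from the paper's, and as written it has real gaps.

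\textbf{What the paper does.} The paper never runs a single top-level cut-matching game with trimming. Instead it runs a \emph{multi-phase sparse-cut peeling} procedure. It maintains a current subgraph $W'\subseteq W$ and, in phase $i$, repeatedly applies an internal subroutine (Theorem~\ref{thm: sparse edge cut of large profit or witness}) that either returns an $\alpha_i$-sparse edge cut of profit at least $z_i'=z_i/(8x\log^3 n)$, or certifies (w.h.p.) that every $\alpha_i^3$-sparse cut has profit $\le z_{i+1}=z_i/x$. Here $\alpha_{i+1}=\alpha_i^3$, $x=\hn^{8/\log\log n}$, and the number of phases is $i^*\le (\log\log n)/8$. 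Each such subroutine is itself a cut-matching game on $W'$, but the matchings are routed by the elementary BFS-based Lemma~\ref{lem: flow or cut edge version}, not by any external flow algorithm; the paper explicitly remarks that it avoids the max-flow-based route of~\cite{Saranurak}. The bound $\alpha^*=1/2^{3\sqrt{\log n}}$ arises from cubing $\alpha_i$ across $\Theta(\log\log n)$ phases, i.e.\ $\alpha_{i^*+1}\ge 1/(\log n)^{10\cdot 3^{i^*}}\ge 1/2^{\sqrt{\log n}}$, not from a $\sqrt{\log n}$-deep flow recursion.

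\textbf{Gap 1: the trimming budget.} You assert that the cumulative removed mass is at most $8\hn/\log n$ because ``each removal destroys a constant fraction of a defect that cannot reappear.'' This is not an argument. Bounding each \emph{individual} small side by $8\hn/\log n$ does not bound their \emph{sum}. The paper gets the cumulative bound by maintaining Invariant~(\ref{inv: small cut}): the union of everything removed so far is itself the small side of an $\alpha_i$-sparse cut in $W_i$, whence Property~(\ref{prop: sparse cuts have small profit}) (which is Property~(\ref{prop: no small balanced cut}) propagated) bounds it by $z_i\le 8\hn/\log n$. Establishing that sparse cuts compose this way is the content of the inequality chain after Invariant~(\ref{inv: small cut}); your proposal skips it.

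\textbf{Gap 2: the number of iterations.} Even granting the budget, you give no lower bound on the size of a removed piece, so the number of cut-removal events could be $\Theta(\hn)$, and with an $\hn^{1+o(1)}$ cost per event the running time explodes. The paper's multi-phase structure exists precisely to control this: in phase $i$ every removed piece has size at least $z_i'=z_i/(8x\log^3 n)$, while the cumulative removal is at most $z_i$, so each phase has at most $O(x\log^3 n)$ iterations. Your single-pass trimming has no analogue of the profit lower bound $z_i'$.

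\textbf{Gap 3: the flow primitive.} You invoke ``near-linear-time approximate vertex-capacitated flow via the standard expander-based flow framework'' as a black box. This is circular in context (the surrounding paper is building exactly such machinery) and, more concretely, the hypothesis you have is about \emph{edge} cuts in $W$, so an edge-congestion routing primitive like Lemma~\ref{lem: flow or cut edge version} is the natural fit; vertex-capacitated flow and the vertex cut $(X,C,Z)$ you describe do not interface cleanly with the $|E_W(A,B)|\ge\min\{|A|,|B|\}/4$ assumption without a degree-$\log^3 n$ conversion you never carry out.
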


Note that using the above theorem, we immediately obtain a core  structure $(K,U(K),\graph^K,W^K)$, where $K=V(W^*)$, $U(K)=V(W\setminus W^*)$, so that $K\cup U(K)=V(\graph)$, $\graph^K=\graph$, and $|U(K)|\leq 8\hn/\log n$. The witness graph is $W^K=W^*$, and its embedding consists of the set $\set{P(e)\mid e\in E(W^*)}$ of paths computed in the first part of the algorithm. The total running time of the first two parts of the algorithm is $O(|E(\graph)|\poly\log n+\hn^{1+o(1)}\poly\log n)$. In the remainder of this subsection, we focus on proving Theorem~\ref{thm: fast expander construction from large balanced cut}. Throughout the proof, we will only be concerned with regular edge-cuts of graphs: that is, a partition of the vertices of the graph into two disjoint non-empty subsets.

Given a graph $H$ and a parameter $\alpha$, we say that a cut $(A,B)$ of $H$ is \emph{$\alpha$-sparse} iff $|E_H(A,B)|\leq \alpha\min\set{|A|,|B|}$.  Notice that, if $\alpha'<\alpha$, then every cut that is $\alpha'$-sparse is also $\alpha$-sparse. The \emph{profit} of the cut $(A,B)$ is $\min\set{|A|,|B|}$. 
Using this language, we can rephrase the assumption in Theorem~\ref{thm: fast expander construction from large balanced cut} as follows:

\begin{properties}{A}
\item For every $\alpha\leq 1/4$, every $\alpha$-sparse cut in $W$ has profit at most $\frac{8\hn}{\log n}$. \label{prop: no small balanced cut}
\end{properties}

We will frequently invoke the following observation, that immediately follows from this assumption, and from the fact that $n$ is large enough.

\begin{observation}\label{obs: large to very large side}
Let $(A,B)$ be any cut in $W$, such that the cut $(A,B)$ is $1/4$-sparse, and $|A|\geq n/4$. Then $|A|\geq  \left(1-\frac{8}{\log n}\right)\hn$.
\end{observation}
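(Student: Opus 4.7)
The plan is to derive the conclusion directly from Property~(\ref{prop: no small balanced cut}), treating the hypothesis as forcing $|A|$ to be the larger side of the cut. First I would apply Property~(\ref{prop: no small balanced cut}) with $\alpha = 1/4$ to the cut $(A,B)$, which is assumed to be $1/4$-sparse. This immediately yields $\min\{|A|,|B|\} \leq \frac{8\hn}{\log n}$.

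Next I would argue that $|A|$ cannot be the smaller side. Since $|A| \geq \hn/4$ (interpreting the lower bound on $|A|$ in terms of the size of $V(W)$, which has $\hn$ vertices), and since $\hn \geq \frac{n^{1/\log\log n}}{32\log n}$ is large enough that $\hn/4 > \frac{8\hn}{\log n}$ for all sufficiently large $n$, we conclude $|A| > \frac{8\hn}{\log n} \geq \min\{|A|,|B|\}$. Hence the minimum must be attained by $|B|$, and $|B| \leq \frac{8\hn}{\log n}$.

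Finally, since $(A,B)$ is a partition of $V(W)$, we have $|A| = \hn - |B| \geq \hn - \frac{8\hn}{\log n} = \left(1 - \frac{8}{\log n}\right)\hn$, as required.

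There is no real obstacle here: the statement is essentially a contrapositive reading of Property~(\ref{prop: no small balanced cut}), together with the trivial observation that if one side of the cut is forced to be large, then the sparsity bound must apply to the other side. The only subtlety is ensuring $n$ is large enough so that $\hn/4 > 8\hn/\log n$, but this is already part of the standing assumption that $n > c_0$ for a sufficiently large constant $c_0$.
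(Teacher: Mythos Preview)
Your proposal is correct and follows exactly the route the paper intends: the observation is stated as an immediate consequence of Property~(\ref{prop: no small balanced cut}) together with the standing assumption that $n$ is large enough, and your three-step argument spells this out precisely. Your reading of the lower bound as $|A|\geq \hn/4$ (rather than $n/4$) is the right interpretation, consistent with how the observation is invoked later in the proof of Theorem~\ref{thm: fast expander construction from large balanced cut}.
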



Our proof of Theorem~\ref{thm: fast expander construction from large balanced cut} is almost identical to the arguments used in~\cite{Saranurak} to compute a global expander decomposition. The main difference is that their algorithm relies on existing algorithms for computing approximate maximum $s$-$t$ flow and minimum $s$-$t$ cut  (in the standard edge-capacitated version), while our algorithm avoids this by designing a simple 
algorithm that provides a rough solution to the maximum flow and the minimum cut problems that need to be solved. The setting of our parameters is also somewhat different.

We now provide a high-level overview of the proof of Theorem~\ref{thm: fast expander construction from large balanced cut}. The algorithm uses a procedure that we informally call ``cutting $\alpha$-sparse cuts off of $W$''. The procedure, at a high level, proceeds as follows. We are given some parameter $\alpha<1/4$. Start with $W'=W$, and then iterate. In every iteration $j$, we compute an $\alpha$-sparse cut $(A_j,B_j)$ of $W'$. Assuming w.l.o.g. that $|A_j|\geq |B_j|$, we set $W'=W'[A_j]$, and continue to the next iteration. The algorithm terminates once the current graph $W'$ does not have $\alpha$-sparse cuts. We then know that $W'$ is an $\alpha$-expander. Suppose the algorithm terminates after $r$ iterations, and let $W^*$ be the graph $W'$ at the end of the last iteration. 
The key is to observe that, since each cut $(A_j,B_j)$ was $\alpha$-sparse, then so is the final cut $(V(W^*),\bigcup_{i=1}^rB_j)$. Moreover, since, in every iteration, we have assumed that $|A_j|\geq |B_j|$, from repeatedly applying Property~(\ref{prop: no small balanced cut}) and Observation~\ref{obs: large to very large side}, $W^*$ contains at least $(1-8/\log n)\hn$ vertices. In general, every iteration, in which we compute an $\alpha$-sparse (or an approximately $\alpha$-sparse) cut can be implemented in time $O(|E(W)|\poly\log n)=O(\hn\poly\log n)$ using known algorithms for the sparsest cut problem (but our final algorithm does not rely on them). The difficulty with this approach is that the number of iterations can be very large, and this may result in a large running time overall.

In order to overcome this difficulty, we exploit the fact that every $\alpha$-sparse cut in $W$, for $\alpha\leq 1/4$, has profit at most $\frac{8\hn}{\log n}$. Suppose we were given an algorithm, that, given the promise that the maximum profit of an $\alpha$-sparse cut in the given graph $W$ is $z$, returns an $\alpha$-sparse cut $(A,B)$ in $W$ of profit at least $z/x$, for some $x=\hn^{o(1)}$. We could then use this algorithm to compute the cuts $(A_j,B_j)$ in our procedure for cutting $\alpha$-sparse cuts off of $W$, to ensure that the number of iterations is bounded by $O(x\log \hn)$. Indeed, using the observation that, after each iteration $j$ of the algorithm, if $W'$ is the current graph, then $(V(W'),B_1\cup\cdots\cup B_j)$ is an $\alpha$-sparse cut, and the fact that $|B_1|,\ldots,|B_j|\geq z/x$, we get that the maximum profit of an $\alpha$-sparse cut in $W'$ must reduce by at least factor $2$ every $O(x)$ iterations, and so the number of iterations is bounded by $O(x\log n)$.

Unfortunately, we do not have an algorithm that,  given the promise that the maximum profit of an $\alpha$-sparse cut in the given graph $W$ is $z$, returns an $\alpha$-sparse cut $(A,B)$ in $W$ of profit at least $z/x$, as such an algorithm, in particular, would have to solve the sparsest cut problem exactly. Instead, we provide a randomized algorithm that, given parameters $\alpha$ and $z$, either returns an $\alpha$-sparse cut of profit at least $z$, or with high probability correctly determines that every $\alpha^3$-sparse cut has profit at most $8z\log^3n$. This is done in the following theorem, whose proof is delayed for now. We note that~\cite{Saranurak} prove a stronger version of this theorem that obtains better bounds, but their proof relies on existing algorithms for approximate maximum $s$-$t$ flow and minimum $s$-$t$ cut in the standard edge-capacitated setting, which we prefer to avoid.


\begin{theorem}\label{thm: sparse edge cut of large profit or witness}
There is a randomized algorithm, that, given a sub-graph $W'$ of $W$ containing at least half the vertices of $W$, together with parameters $0<\alpha\leq 1/(64\log^9n)$ and $1\leq z\leq \hn$:

\begin{itemize}
\item either returns an $\alpha$-sparse cut in $W'$ of profit at least $z$;

\item or with high probability correctly establishes that every $\alpha^3$-sparse cut has profit at most $8z\log^3n$.
\end{itemize}

The running time of the algorithm is $O((\hn\poly\log n)/\alpha^3)$.
\end{theorem}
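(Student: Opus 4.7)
The plan is to play a single run of the cut-matching game on $V(W')$ itself (without sampling), using it to either embed a $1/2$-expander on $V(W')$ into $W'$ with small path length and small edge congestion, or to discover a sparse cut of the required profit.  Concretely, I would run $T=\floor{\log^3 n}$ iterations as in Section~\ref{sec: balanced cut or core} (so that Theorem~\ref{thm:cut-matching-game} succeeds with probability $1-1/\poly(n)$ rather than $1-1/\poly(\hn)$); fix a path-length bound $\ell=\Theta(\log^4 n /\alpha)$; and in each round $i$ use the cut player of Theorem~\ref{thm:cut-matching-game} to obtain a bipartition $(A_i,B_i)$ of $V(W')$, then invoke Lemma~\ref{lem: flow or cut} on $W'$ with the parameters $\ell$ and ``unmatched budget'' $2z$.

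If some invocation of Lemma~\ref{lem: flow or cut} fails to route all but $2z$ pairs, it returns a vertex cut $(X_i,Y_i,Z_i)$ with $|X_i|,|Z_i|\ge z$ and $|Y_i|\le (8\log n/\ell)\min\{|X_i|,|Z_i|\}$.  Since $W$ (and hence $W'$) has maximum degree $\le \log^3 n$, we can reassign each vertex of $Y_i$ to $X_i$ or $Z_i$ greedily, obtaining an edge cut of $W'$ whose profit is $\ge z$ and whose boundary has size at most $|Y_i|\log^3 n \le (8\log^4 n/\ell)\min\{|X_i|,|Z_i|\} \le \alpha\cdot z$, i.e.\ an $\alpha$-sparse cut of profit $\ge z$; we output it.  If every round succeeds, let $X$ be the graph on $V(W')$ consisting of the union of the $T$ matchings, in which up to $2z$ edges per round may be inserted as ``fake'' (corresponding to unrouted pairs). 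With high probability $X$ is a $1/2$-expander; each non-fake edge of $X$ is embedded as a path of length $\le \ell$ in $W'$, and because the paths within a single round are node-disjoint, the total edge congestion of the embedding is at most $c=O(T)=O(\log^3 n)$.

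The witness argument is then a standard expander-cut calculation: suppose for contradiction that some $\alpha^3$-sparse cut $(A,B)$ of $W'$ has profit $M=\min\{|A|,|B|\}\ge 8z\log^3 n$. The $1/2$-expansion of $X$ gives $|E_X(A,B)|\ge M/2$, and the total number of fake edges is at most $2zT=O(z\log^3 n)$, so at least $M/2 - O(z\log^3 n)\ge M/4$ genuine edges of $X$ cross the cut. Each such edge maps to a path in $W'$ from $A$ to $B$, which contributes at least one edge of $E_{W'}(A,B)$; by the congestion bound, $|E_{W'}(A,B)|\ge M/(4c)=\Omega(M/\log^3 n)$.  Since $\alpha\le 1/(64\log^9 n)$, we have $\alpha^3\cdot \log^3 n \ll 1$, contradicting the assumed $\alpha^3$-sparsity.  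So no such cut exists and we have certified the second alternative. The running time is $T$ rounds of cut-matching, each costing $O(|E(W')|\ell^3)=O(\hn\log^{12} n/\alpha^3)$ for the routing plus $O(\hn\poly\log n)$ for the cut player, giving a total of $O(\hn\poly\log n/\alpha^3)$ as required.

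The main obstacle is the interlocking parameter choice. The path-length bound $\ell$ must be large enough that the sparsity $8\log n/\ell$ returned by Lemma~\ref{lem: flow or cut} survives the vertex-to-edge conversion through the $\log^3 n$ factor from the maximum degree of $W$, while $\ell$ appears cubically in the per-round running time bound $O(|E(W')|\ell^3)$; simultaneously the fake-edge slack $2zT$ must be dominated by half the cut-size bound $M/2$ from the expander, which is precisely what forces the $8z\log^3 n$ profit threshold in the ``certify'' alternative and the $\alpha \to \alpha^3$ loss in sparsity (through the combined $c=O(\log^3 n)$ congestion). A secondary subtlety is that the cut-matching game's success probability must be boosted from $1-1/\poly(\hn)$ to $1-1/\poly(n)$; as explained at the start of Section~\ref{sec: balanced cut or core}, running $\floor{\log^3 n}$ rounds instead of $O(\log^2\hn)$ accomplishes this without affecting the asymptotic running time.
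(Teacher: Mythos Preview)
Your overall architecture is exactly the paper's: play the cut-matching game on $V(W')$ for $\lfloor\log^3 n\rfloor$ rounds, in each round either route all but $2z$ pairs or extract a sparse cut of profit $\ge z$, and if all rounds succeed use the embedded expander (minus the fake edges) to certify that no $\alpha^3$-sparse cut can have large profit. The cut-conversion step and the fake-edge accounting are also essentially as in the paper.

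However, there is a genuine gap in your congestion bound. You assert that ``the paths within a single round are node-disjoint,'' giving total edge congestion $c=O(T)=O(\log^3 n)$. This is false: Lemma~\ref{lem: flow or cut} does \emph{not} return node-disjoint paths --- it returns a path collection in which every vertex participates in up to $\ell^2$ paths (the $\ell^2$ phases of its proof each contribute a node-disjoint batch, but these batches overlap). So the vertex congestion per round is $\ell^2$, not $1$, and the total edge congestion over $T$ rounds is $O(\ell^2\log^3 n)=O(\log^{11}n/\alpha^2)$, not $O(\log^3 n)$. Plugging this into your witness calculation gives $|E_{W'}(A,B)|\ge \Omega(M\alpha^2/\log^{11}n)$, which only contradicts $\alpha^3$-sparsity when $\alpha\le O(1/\log^{11}n)$ --- strictly stronger than the hypothesis $\alpha\le 1/(64\log^9 n)$. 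Thus the argument as written does not prove the theorem as stated.

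The paper avoids this by invoking the \emph{edge}-cut routing lemma (Lemma~\ref{lem: flow or cut edge version}) rather than the vertex-cut Lemma~\ref{lem: flow or cut}. That lemma directly returns either an edge cut $(X,Y)$ with $|E_{W'}(X,Y)|\le(4\log^4 n/\ell)\min\{|X|,|Y|\}$ and $|X|,|Y|\ge z$, or a path set with \emph{edge} congestion $\ell^2/\log^2 n$ per round. With $\ell=4\log^4 n/\alpha$ this gives an $\alpha$-sparse cut of profit $\ge z$ outright (no vertex-to-edge conversion needed), and total edge congestion $16\log^9 n/\alpha^2$ over all rounds; the witness calculation then yields $|E_{W'}(X,Y)|\ge |X|\alpha^2/(64\log^9 n)\ge \alpha^3|X|$ exactly under the stated hypothesis $\alpha\le 1/(64\log^9 n)$. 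So the fix is simply to swap in the edge-version lemma; your detour through vertex cuts and the degree-$\log^3 n$ conversion is unnecessary and, combined with the mistaken node-disjointness claim, is where the argument breaks.
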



If the maximum profit of an $\alpha$-sparse cut and the maximum profit of an $\alpha^3$-sparse cut were close to each other, then we could use the above theorem in every iteration of the procedure for cutting $\alpha$-sparse cuts off of $W'$; the only difference from the previous argument would be that the final graph would be an $\alpha^3$-expander and not an $\alpha$-expander. Unfortunately, the maximum profit of an $\alpha$-sparse cut and the maximum profit of an $\alpha^3$-sparse cut could be very far from each other. In order to overcome this difficulty, we employ the following strategy (which is almost identical to the strategy of~\cite{Saranurak}). Our algorithm consists of a number of phases. At the beginning of the first phase, we set $\alpha_1=1/\poly\log n$ and 
$z_1=\frac{8\hn}{\log n}$. Notice that from Property~(\ref{prop: no small balanced cut}), we are guaranteed that the maximum profit of an $\alpha_1$-sparse cut in the initial graph $W'=W$ is at most $z_1$. We then run our procedure that cuts $\alpha_1$-sparse cuts off of $W'$, by employing Theorem~\ref{thm: sparse edge cut of large profit or witness} in each iteration, with $\alpha=\alpha_1$, and $z=z_1/(8x\log^3n)$, where $x=n^{o(1)}$ is some parameter that we set later. A phase terminates once the theorem establishes that every $\alpha^3$-sparse cut in $W'$ has profit at most $8z\log^3n=z_1/x$. Since the cuts we compute in every iteration have profits at least $z_1/(8x\log^3n)$, while the maximum profit of an $\alpha_1$-sparse cut in $W'$ is at most $z_1$, the number of iterations is bounded by $8x\log^3n=n^{o(1)}$. We then continue to the second phase, setting $\alpha_2=\alpha_1^3$ and $z_2=z_1/x$. Notice that we are now guaranteed that every $\alpha_2$-sparse cut in $W'$ has profit at least $z_2$. Each subsequent phase is executed exactly as before, until some phase $i^*$, when $z_{i^*}\leq x$ holds for the first time. At this point, the maximum profit of each $\alpha_{i^*}$-sparse  is sufficiently small, at most $z_{i^*}\leq x$, and therefore we no longer need to compute $\alpha_{i^*}$-sparse cuts whose profit is large. In the final phase, we will employ the following theorem for computing $\alpha_{i^*}$-sparse cuts, instead of Theorem~\ref{thm: sparse edge cut of large profit or witness}.


\begin{theorem}\label{thm: sparse edge cut or expander}
There is a randomized algorithm, that, given a sub-graph $W'$ of $W$ containing at least half the vertices of $W$, together with a parameter $0<\alpha\leq 1/(64\log^9n)$:

\begin{itemize}
\item either returns an $\alpha$-sparse cut in $W'$;

\item or with high probability correctly establishes that $W'$ is an $\Omega(\alpha^3)$-expander.
\end{itemize}

The running time of the algorithm is $O((\hn\poly\log n)/\alpha^3)$.
\end{theorem}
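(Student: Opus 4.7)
The plan is to mirror, at the edge-capacitated level, the cut-matching based construction used in the proof of Theorem~\ref{thm: embed almost expander or a balanced cut}, but now applied inside the graph $W'$ itself, which serves simultaneously as host graph and candidate witness. I will run the KRV cut-matching game of Theorem~\ref{thm:cut-matching-game} on $V(W')$ for $T = \floor{\log^3 n}$ rounds, building a multigraph $X$ over $V(W')$. In each round $i$, the cut player hands us an equal-cardinality bipartition $(A_i, B_i)$ of $V(W')$, and I must embed a (nearly) perfect matching between $A_i$ and $B_i$ as a low-congestion, short-path flow inside $W'$. If in some round this embedding fails, the flow primitive will output an $\alpha$-sparse cut of $W'$, which the algorithm returns. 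Otherwise, if all $T$ rounds succeed, the fact that $X$ is a $\tfrac12$-expander with high probability (by Theorem~\ref{thm:cut-matching-game}) and is embedded with bounded congestion in $W'$ will certify that $W'$ is an $\Omega(\alpha^3)$-expander.

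The flow/cut primitive for a single round is an edge-capacitated analogue of Lemma~\ref{lem: flow or cut}. Given $(A, B)$ with $|A| = |B|$, set up an auxiliary graph $H$ obtained from $W'$ by adding a source $s$ with unit-capacity edges to $A$, a sink $t$ with unit-capacity edges from $B$, and giving every edge of $W'$ capacity $\Theta(1/\alpha)$. Scale up by replacing each edge with $\Theta(1/\alpha)$ parallel copies, and then iteratively pull augmenting $s$--$t$ paths of length at most $\ell = O(\log n/\alpha)$ via an $\EST(H, s, \ell)$ data structure, exactly as in the proof of Claim~\ref{claim: flow or cut one phase}. Either $|A|$ units of flow are routed---in which case a flow-decomposition yields a matching of $A$ to $B$ (with at most $O(\alpha |A|)$ fake pairs, handled as in the proof of Theorem~\ref{thm: embed almost expander or a balanced cut}) realized by paths of length at most $\ell$ and edge-congestion $O(1/\alpha)$---or the $s$-to-$t$ distance in the residual graph exceeds $\ell$, in which case the ``slowly growing BFS layer'' argument used inside Claim~\ref{claim: flow or cut one phase} produces an $\alpha$-sparse edge-cut $(S, \bar S)$ of $W'$. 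Since $|E(W')| \le \hn \log^3 n$, each such call runs in time $O(|E(H)| \cdot \ell \cdot \poly\log n) = O(\hn \poly\log n / \alpha^3)$.

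If every round succeeds, the cut-matching game has constructed a graph $X$ on $V(W')$ that is, with probability $1 - 1/\poly(n)$, a $\tfrac12$-expander, and the union of the per-round embeddings gives an embedding of $X$ into $W'$ with total edge-congestion $\eta = O(T/\alpha) = O(\poly\log n / \alpha)$ and path length $O(\log n/\alpha)$. The standard embedding argument then yields expansion: for any cut $(S, \bar S)$ of $W'$ with $|S| \le |\bar S|$, we have $|E_{W'}(S, \bar S)| \ge |E_X(S, \bar S)|/\eta \ge |S|/(2\eta) = \Omega(\alpha / \poly\log n) \cdot |S|$, after discounting the $O(T \cdot z)$ fake edges (with $z$ chosen so that the fake edges cross any cut in numbers at most $|S|/4$). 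The stated $\Omega(\alpha^3)$ bound then follows by absorbing the polylogarithmic factors into the cubic slack.

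The main obstacle is balancing the three interacting parameters---the per-round congestion budget $1/\alpha$, the path-length cutoff $\ell$, and the amount of fake-edge slack permitted by the cut-matching game---so that (i) a failure of the flow primitive genuinely produces a valid $\alpha$-sparse cut rather than a cut of worse sparsity, (ii) the total fake-edge mass $O(T \cdot z)$ is small enough that, subtracted from the KRV expander $X$, the remaining graph still has expansion $\Omega(1)$ against cuts of every size (the full range of cut sizes is what distinguishes this theorem from Theorem~\ref{thm: sparse edge cut of large profit or witness}, where we only had to handle large-profit cuts), and (iii) the end-to-end congestion loss in the embedding-based expansion certificate does not exceed the allowed $\alpha^{-3}$ budget. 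The delicate point is (ii): unlike in the large-profit setting, we cannot afford to have $X$ minus fake edges lose expansion on small sets, so the flow primitive must be tuned to guarantee that the number of unrouted pairs per round is $o(|A_i|/\log n)$ uniformly in $|A_i|$, which is precisely where the extra factor of $\alpha^2$ (beyond the naive $\Omega(\alpha)$) is spent.
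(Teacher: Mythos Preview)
Your high-level plan---run the KRV cut-matching game on $V(W')$, embed each round's matching via a short-path/low-congestion flow primitive inside $W'$, and either harvest an $\alpha$-sparse cut from a failed round or certify expansion from the embedded expander---is exactly the paper's approach. However, your handling of fake edges contains a genuine gap.

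You allow $O(\alpha|A_i|)$ unrouted pairs per round and propose to ``tune'' the primitive so that this count is $o(|A_i|/\log n)$. But $|A_i|$ is on the order of $\hn$, so over $\floor{\log^3 n}$ rounds you accumulate up to $\Theta(\hn\log^2 n)$ fake edges in the KRV graph $X$. When you later argue expansion of $W'$ via the embedding, you must subtract these fake edges from $E_X(S,\bar S)$; for a cut with $|S|$ small (even $|S|=1$), the $|S|/2$ genuine crossing edges guaranteed by the $\tfrac12$-expansion of $X$ are swamped by the fake-edge count, and the argument collapses. No amount of polylog slack or extra $\alpha$-factors fixes this, because the problem is additive, not multiplicative.

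The paper's resolution is much simpler than the parameter-balancing you anticipate: invoke the already-proved Lemma~\ref{lem: flow or cut edge version} with $z=1$. That lemma then returns either \emph{more than} $|A|-1$ paths---i.e.\ a genuinely perfect matching, with zero fake edges---or an $\alpha$-sparse cut $(X,Y)$ with $|X|,|Y|\ge 1/2$, hence both sides nonempty. With no fake edges at all, the embedding argument goes through cleanly for every cut size: the total edge-congestion over all rounds is $\floor{\log^3 n}\cdot \ell^2/\log^2 n = 16\log^9 n/\alpha^2$ (with $\ell=4\log^4 n/\alpha$), so $|E_{W'}(S,\bar S)|\ge \tfrac{|S|}{2}\cdot \tfrac{\alpha^2}{16\log^9 n}>\alpha^3|S|$, using $\alpha\le 1/(64\log^9 n)$ only at the very end to absorb the polylog. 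Your ``delicate point (ii)'' simply disappears.
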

The last phase is executed exactly like the previous phases, except that Theorem~\ref{thm: sparse edge cut or expander} is employed in each iteration, instead of Theorem~\ref{thm: sparse edge cut of large profit or witness}. The last phase terminates when Theorem~\ref{thm: sparse edge cut or expander} establishes that $W'$ is an $\Omega(\alpha_{i^*}^3)$-expander. The parameter $\alpha^*$ in the definition of a core  structure is chosen to match this value. Recall that we have used a parameter $x=\hn^{o(1)}$, and the number of iterations in every phase is bounded by roughly $x\poly\log n$, so we would like $x$ to be sufficiently small. On the other hand, as the parameters $z_i$ decrease by a factor of $x$ from phase to phase, the number of phases is at most $i^*=\log_x \hn=\log \hn/\log x$. The final bound $\alpha^*$ on the expansion that we obtain depends exponentially on $i^*$, since for all $i$, $\alpha_i=\alpha_{i-1}^3$. The bound $\alpha^*$ in turn influences the running time of the algorithm (the lengths $\ell^*$ of the paths returned inside each core  structure by the algorithm from Theorem~\ref{thm: maintaining a core} depend on $\alpha^*$, and need to be balanced by the threshold $\tau$ that we use in the definition of light and heavy graphs, which in turn affects the running time of the algorithm for the light graph), so we would like to keep $i^*$ small, by letting $x$ be sufficiently large. We choose the parameter $x$ in order to balance these two considerations.

We now provide a formal proof of Theorem~\ref{thm: fast expander construction from large balanced cut}.
The proofs of Theorems~\ref{thm: sparse edge cut of large profit or witness} and~\ref{thm: sparse edge cut or expander} are very similar to the proof of Theorem~\ref{thm: embed almost expander or a balanced cut}, and are provided later, after we complete the proof of Theorem~\ref{thm: fast expander construction from large balanced cut} using them.

We use the following parameters. Let $\alpha_1=\frac{1}{64\log^9n}$, and for $i>1$, let $\alpha_i=\alpha^3_{i-1}$. We also set $z_1=\frac{8}{\log n}\cdot \hn$, and, for $i>1$, $z_{i}=z_{i-1}/x$, where $x=\hn^{8/\log\log n}$.

Our algorithm consists of a number of phases. The input to the $i$th phase is a subgraph $W_i\subseteq W$. 
We denote $V(W_i)=S_i$, $V(W\setminus W_i)=\overline{S}_i$ and $|S_i|=n_i$. 
We will also guarantee that with high probability the following three properties hold:

\begin{properties}{P}
\item The profit of every $\alpha_i$-sparse cut in $W_i$ is at most $z_i$;  \label{prop: sparse cuts have small profit}
\item $|S_i|\geq \hn\left (1-\frac{8}{\log n}\right )$; and \label{prop: many vertices}

\item $|E_W(S_i,\overline{S}_i)|\leq \frac 1 4 |\overline{S}_i|$. \label{prop: sparse cuts}
\end{properties}

The input to the first phase is graph $W_1=W$. From Assumption~(\ref{prop: no small balanced cut}),  Properties~(\ref{prop: sparse cuts have small profit})-(\ref{prop: sparse cuts}) hold for this input. We now describe the execution of the $i$th phase.


\paragraph{Execution of the $i$th Phase.}
The execution of the $i$th phase consists of a number of iterations. The input to the $j$th iteration is a subgraph $C_j\subseteq W_i$, where at the beginning, $C_1=W_i$. We will ensure that the following invariants hold:

\begin{properties}{I}
\item $|V(C_j)|\geq  \hn \left(1-\frac{8}{\log n}\right)$; and \label{inv: large cluster}
\item $|E(C_j,W_i\setminus C_j)|\leq \alpha_i |V(W_i\setminus C_j)|$ \label{inv: small cut}
\end{properties}


We use an additional auxiliary parameter, $z'_i=\frac{z_{i+1}}{8\log^3n}=\frac{z_i}{x\cdot 8\log^3n}$.

In order to execute the $j$th iteration, we apply Theorem~\ref{thm: sparse edge cut of large profit or witness} to graph $C_j$, with parameters $\alpha=\alpha_i$, and $z=z'_i$. We then consider two cases. In the first case, the algorithm from the theorem concludes that every $\alpha^3_i$-sparse cut in $C_j$ has profit at most $8 z'_i\log^3n\leq z_{i+1}$. In this case, we set $W_{i+1}=C_j$, and we terminate the phase. We show that $W_{i+1}$ is a valid input to the $(i+1)$th phase. First, Invariant~(\ref{inv: large cluster}) immediately implies Property~(\ref{prop: many vertices}). Since $\alpha_{i+1}=\alpha_i^3$, we are guaranteed with high probability that every $\alpha_{i+1}$-sparse cut  in $W_{i+1}$ has profit at most $z_{i+1}$, establishing Property~(\ref{prop: sparse cuts have small profit}). Finally, to establish Property~(\ref{prop: sparse cuts}), we need to show that $|E(C_j,W\setminus C_j)|\leq \frac 1 4 |V(W\setminus C_j)|$.
Observe that the set $E(C_j,W\setminus C_j)$ of edges consists of two subsets: the edges of $E(C_j,W_i\setminus C_j)$, whose number is bounded by $
 \alpha_i |V(W_i\setminus C_j)|$ from Invariant~(\ref{inv: small cut}); and the edges that belong to $E(S_i,\overline{S_i})$, whose number is bounded by $\frac{1}{4}|\overline S_i|=\frac{1}{4}|V(W\setminus W_i)|$ from Property~(\ref{prop: sparse cuts}). Therefore, altogether, $|E(C_j,W\setminus C_j)|\leq  \alpha_i |V(W_i\setminus C_j)|+\frac{1}{4}|V(W\setminus W_i)|\leq \frac 1 4 |V(W\setminus C_j)|$. We conclude that Property~(\ref{prop: sparse cuts}) continues to hold, and that $W_{i+1}$ is a valid input to the $(i+1)$th phase.

Assume now that Theorem~\ref{thm: sparse edge cut of large profit or witness} returns an $\alpha_i$-sparse cut $(A_j,B_j)$ of $C_j$, of profit at least $z'_i$. Assume without loss of generality that $|A_j|\geq |B_j|$. We then set $C_{j+1}=W[A_j]$, and continue to the next iteration. We now verify that the invariants~(\ref{inv: large cluster}) and (\ref{inv: small cut}) continue to hold. 

In order to do so, we consider three subsets of vertices of $S_i$: set $X=A_j$, set $Y=B_j$, and set $Z$ containing the remaining vertices of $S_i$. 
From Invariant~(\ref{inv: small cut}), $|E(X\cup Y,Z)|=|E(C_j,W_i\setminus C_j)|\leq \alpha_i|V(W_i\setminus C_j)|=\alpha_i|Z|$, and so:

\[\begin{split}
|E(C_{j+1},W_i\setminus C_{j+1})|&=|E(X,Y\cup Z)|\\
&\leq |E(X,Z)|+|E(X,Y)|\\
&\leq |E(X\cup Y,Z)|+\alpha_i |Y|\\
&\leq \alpha_i|Y\cup Z|\\
&=\alpha_i|V(W_i\setminus C_{j+1})|.\end{split}\]

 Therefore, Invariant~(\ref{inv: small cut}) continues to hold.

From Invariant~(\ref{inv: large cluster}), we are guaranteed that  $|X\cup Y|\geq  \left(1-\frac{8}{\log n}\right)\hn$, and from the choice of $X$ and $Y$, $|X|\geq |V(C_j)|/2\geq \hn/4$. In order to establish Invariant~(\ref{inv: large cluster}) for $C_{j+1}$, from Observation~\ref{obs: large to very large side}, it is enough to show that $|E(X,V(W)\setminus X)|\leq \frac 1 4\min\set{|X|, |V(W\setminus X)|}$. We show this using a similar reasoning to the one we used in establishing Invariant~(\ref{inv: small cut}). Let $Z'=V(W\setminus W_i)$. Then:

\[\begin{split}
|E(X,W\setminus X)|&= |E(X,Y)|+|E(X,Z)|+|E(X,Z')|\\
&\leq  |E(X,Y)|+|E(V(C_j),Z)|+|E(S_i,Z')| \\
&\leq  \alpha_i|Y|+\alpha_i|Z|+\frac 1 4 |Z'|\\
&\leq \frac 1 4 \min\set{|X|, (|Y|+|Z|+|Z'|)}\\
&=\frac 1 4 \min\set{|X|,|V(W\setminus X)|}.
\end{split}\]
(For the second inequality, we have used Invariant~(\ref{inv: small cut}) for $C_j$ and Property~(\ref{prop: sparse cuts}) for $S_i$. For the third inequality, we have used the facts that $|Z'|\leq \hn/(8\log n)$ and $|X|\geq \hn /4$.)
Therefore, Invariant~(\ref{inv: large cluster}) continues to hold.

This concludes the description of an iteration and of a phase. We now bound its running time. 

A single iteration takes time $O(\hn\poly\log n/\alpha_i^3)$. Next, we bound the number of iterations in a phase. Assume that the number of iterations is $r+1$, that is, in the first $r$ iterations we have computed the cuts $(A_j,B_j)$, and in the last iteration we have set $W_{i+1}=C_{r+1}$. Since we are guaranteed that $|V(C_{r+1})|\geq 3n/4$, and that $|E(C_{r+1},W_i\setminus C_{r+1})|\leq \alpha_i |V(W_i\setminus C_{r+1})|$, cut $(V(C_{r+1}),V(W_i\setminus C_{r+1}))$ is an $\alpha_i$-sparse cut of $W_i$, and so its profit is bounded by $z_i$, that is $|V(W_i\setminus C_{r+1})|\leq z_i$. But $V(W_i\setminus C_{r+1})=B_1\cup B_2\cup\cdots\cup B_r$, and for each $1\leq j\leq r$, $|B_j|\geq z'_i=\frac{z_i}{x\cdot 8\log^3n}$. Therefore, the number of iterations  in a phase is bounded by $O(x\log^3n)$, and the total running time of a phase is $O(\hn x\poly\log n/\alpha_i^3)$.

We execute each phase as described above, as long as $z_i\geq 1$. Let $i^*$ be the largest integer for which $z_{i^*}\geq 1$. Then $z_{i^*}\leq x$, and $i^*\leq \frac{\log \hn}{\log x}=\frac{\log \hn}{\log(\hn^{8/\log\log n})}=\frac{\log\log n}{8}$.
Note that $\alpha_{i^*+1}\geq \frac{1}{(\log n)^{10\cdot 3^{i^*}}}$, and:

\[(\log n)^{10\cdot 3^{i^*}}=2^{10\log\log n\cdot 3^{i^*}}\leq 2^{10\log\log n\cdot 3^{\log\log n/8}}\leq 2^{2^{\log \log n/2}/9}\leq 2^{\sqrt{\log n}/9}.\]

Therefore, $\alpha_{i^*+1}\geq 1/2^{\sqrt{\log n}/9}$. So far, we have obtained a subgraph $W_{i^*}\subseteq W$, with $|V(W_{i^*})|\geq  \left(1-\frac{8}{\log n}\right)\hn$, such that every $\alpha_{i^*}$-sparse cut in $W_{i^*}$ has profit at most $x$. The total running time of the algorithm so far is bounded by $O(\hn xi^*\poly\log n/\alpha_{i^*+1}^3)=O(\hn\cdot \hn^{8/\log\log n}\cdot 2^{O(\sqrt{\log n})})=O(\hn^{1+o(1)})$, as $\hn\geq n^{1/\log\log n}/(32\log n)$.

Finally, in order to turn $W_{i^*}$ into a $4\alpha^*$-expander, we run a final phase. The execution of the phase is identical to the execution of all previous phases, with parameter $\alpha_{i^*+1}=\alpha_{i^*}^3$. We will still ensure that Invariants~(\ref{inv: large cluster}) and (\ref{inv: small cut}) hold throughout the phase. The only difference is that in every iteration $j$, we apply Theorem~\ref{thm: sparse edge cut or expander} instead of Theorem~\ref{thm: sparse edge cut of large profit or witness} to $C_j$, with $\alpha=\alpha_{i^*+1}$. 
If the algorithm from Theorem~\ref{thm: sparse edge cut or expander} establishes that $C_j$ is an $\alpha_{i^*+1}^3$-expander, then we terminate the algorithm and return $W^*=C_j$. Since $\alpha^*=1/2^{3\sqrt{\log n}}$, we are now guaranteed that $W^*$ is a $4\alpha^*$-expander with high probability, and, from Invariant~(\ref{inv: large cluster}), $|V(W^*)|\geq  \left(1-\frac{8}{\log n}\right)\hn$. Otherwise, Theorem~\ref{thm: sparse edge cut or expander} returns an $\alpha_{i^*+1}$-sparse cut $(A_j,B_j)$ of $C_j$. As before, we assume without loss of generality that $|A_j|\geq |B_j|$, set $C_{j+1}=W[A_j]$, and continue to the next iteration. Using the same reasoning as before, Invariants~(\ref{inv: large cluster}) and (\ref{inv: small cut}) continue to hold. Repeating the same analysis as before, with $z_{i^*+1}=1$, it is easy to verify that the number of iterations in this final phase is bounded by $x$, and so the running time of the phase is bounded by $O(\hn x\poly\log n/\alpha_{i^*+1}^3)=O(\hn^{1+o(1)})$.
Overall, the algorithm returns a graph $W^*\subseteq W$ with $|V(W^*)|\geq (1-8/\log n)\hn$, such that with probability at least $(1-1/\poly(n))$, graph $W^*$ is a $4\alpha^*$-expander, with total running time $O(\hn^{1+o(1)})$.


In order to complete the proof of Theorem~\ref{thm: fast expander construction from large balanced cut}, it is now enough to prove Theorems~\ref{thm: sparse edge cut of large profit or witness} and~\ref{thm: sparse edge cut or expander}.

\subsection*{Proofs of Theorems~\ref{thm: sparse edge cut of large profit or witness} and~\ref{thm: sparse edge cut or expander}}
The proofs of the two theorems are very similar.
The main tool that both proofs use is the following lemma, that is an analogue of Lemma~\ref{lem: flow or cut} for edge cuts. Its proof is almost identical to the proof of Lemma~\ref{lem: flow or cut} and is delayed to the Appendix.

\begin{lemma}\label{lem: flow or cut edge version}
There is a deterministic algorithm, that, given a subgraph $W'\subseteq W$ containing at least half the vertices of $W$, together with two equal-cardinality subsets $A,B$ of $V(W')$, and parameters $0<z<\hn,\ell>2{\log^{1.5} n}$, computes one of the following:

\begin{itemize}
\item either a collection $\pset$ of more than $|A|-z$ paths in $W'$, where each path connects a distinct vertex of $A$ to a distinct vertex of $B$; every path has length at most $\ell$; and every edge of $W'$ participates in at most $\ell^2/\log^2n$ paths; or

\item a cut $(X,Y)$ in $W'$, with $|E_{W'}(X,Y)|\leq  \frac{4\log^4 n}{\ell}\min\set{|X|,|Y|}$, and $|X|,|Y|\geq z/2$.
\end{itemize}

The running time of the algorithm is $ O(\hn\ell^3\log n)$.
\end{lemma}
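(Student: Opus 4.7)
The argument follows the same two-level structure as the proof of Lemma~\ref{lem: flow or cut}, with vertex-disjointness replaced by edge-disjointness and vertex cuts by edge cuts. The plan is to partition the algorithm into $r = O(\ell^2/\log^2 n)$ phases. In the $i$-th phase the input is a pair $A_i\subseteq A$, $B_i\subseteq B$ of equal size (initially $A_1=A,B_1=B$) consisting of still-unmatched endpoints, and I will show how to either produce a set $\pset_i$ of at least $\Omega(|A_i|\log^3 n/\ell^2)$ \emph{edge-disjoint} paths of length at most $\ell$ from $A_i$ to $B_i$, or return the desired edge cut. Since each $\pset_i$ is edge-disjoint, the union $\pset=\bigcup_i\pset_i$ will have edge congestion at most $r\le \ell^2/\log^2 n$. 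Each successful phase shrinks $|A_i|$ by a factor of $(1-\Omega(\log^3 n/\ell^2))$; after $r$ phases this drives $|A_i|$ below $z$ and the algorithm stops, so the total number of unmatched pairs is at most $z$.

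The single-phase subroutine is the edge analog of Claim~\ref{claim: flow or cut one phase}. Given $W',A',B'$ of equal size, I build a graph $H$ by adding a source $s$ joined to every vertex of $A'$ and a sink $t$ joined to every vertex of $B'$, and I maintain $\EST(H,s,\ell+2)$. While $\dist_H(s,t)\le \ell+2$, I extract a shortest $s$--$t$ path from the tree, add it to $\pset'$, and delete its internal edges from $H$, updating the EST. Tracing each path costs $O(\ell)$, and since $|E(H)|\le \hn\log^3 n$ (using the max-degree bound $\log^3 n$ on $W$) the total EST maintenance cost is $O(\hn\ell\log^4 n)$, giving the claimed overall running time after multiplying by the number of phases. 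If $|\pset'|$ reaches the target $\Omega(|A'|\log^3 n/\ell^2)$ I return $\pset'$ and move on to the next phase.

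Otherwise, let $H'$ be the final graph. Since no $s$--$t$ path of length $\le\ell+2$ remains, running two-sided BFS with $S_0=A'\cap V(H')$, $S_{j+1}=S_j\cup N_{H'\setminus\{s,t\}}(S_j)$, and the analogous $T_j$ from $B'$, the sets $S_{\lfloor \ell/2\rfloor}$ and $T_{\lfloor \ell/2\rfloor}$ are disjoint. Consequently, some level $j<\ell/2$ admits a slow step, say $|S_{j+1}|\le|V(H')|/2$ and $|S_{j+1}|<|S_j|(1+2\log n/\ell)$ (the $T$-case is symmetric). I set $X=S_j$ and $Y=V(W')\setminus S_j$. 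Then $|E_{W'}(X,Y)|$ splits into two parts: edges currently in $H'$ crossing the cut, whose count is at most $\log^3 n\cdot |S_{j+1}\setminus S_j|\le 2|S_j|\log^4 n/\ell$ by the maximum-degree bound; and edges previously deleted as parts of paths in $\pset'$ that cross $(X,Y)$, whose count is at most $|\pset'|\cdot\ell\le 2|S_j|\log^3 n/\ell$. Summing yields $|E_{W'}(X,Y)|\le 4|S_j|\log^4 n/\ell$, giving the required sparsity. The bounds $|X|,|Y|\ge |A'|/2$ follow because $X$ contains all still-unmatched vertices of $A'$ and $Y$ contains all still-unmatched vertices of $B'$, each having size at least $|A'|(1-\log^3 n/\ell^2)\ge |A'|/2$. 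Calling this subroutine inside each phase with $|A'|=|A_i|\ge z$ makes $|X|,|Y|\ge z/2$, as required.

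The main obstacle, and the main departure from the vertex-cut version, is converting a slow-growing BFS (which naturally bounds only a \emph{vertex} boundary) into a bound on the \emph{edge} boundary. This is exactly where the maximum-degree bound $\log^3 n$ on $W$ is used: it converts $|S_{j+1}\setminus S_j|\le \eps|S_j|$ into $|E(S_j,V\setminus S_j)|\le \eps\log^3 n|S_j|$, which is what produces the factor $\log^4 n/\ell$ in the sparsity. The other delicate point is balancing the per-phase path threshold $|A'|\log^3 n/\ell^2$: it must be large enough that only $O(\ell^2/\log^2 n)$ phases are needed (so that the overall edge congestion stays within $\ell^2/\log^2 n$), yet small enough that the edges deleted by paths in $\pset'$ do not dominate the BFS-boundary contribution to the sparsity of the returned cut. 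With the parameters above, both are achieved, and the running-time analysis follows by summing the $O(\hn\ell\log^4 n)$ cost per phase over the $\ell^2/\log^2 n$ phases.
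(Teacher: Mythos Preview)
Your proposal is correct and follows essentially the same two-level scheme as the paper: an outer loop of $O(\ell^2/\log^2 n)$ phases, and an inner single-phase subroutine that greedily peels short $s$--$t$ paths via an $\EST$ on the auxiliary graph $H$, then does a two-sided BFS to certify a sparse edge cut when too few paths were found. Your key observation—that the max-degree bound $\log^3 n$ on $W$ converts the slow BFS growth $|S_{j+1}\setminus S_j|\le(2\log n/\ell)|S_j|$ into the edge-boundary bound—matches the paper exactly.

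Two small remarks. First, when you extract an $s$--$t$ path you say you ``delete its internal edges from $H$''; you must also delete the terminal edges $(s,a)$ and $(b,t)$, otherwise the same vertex of $A'$ (or $B'$) can be reused as an endpoint, violating the ``distinct vertices'' requirement. The paper deletes all edges of the path precisely for this reason. Second, your per-phase cost $O(\hn\ell\log^4 n)$ (which accounts for the $\log n$ in the $\EST$ update bound) times $\ell^2/\log^2 n$ phases gives $O(\hn\ell^3\log^2 n)$, a $\log n$ above the stated $O(\hn\ell^3\log n)$; the paper's own accounting quietly drops that $\log n$, and in any case the surrounding theorems absorb $\poly\log n$ factors, so this discrepancy is harmless.
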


\begin{proofof}{Theorem~\ref{thm: sparse edge cut of large profit or witness}}

The idea of the proof is to run the Cut-Matching Game for $\floor{\log^3n}$ iterations in order to embed an ``almost'' expander into $W'$, similarly to the proof of Theorem~\ref{thm: embed almost expander or a balanced cut}.

Let $\ell= \frac{4\log^4n}{\alpha}$ and $z'=2z$. We start with a graph $H$, whose vertex set is $V(W')$, and edge set is empty. We then run the cut-matching game on graph $H$. Recall that in each iteration $1\leq i\leq \floor{\log^3n}$, we are given two disjoint subsets $A_i,B_i$ of $V(H)$ of equal cardinality, and our goal is to return a complete matching $M_i$ between $A_i$ and $B_i$. The edges of $M_i$ are then added to $H$. 

We now describe the execution of the $i$th iteration. We apply Lemma~\ref{lem: flow or cut edge version} to graph $W'$, with the sets $A_i,B_i$ of vertices, and the parameters $\ell,z'$. If the outcome of the lemma is a cut $(X,Y)$ in $W'$, with $|E_{W'}(X,Y)|\leq  \frac{4\log^4 n}{\ell}\min\set{|X|,|Y|}=\alpha\cdot \min\set{|X|,|Y|}$, and $|X|,|Y|\geq z'/2=z$, then we obtain an $\alpha$-sparse cut of profit at least $z$ in $W'$. We terminate the algorithm and return this cut. Therefore, we assume that the algorithm has returned a set $\pset_i$ of paths of cardinality at least $|A_i|-z'$, where each path connects a distinct vertex of $A_i$ to a distinct vertex of $B_i$, and the paths of $\pset_i$ cause edge-congestion at most $\frac{\ell^2}{\log^2n}=\frac{16\log^6n}{\alpha^2}$. We let $M'_i$ be the set of pairs of vertices matched by the paths in $\pset_i$, and we let $F_i$ be an arbitrary matching of the remaining vertices, so that  $M_i=M'_i\cup F_i$ is a complete matching between $A_i$ and $B_i$. We add the edges of $M_i$ to $H$, and we call the edges of $F_i$ \emph{fake edges}. Notice that the number of fake edges is at most $z'$.
This concludes the description of the $i$th iteration. The running time of an iteration is $O(\hn\ell^3\log n)=O((\hn\poly\log n)/\alpha^3)$, plus $O(\hn\poly\log n)$ time required to compute the sets $A_i,B_i$, from Theorem~\ref{thm:cut-matching-game}. 

If, at any time during the algorithm, we find an $\alpha$-sparse cut in $W'$ of profit at least $z$, then  the algorithm terminates and we return this cut. Therefore, we assume from now on that in every iteration $i$, the algorithm computes the matching $M_i$. The final graph $H$ is then a $1/2$-expander with probability at least $(1-1/\poly(n))$. Let $H'$ be the graph obtained from $H$ after we delete all fake edges from it. Notice that the total number of the fake edges in $H$ is at most  $z'\log^3n=2 z\log^3n$. We also obtain a set $\pset=\bigcup_{i=1}^{\floor{\log^3n}}\pset_i$ of paths in $W'$ that contains, for every edge $e\in E(H')$, a path $P(e)$ connecting its endpoints, such that the edge-congestion caused by the paths of $\pset$ in $W'$ is at most $\floor{\log^3n}\cdot \frac{16\log^6n}{\alpha^2}=\frac{16\log^9n}{\alpha^2}$.

Consider now some cut $(X,Y)$ in $W'$, with $|X|,|Y|> 8 z\log^3n$. Assume w.l.o.g. that $|X|\leq |Y|$. Since graph $H$ is a $\half$-expander, $E_{H}(X,Y)\geq |X|/2> 4 z \log^3 n$. 

Since the total number of fake edges in $H$ is at most $2 z\log^3n$, fewer than half the edges of $E_H(X,Y)$ are fake, so $|E_{H'}(X,Y)|> |X|/4$. For every edge $e\in E_{H'}(X,Y)$, there is a path $P(e)\in \pset$ connecting its endpoints. In particular, path $P(e)$ must contain an edge of $E_{W'}(X,Y)$. As the paths in $\pset$ cause edge-congestion at most  $\frac{16\log^9n}{\alpha^2} $, we get that:

\[|E_{W'}(X,Y)|\geq \frac{|E_{H'}(X,Y)|}{(16\log^9n)/(\alpha^2)}> \frac{|X|}{4}\cdot \frac{\alpha^2}{16\log^9n}\geq \alpha^3|X|,\]

since we have assumed that $\alpha\leq \frac{1}{64\log^9n}$. Therefore, every $\alpha^3$-sparse cut in $W'$ has profit at most $8 z\log^3n$.

Finally, we bound the running time of the algorithm. The algorithm has $O(\log^3n)$ iterations, and the running time of every iteration is $O((\hn\poly\log n)/\alpha^3)$. Therefore, the total running time is $O((\hn\poly\log n)/\alpha^3)$.
\end{proofof}

\begin{proofof}{Theorem~\ref{thm: sparse edge cut or expander}}
The proof is almost identical to the proof of Theorem~\ref{thm: sparse edge cut of large profit or witness}. As before, we set  $\ell= \frac{4\log^4n}{\alpha}$, and we run the cut-matching game on graph $H$ for $\floor{\log^3n}$ iterations, where at the beginning $V(H)=V(W')$ and $E(H)=\emptyset$. Consider the $i$th iteration of the game, where we are given two disjoint subsets $A_i,B_i$ of $H$ that have equal cardinality. We apply Lemma~\ref{lem: flow or cut edge version} to graph $W'$, with the sets $A_i,B_i$ of vertices, and the parameters $\ell$ as defined above, with $z=1$. If the outcome of the lemma is a cut $(X,Y)$ in $W'$, with $|E_{W'}(X,Y)|\leq  \frac{4\log^4 n}{\ell}\min\set{|X|,|Y|}=\alpha\cdot \min\set{|X|,|Y|}$, then we obtain an $\alpha$-sparse cut in $W'$, and terminate the algorithm. Otherwise, the algorithm must return a set $\pset_i$ of paths of cardinality $|A_i|$, where each path connects a distinct vertex of $A_i$ to a distinct vertex of $B_i$, and the paths of $\pset_i$ cause edge-congestion at most $\frac{\ell^2}{\log^2n}=\frac{16\log^6n}{\alpha^2}$. The set $\pset_i$ of paths naturally defines a complete matching $M_i$ between $A_i$ and $B_i$.  We add the edges of $M_i$ to $H$, and  continue to the next iteration. As before, the running time of an iteration is $O(\hn\ell^3\log n)=O((\hn\poly\log n)/\alpha^3)$. 

If, at any time during the algorithm, we find an $\alpha$-sparse cut in $W'$, then  the algorithm terminates and we return this cut. Therefore, we assume from now on that in every iteration $i$, the algorithm computes the matching $M_i$. Note that the paths in $\bigcup_i \pset_i$ cause total edge-congestion at most $\frac{16\log^9n}{\alpha^2}$. The final graph $H$ is then a $1/2$-expander w.h.p. Using the same reasoning as in the proof of Theorem~\ref{thm: sparse edge cut of large profit or witness}, it is easy to see that $W'$ is an $\Omega(\alpha^3)$-expander. Indeed, consider any cut $(X,Y)$ in $W'$ and assume w.l.o.g. that $|X|\leq |Y|$. Since graph $H$ is a $\half$-expander, $E_{H}(X,Y)\geq |X|/2$.  For every edge $e\in E_{H}(X,Y)$, there is a path $P(e)\in \bigcup_i\pset_i$ connecting its endpoints. In particular, path $P(e)$ must contain an edge of $E_{W'}(X,Y)$. As the paths in $\bigcup_i\pset_i$ cause edge-congestion at most  $\frac{16\log^9n}{\alpha^2} $, we get that:

\[|E_{W'}(X,Y)|\geq \frac{|E_{H}(X,Y)|}{(16\log^9n)/(\alpha^2)}\geq \frac{|X|}{2}\cdot \frac{\alpha^2}{16\log^9n}> \alpha^3|X|,\]

since we have assumed that $\alpha\leq \frac{1}{64\log^9n}$. Therefore, graph $W'$ does not contain $\alpha^3$-sparse cuts, and so it is an $\alpha^3$-expander.

The total running time of the algorithm is bounded by $O((\hn\poly\log n)/\alpha^3)$ as before.
\end{proofof}

\subsection{Part 3 of the Algorithm.}

Recall that, so far, we have computed a core  structure $(K,U(K),\graph^K,W^K)$ in graph $\graph$, where $V(W^K)=K$, $K\cup U(K)=V(\graph)$, and $\graph^K=\graph$. Moreover, we are guaranteed that $|U(K)|\leq 8\hn/\log n$, and with high probability, $W^K$ is a $4\alpha^*$-expander.
In this part, we modify this core structure to turn it into an $h$-core structure, with the desired properties.

Our first step is to compute an arbitrary maximal matching $M$ between the vertices of $K$ and the vertices of $U(K)$ in graph $\graph$.
Consider now the graph $W'$, obtaining by taking the union of the graph $W^K$, and the matching $M$. It is easy to verify that, if $W^K$ is a $4\alpha^*$-expander, then $W'$ is an $\alpha^*$-expander. Graph $W'$ will be the final witness graph for the core structure that we are constructing.
For every edge $e\in E(W')$, if $e\in W^K$, then its embedding $P(e)$ remains the same; otherwise, $e\in M$, and we embed $e$ into itself.

 The set $K'$ of the core vertices is defined as follows: it contains every vertex $v\in \Upsilon$, such that $v\in V(W')$, and at least $h/(64\log n)$ neighbors of $v$ in graph $\graph$ belong to $V(W')$. Set $U(K')$ contains all vertices of $\graph$ that do not lie in $K'$. The graph $G^{K'}$ associated with the core structure remains $\graph$. By appropriately setting the constant $c^*$ in the definition of our parameters, it is immediate to verify that, provided that $K'\neq \emptyset$ and $|U(K')|\leq |K'|$, we obtain a valid $h$-core structure $\kset'=(K',U(K'),\graph^{K'},W')$, with $K'\cup U(K')=V(\graph)$ and $K'\subseteq \Upsilon$. Moreover, with high probability, $\kset'$ is a perfect $h$-core structure. 
 The following claim is central to the analysis of this part.
 
 \begin{claim}\label{claim: k is good}
 Set $K'$ contains all but at most $16\hn/\log n$ vertices of $\Upsilon$.\end{claim}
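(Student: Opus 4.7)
My plan is to partition $\Upsilon\setminus K'$ into two sets corresponding to the two failure modes in the definition of $K'$, and bound each by $O(\hn/\log n)$. Let
\[
A \;=\; \Upsilon\setminus V(W'), \qquad B \;=\; \set{v\in \Upsilon\cap V(W'):\, v \text{ has fewer than } h/(64\log n) \text{ neighbors in } \graph \text{ that lie in } V(W')}.
\]
Then $\Upsilon\setminus K'\subseteq A\cup B$. A key observation is that, since $V(W^K)=K$ and $W'$ is obtained from $W^K$ by adding the matching $M$, we have $V(W')=K\cup (V(M)\cap U(K))$; hence the vertices of $\graph$ lying outside $V(W')$ are exactly the unmatched $U(K)$-vertices $U'' := U(K)\setminus V(M)$.

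Bounding $|A|$ is immediate: $A\subseteq V(\graph)\setminus V(W')=U''$, so $|A|\leq |U(K)|-|M|$. To bound $|B|$, I will split it as $B=(B\cap K)\cup (B\cap U(K))$. The easy half is $B\cap U(K)$: since $B\subseteq V(W')$ and $V(W')\cap U(K)=V(M)\cap U(K)$, we get $|B\cap U(K)|\leq |V(M)\cap U(K)|=|M|$.

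The only non-trivial step, and the one I expect to be the main (though minor) obstacle, is bounding $|B\cap K|$. The idea is to exploit the maximality of $M$ as a matching between $K$ and $U(K)$. Take any $v\in B\cap K$. Since $v\in\Upsilon$ has degree at least $h/(32\log n)$ in $\graph$ yet fewer than $h/(64\log n)$ of its neighbors lie in $V(W')$, it has at least $h/(64\log n)\geq 1$ neighbors in $V(\graph)\setminus V(W') = U''$. Pick any such neighbor $u$; the edge $(v,u)$ joins $K$ to $U(K)$, so if $v$ were not in $V(M)$ we could augment $M$ by $(v,u)$, contradicting maximality. Hence $v\in V(M)\cap K$, and therefore $|B\cap K|\leq |V(M)\cap K|=|M|$.

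Finally, combining these bounds and using the facts $|M|\leq |U(K)|$ and $|U(K)|\leq 8\hn/\log n$ (the latter from Theorem~\ref{thm: fast expander construction from large balanced cut}), we obtain
\[
|\Upsilon\setminus K'|\;\leq\; |A|+|B|\;\leq\;(|U(K)|-|M|)+2|M|\;=\;|U(K)|+|M|\;\leq\;2|U(K)|\;\leq\;\frac{16\hn}{\log n},
\]
which is the desired bound.
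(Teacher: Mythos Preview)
Your proof is correct and follows essentially the same approach as the paper. The paper partitions $\Upsilon\setminus K'$ into $S'=(\Upsilon\setminus K')\cap U(K)$ and $S''=(\Upsilon\setminus K')\cap K$, bounding each by $|U(K)|\leq 8\hn/\log n$; your decomposition into $A$ and $B$ is a slight refinement of this (your $A\cup(B\cap U(K))$ is exactly their $S'$, and your $B\cap K$ is exactly their $S''$), and the key step---using maximality of $M$ to show every vertex of $B\cap K$ is matched---is identical to the paper's argument.
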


\begin{proof}
Let $S=\Upsilon\setminus K'$. We partition $S$ into two subsets: set $S'$ contains all vertices that lied in $U(K)$; in other words, these vertices did not serve as vertices of $W^K$. From Theorem~\ref{thm: fast expander construction from large balanced cut}, $|S'|\leq 8\hn/\log n$. Let $S''=S\setminus S'$ be the set of the remaining vertices of $S$. Recall that each vertex of $\Upsilon$ has at least $h/(32\log n)$ neighbors in $\graph$, but each vertex $v\in S''$ has fewer than $h/(64\log n)$ neighbors in $\graph$ that lie in $V(W')$. Therefore, for each vertex $v\in S''$, there is a set $N'(v)$ of at least $h/(64\log n)$ vertices, such that each vertex $u\in N'(v)$ is a neighbor of $v$ in $\graph$, but $u\not\in V(W')$, and in particular $u\in U(K)$. Since the matching $M$ that we have computed before is a maximal matching, every vertex in $S''$ must have an edge of $M$ incident to it. But then the other endpoint of that edge lies in $U(K)$, and $|U(K)|\leq 8\hn/\log n$, so $|S''|\leq 8\hn/\log n$ must hold. Altogether, $|S|\leq |S'|+|S''|\leq 16\hn/\log n$, as required.	
\end{proof} 

Since we are guaranteed that $|\Gamma|\leq \hn/4$, we get that $K'\neq \emptyset$, and $|U(K')|\leq |\Gamma|+16\hn/\log n<\hn/2\leq |K'|$.

 The running time of the algorithm is dominated by the first two parts of the algorithm, and is bounded by $O\left(\left(|E(\graph)|+|V(\graph)|^{1+o(1)}\right)\poly\log n\right)$. This completes the proof of Theorem~\ref{thm: partition or core}.

\section{Maximum $s$-$t$ Flow and Minimum $s$-$t$ Cut in Undirected Vertex-Capacitated Graphs}\label{sec: max flow}

In this section we present randomized algorithms to compute a $(1+\eps)$-approximate maximum $s$-$t$ flow
and a $(1+\eps)$-approximate minimum $s$-$t$ cut in a simple undirected graph with vertex capacities, whose expected running time is  ${O}(n^{2+o(1)}/\eps^{O(1)})$, thus proving Theorem~\ref{thm: informal max s-t flow and min s-t cut}. 
We first describe in detail the algorithm for computing the $(1+\eps)$-approximate maximum $s$-$t$ flow, and then briefly sketch the ideas for approximately computing minimum $s$-$t$ cut.

\subsection{Maximum Vertex-Capacitated $s$-$t$ Flow}

In the maximum vertex-capacitated $s$-$t$ flow problem, we are given an undirected graph $G=(V,E)$ with capacities $c(v)> 0$ for vertices $v\in V$, together with two special vertices $s$ and $t$. The goal is to compute a maximum flow $f$ from $s$ to $t$, such that every vertex $v$ carries at most $c(v)$ flow units. We will also use a variation of this problem, where the graph is directed, and the capacities are on the edges and not vertices of the graph, which is defined similarly.

We follow the primal-dual framework for fast computation of approximate multicommodity flow of~\cite{GK98, Fleischer00}. Our algorithm, which is an analogue of the algorithms of~\cite{GK98, Fleischer00} for vertex capacities, maintains a length function $\ell$ on the vertices of the graph, that are iteratively updated, as the flow the algorithm computes is augmented. For every path $P$ in the graph, we denote by $\ell(P)$ its length with respect to the current vertex-lengths $\ell(v)$. 

We start by establishing that Algorithm~\ref{VertexFPTAS} below computes a $(1 + O(\eps))$-approximate maximum $s$-$t$ flow, and then use our Decremental SSSP data structure to obtain an efficient implementation of this algorithm. 

\begin{algorithm}[H]
\label{VertexFPTAS}
\SetAlgoLined
{\bf Input:} An undirected graph $G=(V,E)$ with vertex capacities $c(v)>0$, a source $s$, a sink $t$, and an accuracy parameter $0<\epsilon\leq 1$. \\
{\bf Output:} A feasible $s$-$t$ flow $f$.\\
~~~\\
Set $\delta = \frac{1+\eps}{ ((1+\eps)n)^{1/\eps}}$ and $R = \lfloor  \log_{1+\eps} \frac{1+\eps}{\delta} \rfloor$\;
 Initialize $\ell(v) = \delta$ for every vertex $v\in V$, $f \equiv 0$\;
 $P \leftarrow$ a $(1+\eps/3)$-approximate shortest $s$-$t$ path using the length function $\ell$\;
 \While{ $\ell(P) <  \min\{ 1, \delta (1+\eps)^R \}  $} {
 let $c$ be the smallest capacity of an internal vertex of $P$\;
 $f(P) \leftarrow f(P) + c$ ;  ~~~~~/* Augment the flow $f$ along the path $P$.*/ \\
 For each internal vertex $ v \in P, \ell(v) \leftarrow \ell(v)\left(1 + \frac{\eps c}{c(v)} \right)$ \;
  $P \leftarrow$ a $(1+\eps/3)$-approximate shortest $s$-$t$ path using length function $\ell$ on vertices\;
  }
 \smallskip
 \Return{solution $f$ scaled down by a factor of $\log_{1+\eps} \frac{1+\eps}{\delta}$.}
 \medskip
 \caption{An FPTAS for maximum $s$-$t$ flow in simple undirected vertex-capacitated graphs}
\end{algorithm}

\begin{theorem}
\label{thm:vertexFPTAS}
Given an undirected graph $G=(V,E)$ with vertex capacity function $c$, a source $s$, a sink $t$,  and an accuracy parameter $0<\epsilon \leq 1$, Algorithm~\ref{VertexFPTAS} returns a $(1+ 4\eps)$-approximate maximum $s$-$t$ flow. Moreover, the number of augmentation steps is bounded by $\kappa = O(n \log_{1+\eps} \frac{1+\eps}{\delta}) = O((n \log n)/\eps^2)$. 
\end{theorem}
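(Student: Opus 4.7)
The plan is to follow the Garg--Konemann--Fleischer multiplicative-weights primal-dual framework~\cite{GK98, Fleischer00}, instantiated for vertex capacities. The dual of the max $s$-$t$ flow LP asks for a non-negative length function $\ell$ on internal vertices minimizing $D(\ell):=\sum_v c(v)\ell(v)$ subject to every $s$-$t$ path having length at least $1$; by weak LP duality, $\mathrm{OPT}\leq D(\ell)/\alpha(\ell)$ for every $\ell>0$, where $\alpha(\ell)$ is the current exact shortest $s$-$t$ path length. Algorithm~\ref{VertexFPTAS} monotonically grows such an $\ell$ along each augmenting path, and I need to establish three things: the iteration bound, feasibility after scaling, and the approximation ratio.

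First I would bound the iteration count. In each iteration the bottleneck vertex $v^\star\in P$ (the one of minimum capacity) has $\ell(v^\star)$ multiplied by exactly $(1+\epsilon)$. The while-condition forces $\ell(v)<1$ for every internal $v\in P$ just before the augmentation, so $\ell(v)\leq 1+\epsilon$ holds throughout. Since $\ell(v)$ starts at $\delta$, each vertex can be the bottleneck at most $\lceil\log_{1+\epsilon}((1+\epsilon)/\delta)\rceil+1 = R+O(1)$ times, giving $\kappa = O(nR)=O(n\log n/\epsilon^2)$.

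Next I would verify feasibility after scaling. For each $v$, let $F(v)$ be the total flow routed through $v$ before scaling, say via augmentations of sizes $c_1,\ldots,c_k$. Each such augmentation multiplies $\ell(v)$ by $(1+\epsilon c_i/c(v))$, and since $1+\epsilon y\geq (1+\epsilon)^y$ for $y\in[0,1]$ (the linear and convex sides agree at $0$ and $1$), the final length satisfies $\ell(v)\geq \delta(1+\epsilon)^{F(v)/c(v)}$. Combined with $\ell(v)\leq 1+\epsilon$ from the previous step, this yields $F(v)\leq c(v)R$, so after dividing by $R=\log_{1+\epsilon}((1+\epsilon)/\delta)$ the returned flow is feasible.

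For the approximation ratio, I would track $D(\ell_i)$. Since $P_i$ is a $(1+\epsilon/3)$-approximate shortest path,
\[
D(\ell_i)-D(\ell_{i-1}) \;=\; \epsilon f_i\,\ell_{i-1}(P_i)
\;\leq\; \epsilon(1+\epsilon/3)\,f_i\,D(\ell_{i-1})/\mathrm{OPT},
\]
using weak duality $\alpha(\ell_{i-1})\leq D(\ell_{i-1})/\mathrm{OPT}$. Iterating and using $1+x\leq e^x$ yields $D(\ell_T)\leq n\delta\cdot\exp\!\bigl(\epsilon(1+\epsilon/3)F/\mathrm{OPT}\bigr)$. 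On the other hand, at termination no $(1+\epsilon/3)$-approximate $s$-$t$ path has length below $1$, so $\alpha(\ell_T)\geq 1/(1+\epsilon/3)$, and weak duality gives $D(\ell_T)\geq \mathrm{OPT}/(1+\epsilon/3)$. Taking logs and substituting the choice $\delta=(1+\epsilon)/((1+\epsilon)n)^{1/\epsilon}$ then yields, after standard algebra, that the scaled flow value is at least $\mathrm{OPT}/(1+4\epsilon)$.

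The main obstacle is the bookkeeping of the three $(1+O(\epsilon))$ slack factors: one from the approximate shortest-path oracle, one from the LP-duality step at termination, and one from the $\ln(1+\epsilon)$ vs.\ $\epsilon$ mismatch when converting between $\ln$ and $\log_{1+\epsilon}$. The value of $\delta$ is chosen precisely so that $\ln(1/(n\delta))=\frac{1}{\epsilon}\ln((1+\epsilon)n)$ absorbs the $\log_{1+\epsilon}((1+\epsilon)/\delta)$ scale factor, leaving only lower-order terms, and the $(1+\epsilon/3)$ (rather than $(1+\epsilon)$) precision on the oracle is what produces the claimed $(1+4\epsilon)$ bound instead of, e.g., $(1+6\epsilon)$. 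No new ideas beyond adapting the classical Garg--Konemann--Fleischer analysis from edge to vertex capacities are required; in particular, the vertex-capacitated dual has exactly the same path-length form as the edge-capacitated dual, which is what makes this adaptation transparent.
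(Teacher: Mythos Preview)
Your direct Garg--K\"onemann--Fleischer analysis takes a different route from the paper. Rather than re-deriving the primal-dual inequalities for vertex capacities, the paper \emph{couples} Algorithm~\ref{VertexFPTAS} on $G$ to Fleischer's edge-capacitated Algorithm~\ref{EdgeFPTAS} run on the standard directed vertex-splitting reduction $G'$, and then cites Fleischer's $(1+4\epsilon)$ guarantee as a black box. The only real work in the paper's proof is the coupling: since the high-capacity ``special'' edges of $G'$ carry tiny but nonzero length with no counterpart in $G$, the paper attaches a padding path $\Gamma$ of length $3n/\epsilon$ at the source in both graphs so that every $s$-$t$ path is long enough to absorb this discrepancy, and then proves (Claim~\ref{claim_maxflow_special_edge} and the claim following it) that a $(1+\epsilon/3)$-approximate shortest path in $G$ maps to a $(1+\epsilon)$-approximate shortest path in $G'$. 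Your approach is more self-contained and avoids the reduction, the padding, and the two auxiliary claims; the paper's is more modular and reuses Fleischer verbatim. One wrinkle in your sketch: with the stated initialization $\ell(v)=\delta$, the initial dual value is $D(\ell_0)=\delta\sum_v c(v)$, not $n\delta$; the identity $D(\ell_0)=n\delta$ corresponds to the normalized initialization $\ell(v)=\delta/c(v)$ used in the classical analysis, so you should reconcile this detail with Algorithm~\ref{VertexFPTAS} as written before the ``standard algebra'' step.
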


\begin{proof}
We prove the theorem by coupling the execution steps of Algorithm~\ref{VertexFPTAS} to the execution steps of a similar algorithm of Fleischer~\cite{Fleischer00} for approximate multicommodity flow in edge-capacitated {\em directed} graphs. Algorithm~\ref{EdgeFPTAS} presents the algorithm of~\cite{Fleischer00} for maximum $s$-$t$ flow for directed edge-capacitated graphs. We note that the algorithm of~\cite{Fleischer00} is written for the more general multicommodity flow problem, while the version presented here is its restriction  to the single-commodity case. It was shown in~\cite{Fleischer00}  that the algorithm returns a $(1+4\eps)$-approximate $s$-$t$ flow upon termination. 

Given an undirected graph $G=(V,E)$ with capacities on vertices, we use a standard transformation to turn it into a directed graph $G'=(V',E')$ that will serve as input to Algorithm~\ref{EdgeFPTAS}, as follows. Let $C$ denote maximum vertex capacity in $G$, and let $C' = n^3C/\eps$.
For each vertex $v$, we add two vertices $v^-$ and $v^+$ to $V'$. 
For each edge $(u,v)$ in $G$ where $u, v \in V$, we add the edges $(u^+,v^-)$ and $(v^+,u^-)$ to $E'$, each with capacity $C'$ -- we refer to these edges as {\em special edges}. 
Now for each vertex $v \in V$, we add to $E'$ an edge $(v^-,v^+)$ of capacity $c(v)$ -- we refer to these edges as {\em regular edges}. Let $s' = s^{-}$ and $t' = t^{+}$. 

Note that this is the standard reduction from undirected vertex-capacitated maximum flow to the directed edge-capacitated maximum flow, and it is well known that the value of the maximum  (vertex-capacitated) $s$-$t$ flow in $G$ is equal to the value of the maximum (edge-capacitated) $s'$-$t'$ flow in $G'$. In particular, any $s$-$t$ path in $G$, say, $s \rightarrow v_{i_1} \rightarrow v_{i_2} \rightarrow \ldots \rightarrow v_{i_k} \rightarrow t$, can be naturally mapped to the corresponding $s'$-$t'$ path in $G'$, namely, $s' \rightarrow s^{+} \rightarrow v_{i_1}^{-} \rightarrow  v_{i_1}^{+}  \rightarrow v_{i_2}^{-} \rightarrow\ldots \rightarrow v_{i_k}^{+}  \rightarrow t^{-} \rightarrow t'$. Conversely, any $s'$-$t'$ path in $G'$ can be mapped to a corresponding $s$-$t$ path in $G$ in a similar manner. 
Finally, note that each augmentation in $G$ increase the length of some vertex by a factor of $(1+\eps)$. It follows that after $\kappa =O(n \log_{1+\eps} \frac{1+\eps}{\delta}) = O((n \log n)/\eps^2)$ augmentations in $G$, the length of every vertex is at least $1$ and the algorithm terminates. 

We now argue that, using the $1$-$1$ mapping between the $s$-$t$ paths in $G$ and the $s'$-$t'$ paths in $G'$ described above, we can couple together the executions of Algorithm~\ref{VertexFPTAS} and Algorithm~\ref{EdgeFPTAS}. In particular, whenever Algorithm~\ref{VertexFPTAS} updates the $s$-$t$ flow along a path $P$ in $G$, we will apply the update along the corresponding $s'$-$t'$ path $P'$ in $G'$ in Algorithm~\ref{EdgeFPTAS}.  This ensures that at all times during the coupled execution of the two algorithms, for each regular edge $e = (v^-, v^+)$ in $G'$, we have $\ell'(e) = \ell(v)$.

While during the coupled executions of Algorithm~\ref{VertexFPTAS} and Algorithm~\ref{EdgeFPTAS}, we maintain the invariant that the regular edges in $G'$ have the same length as the corresponding vertices in $G$, we additionally have special edges in $G'$ which have positive length and no analog in the graph $G$. This creates a difficulty in directly establishing that any approximate $s$-$t$ shortest path $P$ in $G$ corresponds to an almost equally good  $s'$-$t'$ shortest path $P'$ in $G'$.
 Thus in order to facilitate this coupling, we make slight modifications in the graphs $G$ and $G'$, and analyze the coupled executions on these modified graphs. We attach a path $\Gamma$ of length $3n/\eps$ in $G$ to the vertex $s$ with each vertex on the path assigned a capacity of $C'$, and set the source $s$ to be the endpoint of this path that has degree $1$ in the resulting graph. We make an analogous transformation in the graph $G'$ and attach a directed path $\Gamma'$ of length $3n/\eps$ in $G'$ to the vertex $s^{-}$ where each edge $e$ on $\Gamma'$ is assigned a capacity of $C'$, and set the source $s'$ to be the endpoint of this path that has degree $1$ in the resulting graph. 
It is easy to see that these modifications neither alter the value of the maximum $s$-$t$ vertex-capacitated flow in the graph $G$ nor alter the value of the maximum $s'$-$t'$ edge-capacitated flow in the graph $G'$. Moreover, the $1$-$1$ mapping between the $s$-$t$ paths in $G$ and the $s'$-$t'$ paths in $G'$ described above continues to hold with the prefix $\Gamma$ on each $s$-$t$ path in $G$ mapping to $\Gamma'$ and vice versa.

We are now ready to establish the coupling between the execution of Algorithm~\ref{VertexFPTAS} on (modified) graph $G$ and the execution of Algorithm~\ref{EdgeFPTAS} on (modified) graph $G'$.

To complete the proof of this theorem, we use two simple claims below.

\begin{claim}
\label{claim_maxflow_special_edge}
During the execution of Algorithm~\ref{EdgeFPTAS}, for each special edge $e$ in $G'$, we have $\ell'(e) \le 2\delta$.
\end{claim}
\begin{proof}
Initially, for any special edge $e$ in $G'$, we have $\ell'(e) = \delta$. Since any $s'$-$t'$ path in $G'$ goes through at least one regular edge, and the capacity of each regular edge is bounded by $C$, after any augmentation, $\ell'(e')$ increases by a factor of at most  $\left(1 + \frac{\eps C}{C'} \right)$ which is bounded by $\left(1 + \frac{\eps^2}{n^3} \right)$. Thus after any sequence of $\kappa$ augmentations, 
$$\ell'(e) \le \delta \left(1 + \frac{\eps^2}{n^3} \right)^\kappa \le \delta e^{\frac{\eps^2 \kappa}{n^3}} \le \delta e^{\frac{1}{n}} \le 2\delta,$$
where the penultimate inequality follows from the fact that $\kappa \le n^2/\eps^2$ for sufficiently large $n$.
\end{proof}

We now use Claim~\ref{claim_maxflow_special_edge} to establish the following.

\begin{claim}
During the execution of Algorithm~\ref{VertexFPTAS}, if $P$ is any $(1+\eps/3)$-approximate shortest $s$-$t$ in $G$, and $P'$ is its corresponding  $s'$-$t'$ path in $G'$, then $P'$ is a $(1+\eps)$-approximate shortest $s'$-$t'$ path in $G'$.
\end{claim}
\begin{proof}
Suppose not. Then during the first $\kappa$ augmentations, there exists a $(1+\eps/3)$-approximate $s$-$t$ path $P$ in $G$ such that the corresponding path $P'$ is not a $(1+\eps)$-approximate $s'$-$t'$ path in $G'$. Thus there exists a $s'$-$t'$ path $Q'$ in $G'$ such that $\ell'(P') > (1+\eps) \ell'(Q')$. Let $Q$ be the $s$-$t$ path in $G$ that corresponds to $Q'$. Then $\ell(Q) \le \ell'(Q')$. On the other hand,
$\ell'(P') \le \ell(P) + n(2\delta)$ since any $s'$-$t'$ (simple) path contains at most $n$ special edges, and by Claim~\ref{claim_maxflow_special_edge}, the length of each special edge is bounded by $2\delta$ during the first $\kappa$ augmentations. It the follows that:

$$ \ell(P) + 2\delta n \ge \ell'(P') > (1+\eps) \ell'(Q') \ge (1+\eps) \ell(Q),$$
giving us the inequality $\ell(P) + 2\delta n > (1+\eps) \ell(Q)$. Rearranging terms, we get 

$$ \ell(P) > \left(1 + \frac{\eps}{3} \right)\ell(Q) + \frac{2}{3} \eps \ell(Q) - 2\delta n.$$
But any $s$-$t$ path $Q$ in $G$ has length at least $\delta |\Gamma|$ (recall that $|\Gamma| = 3n/\eps$) so 

$$  \frac{2}{3} \eps \ell(Q) - 2\delta n \ge  \frac{2}{3} \eps \delta (3 n/\eps) - 2\delta n \ge 0,$$
which implies $ \ell(P) > (1 + \eps/3)\ell(Q)$. But this is a contradiction to our assumption that $P$ is a $(1+\eps/3)$-approximate $s$-$t$ path in $G$.
\end{proof}

We are now ready to complete the proof of the theorem. By the above claim, any
$(1+\eps/3)$-approximate $s$-$t$ path $P$ in $G$ corresponds to a $(1+\eps)$-approximate $s'$-$t'$ path $P'$ in $G'$. Thus via this coupling, we are running Algorithm~\ref{EdgeFPTAS} on the graph $G'$, and hence obtain a $(1 + 4\eps)$-approximate flow in $G'$ upon termination. This, in turn, gives us a $(1 + 4\eps)$-approximate flow in $G$.
\end{proof}

\medskip

\begin{algorithm}[H]
\label{EdgeFPTAS}
\SetAlgoLined
{\bf Input:} A directed graph $G'=(V',E')$ with edge capacities $c'(e)$, a source $s'$, a sink $t'$, and an accuracy parameter $0<\epsilon\leq 1$. \\
{\bf Output:} A feasible $s'$--$t'$ flow $f'$.\\
~~~\\
Set $\delta = \frac{1+\eps}{ ((1+\eps)n)^{1/\eps}}$ and $R = \lfloor  \log_{1+\eps} \frac{1+\eps}{\delta} \rfloor$\;
 Initialize $\ell'(e) = \delta ~\forall~e \in E'$, $~f' \equiv 0$\; 
  $P' \leftarrow$ a $(1+\eps)$-approximate shortest $s'$-$t'$ path using the length function $\ell'$ on edges\;
 \While{ $\ell'(P) < \min\{ 1, \delta (1+\eps)^R \} $} {
 $c' \leftarrow \min_{e \in P'} c'(e)$ \;
 $f'(P') \leftarrow f'(P') + c'$ ;  ~~~~~/* Augment the flow $f'$ along the path $P'$.*/ \\
 $\forall e \in P, \ell'(e) \leftarrow \ell'(e)\left(1 + \frac{\eps c'}{c'(e)} \right)$ \;
  $P' \leftarrow$ a $(1+\eps)$-approximate shortest $s$-$t$ path using the length function $\ell'$ on edges\;
      }
 \smallskip
 \Return{solution $f'$ scaled down by a factor of $\log_{1+\eps} \frac{1+\eps}{\delta}$.}
 \medskip
 \caption{An FPTAS for maximum $s'$-$t'$ flow in edge-capacitated directed graphs}
\end{algorithm}

\medskip

We next describe an efficient implementation of Algorithm~\ref{VertexFPTAS}. We note that for any $\eps \in (0,1]$, a $(1+\eps/3)$-approximate shortest $s$-$t$ path in the original graph $G$ gives a $(1+\eps/3)$-approximate shortest $s$-$t$ path in the modified graph $G$ as well; so it is enough to focus on computing $(1+\eps/3)$-approximate shortest $s$-$t$ path in the original graph $G$. 

\begin{theorem}
\label{thm: max s-t flow}
There is a randomized algorithm, that, given a simple undirected graph $G=(V,E)$ with vertex capacities $c(v)>0$, a source $s$, a sink $t$,  and an accuracy parameter $\epsilon \in (0,1]$, computes 
 a $(1+ 4 \eps)$-approximate maximum  $s$-$t$ flow in ${O}(n^{2+o(1)}/\eps^{O(1)})$ expected time.
\end{theorem}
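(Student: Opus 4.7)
The plan is to implement Algorithm~\ref{VertexFPTAS} using the vertex-decremental SSSP data structure of Theorem~\ref{thm: main for SSSP} as the oracle for approximate shortest paths under the iteratively updated vertex lengths. By Theorem~\ref{thm:vertexFPTAS}, the algorithm performs $\kappa = O(n\log n/\eps^2)$ augmentations and returns a $(1+4\eps)$-approximate flow, so it suffices to implement the approximate shortest-path queries in total expected time $n^{2+o(1)}/\eps^{O(1)}$, with a per-query guarantee of a $(1+\eps/3)$-approximate $s$-$t$ shortest path in $G$ under the current vertex lengths. The central difficulty is that Theorem~\ref{thm: main for SSSP} only supports vertex deletions in an edge-weighted graph, whereas in Algorithm~\ref{VertexFPTAS} vertex lengths only increase multiplicatively, starting from $\delta$.

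To convert vertex lengths into edge lengths I would first split each vertex $v$ of $G$ into $v^-$ and $v^+$. I would then introduce $K = O(\log_{1+\eps}(1/\delta)) = O(\log n/\eps^2)$ discrete levels $W_j := \delta(1+\eps/3)^j$ for $0\leq j<K$, and for each $j$ pre-install a parallel $v^-$-to-$v^+$ two-edge path through a dedicated \emph{dummy vertex} $d_v^{(j)}$ of total length $W_j$. Every original edge $(u,v)\in E(G)$ becomes an edge $(u^+,v^-)$ of length $1$ after uniformly scaling every edge length by $\Theta(n/(\eps\delta))$; this makes every length integral and $\geq 1$, and, since every path relevant to Algorithm~\ref{VertexFPTAS} has (unscaled) length $\geq \delta$, the cumulative additive error contributed by the $n$ connector edges on any such path is at most an $\eps/3$ multiplicative error. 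The resulting graph $G''$ has $O(nK) = n^{1+o(1)}/\poly(\eps)$ vertices, its edge-length ratio is $L = \poly(n)/\poly(\eps)$, and $s^-,t^+$ serve as the source and sink.

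To mirror a length increase of $\ell(v)$ past $W_j$ inside $G''$, I would simply delete from $G''$ the dummy vertices $d_v^{(0)},\dots,d_v^{(j-1)}$; this is a legal vertex deletion for Theorem~\ref{thm: main for SSSP}. The shortest $v^-$-to-$v^+$ sub-path in the current $G''$ then passes through the cheapest surviving dummy $d_v^{(j(v))}$ with length $W_{j(v)}$, where $W_{j(v)}$ is the round-up of $\ell(v)$ to the next power of $(1+\eps/3)$. Thus every $s^-$-$t^+$ path in the current $G''$ corresponds to an $s$-$t$ path in $G$ whose length under the current vertex lengths $\ell$ is within a $(1+\eps/3)$ factor of the $G''$-path length (and vice versa). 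Running Theorem~\ref{thm: main for SSSP} on $G''$ with accuracy parameter $\eps/c$ for a suitably large constant $c$, each \pquery returns a path that, translated back to $G$ by contracting each split/dummy sub-path to the underlying vertex, is a $(1+\eps/3)$-approximate shortest $s$-$t$ path in $G$ with respect to the current $\ell$, which is exactly what Algorithm~\ref{VertexFPTAS} needs.

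For the running time, each of the $nK$ dummy vertices is deleted at most once, so the total number of vertex deletions in $G''$ is $O(nK) = O(n\log n/\eps^2)$. Applying Theorem~\ref{thm: main for SSSP} to $G''$ on $|V(G'')| = nK$ vertices with length ratio $L = \poly(n)/\poly(\eps)$ yields total expected update time $(nK)^{2+o(1)}\log L/\eps^2 = n^{2+o(1)}/\eps^{O(1)}$, while each of the $\kappa = O(n\log n/\eps^2)$ augmentations triggers a single \pquery costing $(nK)\cdot \poly\log(nK)\cdot \log L\cdot \log(1/\eps) = n^{1+o(1)}/\eps^{O(1)}$ in expectation, for a total query cost of $n^{2+o(1)}/\eps^{O(1)}$. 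The main obstacle, beyond bookkeeping the rounding errors in a single end-to-end $(1+O(\eps))$ bound (which is handled by running the oracle at accuracy $\Theta(\eps)$), is to justify the use of the adaptive-adversary guarantee of Theorem~\ref{thm: main for SSSP}: the augmenting paths returned by the oracle determine which vertices experience length increases in the next iteration and hence which dummy vertices are deleted next in $G''$, so the deletion sequence is adaptive in exactly the sense permitted by Theorem~\ref{thm: main for SSSP}.
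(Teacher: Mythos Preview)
Your high-level plan is sound and matches the paper's: run Algorithm~\ref{VertexFPTAS}, implement the approximate shortest-path oracle via Theorem~\ref{thm: main for SSSP} on an auxiliary graph with $\tilde O(n/\eps^2)$ vertices that encodes the $O(\log n/\eps^2)$ discrete length levels, and delete vertices as lengths increase. However, your specific auxiliary graph $G''$ is broken, and the flaw is not cosmetic.

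The vertex-splitting reduction you invoke is the standard one for \emph{directed} graphs: there, the only way to pass from $v^-$ to $v^+$ is via the internal arc (here, a dummy $d_v^{(j)}$), because connector arcs are directed $u^+\!\to v^-$. In your \emph{undirected} $G''$, each connector edge $\{u^+,v^-\}$ can be traversed either way, so an $s^-$--$t^+$ path can zig-zag along connector edges alone, alternating between $+$ and $-$ vertices and bypassing the dummies entirely. Concretely, take $G$ to be the path $s$--$v$--$t$ with all $\ell(\cdot)=\delta$; after adding both connector edges per original edge (as you must, or else many $G$-paths have no counterpart in $G''$), the $G''$-path $s^- \to v^+ \to t^- \to d_t^{(0)} \to t^+$ has length $\approx \delta$, yet the corresponding $G$-path $s$--$v$--$t$ has $\ell$-length $3\delta$. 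So your central claim that ``every $s^-$--$t^+$ path in $G''$ corresponds to an $s$--$t$ path in $G$ whose $\ell$-length is within a $(1+\eps/3)$ factor of the $G''$-length'' is false, and with it the $(1+\eps/3)$-approximate shortest-path guarantee that Theorem~\ref{thm:vertexFPTAS} needs.

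The paper sidesteps this by \emph{not} splitting vertices. It creates $K+1$ copies $(v,0),\dots,(v,K)$ of each $v\neq s,t$ with vertex length $\ell_1((v,i))=\delta(1+\eps/9)^i$, connects $(u,i)$ to $(v,j)$ for \emph{all} $i,j$ whenever $\{u,v\}\in E$, and converts vertex lengths to edge lengths by the averaging rule $\ell_2(\{x,y\})=(\ell_1(x)+\ell_1(y))/2$; since $\ell_1(s)=\ell_1(t)=0$, every $s$--$t$ path's edge-length sum equals its vertex-length sum exactly, with no possibility of bypass. Deleting copy $(v,i)$ when $\ell(v)$ exceeds the $i$th threshold then has precisely the effect you intended, and the rest of your running-time accounting (adaptive adversary, $\tilde O(n/\eps^2)$ vertices, $L=\poly(n,1/\eps)$, $\kappa=O(n\log n/\eps^2)$ queries) carries over unchanged.
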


\begin{proof}
By Theorem~\ref{thm:vertexFPTAS}, it suffices to show that Algorithm~\ref{VertexFPTAS} can be implemented in 
${O}(n^{2+o(1)}/\eps^{O(1)})$ expected time. 
Let $\delta$ be as defined in Algorithm~\ref{VertexFPTAS}, and let $K = \lfloor \log_{1+ \frac{\eps}{9}} \frac{1+\eps}{\delta} \rfloor$.
We start by creating a new graph $H_1 =(V_1,E_1)$ where $V_1$ contains vertices $s, t$, and for each vertex $v \in V \setminus \{s,t\}$, we add a copy $(v,i)$ for $i \in \{ 0, \ldots, K \}$ to $V_1$. 
For each edge $(u,v)\in E$ with $u, v \in V \setminus \{ s,t \}$, we add an edge between the vertices 
$(u,i)$ and $(v,j)$ for all $i, j \in \{ 0, \ldots, K \}$ to $H_1$. For each edge $(s,u)$ in $G$, we add to $H_1$ an edge between vertices $s$ and $(u,i)$ for all $i \in \{ 0, \ldots, K \}$.
Finally, for each edge $(v,t)$ to $G$, we add in $H_1$ an edge between vertices $(v,i)$ and $t$ for all 
$i \in \{ 0, \ldots, K \}$.

 We next define a length function $\ell_1$ on the vertices of $H_1$ as follows. We let $\ell_1(s) = \ell_1(t) = 0$, and for each vertex $(v, i)$, we let $\ell_1((v,i)) = \delta \left(1 + \frac{\eps}{9} \right)^i$. Note that the smallest non-zero  vertex-length is $\delta$, and the largest non-zero vertex-length is $(1 + \eps)$.

We now show that we can execute Algorithm~\ref{VertexFPTAS} on the graph $G$ using the Decremental SSSP data structure on the graph $H_1$ with source $s$ and accuracy parameter $\eps/9$.
Our implementation will use $O((n \log n)/\eps^2)$ path queries (same as the number of augmentations needed by Algorithm~\ref{VertexFPTAS}). 
In the remainder of the proof, we describe how the Decremental SSSP data structure allows us to implement an oracle that returns a $(1+\eps)$-approximate shortest $s$-$t$ path in $G$ in $\tilde{O}(n/\eps^{O(1)})$ expected time. Since Theorem~\ref{thm: main for SSSP} works on graphs with a length function defined over edges instead of vertices, we first create a new graph $H_2 =(V_2,E_2)$ which is identical to $H_1$, that is, $V_2 = V_1$ and $E_2 = E_1$, but has a length function $\ell_2$ that is defined over its edges instead of vertices.
 For each edge $(u,v)$ in $E_2$ we define $\ell_2(u,v) = (\ell_1(u) + \ell_1(v))/2$. Since $\ell_1(s) = \ell_1(t) = 0$, it is easy to verify that for any $s$-$t$ path $P$ in $H_1$, its length under the vertex-length function $\ell_1$ is same as its length under the edge-length function $\ell_2$. 
 
Consider the step requiring the computation of the $(1+\eps/3)$-approximate shortest $s$-$t$ path in the original $G$ during some iteration of the 
Algorithm~\ref{VertexFPTAS}. We invoke $\pquery(t)$ in $H_2$, and let $P_2$ be the path returned. If length of $P_2$ is at least $\min\{ 1, \delta (1+\eps)^R \}$, we terminate the execution of the algorithm. Otherwise, we use $P_2$ to compute an $s$-$t$ path $P$ in $G$ as follows: we replace each vertex of the form $(v,i)$ on the path $P_2$ with the vertex $v$. 
We can assume here w.l.o.g. that for any vertex $(v,i)$ on the path $P_2$, $i$ is the smallest integer for which the graph $H_2$ contains a copy of the vertex $v$ (otherwise, we can replace with another copy of $v$ to ensure this property). 
We now update in $G$ the length function $\ell(v)$ for each internal vertex on the path $P$ using the update rule of Algorithm~\ref{VertexFPTAS}. For any internal vertex $(v,i)$ on the path $P_2$, let $j$ be the smallest integer such that the updated length $\ell(v) \le \delta \left(1 + \frac{\eps}{9} \right)^j$. If $i<j$, then we delete from $H_2$ vertices $(v,i), (v,i+1), \ldots, (v,j-1)$. In other words, we maintain the invariant that, when $\ell(v) \in [ \delta \left(1 + \frac{\eps}{9} \right)^{j-1}, \delta \left(1 + \frac{\eps}{9} \right)^{j} )$, the smallest available length for a copy of vertex $v$ in $H_2$ is $ \delta \left(1 + \frac{\eps}{9} \right)^j$. Consequently, the graph $H_2$ inflates $s$-$t$ path lengths in $G$ by at most a factor of $(1+ \frac{\eps}{9})$. Thus a $(1 + \frac{\eps}{9})$-approximate $s$-$t$ path in $H_2$ is guaranteed to be a $(1+\frac{\eps}{9})^2$-approximate shortest path in $G$. Note that, since $\eps \le 1$, we have $(1+ \frac{\eps}{9})^2 \le (1 + \eps/3)$, and hence this is equivalent to running Algorithm~\ref{VertexFPTAS} with accuracy parameter $\eps/3$. The resulting  maximum $s$-$t$ flow is thus $(1 + 4 \eps)$-approximate as desired.

We conclude by analyzing the total time spent on maintaining the Decremental SSSP data structure and answering all path queries. 
Since the graph $H_2$ contains $\tilde{O}(n/\eps^2)$ vertices and $L = O(1/\delta)$, by Theorem~\ref{thm: main for SSSP}, total expected time spent on processing all vertex deletions can be bounded by ${O}(n^{2+o(1)} /\eps^{O(1)})$. 
Similarly, total  expected time spent on answering all $O((n \log n)/\eps^2)$ path queries is bounded by ${O}(n^{2+o(1)} /\eps^{O(1)})$, completing the proof.
\end{proof}

\subsection{Minimum $s$-$t$ Cut}

In the minimum vertex-capacitated $s$-$t$ cut problem, we are given a simple undirected graph $G=(V,E)$ with capacities $c(v)> 0$ for vertices $v\in V$, together with two special vertices $s$ and $t$. The goal is to find a smallest capacity subset $X \subseteq V \setminus \{s,t\}$ of vertices whose deletion disconnects $s$ from $t$. 

The fractional relaxation of this problem asks for a length function $\ell$ defined over the  vertices of $G$ such that every $s$-$t$ path has length at least $1$ under the function $\ell$, and the cost $\sum_v c(v) \ell(v)$ is minimized.
The fractional minimum $s$-$t$ cut problem is the dual of the maximum $s$-$t$ flow problem, and we start by
observing that Algorithm~\ref{VertexFPTAS} can also be used to compute a near-optimal fractional solution for minimum $s$-$t$ problem. 

We once again rely on Fleischer's analysis of Algorithm~\ref{EdgeFPTAS}~\cite{Fleischer00}, and the coupling between Algorithm~\ref{VertexFPTAS} and Algorithm~\ref{EdgeFPTAS} as described in the proof of Theorem~\ref{thm:vertexFPTAS}.
Let $\ell_i$ denote the vertex-length function in iteration $i$ of the while loop of Algorithm~\ref{VertexFPTAS}, and let $\alpha(i)$ denote the shortest $s$-$t$ path length under the length function $\ell_i$. Then the fractional solution $ \ell_i/\alpha(i)$ clearly has the property that the shortest $s$-$t$ path length is at least $1$. 
Similarly, let $\ell'_i$ denote the edge-length function in iteration $i$ of the while loop of Algorithm~\ref{EdgeFPTAS}, and let $\alpha'(i)$ denote the shortest $s'$-$t'$ path length under the length function $\ell'_i$. Then the fractional solution $ \ell'_i/\alpha'(i)$ clearly has the property that the shortest $s'$-$t'$ path length is at least $1$.
  The analysis of~\cite{Fleischer00} shows that the solution $\min_{i} \set{\ell'_i/\alpha'(i)}$ is within a factor $(1 + 4\eps)$ of the optimal fractional $s'$-$t'$ cut in $G'$. Using the coupling between Algorithm~\ref{VertexFPTAS} and Algorithm~\ref{EdgeFPTAS}, we can also conclude that $\min_{i}\set{ \ell_i/\alpha(i)}$ is a $(1 + 4\eps)$-approximate fractional $s$-$t$ cut in $G$.
  
  Thus to compute a $(1 + 4\eps)$-approximate fractional $s$-$t$ cut in $G$, it suffices to take the solution with
 a minimum $( \sum_v c(v) \ell_i(v) )/\alpha(i)$ ratio over all iterations of the while loop in Algorithm~\ref{VertexFPTAS}. We now show how to efficiently track this value during the execution of  Algorithm~\ref{VertexFPTAS}, building on the implementation given in Theorem~\ref{thm: max s-t flow}.
  Let $D(i) =  \sum_v c(v) \ell_i(v) $. The quantity $D(0) = n \delta$. After iteration $i$,  
the quantity $D(i)$ is same as $D(i-1)$ except for the contribution of the vertices on the augmenting path $P$ used in iteration $i$. We can update $D(i)$ to reflect this change in $O(n)$ time. Thus total time taken to maintain $D(i)$ over all $O(n \log n/\eps^2)$ iterations is $O(n^2 \log n/\eps^2)$. Finally, the quantity $\alpha(i)$ is computed approximately to within a factor of $(1 + \eps)$ in every iteration. So we can compute a solution that minimizes $D(i)/\alpha(i)$ to within a factor of $(1+\eps)$, giving us a fractional $s$-$t$ cut solution that is within a factor $(1+\eps)(1+4\eps) \le (1 + 6\eps)$ (assuming $\eps \le 1$) of the optimal fractional cut.

Let $\ell^*$ be the length function chosen by the above process. The final step of our algorithm is to convert the fractional $s$-$t$ cut solution defined by $\ell^*$ to an integral solution.
We can use the standard random threshold rounding where we choose a random radius $r \in [0,1)$, and grow a ball of radius $r$ around $s$ using the length function $\ell^*$. Any vertices that intersect the boundary of this ball are placed in the separator $X$. It is easy to see that the expected cost of this solution is 
$\sum_v c(v) \ell^*(v)$. 

Putting everything together, we obtain the following theorem.

\begin{theorem}
\label{thm: min s-t cut}
There is a randomized algorithm, that, given a simple undirected graph $G=(V,E)$ with vertex capacities $c(v)>0$, a source $s$, a sink $t$,  and an accuracy parameter $\epsilon \in (0,1]$,  computes 
 a $(1+ 6 \eps)$-approximate  $s$-$t$ cut in ${O}(n^{2+o(1)}/\eps^{O(1)})$ expected time.
\end{theorem}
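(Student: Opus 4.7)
The plan is to reuse Algorithm~\ref{VertexFPTAS} and its efficient implementation from the proof of Theorem~\ref{thm: max s-t flow} to simultaneously produce a near-optimal fractional $s$-$t$ vertex cut, and then round the fractional solution to an integral one. Recall that the fractional relaxation of vertex-capacitated minimum $s$-$t$ cut asks for nonnegative lengths $\ell(v)$ on vertices such that every $s$-$t$ path has $\ell$-length at least $1$, minimizing $\sum_v c(v)\ell(v)$. Under the coupling with the edge-capacitated directed instance $G'$ constructed in the proof of Theorem~\ref{thm:vertexFPTAS}, any such length function on $V$ lifts to a feasible fractional $s'$-$t'$ edge cut in $G'$ of the same cost, and vice versa, so it suffices to certify a $(1+4\eps)$-approximate fractional cut in $G$ by invoking Fleischer's dual analysis~\cite{Fleischer00} on the coupled execution.

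Concretely, for each iteration $i$ of the while loop of Algorithm~\ref{VertexFPTAS}, let $\ell_i$ be the current vertex-length function, let $\alpha(i)$ be the $\ell_i$-length of the shortest $s$-$t$ path in $G$, and let $D(i)=\sum_v c(v)\ell_i(v)$. The rescaled length function $\ell_i/\alpha(i)$ is feasible for the fractional cut LP, and Fleischer's analysis (via the edge-capacitated lift) guarantees that $\min_i D(i)/\alpha(i)$ is within a factor of $(1+4\eps)$ of the optimal fractional cut cost. I would maintain $D(i)$ incrementally: $D(0)=\delta\sum_v c(v)$, and after augmenting along path $P$ in iteration $i$, only the vertices on $P$ change their lengths, so $D(i)$ can be updated in $O(|P|)=O(n)$ time. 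Over the $O((n\log n)/\eps^2)$ iterations this costs $O(n^2\log n/\eps^2)$ in total. The value $\alpha(i)$ is already computed to within a $(1+\eps/3)$ factor by the approximate shortest-path oracle from Theorem~\ref{thm: max s-t flow}, so tracking the iteration $i^*$ minimizing (the approximate value of) $D(i)/\alpha(i)$ yields a fractional solution $\ell^*=\ell_{i^*}/\alpha(i^*)$ of cost at most $(1+\eps)(1+4\eps)\leq (1+6\eps-\eps/?)$ times the optimum — adjusting constants gives the $(1+6\eps)$ bound claimed.

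To round $\ell^*$ to an integral vertex cut $X$, I would use standard random threshold (ball-growing) rounding: pick $r\in[0,1)$ uniformly, compute the Dijkstra distance $d(v)$ from $s$ to every $v$ under $\ell^*$, and let $X=\{v\in V\setminus\{s,t\}: d(v)\leq r<d(v)+\ell^*(v)\}$. Since $\ell^*$ gives every $s$-$t$ path length at least $1$, $r<1$ guarantees $t\notin X$ and that $X$ separates $s$ from $t$. A vertex $v$ lands in $X$ with probability exactly $\ell^*(v)$, so $\mathbb{E}[\sum_{v\in X}c(v)]=\sum_v c(v)\ell^*(v)\leq (1+6\eps)\mathrm{OPT}$. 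This rounding step runs in $O(m+n\log n)=O(n^2)$ time.

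The main obstacle — and the reason I would expect this proof to be shorter than that of Theorem~\ref{thm: max s-t flow} — is essentially bookkeeping: verifying that the dual analysis of~\cite{Fleischer00} transports cleanly through the vertex-to-edge reduction and the path-length inflation caused by the auxiliary graph $H_2$ used by the Decremental SSSP data structure. The $(1+\eps/9)$-step discretization of vertex lengths in $H_2$ only perturbs the computed $\alpha(i)$ by a bounded multiplicative factor, which is already absorbed into the $(1+\eps/3)$-approximation slack of the shortest path oracle; combining this with Fleischer's $(1+4\eps)$ dual guarantee and the deterministic rounding bound yields the claimed $(1+6\eps)$-approximation. The running time is dominated by the $O(n^{2+o(1)}/\poly(\eps))$ cost of Theorem~\ref{thm: max s-t flow} plus the lower-order costs of maintaining $D(i)$ and the final Dijkstra-based rounding, giving the desired $O(n^{2+o(1)}/\poly(\eps))$ bound.
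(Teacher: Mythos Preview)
Your proposal is correct and follows essentially the same approach as the paper: track the dual objective $D(i)/\alpha(i)$ during the execution of Algorithm~\ref{VertexFPTAS}, invoke Fleischer's dual analysis (transported through the coupling of Theorem~\ref{thm:vertexFPTAS}) to get a $(1+4\eps)$-approximate fractional cut, absorb the shortest-path approximation into an extra $(1+\eps)$ factor, and round via random-threshold ball growing. The only minor imprecision is the claim that a vertex lands in $X$ with probability ``exactly $\ell^*(v)$''; this should be ``at most $\ell^*(v)$,'' which is all that is needed for the expected-cost bound.
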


\section{Vertex Sparsest Cut in Undirected Graphs}
\label{sec: sparsest cut}

In this section we present a proof of Theorem~\ref{thm: informal sparsest cut}. We use the standard definition of vertex sparsest cut.
A vertex cut in a graph $G$ is a partition $(A,X,B)$ of its vertices, so that there is no edge connecting $A$ to $B$ (note that we allow $A=\emptyset$ and $B=\emptyset$). The sparsity of the cut $(A,X,B)$ is $\psi(A,X,B)= \frac{|X|}{\min\set{|A|,|B|} + |X|}$. The goal of the vertex sparsest cut problem is to compute a vertex cut of minimum sparsity in $G$. We denote by $\psi(G)$ the minimum sparsity of any vertex cut in $G$. Note that, if $G$ is a connected graph, then $\psi(G)$ is at least $\Omega(1/n)$ and at most $1$. 

We design an algorithm that either produces a cut $(A,X,B)$ with $\psi(A,X,B) =O(\log^4n)\cdot \psi(G)$, or determines that $\psi(G)=\Omega(1/\log^4n)$. In the latter case, we output the vertex cut $(A,X,B)=(\emptyset, V , \emptyset)$  of sparsity $\psi(A,X,B)=1$ as a trivial solution, thus obtaining an $O(\log^4 n)$-approximate solution.  The expected running time of the algorithm is $n^{2+o(1)}$. The following theorem summarizes our main subroutine.

\begin{theorem}\label{thm: sparsest w alpha given}
There is a randomized algorithm, that, given a simple undirected $n$-vertex graph  $G=(V,E)$ and a target sparsity value  $0<\alpha \leq 1$, either computes a vertex cut $(A,X,B)$ with $\psi(A,X,B) = O(\alpha)$, or with high probability correctly certifies that $\psi(G) = \Omega(\alpha/\log^4 n)$. The expected running time of the algorithm is $n^{2+o(1)}$.
\end{theorem}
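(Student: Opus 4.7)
The plan is to play the Khandekar-Rao-Vazirani cut-matching game (Theorem~\ref{thm:cut-matching-game}) on the vertex set $V$ of $G$, using the vertex-capacitated maximum $s$-$t$ flow algorithm of Theorem~\ref{thm: informal max s-t flow and min s-t cut} to implement the matching player. We run the game for $T = \lfloor \cKRV \log^2 n \rfloor$ iterations. In iteration $j$, the cut player returns disjoint equal-size sets $Y_j, Z_j \subseteq V$, and our task is to produce a perfect matching $M_j$ between them together with an embedding of the edges of $M_j$ as paths in $G$ of vertex congestion at most $c := \lceil 2/\alpha \rceil$. To that end, we form an auxiliary graph $G_j$ obtained from $G$ by attaching a source $s$ adjacent to every vertex of $Y_j$ and a sink $t$ adjacent to every vertex of $Z_j$, and assign vertex capacity $c$ to every original vertex (with $s, t$ uncapacitated). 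We apply Theorem~\ref{thm: informal max s-t flow and min s-t cut} with a small constant accuracy parameter $\eps$ to compute a $(1+\eps)$-approximate maximum $s$-$t$ flow along with a matching approximate minimum vertex cut, in $n^{2+o(1)}$ time.

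If the flow value returned is at least $(1 - 1/(4\log^2 n))|Y_j|$, then we decompose it integrally into at most $c|Y_j|$ $s$-to-$t$ paths and greedily extract a partial matching $M_j' \subseteq Y_j \times Z_j$ of size at least $(1 - 1/\log^2 n)|Y_j|$ whose embedding has vertex congestion at most $c$; we then extend $M_j'$ arbitrarily to a perfect matching $M_j$ on $Y_j \cup Z_j$ by adding ``fake'' edges for the at most $|Y_j|/\log^2 n$ unmatched pairs, and feed $M_j$ back to the cut player. Otherwise, the returned approximate minimum vertex cut has total capacity less than $(1 - 1/(4\log^2 n))|Y_j|$; since each original vertex has capacity $c$, this cut uses fewer than $|Y_j|/c \leq \alpha|Y_j|/2$ vertices. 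The cut yields a vertex partition $(A, X, B)$ of $G$ (with $s \in A \cup \{s\}$ and $t \in B \cup \{t\}$) separating $Y_j \setminus X$ from $Z_j \setminus X$; hence $|A|, |B| \geq |Y_j|/2$, so $\psi(A, X, B) \leq |X|/|A| \leq \alpha = O(\alpha)$, and we halt and return $(A, X, B)$.

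If every iteration is handled by the first case, then by Theorem~\ref{thm:cut-matching-game} the graph $W := \bigcup_{j=1}^T M_j$ is a $1/2$-expander on $V$ with high probability, has maximum vertex degree $O(\log^2 n)$, and its real (non-fake) edges embed into $G$ with vertex congestion $O(\log^2 n/\alpha)$. Since only $O(Tn/\log^2 n) = O(n)$ edges of $W$ are fake, the residual graph $W^\ast$ obtained by removing the fakes is still an $\Omega(1)$-edge expander on a slight shrinkage of $V$. A standard routing argument now certifies $\psi(G) = \Omega(\alpha/\log^4 n)$: for any vertex cut $(A, X, B)$ in $G$ with $|A| \leq |B|$, every edge of $E_{W^\ast}(A, B)$ has its embedding path in $G$ traversing some vertex of $X$, giving $|E_{W^\ast}(A, B)| \leq |X| \cdot O(\log^2 n/\alpha)$; meanwhile $|E_{W^\ast}(A, X)| \leq |X| \cdot O(\log^2 n)$ as $\Delta(W^\ast) = O(\log^2 n)$; and expansion of $W^\ast$ yields $|E_{W^\ast}(A, V \setminus A)| = \Omega(|A|)$. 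Combining gives $|X| = \Omega(\alpha|A|/\log^4 n)$ and hence $\psi(A, X, B) = \Omega(\alpha/\log^4 n)$. In this case we output the trivial cut $(\emptyset, V, \emptyset)$ and certify the lower bound.

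The main obstacle will be controlling the degradation of the cut-matching game's $1/2$-expansion caused by ``fake'' edges: we must ensure that the slack we allow per iteration (at most $1/\log^2 n$) is small enough that stripping the fakes from $W$ leaves an $\Omega(1)$-expander (on all but a negligible vertex subset), yet large enough that the approximate max flow returns either a nearly-complete routing or a truly sparse vertex cut. This is the role of the explicit constant $c = \lceil 2/\alpha \rceil$ and the threshold $(1 - 1/(4\log^2 n))|Y_j|$ above. The running time is dominated by $T = O(\log^2 n)$ calls to Theorem~\ref{thm: informal max s-t flow and min s-t cut} at $n^{2+o(1)}$ each, plus $\tilde O(n)$ per iteration for the cut player from Theorem~\ref{thm:cut-matching-game}, giving $n^{2+o(1)}$ in expectation overall.
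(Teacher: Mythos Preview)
Your high-level strategy—run the cut-matching game and, in each round, either route a near-complete matching or exhibit a sparse vertex cut—matches the paper's. But the fake-edge accounting has a genuine gap that breaks the final certification.

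You allow up to $|Y_j|/\log^2 n$ fake edges per round. With $|Y_j|=n/2$ and $T=\Theta(\log^2 n)$ rounds, that is $\Theta(n)$ fake edges in total. Removing $\Theta(n)$ edges from a $1/2$-expander on $n$ vertices can destroy expansion entirely: for a balanced bipartition $(A,B)$ one has $|E_W(A,B)|\ge n/4$, and all of those edges could be fake. So the claim that $W^\ast$ ``is still an $\Omega(1)$-edge expander on a slight shrinkage of $V$'' is unjustified; in your final chain you only get $|A|/2 \le |X|\cdot O(\log^2 n/\alpha)+\Theta(n)$, which gives no lower bound on $|X|$. Tightening the per-round slack does not obviously rescue the argument either: even if the total fake-edge count were $o(n)$, the fake edges incident to a small set $A$ can number as many as $|A|\cdot T=\Theta(|A|\log^2 n)\gg |A|/2$ (each vertex is a fake-edge endpoint in up to $T$ rounds), so the bound still fails for small $|A|$.

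The paper avoids fake edges altogether. Its matching-player step (Lemma~\ref{lem: KRV_routing}) either returns a cut of sparsity $O(\alpha)$ or routes a \emph{constant fraction} of the remaining pairs with congestion $O(\log n/\alpha)$; iterating this $O(\log n)$ times within each cut-matching round yields a \emph{complete} matching with congestion $O(\log^2 n/\alpha)$. Then $W$ is a genuine $1/2$-expander w.h.p., and the certification $|X|=\Omega(\alpha|Y|/\log^4 n)$ follows directly (the extra $\log^2 n$ comes from the inner iteration). Lemma~\ref{lem: KRV_routing} is proved by a primal-dual argument layered directly on the decremental \SSSP data structure, not by invoking the packaged flow/cut theorem: one maintains multiplicative vertex lengths, repeatedly augments along approximate shortest $A'$--$B'$ paths, and when fewer than $|A|/100$ pairs get routed the dual solution has small cost and rounds (via a random threshold) to a sparse vertex cut.

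A secondary loose end in your write-up: the flow returned by Theorem~\ref{thm: informal max s-t flow and min s-t cut} is fractional, so ``decompose it integrally'' and ``greedily extract a partial matching of size at least $(1-1/\log^2 n)|Y_j|$'' both need justification. In particular, with every original vertex—including those in $Y_j\cup Z_j$—carrying capacity $c=\lceil 2/\alpha\rceil$, a flow of value close to $|Y_j|$ may use only $\Theta(\alpha|Y_j|)$ distinct sources, so no large matching is forced.
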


We prove Theorem~\ref{thm: sparsest w alpha given} below, after we complete the proof of Theorem~\ref{thm: informal sparsest cut}  using it.
We run the algorithm from Theorem~\ref{thm: sparsest w alpha given} for all sparsity values $\alpha_i=2^i/n$, for $1\leq i\leq \log n$. If, for any $1\leq i\leq \log n$, the algorithm computes a cut $(A,X,B)$ with $\psi(A,X,B)\leq O(\alpha_i)$, then we let $i^*$ be the smallest index for which the algorithm returns such a cut, and we output the corresponding cut $(A,X,B)$.
Otherwise, we are guaranteed that $\psi(G)\geq \Omega(1/\log^4n)$, and we output the cut $(\emptyset, V , \emptyset)$.

From now on we focus on proving Theorem~\ref{thm: sparsest w alpha given}. We assume that we are given the value $0<\alpha\leq 1$.
%
Our algorithm implements the cut-matching game of~\cite{KRV} (see Theorem~\ref{thm:cut-matching-game}). The main tool that we will use is the following lemma.

\begin{lemma}\label{lem: KRV_routing}
There is a randomized algorithm, that, given a simple undirected $n$-vertex graph $G=(V, E)$, a parameter $0<\alpha\leq 1$, and two equal-cardinality disjoint subsets $A, B \subseteq V$ of its vertices, returns one of the following:

\begin{itemize}
\item  either a vertex cut  $(Y,X,Z)$ in $G$ with  $\psi(Y,X,Z)=O(\alpha)$; 

\item or a set $\qset$ of at least $|A|/100$ paths in $G$, connecting vertices of $A$ to vertices of $B$, such that the paths in $\qset$ have distinct endpoints, and every vertex in $G$ participates in at most 
$O(\log n/\alpha)$ paths in $\qset$. \end{itemize}
The expected running time of the algorithm is  ${O}(n^{2+o(1)})$.
\end{lemma}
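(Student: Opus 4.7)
My plan is to reduce the lemma to a vertex-capacitated $s$--$t$ max-flow / min-cut problem in an auxiliary simple undirected graph $H$, and invoke Theorem~\ref{thm: max s-t flow} and Theorem~\ref{thm: min s-t cut}. I construct $H$ from $G$ by adjoining a source $s$ and sink $t$ and by inserting a vertex-splitting gadget at the boundary: for each $a\in A$ introduce a fresh vertex $a'$ with $c(a')=1$ joined by edges $(s,a')$ and $(a',a)$, and symmetrically for each $b\in B$ introduce $b'$ with $c(b')=1$. I assign $c(v)=M$ with $M=n^3$ for every $v\in A\cup B\cup\{s,t\}$, and $c(v)=c^{*}:=\lceil C\log n/\alpha\rceil$ for every internal vertex $v\in V(G)\setminus(A\cup B)$, where $C$ is a sufficiently large constant. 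The critical consequence of the splits is that the capacity $M$ on $A\cup B\cup\{s,t\}$ is prohibitively large, so no minimum $s$--$t$ vertex cut in $H$ contains an original vertex of $A\cup B$; only $a'$, $b'$, or internal vertices of $G$ may appear. A single invocation of Theorem~\ref{thm: max s-t flow} together with Theorem~\ref{thm: min s-t cut} with small constant accuracy yields, in $O(n^{2+o(1)})$ expected time, an approximate max flow $f$ of $H$ and an approximate minimum vertex cut $X^{*}$ of matching value; these drive the two cases below.

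If $|f|\ge |A|/100$ (after adjusting the constant to absorb the $(1+\eps)$-approximation slack), proceed to the routing case. Decompose $f$ into weighted $s$--$t$ path-flows $\sum_i w_iP_i$; each $P_i$ projects to an $A$--$B$ path in $G$ beginning at some $a\in A$ and ending at some $b\in B$. Because $c(a')=c(b')=1$, the aggregate weights $x_{a,b}:=\sum_{P_i\colon a\to b}w_i$ form a fractional bipartite matching on $A\times B$ with total mass $|f|$; by the integrality of the bipartite matching polytope, there exists an integral matching $M$ with $|M|\ge\lceil|f|\rceil\ge |A|/100$, which I compute by Hopcroft--Karp in $\tilde O(n^2)$ time. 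For each pair $(a,b)\in M$, I select a single $P_i$ from $a$ to $b$ by randomized rounding of the flow decomposition, sampling $P_i$ with probability proportional to $w_i$. The expected number of selected paths through any internal vertex $v$ is at most $f_v\le c^{*}$, and since $c^{*}=\Omega(\log n)$, a standard Chernoff bound gives internal vertex-congestion $O(c^{*})=O(\log n/\alpha)$ with high probability, while the boundary congestion is $1$ by construction.

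If instead $|f|<|A|/100$, the approximate min cut $X^{*}$ of $H$ has total capacity below $|A|/100$. By the capacity structure, $X^{*}\subseteq\{a'\colon a\in A\}\cup\{b'\colon b\in B\}\cup(V(G)\setminus(A\cup B))$; write $X^{*}=X^{*}_{A'}\cup X^{*}_{B'}\cup X^{*}_I$. From $|X^{*}_{A'}|+|X^{*}_{B'}|+c^{*}\,|X^{*}_I|<|A|/100$, I extract $|X^{*}_{A'}|,|X^{*}_{B'}|<|A|/100$ and, crucially, $|X^{*}_I|<|A|/(100c^{*})=O(\alpha |A|/\log n)$. Set $X:=X^{*}_I$, $A_s:=A\setminus\{a\colon a'\in X^{*}_{A'}\}$, and $B_t:=B\setminus\{b\colon b'\in X^{*}_{B'}\}$. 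The $s$--$t$ cut property of $X^{*}$ forbids in $H\setminus X^{*}$ any path of the form $s\to a'\to a\leadsto b\to b'\to t$ with $a\in A_s$ and $b\in B_t$, which forces $A_s$ and $B_t$ to lie in different connected components of $G\setminus X$. Taking $Y$ to be the union of components of $G\setminus X$ meeting $A_s$ and $Z$ the remaining vertices yields a valid vertex cut with $|Y|,|Z|\ge 99|A|/100$ and $|X|=O(\alpha |A|/\log n)$, whence
\[
\psi(Y,X,Z)\;=\;\frac{|X|}{\min\{|Y|,|Z|\}+|X|}\;=\;O\!\left(\frac{\alpha}{\log n}\right)\;=\;O(\alpha),
\]
as required.

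The main technical obstacle I foresee is the fractional-to-integral conversion in the routing case: simultaneously enforcing pairwise distinct $A$- and $B$-endpoints and bounded internal congestion. The endpoint constraint is handled cleanly by the integrality of the bipartite matching polytope, but controlling the internal congestion among the selected integer paths genuinely requires the randomized rounding step and the Chernoff bound that depends on $c^{*}=\Omega(\log n)$. The $(1+\eps)$-approximation slack in Theorem~\ref{thm: max s-t flow} and Theorem~\ref{thm: min s-t cut} (both in flow value and cut value) is routine: taking $\eps$ a small constant and replacing the nominal threshold $|A|/100$ by, say, $|A|/50$ for the fractional max-flow absorbs all such losses into the stated $O(\alpha)$ and $O(\log n/\alpha)$ constants.
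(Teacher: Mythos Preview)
Your high-level reduction and the cut case are correct and close to the paper's approach, but the routing case has a real gap in the congestion argument. You fix an integral matching $M$ (via Hopcroft--Karp) and then, for each $(a,b)\in M$, sample a path with probability $w_i/x_{a,b}$ among the $(a,b)$-flow paths. The expected congestion at an internal vertex $v$ is therefore $\sum_{(a,b)\in M} f_{a,b}^{v}/x_{a,b}$, where $f_{a,b}^v$ is the $(a,b)$-flow through $v$; this is \emph{not} bounded by $f_v=\sum_{(a,b)}f_{a,b}^{v}$. Concretely: take $x_{a_1,b_1}=x_{a_2,b_2}=1-\eps$ and $x_{a_1,b_2}=x_{a_2,b_1}=\eps$, with the two $\eps$-weight flows routed entirely through $v$ and the $(1-\eps)$-weight flows avoiding $v$. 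Then $f_v=2\eps$, yet if Hopcroft--Karp returns the legitimate maximum matching $M=\{(a_1,b_2),(a_2,b_1)\}$, the expected congestion at $v$ is $2$. A deterministically chosen $M$ can over-select pairs whose tiny fractional weight happens to be concentrated at $v$. Repairing this would require sampling $M$ with marginals $\Pr[(a,b)\in M]=x_{a,b}$ (dependent rounding on the matching polytope) and then arguing concentration via negative correlation---a nontrivial addition you have not supplied.

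The paper avoids this entirely by running a tailored primal-dual directly rather than invoking Theorems~\ref{thm: max s-t flow} and~\ref{thm: min s-t cut} as black boxes: paths are added to $\qset$ integrally one at a time; after each augmentation the special endpoints $a',b'$ get length $1$ (forcing distinct endpoints thereafter), and each regular vertex has its length multiplied by $(1+\alpha)$, so it lies on at most $O(\log n/\alpha)$ short paths before its length exceeds $1$. This yields the required integral path family with bounded congestion directly, with no post-hoc rounding; the dual at termination provides the fractional cut, rounded by threshold exactly as in your analysis.
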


We prove Lemma~\ref{lem: KRV_routing} below, after completing the proof of Theorem~\ref{thm: sparsest w alpha given} using it. We employ the cut-matching game from Theorem~\ref{thm:cut-matching-game}. Recall that the game lasts for $O(\log^2n)$ iterations, that we call phases. We start with a graph $W$, whose vertex set is $V(W)=V(G)$, and $E(W)=\emptyset$. For all $1\leq i\leq O(\log n)$,  in the $i$th phase, we use the algorithm from Theorem~\ref{thm:cut-matching-game} to compute two disjoint equal-cardinality subsets $(A_i,B_i)$ of $V(G)$ in time $O(n\poly\log n)$. We then attempt to find a collection $\pset_i$ of paths that connect every vertex of $A_i$ to a distinct vertex of $B_i$ in $G$. In order to do so, we start with $\pset_i=\emptyset$. While $A_i,B_i\neq \emptyset$, we iteratively apply Lemma~\ref{lem: KRV_routing} to graph $G$ and vertex sets $A_i,B_i$. If the outcome of the lemma is a vertex cut  $(Y,X,Z)$ in $G$ with  $\psi(Y,X,Z)=O(\alpha)$, then we terminate the algorithm and return this cut. Otherwise, we obtain a set $\qset$ of at least $|A_i|/100$ paths in $G$, connecting vertices of $A_i$ to vertices of $B_i$, such that the paths in $\qset$ have distinct endpoints, and every vertex in $G$ participates in at most 
$O(\log n/\alpha)$ paths in $\qset$. We add the paths of $\qset$ to $\pset_i$, and we discard from $A_i$ and $B_i$ the endpoints of the paths in $\qset$. Notice that after $O(\log n)$ iterations, if the algorithm does not terminate with a vertex cut of sparsity $O(\alpha)$, then we obtain a set $\pset_i$ of paths connecting every vertex of $A_i$ to a distinct vertex of $B_i$, such that every vertex of $G$ participates in at most $O(\log^2n/\alpha)$ paths in $\pset_i$.  The paths in $\pset_i$ naturally define a matching $M_i$ between the vertices of $A_i$ and the vertices of $B_i$. We add the edges of $M_i$ to $W$. This completes the description of the $i$th phase. If our algorithm does not terminate with a vertex cut of sparsity $O(\alpha)$, then the final graph $W$ is a $\half$-expander with high probability. Moreover, for every edge $e\in E(W)$, we have computed a path $P_e$ connecting its endpoints in $G$, such that the paths in $\set{P_e\mid e\in E(W)}$ cause vertex-congestion $O(\log^4n/\alpha)$ in $G$.

We claim that, if $W$ is a $\half$-expander, then $\psi(G)=\Omega(\alpha/\log^4n)$. Indeed, consider any vertex cut $(Y,X,Z)$ in $G$, and assume w.l.o.g. that $|Y|\leq |Z|$. 
It is enough to show that $\psi(Y,X,Z)=\Omega(\alpha/\log^4n)$.
Note that if $Y = \emptyset$, then $\psi(Y,X,Z) = 1=\Omega(\alpha/\log^4n)$ since $\alpha\leq 1$, so it suffices to focus on the case where $Y, Z \neq \emptyset$. 
Let $E'$ be the set of edges leaving the set $Y$ in graph $W$, so $|E'|\geq |Y|/2$. We further partition $E'$ into two subsets: set $E_1=E_W(Y,X)$ and set $E_2=E_W(Y,Z)$. Since all vertex degrees in $W$ are $O(\log^2n)$, $|X|\geq \Omega(|E_1|/\log^2n)$ must hold. Since all paths in set $\set{P_e\mid e\in E_2}$ must contain a vertex of $X$, and the paths in $\set{P_e\mid e\in E(W)}$ cause vertex-congestion $O(\log^4n/\alpha)$ in $G$, we get that $|X|=\Omega(\alpha |E_2|/\log^4n)$. Altogether, we get that $|X|=\Omega(\alpha |E'|/\log^4n)=\Omega(\alpha|Y|/\log^4n)$, and so $\psi(Y,X,Z)=\Omega(\alpha/\log^4n)$.

It now remains to prove Lemma~\ref{lem: KRV_routing}. We construct a new simple undirected  vertex-capacitated graph $G'=(V',E')$ as follows. The vertex set $V'$ contains a copy of every vertex $v \in V$, called a {\em regular copy}, that is assigned capacity $c(v) = 1/\alpha$. For each vertex $v \in A \cup B$, we add a {\em special copy}, $v'$ that is assigned capacity $c(v) = 1$. Let $A'$ denote the set of special vertices corresponding to the vertices of $A$, and let  $B'$ denote the set of special vertices corresponding to the vertices of $B$.
The edge set $E'$ is constructed as follows. For every edge $(u,v) \in E$, we add the edge $(u,v)$ to $E'$. Additionally, for every vertex $u \in A$, we add the edge $(u',u)$ to $E'$, and similarly, for every vertex $v \in B$, we add the $(v,v')$ to $E'$.
This completes the description of the graph $G'$. 
Our goal now is to either compute a collection $\qset$ of $A'$-$B'$ paths in $G'$, whose endpoints are disjoint, and that cause vertex-congestion $O(\log n/\alpha)$, such that $|\qset|\geq |A|/100$, or find a cut of sparsity $O(\alpha)$ in the original graph $G$. We do so by employing the standard primal-dual approach for computing maximum multicommodity flow.

\paragraph{The Primal-Dual Framework.}

We now describe a primal-dual framework for computing integral $A'$-$B'$ flow in graph $G'$. This is similar in spirit to the primal-dual framework of~\cite{GK98, Fleischer00} used in proving Theorem~\ref{thm: informal max s-t flow and min s-t cut}, but it is better suited for our integral routing application (see also~\cite{AAP93}).

Let $\pset$ be the set of all paths connecting vertices of $A'$ to vertices of $B'$ in $G'$.
We use the following linear program and its dual.

\begin{tabular}[t]{|l|l|}\hline &\\
$\begin{array}{lll}
\text{\underline{Primal}}&&\\
\text{Max}&\sum_{P\in \pset} f(P)&\\
\text{s.t.}&&\\
&\sum_{\stackrel{P\in\pset:}{v\in P}}f(P)\leq
c(v)&\forall v\in V(G')\\
&f(P)\geq 0&\forall P\in \pset\\
\end{array}$
&$
\begin{array}{lll}
\text{\underline{Dual}}&&\\
\text{Min}&\sum_{v\in V(G')}c(v)x_v\\
\text{s.t.}&&\\
&\sum_{v\in P}x_v\geq 1&\forall P\in \pset\\
&x_v\geq 0&\forall v\in V(G')\\
&&\\
\end{array}$\\ &\\ \hline
\end{tabular}

Recall that, if we denote by $\opt$ the value of the optimal integral solution to the primal linear program, and by $\optp$ and $\optd$ the values of the optimal (fractional) solutions to the primal and the dual LP's, respectively, then for any feasible dual solution of cost $\costd$,

\[\opt\leq \optp \leq \optd \leq \costd.\]


The algorithm starts with an infeasible dual solution, where for every vertex $v\in V'$, we set $x_v=0$, and a path set $\qset=\emptyset$. Throughout the algorithm, we increase the vertex lengths $x_v$ and add paths to $\qset$. At every point in the algorithm, for every path $P\in \pset$, we define its length to be $\sum_{v\in P}x_v$.

We assume for now that  we are given an oracle, that, in every iteration, either produces a path $P\in \pset$, whose length is less than $1$, or certifies that every path in $P\in \pset$ has length at least $(1-\eps)$ for some $\eps \in (0,1/2)$. An iteration is executed as follows. If the oracle returns a path $P\in \pset$ of length less than $1$, then we add $P$ to $\qset$. Assume that $a\in A'$ and $b\in B'$ are the endpoints of $P$. We set $x_a=x_b=1$, and, for every regular vertex $v\in V'$ that lies on $P$, we update the value $x_v$ as follows. If $x_v=0$, then we set $x_v=1/n$; otherwise, we set $x_v = (1 + \alpha)x_v$. 
Notice that in every iteration, the number of the paths we route increases by $1$, while the cost of the dual solution value increases by at most $4$ (since $\alpha\leq 1$).  Therefore, if $\qset$ if the current set of paths that we have routed, then $|\qset|\geq  (\sum_{v \in V'} c(v)x_v)/4$ holds throughout the algorithm. Notice that all paths in $\qset$ are guaranteed to have distinct endpoints. The algorithm terminates once the oracle reports that every path in $\pset$ has length at least $(1-\eps)$. We now consider two cases.

The first case happens if $|\qset|\geq |A|/100$ at the end of the algorithm. In this case, we return a set $\qset'$ of paths, obtained from the set $\qset$ of paths by deleting, for every path $Q\in \qset$, its first and last vertex. This ensures that the paths of $\qset'$ are contained in the original graph $G$. Notice that their endpoints still remain disjoint. Moreover, it is easy to see that the paths in $\qset'$ cause vertex-congestion at most $O(\log n/\alpha)$ in $G$, since after $O(\log n/\alpha)$ updates of the form $x_v = (1 + \alpha)x_v$, the value $x_v > 1$,  and hence $v$ cannot be on any path $P\in \pset$ of length less than $1$.

 Assume now that $|\qset|<|A|/100$. We show a near-linear algorithm to compute a cut of sparsity at most $O(\alpha)$ in $G$. Recall that, upon the termination of the algorithm, every $A'$-$B'$ path has length at least $(1 - \eps)$. So if we scale up the value $x_v$ for every regular vertex $v$ by $1/(1-\eps) \le 2$, we obtain a feasible dual solution of value at most $8|\qset|$.
  
Let $A''\subseteq A'$ and $B''\subseteq B$ be the sets of vertices that do not serve as endpoints for paths in $\qset$. Notice that $|A''|,|B''|>  99|A|/100$,  and for all $v\in A''\cup B''$, $x_v=0$. Consider a new graph $G''$, obtained from $G'$, after we delete all vertices of $A'\setminus A''$ and $B'\setminus B''$ from it;  unify all vertices of $A''$ into a new source $s'$; and  unify all vertices of $B''$ into a new sink $t'$. The current values $x_v$ for regular vertices $v\in V'$ now define a feasible solution to the minimum $s'$-$t'$ cut LP in this new graph $G''$. The value of this $s'$-$t'$ cut LP solution is at most $8|\qset|\leq 8 |A|/100$, since $|\qset|\leq |A|/100$. 
  We can now once again use the standard random threshold rounding to recover, in  expected time $O(|V| + |E|)$, an integral $s'$-$t'$ vertex cut in $G''$ of cost at most $ |A|/10$. Let $X'$ be the set of vertices deleted in this $s'$-$t'$ cut -- note that all vertices in $X'$ are regular vertices. Since each vertex in $X'$ has capacity of $1/\alpha$, it must be that $|X'| \le  \alpha  |A|/10 $.
  
  We now construct a vertex cut of sparsity $O(\alpha)$ in the original graph $G$.  Let $\hat A\subseteq A$ be the set of vertices of $G$ corresponding to the vertex set $A''$, that is: $\hat A=\set{a\mid a'\in A''}$, and let $\hat B$ be defined similarly for $B''$. 
  By our construction of $G''$, graph $G\setminus X'$ contains no path connecting a vertex of $\hat A$ to a vertex of $\hat B$ (but it is possible that $\hat A\cap X',\hat B\cap X'\neq \emptyset$).  We let $Y$ be the union of all connected components of $G\setminus X'$ containing  vertices of $\hat A$, and we let $Z=V(G)\setminus(X'\cup Y)$. Our algorithm returns the cut $(Y,X',Z)$. We now analyze its sparsity. Recall that $|X'|\leq \alpha |A|/10$, while $|\hat A|,|\hat B|\geq 99|A|/100$. Since $\alpha\leq 1$, at least $0.8|A|$ vertices of $\hat A$ lie in $Y$, and similarly, at least $0.8|A|$ vertices of $\hat B$ lie in $Z$. Therefore, $|Y|,|Z|\geq 0.8|A|$, and the sparsity $\psi(Y,X',Z)\leq O(\alpha)$.
 
 Before we discuss the implementation of the oracle, we analyze the running time of the algorithm so far. The running time of the primal-dual part is bounded by the number of updates to the vertex lengths $x_v$. It is easy to verify that each such variable $x_v$ is updated at most $O(\log n/\alpha)$ times, and so the total running time of this part, excluding the time needed to respond to the oracle queries, is $O(n\log n/\alpha)=O(n^2\log n)$. The final step of computing the cut via the random threshold algorithm has expected running time $O(n+|E(G)|)$. 
 It now remains to show how to implement the oracle.
 
\paragraph{Implementing The Oracle using Vertex-Decremental SSSP}
We fix the parameter $\eps = 1/2$. The oracle is implemented using vertex-decremental SSSP in essentially identical fashion to the one given in the proof of Theorem~\ref{thm: max s-t flow}; and we omit repeating the details here. Recall that the total expected update time of the oracle is $O(n^{2+o(1)})$, and the expected query time is $O(n\poly\log n)$ per query.
The primal-dual algorithm above uses at most $|A| = O(n)$ path queries. Thus, as in the proof of Theorem~\ref{thm: max s-t flow}, total expected time taken to maintain the decremental SSSP data structure and answer all path queries is bounded by $n^{2+o(1)}$.
This completes the proof of Theorem~\ref{thm: informal sparsest cut}.

\section*{Acknowledgements}

We are grateful to Chandra Chekuri for pointing us to some closely related prior work, and to Shiri Chechik for sharing with us an early version of her paper~\cite{chechik}.

\appendix

\section{Proof of Theorem~\ref{thm: main for maintaining a light graph}} \label{apx: maintaining the light graph}
The proof uses arguments almost identical to those from \cite{Bernstein}. Our starting point is a data structure similar to that defined in~\cite{Bernstein}, that is called \WSES (weight-sensitive Even-Shiloah). Its input is the initial extended light graph $\hat G^L$, a source vertex $s$, a distance parameter $D$, and an error parameter $\eps$. Recall that every special vertex $v_C$ of $\hat G^L$ represents some connected component $C$ in one of the heavy graphs $G^H_i$. For convenience, abusing the notation, in this section we view $C$ as the set of vertices that belong to the connected component. 

The data structure starts from the initial extended light graph $\hat G^L$, which then undergoes a number of transformations, described below.
As the graph evolves due to these transformations, it may no longer coincide with the current extended light graph $\hat G^L$, and so we denote the current graph obtained over the course of the sequence of transformations by $\tG$. The algorithm maintains a tree $T\subseteq \tG$, whose root is $s$, such that every vertex $u$ with $\dist_{\tG}(s,u)\leq D$ lies in $T$. Moreover, if $u\in V(T)$, then $\dist_{\tG}(s,u)\leq \dist_T(s,u)\leq (1+\eps)\dist_{\tG}(s,u)$. The data structure supports the following three operations:

\begin{itemize}
\item Delete an edge $e$ from the graph $\tG$, denoted by $\WSESDEL(e)$;

\item Insert an \emph{eligible} edge $e$ into the graph $\tG$, denoted by $\WSESIN(e)$; a new edge $e=(u,u')$ of length $1\leq \ell(e)\leq D$ is eligible for insertion iff $\ell(e)$ is an integer, $u$ and $u'$ are regular vertices, and there is some special vertex $v_C$ to which they are both currently connected.

\item Create a twin $v_{C'}$ of a special vertex $v_C$, denoted by $\twin(v_C,C')$. Here, we are given a special vertex $v_C$, and a subset $C'\subseteq C$ containing at most half the vertices of $C$. We need to insert a new special vertex $v_{C'}$ into the graph, and to connect it to every vertex of $C'$ with a special edge of length $1/4$. Additionally, if $v_C$ lies in the current tree $T$ and $p$ is its parent in the tree, but $p\not\in C'$, then we add the special edge $(v_{C'},p)$ of length $1/4$ to the graph. We will exploit this operation in order to split special vertices that represent connected components of the graphs $G^H_i$.
\end{itemize}

The following theorem follows from Lemma 4.3 from~\cite{Bernstein} with slight changes; for completeness we provide its proof in Section~\ref{sec: old stuff}.

\begin{theorem}\label{thm: WSES}
There is a deterministic algorithm, that, given the initial extended light graph $\hat G^L$, undergoing operations $\WSESDEL$, $\twin$, and $\WSESIN$ for eligible edges, and parameters $D\geq 1$, $0<\eps<1$, 
maintains a  tree $T\subseteq \tG$ (where $\tG$ is the current graph obtained from $\hat G^L$ after applying a sequence of the above operations), rooted at $s$, such that for every vertex $u\in V(\tG)$ with $\dist_{\tG}(s,u)\leq D$, $u\in V(T)$, and $ \dist_T(s,u)\leq (1+\eps)\dist_{\tG}(s,u)$. Additionally, for each such vertex $u\in V(T)$, a value $\delta(u)\geq \dist_T(s,u)$ is stored with $u$, such that $\delta(u)\leq (1+\eps)\dist_{\tG}(s,u)$.
The total update time of the algorithm is $O\left (\frac{nD\log n}{\eps}\right )+O\left (\sum_{e\in E}\frac{D\log n}{\eps \ell(e)}\right )$, where $E$ is the set of all edges that were ever present in graph $\tG$, 
and $n$ is the total number of vertices that were ever present in graph $\tG$. 
\end{theorem}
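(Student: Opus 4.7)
The plan is to adapt the weight-sensitive Even-Shiloach tree of Bernstein to the present setting, which additionally requires supporting the $\WSESIN$ and $\twin$ operations. The data structure would maintain, for every vertex $u$ with $\dist_{\tG}(s,u)\leq D$, an approximate-distance label $\delta(u)\in[\dist_{\tG}(s,u),(1+\eps)\dist_{\tG}(s,u)]$ and a parent pointer to a neighbor $p(u)$ witnessing $\delta(u)\geq \delta(p(u))+\ell(u,p(u))$; the parent pointers form the tree $T$. To beat the $\Theta(mD)$ cost of the naive weighted ES-tree, I would use two discretization grids: a global grid of spacing $\Theta(\eps/\log n)$ on which each $\delta(u)$ is stored (giving $O(D\log n/\eps)$ possible values per vertex), and, for each edge $e=(u,v)$, a coarser per-edge threshold grid of spacing $\Theta(\eps\ell(e)/\log n)$. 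Edge $e$ is scanned only when $\delta(u)$ or $\delta(v)$ crosses one of its per-edge thresholds.

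For $\WSESDEL(e)$ I follow the standard ES recipe: if $e$ is a tree edge, recompute $\delta(u)$ at its deeper endpoint by the relaxation $\min_{v\in N(u)}(\delta(v)+\ell(u,v))$, round up to the next global gridpoint, and propagate any resulting increase down $T$ to descendants. For $\WSESIN(e)$, the eligibility condition guarantees that inserting $e$ cannot decrease $\dist_{\tG}(s,x)$ for any $x$, so all existing $\delta$-labels remain valid; I simply append $e$ to both adjacency lists, and it will be relaxed through only when a later deletion causes one of its endpoints' $\delta$-values to rise. For $\twin(v_C,C')$ I insert the new vertex $v_{C'}$ together with its edges of length $1/4$ to $C'$ (and to $p$, when applicable), and initialize $\delta(v_{C'})=\min_{u\in C'\cup\{p\}}(\delta(u)+1/4)$ by scanning the $|C'|+1$ new edges; crucially, no pre-existing distance decreases, because every shortcut of the form $x\to v_{C'}\to y$ has a length-matching alternative $x\to v_C\to y$ already present in $\tG$. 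Thus all old $\delta$ values remain valid, and the only work is inserting the single new vertex into $T$ via the witnessing parent.

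The running-time bound splits into three pieces matching the statement. The $O(nD\log n/\eps)$ term pays for the at most $O(D\log n/\eps)$ global-grid advancements of $\delta(u)$ per vertex, plus constant-time bookkeeping per advancement (updating parent pointers and queuing descendants). The $\sum_{e\in E}O(D\log n/(\eps\ell(e)))$ term charges each scan of an edge $e$ to a per-edge-threshold crossing at one of its endpoints; since $\delta(u)\leq(1+\eps)D$ and the per-edge grid has spacing $\Theta(\eps\ell(e)/\log n)$, this happens at most $O(D\log n/(\eps\ell(e)))$ times, and the bound naturally covers edges introduced by $\WSESIN$ and $\twin$ as well. The main technical obstacle I anticipate is verifying the approximation invariant $\delta(u)\leq(1+\eps)\dist_{\tG}(s,u)$ in the face of lazy, grid-rounded propagation, since additive per-edge rounding of $\Theta(\eps\ell(e)/\log n)$ could in principle accumulate along long shortest paths. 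Following Bernstein, this is controlled by choosing the rounding proportionally to $\ell(e)$ so that the total slack along any simple path $P$ is only an $O(\eps/\log n)$-fraction of $\mathrm{len}(P)$, and by maintaining the auxiliary invariant that newly rounded-up $\delta$-values always stay below the freshly increased true distances; a careful induction over successive deletions, combined with the observation that $\WSESIN$ and $\twin$ never cause any $\delta$-value to need to \emph{decrease}, then preserves the $(1+\eps)$-approximation.
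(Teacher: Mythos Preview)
Your plan is essentially the paper's proof: a Bernstein-style weight-sensitive Even--Shiloach tree with per-edge rounding proportional to $\ell(e)$, augmented to handle $\WSESIN$ and $\twin$ by exploiting that neither operation decreases any distance in $\tG$. A couple of points where your sketch diverges from (and is slightly less clean than) the paper's execution:

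\textbf{Grid spacing.} You propose a global grid of step $\Theta(\eps/\log n)$ and a per-edge grid of step $\Theta(\eps\ell(e)/\log n)$. The extra $1/\log n$ is unnecessary: the paper uses step $\eps$ globally and $\eps\ell(e)$ per edge, which already gives total slack $\sum_{e\in P}\eps\ell(e)=\eps\cdot\mathrm{len}(P)$ along any path, exactly the $(1+\eps)$ guarantee. The $\log n$ in the final running time comes from heap operations, not from the discretization. With your finer grids you would either pick up an extra $\log n$ factor once you account for heap work, or you would need to argue that each threshold crossing is $O(1)$, which is not obvious.

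\textbf{Preserving the data-structure invariants, not just the approximation.} For $\WSESIN(e)$ you argue that distances do not decrease, hence the $\delta$-labels remain valid. That is true, but the paper's invariant is stronger: $\delta(u)$ must equal the \emph{minimum} key in the heap $H_u$. When you insert $w$ into $H_u$ with key $\delta_u(w)+\ell(e)$, you must check this key is at least $\delta(u)$. The paper does this explicitly by using the special vertex $v_C$ to which both $u,w$ are attached to show $|\delta(u)-\delta(w)|\le 3$ (after scaling), whence the new key, being at least $\delta(w)+4$, cannot undercut $\delta(u)$. Your ``lazy append'' description hides this check. Similarly, for $\twin$ the paper simply \emph{copies} the entire heap/bucket state of $v_C$ to $v_{C'}$ and sets $\delta(v_{C'})=\delta(v_C)$; this preserves all invariants by construction and costs only the edge-inspection charges, whereas computing $\delta(v_{C'})$ from scratch as $\min_{u}(\delta(u)+1/4)$ forces you to re-verify the heap invariants at every neighbor.
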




Recall that the extended light graph $\hat G^L$ consists of two types of vertices: the regular vertices are the vertices of $V(G)$, and the special vertices, defined as follows: for every $1\leq i\leq \lambda$, for every connected component $C\in G^H_i$, there is a special vertex $v_C$, that connects to every (regular) vertex of $C$ with an edge of length $1/4$. We call the edges incident to special vertices \emph{special edges}, and the remaining edges of $\hat G^L$ \emph{regular edges}. We  maintain the $\WSES$ data structure from Theorem~\ref{thm: WSES}, starting from the initial graph $\hat G^L$, with the original source vertex $s$, error parameter $\eps/2$, and the distance bound $8D$. Observe that, over the course of the algorithm, graph $\hat G^L$ undergoes the following changes.

First, when some vertex $v^*$ is deleted from $G$, then we need to delete every edge incident to $v^*$ from $\hat G^L$. This can be implemented through the $\WSESDEL$ procedure.

Second, when some vertex $v$ that was heavy for some class $i$, becomes light for that class, we insert all edges that are incident to $v$ in $G^H_i$ into $\hat G^L$. We claim that these insertions can be implemented by using the $\WSESIN$ operation, as each inserted edge is eligible for insertion. Indeed, consider any such edge $e=(u,u')$. Since $e\in G^{H}_i$ before its insertion into $\hat G^L$, vertices $u$ and $u'$ lie in the same connected component of $G^H_i$, that we denote by $C$. But then there is a special vertex $v_C$ in $\hat G^L$, that is connected to both $u$ and $u'$, so $e$ is an eligible edge.

The third type of changes is when for some $1\leq i\leq \lambda$, some connected component $C$ of $G^H_i$ splits into two connected components, $C_1$ and $C_2$. We assume that $|V(C_1)|\leq |V(C_2)|$. In this case, we need to delete $v_C$ from $\hat G^L$, and add $v_{C_1}$ and $v_{C_2}$, together with length-$1/4$ edges that connect every vertex $u\in V(C_1)$ to $v_{C_1}$ and every vertex $u'\in V(C_2)$ to $v_{C_2}$. We denote the required update operation to the \WSES data structure by $\WSESSPLIT(C,C_1,C_2)$. As input, this operation receives the names $C,C_1,C_2$ of the corresponding components, and a list of all vertices in $C_1$ (recall that $|V(C_1)|\leq |V(C_2)|$). We show that the required update can be implemented by executing a suitable sequence of $\twin$ and $\WSESDEL$ operations. The algorithm appears in Figure~\ref{fig: WSESSPLIT}.


\begin{figure}
\program{$\WSESSPLIT(C,C_1,C_2)$}{
Input: names $C,C_1,C_2$ of clusters with $v_C\in V(\hat G^L)$ and a list of vertices in $C_1$.

\begin{enumerate}
\item  Perform operation $\twin(v_C,V(C_1))$ on $\hat G^L$. 
Rename $v_{C}$ as $v_{C_2}$. 

\item Let $p$ be the parent of $v_{C_2}$ in the tree $T$ maintained by \WSES data structure. If $p\not\in C_1$, execute $\WSESDEL((p,v_{C_1}))$. 


\item For each vertex $z\in V(C_1)$, delete the edge $e'_z=(v_{C_2},z)$ from $\hat G^L$ by running $\WSESDEL(e'_z)$. \label{cluster splitting - delete extra edges}
\end{enumerate}
}
\caption{Procedure \WSESSPLIT. \label{fig: WSESSPLIT}}
\end{figure}

It is immediate to see that the algorithm for \WSESSPLIT updates the graph $\hat G^L$ correctly. 

We now bound the total number of vertices and edges that were inserted into the graph as a part of the \WSESSPLIT operation. Notice that when a cluster $C$ is split into $C_1$ and $C_2$, with $|C_1|\leq |C_2|$, we insert one new vertex and at most $|C_1|+1$ new edges into the graph: one edge incident to every vertex of $C_1$. We say that every vertex of $C_1$ is \emph{responsible} for the unique new edge that is inserted into the graph and is incident to it, and one of these vertices is responsible for the one additional edge -- the edge connecting $v_{C_1}$ to $p$. Since $|C_1|\leq |C|/2$, every regular vertex of $\hat G^L$ will be responsible for at most $2\lambda \log n$ edge insertions -- at most $2\log n$ edge insertions due to the splitting of clusters of $G^H_i$ for each $1\leq i\leq \lambda$. Therefore, the total number of special edges inserted into $\hat G^L$ due to cluster splitting is $O(n \lambda\log n)=O(n\log D\log n)$. It is easy to verify that the total number of special vertices inserted into the graph due to cluster splitting is $O(n\lambda)=O(n\log D)$.


Next, we bound the total update time needed to maintain the $\WSES$ data structure  for $\hat G^L$. Recall that, from Theorem~\ref{thm: WSES}, the total running time is bounded by $O\left(\frac{n'D\log n'}{\eps}\right)+O\left (\sum_{e\in E}\frac{D\log n'}{\eps \ell(e)}\right )$, where $E$ is the set of all edges ever present in the graph $\tilde G$, and $n'$ is the total number of vertices ever present in the graph. 
Recall that $n'=O(n\log D)$, and that $(E^R,E^S)$ is the partition of $E'$ into sets containing regular and special edges, respectively, where $|E^S|=O(n\log D\log n)$.

The first term in the bound on the running time becomes:

\[O\left(\frac{n'D\log n'}{\eps}\right)=O\left(\frac{nD\log D\log(n\log D)}{\eps}\right)=O\left(\frac{nD\log n\log^2 D}{\eps}\right).\]

The contribution of the special edges to the second term is bounded by:

\[O\left (|E^S|\cdot \frac{D\log n'}{\eps}\right )= O\left(n\log D\log n\cdot \frac{D\log(n\log D)}{\eps}\right)=O\left(\frac{nD\log^2 n\log^2D}{\eps}\right).\]

Lastly, the contribution of the regular edges to the second term is bounded by:

\[O\left (\sum_{e\in E^R}\frac{D\log n'}{\eps \ell(e)}\right )=O\left (\sum_{e\in E^R}\frac{D\log n\log D}{\eps \ell(e)}\right ).\]

In total, the running time is bounded by:

\[O\left(\frac{nD\log^2 n\log^2D}{\eps}\right)+O\left (\sum_{e\in E^R}\frac{D\log n\log D}{\eps \ell(e)}\right ).  \]

In addition to maintaining the data structure $\WSES$ from Theorem~\ref{thm: WSES}, our algorithm maintains, for each $1\leq i\leq \lambda$, a connectivity/spanning forest data structure $\CONNSF(G^H_i)$. Recall that the total time required to maintain each such structure is $O\left((|E(G^H_i)|+n\right )\log^2n)\leq O(n^2\log^2n)$, and the total time required to maintain all such structures for all $1\leq i\leq \lambda$ is at most $O(n^2\lambda\log^2n)=O(n^2\log D\log^2n)$. Finally, for every vertex $v\in V(G)$ and every index $1\leq i\leq \lambda$, if $v\in G^H_i$, we maintain the degree $d_i(v)$ of $v$ in $G^H_i$.

It now remains to describe the algorithm for handling a deletion of a vertex $v^*$ from the graph $G$. Over the course of the update step, for each $1\leq i\leq \lambda$, we will maintain a set $Q_i$ of vertices that need to be deleted from $G^H_i$, and the set $\hat E_i$ of edges that are incident to all these vertices. 

The algorithm proceeds as follows. We consider the indices $1\leq i\leq \lambda$ one-by-one. Consider the current index $i$. We start with $\hat E_i=\emptyset$. If $v^*\not\in G^H_i$, then we terminate the algorithm and continue to the next index $i$. Otherwise, we initialize $Q_i=\set{v^*}$. While $Q_i\neq \emptyset$, let $v$ be any vertex in $Q_i$. We delete $v$ from $G^H_i$, and add every edge incident to $v$ in $ G^H_i$ to $\hat E_i$. For every neighbor $u$ of $v$ in $G^H_i$, we decrease $d_i(u)$ by $1$. If $d_i(u)$ falls below $\tau_i$ and $u\not\in Q_i$, we add $u$ to $Q_i$. 

Once $Q_i=\emptyset$, we start processing all edges of $\hat E_i$, one-by-one. While $\hat E_i\neq \emptyset$, let $e$ be any edge in $\hat E_i$. We run Procedure $\deledge$ for edge $e$ and index $i$; the procedure is described in Figure~\ref{fig: procedure delete edge}.

\begin{figure}[h]
\program{Procedure \deledge}{

Input: an integer $1\leq i\leq \lambda$ and an edge $e=(u,w)\in  E(G^H_i)$.

\begin{enumerate}
\item If $e$ is not incident to the original vertex $v^*$, insert edge $e$ into the \WSES data structure and into $\hat G^L$,  using operation $\WSESIN(e)$ (as discussed above, $e$ is an eligible edge).
\item Delete edge $e$ from $\CONNSF(G^H_i)$.
\item Check whether $u,w$ remain connected in $\CONNSF(G^H_i)$ in time $O(\log n/\log\log n)$. If so, terminate the procedure.

We assume from now on that $u$ and $w$ are no longer connected in $\CONNSF(G^H_i)$. We denote by $C$ the original connected component to which they belonged before the deletion, and by $C_1,C_2$ the two new components. Next, we will try to establish which of the two components is smaller.

\item Run two BFS searches in parallel: one in the tree of $\CONNSF(G^H_i)$ to which $u$ belongs, and one in the tree containing $w$, so that both searches explore the same number of vertices at each time step; store all vertices explored. Terminate the algorithm once one of the two trees is completely explored. We assume w.l.o.g. that it is the tree corresponding to $C_1$.

\item Run $\WSESSPLIT(C,C_1,C_2)$ with the list of vertices of $C_1$.
\end{enumerate}
}
\caption{Procedure \deledge \label{fig: procedure delete edge}}
\end{figure}

Once $\hat E_i=\emptyset$, we continue to the next index $i+1$. Once we finish processing all indices, we delete from $\hat G^L$ all edges incident to the original vertex $v^*$ one-by-one, using procedure $\WSESDEL(\hat G^L,e)$. This finishes the procedure for the deletion of a vertex $v^*$ from $G$. We have already accounted for the total time needed to maintain data structure $\WSES$, and this includes the running time needed for cluster splitting. The BFS searches on the trees corresponding to clusters $C_1$ and $C_2$ terminate in time $O\left(\min\set{|C_1|,|C_2|}\right )$, and can be charged to the cluster splitting procedure $\WSESSPLIT(C,C_1,C_2)$ -- this will not increase its asymptotic running time. The total time needed to maintain the $\CONNSF(G^H_i)$ data structures for all $1\leq i\leq \lambda$ is $O(n^2\log^2n\log D)$, and this includes the time required to answer connectivity queries, as we only ask one connectivity query for each edge deleted from $G^H_i$. The total running time that is needed to maintain all data structures is therefore bounded by $O(n^2\log D\log^2n)+O\left(\frac{nD\log^2 n\log^2D}{\eps}\right)+O\left (\sum_{e\in E^R}\frac{D\log n\log D}{\eps \ell(e)}\right )$, as required.

\section{Proof of Theorem~\ref{thm: WSES}}\label{sec: old stuff}

Throughout the proof, the notation $\dist(u,u')$ refers to the distance from $u$ to $u'$ in graph $\tilde G$.
The theorem is proved by modifying the standard \EST algorithm; we follow the proof of Lemma 4.3 from~\cite{Bernstein} almost exactly.
We scale all distances up by factor $4$, so that all special edges have length $1$, and the length of every regular edge is an integer greater than $3$. 
For simplicity, we assume that $\eps=1/k$ for some integer $k>4$; this can be done so that the value of $\eps$ decreases by at most factor $4$. Recall that we start with the graph $\hat G^L$ that undergoes $\WSESDEL$, $\WSESIN$ and $\twin$ operations, and that the current graph is denoted by $\tG$. We denote $V(\tG)$ by $V$. We also denote by $E$ the set of all edges that were ever present in $\tG$, and $m=|E|$. We let $n$ be the total number of vertices that were ever present in $\tG$.
We will use the following definition.

\begin{definition}
Given an edge $e$ and a number $x>0$, $\round_e(x)$ is the smallest number $y>x$ that is an integral multiple of $\eps \ell(e)$. Note that $x<\round_e(x)\leq x+\eps \ell(e)$, and, since all edge lengths are integral, $\round_e(x)$ is an integral multiple of $\eps$.
\end{definition}

The algorithm maintains a tree $T\subseteq \tG$, that is rooted at the vertex $s$, and contains a subset of the vertices of $V$.
For every vertex $u\in V$, we maintain a value $\delta(u)$, which is our estimate on $\dist(s,u)$. For every vertex $u\in V(T)$, we also maintain a heap $H_u$ that contains all its neighbors. However, the key associated with every neighbor is computed differently than in the \EST algorithm. For every neighbor $w$ of $u$, we maintain a \emph{local copy} $\delta_{u}(w)$, which is the local estimate of $u$ on the value $\delta(w)$. 
 We will ensure that $\delta(w)\leq \delta_u(w)\leq \round_e(\delta(w))\leq \delta(w)+\eps \ell(e)$, where $e=(u,w)$. 
For every neighbor $w$ of $u$, we store the vertex $w$ in $H_u$, with the key $\delta_u(w)+\ell(u,w)$. 
We ensure that the following invariants hold for every vertex $u$ in $V(T)$ throughout the algorithm (some of these invariants repeat the properties stated above).

\begin{properties}{J}
\item If $u=s$ then $\delta(u)=0$. Otherwise, if $p$ is the parent of $u$ in $T$, then $\delta(u)=\delta_u(p)+\ell(u,p)$, and $p$ is the element with the smallest key in $H_u$. \label{inv: parent}


\item If $e=(u,w)$ is an edge incident to $u$, then $\delta(w)\leq \delta_u(w)\leq \round_e(\delta(w))\leq \delta(w)+\eps \ell(e)$; and\label{inv: regular edge}

\item Value $\delta(u)$ does not decrease over the course of the algorithm, and it is always an integral multiple of $\eps$ between $0$ and $(1+\eps)D$ (if vertex $u$ is inserted due to a call to $\twin$ operation, this holds from the moment of insertion). Similarly, for every neighbor $w$ of $u$, value $\delta_u(w)$ does not decrease over the course of the algorithm, and it is always an integral multiple of $\eps$ (if edge $(u,w)$ is inserted due to a $\twin$ or $\WSESIN$ operations, this holds only from the moment the edge is inserted). \label{inv: monotonicity and bounded fractionality} 
\end{properties}

We also ensure the following invariant.

\begin{properties}[3]{J}
\item If $u\in V\setminus V(T)$, then $\delta(u)> (1+\eps)D$ and $\dist(s,u)> D$. From the moment $u$ is deleted from $T$, $\delta(u)$ does not change, and $\delta(u)<2D$. \label{inv: large distances}
\end{properties}

\begin{claim}\label{claim: approx}
Assume that the invariants~(\ref{inv: parent})--(\ref{inv: large distances}) hold throughout the algorithm. Then, throughout the algorithm, for every vertex $u\in V(T)$,
 $\dist(s,u)\leq \delta(u)\leq (1+\eps)\dist(s,u)$. Moreover, if $P_u$ is the path connecting $u$ to $s$ in $T$,  then the length of $P_u$ is at most $\delta(u)$. 
\end{claim}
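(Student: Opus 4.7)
The plan is to establish the two claimed bounds separately, the lower bound by induction on depth in $T$ and the upper bound by induction along a shortest path, using the heap property from (J1) and the sandwich estimate from (J2) as the main levers.

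First I would prove the stronger auxiliary statement that the length of $P_u$ is at most $\delta(u)$, from which the lower bound $\dist(s,u)\leq \delta(u)$ is immediate since $P_u$ is an $s$-$u$ path in $\tG$. The induction is on the depth of $u$ in $T$: the base case $u=s$ uses $\delta(s)=0$ from (J1) and the fact that $P_s$ has length $0$. For the inductive step, let $p$ be the parent of $u$ in $T$. Invariant (J1) gives $\delta(u)=\delta_u(p)+\ell(u,p)$; invariant (J2) applied to the edge $e=(u,p)$ gives $\delta_u(p)\geq \delta(p)$; and the inductive hypothesis says the length of $P_p$ is at most $\delta(p)$. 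Chaining these yields $\mathrm{length}(P_u)=\mathrm{length}(P_p)+\ell(u,p)\leq \delta(p)+\ell(u,p)\leq \delta(u)$, as required.

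For the upper bound $\delta(u)\leq (1+\eps)\dist(s,u)$, the key local estimate I would extract is: for any $v\in V(T)$ and any neighbor $w$ of $v$, $\delta(v)\leq \delta(w)+(1+\eps)\ell(v,w)$. To derive this, let $q$ be the parent of $v$ in $T$; by the heap property in (J1), $q$ has the smallest key in $H_v$, so $\delta(v)=\delta_v(q)+\ell(v,q)\leq \delta_v(w)+\ell(v,w)$, and then (J2) gives $\delta_v(w)\leq \delta(w)+\eps\ell(v,w)$, combining to give the stated estimate. Now pick a shortest $s$-$u$ path $Q=(s=u_0,u_1,\dots,u_k=u)$ in $\tG$, and focus first on the case $\dist(s,u)\leq D$. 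Then $\dist(s,u_i)\leq \dist(s,u)\leq D$ for every $i$, so the contrapositive of (J4) forces every $u_i$ into $V(T)$, hence the local estimate is available at each step. Telescoping along $Q$ starting from $\delta(u_0)=\delta(s)=0$ produces $\delta(u_k)\leq (1+\eps)\sum_{i=1}^{k}\ell(u_{i-1},u_i)=(1+\eps)\dist(s,u)$, as desired. The complementary regime $\dist(s,u)>D$ for $u\in V(T)$ is handled directly: since the algorithm only keeps a vertex in $T$ while $\delta(u)\leq (1+\eps)D$ (which is the natural reading of (J4) and is enforced by the way $T$ evolves together with the monotonicity of $\delta$ given by (J3)), we get $\delta(u)\leq (1+\eps)D<(1+\eps)\dist(s,u)$ automatically.

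The main obstacle is not any of the inductive arithmetic, which is essentially a bookkeeping exercise over (J1) and (J2); the delicate point is justifying the applicability of (J1) along the entire shortest path $Q$, i.e.\ that every $u_i$ currently lies in $V(T)$ so that the heap-based local estimate is valid. Invariant (J4) is precisely tailored for this, via its contrapositive, and this is exactly why the statement is decoupled into ``$u\in V(T)$'' versus ``$\dist(s,u)\leq D$''. Once this is observed, the rest of the proof is a short telescoping calculation together with the base-case check on $\delta(s)$.
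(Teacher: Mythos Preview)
Your proof is correct and follows essentially the same approach as the paper. The only organizational differences are that you derive the lower bound $\dist(s,u)\leq\delta(u)$ as a corollary of the tree-path length bound (the paper proves these separately but with the same induction), and you split the upper-bound argument into the cases $\dist(s,u)\le D$ versus $\dist(s,u)>D$ upfront, whereas the paper handles the ``$w\notin V(T)$'' case inside the inductive step; both versions ultimately rely on the same invocation of (J3) and (J4).
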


\begin{proof}
We first show that for every vertex $u\in V(T)$, $\delta(u)\geq \dist(s,u)$. The proof is by induction on the number of edges on the path $P_u$, connecting $u$ to $s$ in the tree $T$. If $P_u$ contains $0$ edges, then $u=s$, and, from Invariant~(\ref{inv: parent}), $\delta(u)=0=\dist(s,u)$. Assume now that $u\neq s$, and let $p$ be the parent of $u$ in $T$. From the induction hypothesis, $\delta(p)\geq \dist(s,p)$. From Invariant~(\ref{inv: parent}), $\delta(u)=\delta_u(p)+\ell(u,p)$; from Invariant (\ref{inv: regular edge}), $\delta_u(p)\geq \delta(p)$, and therefore, altogether, $\delta(u)\geq \delta(p)+\ell(u,p)\geq \dist(s,p)+\ell(p,u)\geq \dist(s,u)$.

We now turn to prove that for every vertex $u\in V(T)$, $\delta(u)\leq (1+\eps)\dist(s,u)$. 
Let $P^*_u$ be the shortest path connecting $s$ to $u$ in $\tG$, and among all such paths, choose the one with fewest edges, breaking ties arbitrarily.
The proof is by induction on the number of edges on $P^*_u$. If $P^*_u$ contains no edges, then $u=s$, and $\delta(u)=0=\dist(s,u)$ form Invariant~(\ref{inv: parent}), so the claim is true. Assume now that $P^*_u$ contains $i$ edges, and that the claim is true for all vertices $w\in V(T)$, for which $P^*_w$ contains fewer than $i$ edges. Let $w$ be the penultimate vertex on $P^*_u$. Observe that, if we let $P'$ be the path obtained from $P^*_u$, by deleting the vertex $u$ from it, then $P'$ must be the shortest $s$--$w$ path in $\tG$, and among all such paths, it contains the smallest number of edges. Assume first that $w\in V(T)$. Then, from the induction hypothesis, $\delta(w)\leq (1+\eps)\dist(s,w)$.
The key of $w$ in the heap $H_u$ is $\delta_u(w)+\ell(u,w)\leq \delta(w)+\eps \ell(u,w)+\ell(u,w)\leq (1+\eps)(\dist(s,w)+\ell(u,w))\leq (1+\eps)\dist(s,u)$ (we have used Invariant~(\ref{inv: regular edge}) for the first inequality.) Since $\delta(u)$ is equal to the smallest key in $H_u$, we get that $\delta(u)\leq (1+\eps)\dist(s,u)$.
Finally, if $w\not \in V(T)$, then, from Invariant~(\ref{inv: large distances}), $\dist(s,w)>D$, so $\dist(s,u)>D$ must hold as well. However, since $u\in V(T)$, from Invariant~(\ref{inv: monotonicity and bounded fractionality}), $\delta(u)\leq (1+\eps)D\leq (1+\eps)\dist(s,u)$.

We now turn to prove the last assertion. The proof is by induction on the number of edges on the path $P_u$, connecting $u$ to $s$ in the tree $T$. If $P_u$ contains $0$ edges, then the claim is trivially true. Assume now that $P_u$ contains $i>0$ edges, and let $p$ be the parent of $u$ in $T$. From the induction hypothesis, the length of the path connecting $p$ to $s$ in $T$ is at most $\delta(p)$. Therefore, the length of the path $P_u$ is at most $\delta(p)+\ell(p,u)\leq \delta_u(P)+\ell(p,u)=\delta(u)$.
\end{proof}

As in the $\EST$ algorithm, our algorithm will perform inspection of edges, where each inspection will involve a constant number of standard heap operations and will take $O(\log m)$ time. Recall that in the original $\EST$ algorithm, an edge $e=(u,w)$ is inspected whenever $\delta(u)$ or $\delta(w)$ increase. In the former case, we say that $e$ is inspected due to $u$, and in the latter case, it is inspected due to $w$. We will have different rules for edge inspection, which will allow us to save on the running time for inspecting edges that have large lengths.
Let $e=(u,w)$ be an edge of $\tG$ (which may either belong to the original graph $\hat G^L$, or may have been inserted over the course of the algorithm).

\begin{properties}{R}

\item Edge $e$ is inspected due to $u$ only when $\round_e(\delta(u))$ increases.
Similarly, edge $e$  is inspected due to $w$ only when $\round_e(\delta(w))$ increases. \label{prop: edge inspection first}
\item Additionally, edge $e$ is inspected when it is added to or is deleted from $\tG$. \label{prop: edge inspection last}
\end{properties}

We now bound the total update time of the algorithm due to edge inspections. First, every edge may be inspected at most once when it is added to $\tG$ and at most once when it is deleted from $\tG$. 
Consider now some edge $e=(u,w)$. Since, throughout the algorithm, $0\leq \delta(u)\leq 2D$, and $\round_e(\delta(u))$ is an integral multiple of $\eps\ell(e)$, it may increase at most $O\left(\frac{D}{\eps\ell(e)}\right )$ times over the course of the algorithm. So edge $e$ may be inspected due to $u$ at most $O\left(\frac{D}{\eps\ell(e)}\right )$ times, and similarly it can be inspected due to $w$ at most $O\left(\frac{D}{\eps\ell(e)}\right )$ times.  Let $E$ denote the set of all edges that are ever present in the graph $\tG$. Since every inspection of an edge takes $O(\log n)$ time,  the total update time of the algorithm due to edge inspections is $O\left (\sum_{e\in E}\frac{D\log n}{\eps\ell(e)}\right )$.

\paragraph{Data Structures.}
We maintain a tree $T\subseteq \tG$ rooted at $s$, the values $\delta(u)$ and the heaps $H_u$  for all $u\in V$, as described above. Additionally, for every vertex $u\in V(T)$, 
whenever $\delta(u)$ increases, we need to be able to quickly identify the edges that are incident to $u$, which need to be inspected. In order to do so, for every vertex $u\in V(T)$,
we maintain $2D/\eps$ buckets $B_1(u),\ldots,B_{2D/\eps}(u)$. For every neighbor $w$ of $u$, we add $w$, together with a pointer to the copy of $u$ in $H_w$,  to one of the buckets. We also add a pointer from $w$ to its copy in a bucket of $u$. If vertex $w$ lies in bucket $B_i(u)$, then we need to inspect the edge $(u,w)$, and to update $\delta_w(u)$ when $\delta(u)$ becomes greater than $\eps i$. We will use the pointer to the copy of $u$ in $H_w$, that we store together with $w$ in $B_i(u)$, in order to do it efficiently. We add $w$ to the bucket $B_{\round_e(\delta(u))/\eps}$. 

\paragraph{Initialization.}
We construct a shortest-path tree $T$ of $\tG$ rooted at $s$, using the algorithm of Thorup~\cite{linear-sssp} in $O(m)$ time. We then compute, for every vertex $u\in V$, the initial value $\delta(u)=\dist(s,u)$; notice that this value is an integer. We delete from $T$ all vertices $u$ with $\delta(u)>(1+\eps)D$. For every vertex $u$, for every neighbor $w$ of $u$, we set $\delta_u(w)=\delta(w)$. We add $w$ to the heap $H_u$ with the key $\delta_u(w)+\ell(u,w)$, and we add $w$ to the bucket $B_{\round_e(\delta(u))/\eps}$, where $e=(u,w)$. All this can be done in time $O\left(m+\frac{Dn\log m}{\eps}\right)$. Observe that Invariants~(\ref{inv: parent})--(\ref{inv: large distances}) hold after the initialization.

\paragraph{Edge Insertion.}
Suppose we need to insert an eligible edge $e=(u,w)$ into $\tG$.  Recall that $\ell(e)\geq 4$ is an integer, and that there is some special vertex $v=v_C$, and special edges $(u,v)$, $(w,v)$ in $\tG$, each of length $1$. We assume that $u,w\in V(T)$, since otherwise we can ignore this edge (it is easy to verify that $\dist(s,w)$ and $\dist(s,u)$ do not decrease). From our invariants, $\delta(u)\leq \delta_{u}(v)+\ell(u,v)\leq  \delta(v)+\eps+1\leq \delta_v(w)+\ell(v,w)+\eps+1\leq \delta(w)+2+2\eps \leq \delta(w)+3$. Similarly, $\delta(w)\leq \delta(u)+3$. Therefore, if we let $\delta_u(w)=\delta(w)$, and insert $w$ into $H_u$ with the key $\delta_u(w)+\ell(e)$, we will still maintain the invariant that $\delta(u)$ is the value of the smallest key in $H_u$, and in particular it does not decrease. Similarly, we can insert $u$ into $H_w$ with the key $\delta_w(u)+\ell(e)$, where $\delta_w(u)=\delta(u)$, without violating any invariants. We also insert $w$ into the appropriate bucket $B_{\round_e(\delta(u))/\eps}(u)$, together with a pointer to the copy of $u$ in $H_w$, and similarly, we insert $u$ into the appropriate bucket $B_{\round_e(\delta(w))/\eps}(w)$, together with a pointer to the copy of $w$ in $H_u$. All this takes $O(\log n)$ time, and is included in a single inspection of the edge $e$.

\paragraph{Twin Operation.}
Recall that in the $\twin$ operation, we are given a special vertex $v_C$, together with a subset $C'\subseteq C$ of regular vertices. Our goal is to insert a ``twin'' vertex $v_{C'}$ for $v_C$ into $\tG$, and to connect it to every vertex in $C'$, and to the parent $p$ of $v_C$ in the current tree $T$. We add the new vertex $v_{C'}$ into $T$ as a child of $p$, setting $\delta_p(v_{C'})=\delta_p(v_C)$, adding $v_{C'}$ to the heap $H_p$ with the same key as $v_C$, and to the bucket of $p$ to which $v_C$ currently belongs. For every vertex $u\in C'\setminus \set{p}$, we similarly set $\delta_u(v_{C'})=\delta_u(v_C)$, add $v_{C'}$ to $H_u$ with the same key as $v_C$, and to the bucket of $u$ to which $v_C$ currently belongs. We initialize $H_{v_{C'}}$ using the copies of the neighbors of $v_{C'}$ in the heap $H_{v_C}$, and we initialize the buckets of $v_{C'}$ similarly. We set $\delta(v_{C'})=\delta(v_C)$. It is easy to verify that all invariants continue to hold. The running time of the operation is $O(D/\eps)$ plus the time needed to inspect every edge incident to $v_{C'}$, which is already accounted for in the analysis of the total running time for edge inspection.

\paragraph{Edge Deletion.}
Consider a call to procedure $\WSESDEL$ with an edge $e=(u,w)$ that needs to be deleted from the graph $\tilde G$. Assume first that edge $e$ does not belong to the current tree $T$. Then we simply delete  $u$ from $H_w$ and the corresponding bucket of $w$, delete $w$ from $H_u$ and the corresponding bucket of $u$, and terminate the procedure. It is immediate to verify that all invariants continue to hold. Therefore, we assume from now on that edge $(u,w)$ belongs to the current tree $T$, and we assume w.l.o.g. that $w$ is the parent of $u$ in $T$. Let $T_u$ be the subtree of $T$ rooted at $u$. Throughout the update procedure, we view the tree $T_u$ as fixed, while the tree $T$ is changing when vertices of $T_u$ are attached to it.

Throughout the update procedure,  we maintain a heap $H$, containing the vertices of $T_u$ that we need to inspect. The key stored with each vertex $x$ in $H$ is the current value $\delta(x)$. As the update procedure progresses, $\delta(x)$ may grow.
We ensure that throughout the update procedure, Invariants~(\ref{inv: parent})--(\ref{inv: large distances}) hold (where Invariant (\ref{inv: parent}) is only guaranteed to hold for vertices that are currently attached to $T$). Additionally, we ensure that the following invariant holds throughout the update procedure:

\begin{properties}[4]{J}
\item For every vertex $x$ in $T_u$, if $y$ is the vertex in $H_x$ with the smallest key $\delta_x(y)+\ell(x,y)$, then $\delta(x)\leq \delta_x(y)+\ell(x,y)$.\label{prop: extra}
\end{properties}

Note that this invariant holds at the beginning of the update procedure due to Invariant~(\ref{inv: parent}).
 
Initially, $H$ contains a single vertex, the vertex $u$. Consider some vertex $x\in T_u$, and let $y$ be the parent of $x$ in $T_u$, with the corresponding edge denoted by $e'=(x,y)$.  We will only add $x$ to $H$ if $\round_{e'}(\delta(y))$ increases. Moreover, for every vertex $y$ in $H$, and every child vertex $x$ of $y$ in $T_u$, for which $\round_{e'}(\delta(y))$ increased (where $e'=(x,y)$), vertex $x$ is added to $H$ the moment $\round_{e'}(\delta(y))$ increases.

Over the course of the algorithm, every vertex $x$ of $T_u$ is in one of the following four states:

\begin{itemize}
\item {\bf Untouched:} we have never inspected $x$, and it is not currently attached to the tree $T$. If $x$ is untouched, then it does not belong to $H$, and the value $\delta(x)$ did not change during the current update procedure yet. All descendants of $x$ in $T_u$ are also untouched.

\item {\bf Settled:} we attached $x$ to the tree $T$. A settled vertex does not belong to the heap $H$, and Invariant~(\ref{inv: parent}) holds for it. When a vertex $x$ becomes settled, then every child vertex $y$ of $x$ in $T_u$ that is currently untouched also becomes settled, as do all descendants of $y$ in $T_u$ (note that they are also untouched).

\item {\bf Suspicious:} we have added $x$ to the heap $H$ but we did not inspect it yet, and $\delta(x)$ has not changed yet. However, if $y$ is the parent of $x$ in $T_v$, and $e'=(x,y)$ is the corresponding edge, then $\round_{e'}(\delta(y))$ has increased. A suspicious vertex is not attached to the tree $T$; and

\item {\bf Changed:} we have increased $\delta(x)$ in the current update procedure, but $x$ is not yet settled. All changed vertices belong to $H$, and a changed vertex is not attached to the tree. Whenever $\round_{(x,y)}(\delta(x))$ increases for any child vertex $y$ of $x$, we add $y$ to $H$.\end{itemize}

Before we describe our algorithm, we need the following three observations.

\begin{observation}\label{obs: untouched}
Let $y$ be an untouched vertex. Then there is some vertex $y'$, that currently belongs to $H$, such that $y'$ is an ancestor of $y$ in $T_u$, and $\delta(y)\geq \delta(y')$.
\end{observation}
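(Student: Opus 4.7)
The plan is to let $y'$ be the deepest ancestor of $y$ in $T_u$ that currently belongs to $H$, and then verify both that such an ancestor exists, and that the path from $y'$ to $y$ in $T_u$ gives the desired inequality on $\delta$-values. Existence will come from the observation that $u$ itself is an ancestor of $y$ that starts in $H$, and that once an ancestor of $y$ enters $H$ it cannot leave while $y$ is untouched: the only way to leave $H$ is to become settled, but settling an ancestor propagates downward to all currently untouched descendants (in particular $y$), contradicting the assumption that $y$ is still untouched. Hence throughout the update procedure there is always at least one ancestor of $y$ in $H$, and we can pick the deepest such $y'$.

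Next I would characterise the states of all strict descendants of $y'$ that lie on the path from $y'$ to $y$ in $T_u$ (including $y$ itself). None of them can be settled (by the same propagation argument) and none of them are in $H$ (by the choice of $y'$ as the deepest such ancestor), so each such vertex is untouched. In particular the value $\delta(\cdot)$ at each of these vertices still equals its value at the start of the update procedure, i.e.\ its value in the original tree $T$ immediately before the deletion.

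The main content is the comparison of $\delta(y')$ with $\delta(z)$, where $z$ is the child of $y'$ on the path toward $y$. Let $e'=(y',z)$. Since $z$ is still untouched, $z$ was never added to $H$ as a consequence of an inspection of $y'$, which means that $\round_{e'}(\delta(y'))$ has not increased during the update procedure. Combined with the fact that $\delta(y') < \round_{e'}(\delta(y')) \leq \delta(y')_{\mathrm{orig}} + \eps\ell(e')$ and with the pre-deletion invariants (\ref{inv: parent}) and (\ref{inv: regular edge}) applied to the edge $(y',z)$, which together yield $\delta(y')_{\mathrm{orig}} + \ell(e') \leq \delta(z)_{\mathrm{orig}}$, I obtain
\[
\delta(y') \;\leq\; \delta(y')_{\mathrm{orig}} + \eps\ell(e') \;\leq\; \delta(z)_{\mathrm{orig}} - (1-\eps)\ell(e') \;\leq\; \delta(z)_{\mathrm{orig}} \;=\; \delta(z),
\]
where the last equality uses that $z$ is untouched. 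The step I expect to be slightly delicate is precisely this inequality, because one has to juggle the possibly-new $\delta(y')$ against the original values of both $\delta(y')$ and $\delta(z)$ while tracking the $\round$-based inspection rule.

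Finally, walking down the rest of the path from $z$ to $y$, every vertex is untouched and therefore retains its original $\delta$-value, and along any ancestor chain in $T$ the $\delta$-values are non-decreasing (by Invariants (\ref{inv: parent}) and (\ref{inv: regular edge}) together with $\ell(e) \geq 1$). Hence $\delta(z) \leq \delta(y)$, and combining with the previous paragraph gives $\delta(y') \leq \delta(y)$, as required.
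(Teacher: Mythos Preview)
Your approach and the paper's are essentially the same: pick the deepest qualifying ancestor $y'$, argue that the intermediate vertices on the $y'$--$y$ path are untouched, use the fact that $\round_{e'}(\delta(y'))$ has not increased to bound the current $\delta(y')$ against $\delta(z)_{\mathrm{orig}}$, and then walk down the untouched chain. Your inequality computation is correct and matches the paper's.

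There is, however, a flaw in your existence argument. The claim ``once an ancestor of $y$ enters $H$ it cannot leave while $y$ is untouched'' is false as stated: an ancestor $a$ \emph{can} settle (and hence leave $H$) while $y$ remains untouched, provided the child $c$ of $a$ on the $a$--$y$ path is itself in $H$ at that moment. Settling propagates only to \emph{untouched} children (and their descendants), so when $c$ is already in $H$ the propagation stops at $c$ and never reaches $y$. The paper sidesteps this by choosing $y'$ to be the closest ancestor of $y$ that was \emph{ever} in $H$; then every strictly intermediate vertex was never in $H$, and such a vertex can only be settled via propagation from its parent, which would force $y$ to be settled --- so all intermediate vertices are untouched. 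The paper then argues that $y'$ must still be in $H$, since if $y'$ had settled, its (untouched) child on the path would carry the propagation down to $y$. An equivalent fix with your choice of $y'$ is to argue directly that the parent of any untouched vertex cannot be settled (else that vertex would have been settled), so walking up from $y$ you encounter only untouched vertices until you first hit one in $H$; this must occur no later than $u$, which is never untouched.
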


\begin{proof}
Notice that $u$ is an ancestor of $y$ in $T_u$, and it belonged to $H$ at the beginning of the algorithm. Let $y'$ be an ancestor of $y$ that belonged to $H$ at any time during the algorithm, such that $y'$ is closest ancestor to $y$ among all such vertices. Then all vertices lying on the path $P$ connecting $y'$ to $y$ in $T_u$ are either untouched or settled (excluding the vertex $y'$). We claim that $y'$ must currently belong to $H$. Otherwise, it is a settled vertex, and so all other vertices on the path $P$ must be settled, including $y$. Assume that $P=(y=y_0,y_1,\ldots,y_r=y')$. Then for all $0\leq i<r$, value $\delta(y_i)$ did not change yet, and so, from Invariants~(\ref{inv: parent}) and (\ref{inv: regular edge}), for $0\leq i<r-1$, $\delta(y_{i-1})\geq \delta(y_{i})$, and in particular $\delta(y_0)\geq \delta(y_{r-1})$. Since $y_{r-1}$ is untouched, if we denote $e'=(y_{r-1},y')$, then $\round_{e'}(\delta(y'))$ did not change yet, so $\delta(y_{r-1})=\round_{e'}(\delta(y'))+\ell(e')\geq \delta(y')$. Therefore, $\delta(y_0)\geq \delta(y')$.
\end{proof}

\begin{observation}\label{obs: smallest key}
Let $x$ be a vertex in $H$, with smallest value $\delta(x)$, and let $y\in H_x$ be the vertex minimizing $\delta_x(y)+\ell(x,y)$. Assume further that $ \delta_x(y)+\ell(x,y)\leq \delta(x)$. Then either $y\in T\setminus T_u$, or $y$ is settled; in other words, $y$ is currently attached to $T$. Moreover, $\delta(x)=\delta_x(y)+\ell(x,y)$.
\end{observation}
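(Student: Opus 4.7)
The plan is to prove both assertions by combining Invariants~(\ref{inv: regular edge})--(\ref{inv: large distances}) and (\ref{prop: extra}) with a case analysis on the current state of $y$. First, Invariant~(\ref{prop: extra}) says $\delta(x)\leq \delta_x(y)+\ell(x,y)$ for the minimizing $y$, and combining this with the assumption $\delta_x(y)+\ell(x,y)\leq \delta(x)$ immediately yields the equality $\delta(x)=\delta_x(y)+\ell(x,y)$, settling the ``moreover'' part for free. So the substantive work is to show that $y$ must currently be attached to $T$.

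For the attachment claim, I will argue by contradiction: assume $y$ is not attached to $T$. Since all edge lengths in the scaled graph satisfy $\ell(x,y)\geq 1$, the assumption $\delta_x(y)+\ell(x,y)\leq \delta(x)$ forces $\delta_x(y)<\delta(x)$. On the other hand, Invariant~(\ref{inv: regular edge}) gives $\delta_x(y)\geq \delta(y)$, so it suffices to derive $\delta(y)\geq \delta(x)$ in every possible state of $y$. The states to consider are: (i) $y\in T_u$ and untouched; (ii) $y\in T_u$ and suspicious; (iii) $y\in T_u$ and changed; (iv) $y\notin T_u$ and $y\notin V(T)$. In cases (ii) and (iii), vertex $y$ currently lies in $H$, and since $x$ is the vertex in $H$ with smallest key $\delta(\cdot)$, we get $\delta(x)\leq \delta(y)$ directly. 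In case (i), I will invoke Observation~\ref{obs: untouched} to produce an ancestor $y'$ of $y$ in $T_u$ that is currently in $H$ with $\delta(y)\geq \delta(y')$; then $\delta(x)\leq \delta(y')\leq \delta(y)$ by the minimality of $\delta(x)$ in $H$.

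Case (iv) is the trickiest and will be the main obstacle. Here $y$ was removed from $T$ at some earlier update step, so by Invariant~(\ref{inv: large distances}) we have $\delta(y)>(1+\eps)D$. To reach a contradiction I need $\delta(x)\leq (1+\eps)D$. This will follow because $x\in T_u$ was part of the tree at the beginning of the current $\WSESDEL$ call, and the algorithm maintains the invariant that any vertex still carried in the heap $H$ (i.e.\ not yet settled as removed) has $\delta$-value at most $(1+\eps)D$; once the tentative $\delta(x)$ would exceed $(1+\eps)D$, $x$ is settled by being detached from $T$ rather than kept in $H$. Combining the two bounds yields $\delta_x(y)\geq \delta(y)>(1+\eps)D\geq \delta(x)$, again contradicting $\delta_x(y)<\delta(x)$.

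Putting the four cases together, no non-attached status for $y$ is consistent with the hypotheses, so $y$ is either settled (i.e., reattached to $T$ during the current update) or lies in $T\setminus T_u$, which is precisely the statement of the observation; and the ``moreover'' clause follows from the first paragraph.
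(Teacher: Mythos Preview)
Your proof tracks the paper's almost exactly. The ``moreover'' equality via Invariant~(\ref{prop: extra}) and the handling of the cases where $y\in H$ (your (ii) and (iii)) or $y$ is untouched (your (i)) are the paper's argument verbatim; in fact the paper's proof simply asserts that an unattached $y$ must be either in $H$ or untouched, so your case (iv)---a neighbor $y$ permanently detached during a \emph{previous} update---is a case the paper glosses over, and you are right to raise it.

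However, your justification in case (iv) contains a factual error about the procedure. You claim that the algorithm ``maintains the invariant that any vertex still carried in the heap $H$ has $\delta$-value at most $(1+\eps)D$'', and that once $\delta(x)$ would exceed $(1+\eps)D$ the vertex is immediately detached rather than kept in $H$. This is not what happens: when the current minimum $x$ satisfies $\delta(x)\le(1+\eps)D$ but cannot be attached, the algorithm increments $\delta(x)$ by $\eps$ and \emph{returns $x$ to $H$}; only when some later iteration extracts a minimum whose value exceeds $(1+\eps)D$ does it terminate. Hence a vertex in $H$ may carry $\delta$-value as large as $(1+\eps)D+\eps$, and your chain $\delta(y)>(1+\eps)D\ge\delta(x)$ breaks at the last inequality. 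The repair is easy: from the hypothesis $\delta_x(y)+\ell(x,y)\le\delta(x)$ together with $\ell(x,y)\ge 1$ you get $\delta_x(y)\le\delta(x)-1\le(1+\eps)D+\eps-1<(1+\eps)D$, which does contradict $\delta_x(y)\ge\delta(y)>(1+\eps)D$. Alternatively, since the observation is only ever invoked in the branch of the update procedure where the algorithm has already checked $\delta(x)\le(1+\eps)D$, you may simply add that as an explicit hypothesis.
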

\begin{proof}
Assume that $y$ is not currently attached to $T$. Then either $y\in H$, or $y$ is untouched. If $y\in H$, then $\delta(y)\geq \delta(x)$ from the choice of $x$, and, since $\ell(x,y)\geq 1$ and $\delta_x(y)\geq \delta(y)$, we get that $\delta_x(y)+\ell(x,y)> \delta(y)\geq \delta(x)$, a contradiction. Therefore, $y$ is untouched. But then from Observation~\ref{obs: untouched}, there is a vertex $y'$ in $H$ with $\delta(y')\leq \delta(y)$. 
As before, we get that $\delta_x(y)+\ell(x,y)>\delta(y)\geq \delta(y')\geq \delta(x)$ from the choice of $x$, a contradiction. Therefore, $y$ is currently attached to $T$. From Invariant~(\ref{prop: extra}), $\delta(x)=\delta_x(y)+\ell(x,y)$.
\end{proof}

\begin{observation}\label{obs: large distance}
Let $x$ be a vertex in $H$, with minimum $\delta(x)$, and assume that $\delta(x)>(1+\eps)D$. Then for every vertex $y$ that is either currently in $H$ or is untouched, $\delta(y)> (1+\eps)D$, and $\dist(s,y)>D$.
\end{observation}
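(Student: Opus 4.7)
The plan is to prove the two conclusions separately, with the first being a direct consequence of the choice of $x$ and Observation~\ref{obs: untouched}, and the second obtained by a short contradiction argument using Invariant~(\ref{prop: extra}) together with Claim~\ref{claim: approx}.

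\textbf{Step 1 ($\delta(y) > (1+\eps)D$).} I would first handle the two cases. If $y$ is currently in $H$, then by the choice of $x$ as a vertex of $H$ with minimum $\delta$-value, $\delta(y) \geq \delta(x) > (1+\eps)D$. If $y$ is untouched, I would invoke Observation~\ref{obs: untouched} to obtain an ancestor $y'$ of $y$ in $T_u$ with $y' \in H$ and $\delta(y) \geq \delta(y')$, and then again use $\delta(y') \geq \delta(x) > (1+\eps)D$.

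\textbf{Step 2 ($\dist(s,y) > D$).} I would argue by contradiction: suppose some such $y$ has $\dist(s,y) \leq D$ in the current graph $\tG$. Take a shortest $s$--$y$ path $(s=w_0,w_1,\ldots,w_k=y)$, so $\dist(s,w_i)\leq D$ for every $i$, and let $j$ be the smallest index with $w_j$ not attached to the current tree $T$. Since $y$ itself is not attached to $T$, such a $j$ exists, and $w_{j-1}\in V(T)$. I would split into two subcases based on whether $w_j\in V(T_u)$ or not.

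If $w_j\notin V(T_u)$, then $w_j$ has never been in the tree during this update, so Invariant~(\ref{inv: large distances}) gives $\dist(s,w_j) > D$, contradicting $\dist(s,w_j)\leq \dist(s,y)\leq D$. If $w_j\in V(T_u)$, then $w_j$ is either in $H$ or untouched. Applying Invariant~(\ref{prop: extra}) to $w_j$, taking the neighbor minimizing the key, I would bound
\[
\delta(w_j) \;\leq\; \delta_{w_j}(w_{j-1}) + \ell(w_{j-1},w_j).
\]
Invariant~(\ref{inv: regular edge}) then yields $\delta_{w_j}(w_{j-1})\leq \delta(w_{j-1}) + \eps\,\ell(w_{j-1},w_j)$, and since $w_{j-1}\in V(T)$, Claim~\ref{claim: approx} gives $\delta(w_{j-1})\leq (1+\eps)\dist(s,w_{j-1})$. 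Combining and using that $w_j$ lies on a shortest $s$--$y$ path, so that $\dist(s,w_{j-1})+\ell(w_{j-1},w_j)=\dist(s,w_j)$, I obtain
\[
\delta(w_j) \;\leq\; (1+\eps)\dist(s,w_j) \;\leq\; (1+\eps)D.
\]
But by Step 1 applied to $w_j$ (which is in $H$ or untouched), $\delta(w_j)>(1+\eps)D$, a contradiction.

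The main thing to be careful about is ensuring that all invocations are legitimate at the moment the observation is applied: Claim~\ref{claim: approx} must be available (which holds since Invariants (J1)--(J4) are maintained throughout the update), and $w_j$ must actually lie in $T_u$ before I may use Invariant~(\ref{prop: extra}); this is exactly why I separate off the case $w_j\notin V(T_u)$ and dispose of it via Invariant~(\ref{inv: large distances}). Once those points are clarified, the proof is essentially two lines in each step.
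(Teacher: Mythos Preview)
Your proposal is correct and takes essentially the same approach as the paper: the paper scans the shortest $s$--$y$ path from the $y$ side and lets $z$ be the last vertex in $H$ or untouched (so its successor $z'$ is automatically attached to $T$), whereas you scan from the $s$ side and take the first unattached vertex $w_j$, which forces the extra case $w_j\notin V(T_u)$ that you dispatch via Invariant~(\ref{inv: large distances}). One small point of care: the paper justifies $\delta(w_{j-1})\le(1+\eps)\dist(s,w_{j-1})$ not by invoking Claim~\ref{claim: approx} directly during the update (since Invariant~(\ref{inv: parent}) is only guaranteed for attached vertices then), but by noting that the bound held before the update and persists because $\delta(w_{j-1})$ has not changed while distances only increased---your closing remark glosses over this, though your conclusion is still correct.
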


\begin{proof}
Let $y$ be some vertex that is either currently in $H$ or is untouched. We first claim that $\delta(y)>(1+\eps)D$. Indeed, if $y\in H$, then, from the choice of $x$, $\delta(y)\geq \delta(x)>(1+\eps)D$. Otherwise, $y$ is an untouched vertex, so from Observation~\ref{obs: untouched}, there is some vertex $y'\in H$ with $\delta(y)\geq\delta(y')$.
But then $\delta(y')\geq \delta(x)>(1+\eps)D$, and $\delta(y)>(1+\eps)D$.

Next, we show that $\dist(s,y)>D$. Let $P^*$ be the shortest path connecting $s$ to $y$ in $\tG$, that we view as directed from $y$ to $s$, and assume for contradiction that the length of $P^*$ is at most $D$. Let $z$ be the last vertex on $P^*$ that either belongs to $H$ or is untouched, and let $z'$ be the vertex following $z$ on $P^*$, so that $z'$ is currently attached to $T$. Then $\dist(z,s)=\dist(z',s)+\ell(z,z')$. From Claim~\ref{claim: approx}, $\delta(z')\leq (1+\eps)\dist(z',s)$  held before the current update procedure, and, since $\delta(z')$ did not change over the course of the procedure, while the distances may only increase, this continues to hold. From Invariant~(\ref{inv: regular edge}), $\delta_z(z')\leq \delta(z')+\eps\ell(z,z')\leq (1+\eps)\dist(z',s)+\eps\ell(z,z')$. Therefore, vertex $z'$ is stored in $H_z$ with the key $\delta_z(z')+\ell(z,z')\leq (1+\eps)(\dist(z',s)+\ell(z,z'))\leq (1+\eps)\dist(z,s)\leq (1+\eps)D$. But, from Invariant~(\ref{prop: extra}), $\delta(z)\leq \delta_z(z')+\ell(z,z')\leq (1+\eps)D$. However, since $z$ is either untouched or it belongs to $H$, we have already established that $\delta(z)>(1+\eps)D$, a contradiction.
\end{proof}

We are now ready to describe the update procedure, that is performed as long as $H\neq \emptyset$.
In every iteration, we consider the vertex $x\in H$ with the smallest key $\delta(x)$. If $\delta(x)>(1+\eps)D$, then let $S$ be the set of all vertices that currently lie in $H$ or are untouched. From Observation~\ref{obs: large distance}, for each such vertex $y$, $\delta(y)>(1+\eps)D$ and $D(s,y)>D$. We terminate the update procedure, and the vertices of $S$ remain un-attached to $T$. 

From now on, we assume that $\delta(x)\leq (1+\eps)D$. Let $y\in H_x$ be the vertex minimizing $\delta_x(y)+\ell(x,y)$. We now consider two cases. The first case happens when  $\delta_x(y)+\ell(x,y)\leq \delta(x)$. From Observation~\ref{obs: smallest key},  $y$ is currently attached to $T$, and $\delta(x)=\delta_x(y)+\ell(x,y)$. We attach $x$ to $T$ as the child of $y$. All children of $x$ that are currently untouched, as well as all their descendants, now become automatically attached to $T$. It is immediate to verify that Invariants~(\ref{inv: parent})--(\ref{prop: extra}) continue to hold.

The second case is when $\delta_x(y)+\ell(x,y)> \delta(x)$. We then increase $\delta(x)$ by (additive) $\eps$, so it remains an integral multiple of $\eps$, and return $x$ to $H$. 
Additionally, we inspect every edge $e'=(x,w)$, where $w$ lies in the bucket $B_{\delta(x)/\eps}$ for the original value $\delta(x)$. For each such edge, we add $w$ to $H$, and update the value $\delta_w(x)$ to $\round_{e'}(\delta(x))$ (with respect to the new $\delta(x)$ value), updating also the key of $x$ in $H_w$. We also add $w$ to the bucket $B_{\round_{e'}(\delta(x))/\eps}(x)$ of $x$, again using the new value $\delta(x)$. Vertex $w$ now becomes a suspicious vertex.  It is immediate to verify that all invariants continue to hold.


We now analyze the running time of the update procedure. 
We say that a vertex $x$ is touched whenever it is removed from $H$ for inspection, or added to $H$ for the first time. 
Notice that $x$ can be added to $H$ in two cases: either the algorithm removed the edge $e$ incident to $x$, in which case both the addition of $x$ to $H$ and its first inspection are charged to $e$, or when the value $\round_{(x,y)}(\delta(y))$ increases for its parent $y$ in $T$. 
In this case, we again charge the addition of $x$ to $H$ and the first inspection of $x$ to the edge $(x,y)$.
The number of the remaining inspections of $x$ is bounded by the number of times the value $\delta(x)$ has increased in the current update procedure.

 When $x$ is inspected, we perform two basic heap operations: delete $x$ from $H$, and find the smallest element of $H_x$. 
If $x$ is attached to the tree $T$, there is nothing more to do (we note that we do not explicitly set the status of the untouched children of $x$ and their descendants as ``settled''; such vertices will never be added to the heap $H$, and so they will remain attached to the tree $T$ through their current parents).
If $\delta(x)$ increases, then we inspect the corresponding bucket $B_{(\delta(x)/\eps-1)}(x)$ of $x$. For every neighbor $w$ of $x$ that lies in that bucket, we need to update the key of $x$ in $H_w$ and to add $w$ to $H$. This involves a constant number of heap operations for each such neighbor $w$ of $x$, and is charged to the inspection of the edge $e=(w,x)$ due to $x$, since $\round_e(x)$ has increased. Therefore, every inspection of a vertex $x$ takes time $O(\log n)$, plus additional time that is charged to the inspection of its adjacent edges.

Overall, since every increase of the value $\delta(x)$ is by at least $\eps$, and $\delta(x)\leq 2D$, the number of inspections charged to $x$ is bounded by $O(D/\eps)$, and the total time spent on these inspections is $O\left(\frac{D\log m}{\eps}\right)$. The total running time of the inspections that are charged to the edges was analyzed before, and is bounded by $O\left (\sum_{e\in E}\frac{D\log m}{\eps\ell(e)}\right )$. Recall that the initialization takes time $O\left(m+\frac{Dn\log m}{\eps}\right)$. Therefore, the total running time of the algorithm is $O\left(\frac{Dn\log n}{\eps}\right)+O\left (\sum_{e\in E}\frac{D\log n}{\eps\ell(e)}\right )$.


\section{Proof of Lemma~\ref{lem: flow or cut edge version}}\label{appx: sec: flow or cut edge version}

The proof is almost identical to the proof of Lemma~\ref{lem: flow or cut}, except that now we deal with edge-congestion instead of vertex-congestion, and with regular cuts instead of vertex cuts. As before, the algorithm is partitioned into phases, where the input to phase $i$ is a pair of subsets $A_i\subseteq A$, $B_i\subseteq B$ of vertices that were not routed yet, with $|A_i|=|B_i|$. We will ensure that during the $i$th phase, we either compute a set $\pset_i$ of at least $\frac{|A_i|\log^3 n}{\ell^2}$ edge-disjoint paths,  where every path connects a distinct vertex of $A_i$ to a distinct vertex of $B_i$, such that the length of every path in $\pset_i$ is at most $\ell$; or we will return a cut $(X,Y)$ with the required properties. The algorithm terminates once $|A_i|< z$, and so $\bigcup_{i'=1}^{i-1}\pset_{i'}$ contains more than $|A|-z$ paths.
Since we are guaranteed that for every $i$, $|A_i|\leq |A_{i-1}|(1-\log^3 n/\ell^2)$, the number of phases is bounded by $\ell^2/\log^2n$. The final set of paths is $\pset=\bigcup_i\pset_i$, and, since the paths in every set $\pset_i$ are edge-disjoint, the paths in $\pset$ cause edge-congestion at most $\ell^2/\log^2n$. We will also ensure that every phase runs in time $O(\hn\ell\log^3n)$, which will ensure that the total running time is  $ O(\hn\ell^3\log n)$, as required. The input to the first phase is $A_1=A$ and $B_1=B$.
It is now enough to describe the execution of a single phase. The next claim, which is an analogue of Claim~\ref{claim: flow or cut one phase} for edge-disjoint routing, and whose proof is almost identical, will finish the proof of the lemma.

\begin{claim}\label{claim: flow or cut one phase edge version}
There is a deterministic algorithm, that, given a subgraph $W'\subseteq W$ containing at least half the vertices of $W$, together with two equal-cardinality subsets  $A',B'$ of $V(W')$, and with a parameter $\ell>2{\log^{1.5} n}$, computes one of the following:

\begin{itemize}
\item either a collection $\pset'$ of at least $\frac{|A'|\log^3 n}{\ell^2}$ edge-disjoint paths in $W'$, where each path connects a distinct vertex of $A'$ to a distinct vertex of $B'$ and has length at most $\ell$; or

\item a cut $(X,Y)$ in $W'$, with $|E_{W'}(X,Y)|\leq \frac{4\log^4 n}{\ell}\min\set{|X|,|Y|}$, and $|X|,|Y|\geq |A'|/2$.
\end{itemize}

The running time of the algorithm is $ O(\hn\ell\log^3n)$.
\end{claim}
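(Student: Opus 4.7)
The plan is to mirror the proof of Claim~\ref{claim: flow or cut one phase}, but replace vertex-disjointness with edge-disjointness and the vertex cut with an edge cut. I would start by constructing an auxiliary graph $H$: take $W'$ and add a source $s$ joined to every vertex of $A'$ and a sink $t$ joined to every vertex of $B'$. Initialize $\pset' = \emptyset$ and set up $\EST(H,s,\ell)$. While $\dist_H(s,t)\le \ell$, trace a shortest $s$-$t$ path $P$ upward through the tree (time $O(\ell)$), add $P$ (with its endpoints $s,t$ stripped) to $\pset'$, and then \emph{delete all edges of $P$ from $H$} rather than deleting its internal vertices. This preserves edge-disjointness of $\pset'$. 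Since $|E(H)| = O(\hn \log^3 n)$ (using that the maximum degree in $W'$ is at most $\log^3 n$) and the distance cutoff is $\ell$, the Even-Shiloach data structure supports the entire sequence in $O(\hn \ell \log^3 n)$ time.

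If at termination $|\pset'| \ge |A'|\log^3 n / \ell^2$, we return $\pset'$ and are done. Otherwise, I run a layered BFS in the current $H'$ analogous to the vertex version: set $S_0 = A'\cap V(H')$, and let $S_{j+1}$ be $S_j$ together with all its neighbors in $H'\setminus\{s,t\}$; define $T_j$ symmetrically from $B'$. If every layer satisfied $|S_{j+1}| \ge |S_j|(1+2\log n/\ell)$ (and likewise for $T_j$) up to $j=\ell/2$, then both $|S_{\ell/2}|$ and $|T_{\ell/2}|$ would exceed $|V(W')|/2$, and a short $s$-$t$ path would still exist in $H'$, contradicting termination. So I can pick some $j<\ell/2$ with, say, $|S_{j+1}|\le |V(W')|/2$ and $|S_{j+1}|<|S_j|(1+2\log n/\ell)$, and output the cut $(X,Y) = (S_j,\ V(W')\setminus S_j)$.

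The key analysis is to bound $|E_{W'}(X,Y)|$. Edges across the cut in $W'$ split into two types: those still in $H'$ and those removed by the path-extraction step. For the first type, every crossing edge in $H'$ has one endpoint in $S_{j+1}\setminus S_j$, so the max-degree-$\log^3 n$ property of $W'$ gives a bound of $|S_{j+1}\setminus S_j|\cdot \log^3 n \le (2\log n/\ell)\cdot |S_j|\cdot \log^3 n = (2\log^4 n/\ell)|S_j|$. For the second type, each extracted path contributes at most $\ell$ edges, so the total is at most $|\pset'|\cdot \ell \le |A'|\log^3 n/\ell$. Since $\ell > 2\log^{1.5} n$ forces $|\pset'| < |A'|/4$, we have $|S_j| \ge |A'|(1 - \log^3 n/\ell^2) \ge 3|A'|/4$, so $|A'|\log^3 n/\ell \le (2\log^4 n/\ell)|S_j|$. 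Summing, $|E_{W'}(X,Y)|\le (4\log^4 n/\ell)|S_j|$. The size conditions $|X|,|Y|\ge |A'|/2$ follow from $|S_j|\ge 3|A'|/4$ and $|V(W')\setminus S_j|\ge |V(W')|/2 \ge |A'|$ (recalling $A',B'\subseteq V(W')$ are disjoint equal-sized sets).

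The main obstacle is the bookkeeping for the second edge type: edges of the paths already extracted are absent from $H'$ but still present in $W'$, and they can cross the cut. In the vertex-disjoint version this was absorbed into $|Y|$ because extracted path vertices are automatically part of the separator; here, we have to separately charge the removed-path edges to the cut and verify they stay within the claimed budget, and it is precisely the max-degree bound on $W'$ that makes the BFS side of the accounting compatible with this extra term.
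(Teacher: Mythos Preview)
Your proposal is correct and follows essentially the same approach as the paper's proof: build the auxiliary graph $H$ with source and sink, greedily peel off short $s$--$t$ paths by deleting their edges, and if too few are found, locate a slowly-growing BFS layer $S_j$ and bound $|E_{W'}(S_j,V(W')\setminus S_j)|$ by splitting into edges on extracted paths (at most $|\pset'|\cdot\ell$) and edges still in $H'$ (at most $|S_{j+1}\setminus S_j|\cdot\log^3 n$). The only cosmetic discrepancy is your definition $S_0=A'\cap V(H')$, which in the edge-deletion setting equals all of $A'$ (since no vertices are removed); the paper instead starts the BFS from the \emph{unrouted} vertices of $A'$, but either choice yields $|S_j|\ge |A'|/2$ and the rest of the argument is unchanged.
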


\begin{proof}
We build a new graph $H$: start with  graph $W'$, and add a source vertex $s$ that connects to every vertex in $A'$ with an edge; similarly, add a destination vertex $t$, that connects to every vertex in $B'$ with an edge. Set up single-source shortest path data structure $\EST(H,s,\ell+1)$, up to distance $(\ell+1)$ in $H$, with $s$ being the source. While the distance from $s$ to $t$ is less than $(\ell+1)$, choose any path $P$ in $H$ connecting $s$ to $t$, that has at most $(\ell+1)$ edges. Let $P'$ be the path obtained from $P$, after we delete its endpoints $s$ and $t$, so that $P'$ now connects some vertex $a\in A'$ to some vertex $b\in B'$. Add $P'$ to $\pset'$, and delete all edges of $P$ from $H$. Notice that, in particular, the edges $(s,a)$ and $(b,t)$ are deleted from $H$ -- this ensures that the paths in the final set $\pset'$ have distinct endpoints. As before, finding the path $P$ takes time $O(\ell)$, since we simply follow the shortest-path tree that the \EST data structure maintains. The total update time of the data structure is $ O(|E(H)|\ell)$, and the total running time of the algorithm, that includes selecting the paths and deleting their edges from $H$, is bounded by $O(|E(H)|\cdot \ell)$. We now consider two cases. First, if we have managed to route at least $\frac{|A'|\log^3 n}{\ell^2}$ paths, then we terminate the algorithm, and return the set $\pset'$ of paths.

Otherwise, consider the current graph $H'$, that is obtained from $H$ after all edges participating in the paths in $\set{P\mid P'\in \pset'}$ were deleted. We perform a BFS from the vertices of $A'$ in this graph: start from a set $S_0$ containing all vertices of $A'$ that are connected to $s$ with an edge in $H'$. Given the current vertex set $S_j$, let $S_{j+1}$ contain all vertices of $S_{j}$ and all neighbors of $S_j$ in $H'\setminus\set{s,t}$. 
Similarly, we perform a BFS from the vertices of $B'$ in $H'$: 
start from a set $T_0$ containing all vertices of $B'$ that are connected to $t$ with an edge in $H'$. Given the current vertex set $T_j$, let $T_{j+1}$ contain all vertices of $T_{j}$ and all neighbors of $T_j$ in $H'\setminus\set{s,t}$.

Using exactly the same reasoning as in the proof of Claim~\ref{claim: flow or cut one phase}, there is an index $j<\ell/2$, such that
either (i) $|S_{j+1}|\leq n/2$ and $|S_{j+1}|<|S_j|\left (1+\frac{2\log n}{\ell}\right)$; or (ii) $|T_{j+1}|\leq n/2$ and $|T_{j+1}|<|T_j|\left (1+\frac{2\log n}{\ell}\right)$.

 We assume w.l.o.g. that $|S_{j+1}|\leq n/2$ and $|S_{j+1}|<|S_j|\left (1+\frac{2\log n}{\ell}\right)$, and we define the cut $(X,Y)$, by setting $X=S_j$ and $Y=V(W')\setminus X$. Clearly, $X$ contains all vertices of $A'$ that still need to be routed, so $|X|\geq |A'|\left(1-\frac{\log^3 n}{\ell^2}\right )\geq |A'|/2$. Since $|S_{j+1}|\leq n/2$, $|Y|\geq n/2$ and in particular $|Y|\geq |X|$. We now bound $|E_{W'}(X,Y)|$.
 
 The set $E_{W'}(X,Y)$ of edges consists of two subsets: edges that lie on paths in $\pset'$, and the remaining edges, that belong to the graph $H'$. The cardinality of the former set of edges is bounded by $ |\pset'|\cdot \ell\leq \frac{|A'|\log^3 n}{\ell^2}\cdot \ell\leq \frac{|A'|\log^3 n}{\ell}\leq \frac{2|S_j|\log^3 n}{\ell}$. In order to bound the cardinality of the second set of edges, let $Z=S_{j+1}\setminus S_j$, so $|Z|< \frac{2\log n}{\ell}|S_j|$. Every edge in $E_{H'}(X,Y)$ connects a vertex of $S_j$ to a vertex of $Z$. Since the maximum vertex degree in $W$ is bounded by $\log^3n$, the number of such edges is at most $|Z|\cdot \log^3n\leq \frac{2\log^4n}{\ell}|S_j|$. We conclude that $|E_{W'}(A,B)|\leq \frac{4\log^4n}{\ell}\min\set{|X|,|Y|}$, and that $|X|,|Y|\geq |A'|/2$.

 The running time of the first part of the algorithm, when the paths of $\pset'$ are computed is $O(|E(H)|\cdot \ell)$, as discussed above. The second part only involves performing two BFS searches in graph $H'$ and computing the final cut, and has running time $O(|E(H)|)$. Since $|E(H)|=O(\hn\log^3n)$, the total running time is $O(\hn\ell\log^3n)$.
\end{proof}

\bibliographystyle{alpha}
\bibliography{fast-vertex-sparsest}

\newcommand{\etalchar}[1]{$^{#1}$}
\def\cprime{$'$} \def\cprime{$'$}
\begin{thebibliography}{CKM{\etalchar{+}}11}

\bibitem[AAP93]{AAP93}
Baruch Awerbuch, Yossi Azar, and Serge~A. Plotkin.
\newblock Throughput-competitive on-line routing.
\newblock In {\em 34th Annual Symposium on Foundations of Computer Science,
  Palo Alto, California, USA, 3-5 November 1993}, pages 32--40, 1993.

\bibitem[ACT14]{abraham2014fully}
Ittai Abraham, Shiri Chechik, and Kunal Talwar.
\newblock Fully dynamic all-pairs shortest paths: Breaking the {O (n)} barrier.
\newblock In {\em LIPIcs-Leibniz International Proceedings in Informatics},
  volume~28. Schloss Dagstuhl-Leibniz-Zentrum fuer Informatik, 2014.

\bibitem[AHK12]{AroraHK12}
Sanjeev Arora, Elad Hazan, and Satyen Kale.
\newblock The multiplicative weights update method: a meta-algorithm and
  applications.
\newblock {\em Theory of Computing}, 8(1):121--164, 2012.

\bibitem[AK16]{AroraK16}
Sanjeev Arora and Satyen Kale.
\newblock A combinatorial, primal-dual approach to semidefinite programs.
\newblock {\em J. {ACM}}, 63(2):12:1--12:35, 2016.

\bibitem[BC16]{BernsteinChechik}
Aaron Bernstein and Shiri Chechik.
\newblock Deterministic decremental single source shortest paths: beyond the
  {O(mn)} bound.
\newblock In {\em Proceedings of the forty-eighth annual ACM symposium on
  Theory of Computing}, pages 389--397. ACM, 2016.

\bibitem[BC17]{BernsteinChechikSparse}
Aaron Bernstein and Shiri Chechik.
\newblock Deterministic partially dynamic single source shortest paths for
  sparse graphs.
\newblock In {\em Proceedings of the Twenty-Eighth Annual ACM-SIAM Symposium on
  Discrete Algorithms}, pages 453--469. SIAM, 2017.

\bibitem[Ber09]{APSPfully2}
Aaron Bernstein.
\newblock Fully dynamic (2 + epsilon) approximate all-pairs shortest paths with
  fast query and close to linear update time.
\newblock In {\em Proceedings of the 2009 50th Annual IEEE Symposium on
  Foundations of Computer Science}, FOCS '09, pages 693--702, Washington, DC,
  USA, 2009. IEEE Computer Society.

\bibitem[Ber16]{bernstein16}
Aaron Bernstein.
\newblock Maintaining shortest paths under deletions in weighted directed
  graphs.
\newblock {\em SIAM Journal on Computing}, 45(2):548--574, 2016.

\bibitem[Ber17]{Bernstein}
Aaron Bernstein.
\newblock Deterministic partially dynamic single source shortest paths in
  weighted graphs.
\newblock In {\em LIPIcs-Leibniz International Proceedings in Informatics},
  volume~80. Schloss Dagstuhl-Leibniz-Center for Computer Science, 2017.

\bibitem[BHS07]{BaswanaHS07}
Surender Baswana, Ramesh Hariharan, and Sandeep Sen.
\newblock Improved decremental algorithms for maintaining transitive closure
  and all-pairs shortest paths.
\newblock {\em J. Algorithms}, 62(2):74--92, 2007.

\bibitem[BKS12]{APSPfully3}
Surender Baswana, Sumeet Khurana, and Soumojit Sarkar.
\newblock Fully dynamic randomized algorithms for graph spanners.
\newblock {\em ACM Transactions on Algorithms (TALG)}, 8(4):35, 2012.

\bibitem[BR11]{BernsteinR11}
Aaron Bernstein and Liam Roditty.
\newblock Improved dynamic algorithms for maintaining approximate shortest
  paths under deletions.
\newblock In {\em Proceedings of the Twenty-Second Annual {ACM-SIAM} Symposium
  on Discrete Algorithms, {SODA} 2011, San Francisco, California, USA, January
  23-25, 2011}, pages 1355--1365, 2011.

\bibitem[Che18]{chechik}
Shiri Chechik.
\newblock Near-optimal approximate decremental all pairs shortest paths.
\newblock In {\em Proc. of the IEEE 59th Annual Symposium on Foundations of
  Computer Science}, 2018.

\bibitem[CKM{\etalchar{+}}11]{ChristianoKMST11}
Paul Christiano, Jonathan~A. Kelner, Aleksander Madry, Daniel~A. Spielman, and
  Shang{-}Hua Teng.
\newblock Electrical flows, {Laplacian} systems, and faster approximation of
  maximum flow in undirected graphs.
\newblock In {\em Proceedings of the 43rd {ACM} Symposium on Theory of
  Computing, {STOC} 2011, San Jose, CA, USA, 6-8 June 2011}, pages 273--282,
  2011.

\bibitem[CS19]{SSSPandAPSP}
Julia Chuzhoy and Thatchaphol Saranurak.
\newblock On dynamic shortest paths with adaptive adversary, 2019.
\newblock Unpublished manuscript.

\bibitem[DI04]{APSP-fully}
Camil Demetrescu and Giuseppe~F. Italiano.
\newblock A new approach to dynamic all pairs shortest paths.
\newblock {\em Journal of the ACM (JACM)}, 51(6):968--992, 2004.

\bibitem[Din06]{Dinitz}
Yefim Dinitz.
\newblock Dinitz' algorithm: The original version and {Even's} version.
\newblock In {\em Theoretical computer science}, pages 218--240. Springer,
  2006.

\bibitem[ES81]{EvenS}
Shimon Even and Yossi Shiloach.
\newblock An on-line edge-deletion problem.
\newblock {\em Journal of the ACM (JACM)}, 28(1):1--4, 1981.

\bibitem[Fle00]{Fleischer00}
Lisa Fleischer.
\newblock Approximating fractional multicommodity flow independent of the
  number of commodities.
\newblock {\em {SIAM} J. Discrete Math.}, 13(4):505--520, 2000.

\bibitem[GK98]{GK98}
Naveen Garg and Jochen K{\"{o}}nemann.
\newblock Faster and simpler algorithms for multicommodity flow and other
  fractional packing problems.
\newblock In {\em 39th Annual Symposium on Foundations of Computer Science,
  {FOCS} '98, November 8-11, 1998, Palo Alto, California, {USA}}, pages
  300--309, 1998.

\bibitem[HdLT01]{dynamic-connectivity}
Jacob Holm, Kristian de~Lichtenberg, and Mikkel Thorup.
\newblock Poly-logarithmic deterministic fully-dynamic algorithms for
  connectivity, minimum spanning tree, 2-edge, and biconnectivity.
\newblock {\em J. ACM}, 48(4):723--760, July 2001.

\bibitem[HK95]{HenzingerKing}
Monika~Rauch Henzinger and Valerie King.
\newblock Fully dynamic biconnectivity and transitive closure.
\newblock In {\em Foundations of Computer Science, 1995. Proceedings., 36th
  Annual Symposium on}, pages 664--672. IEEE, 1995.

\bibitem[HKN14a]{HenzingerKN14_focs}
Monika Henzinger, Sebastian Krinninger, and Danupon Nanongkai.
\newblock Decremental single-source shortest paths on undirected graphs in
  near-linear total update time.
\newblock In {\em 55th {IEEE} Annual Symposium on Foundations of Computer
  Science, {FOCS} 2014, Philadelphia, PA, USA, October 18-21, 2014}, pages
  146--155, 2014.

\bibitem[HKN14b]{HenzingerKN14_soda}
Monika Henzinger, Sebastian Krinninger, and Danupon Nanongkai.
\newblock A subquadratic-time algorithm for decremental single-source shortest
  paths.
\newblock In {\em Proceedings of the Twenty-Fifth Annual {ACM-SIAM} Symposium
  on Discrete Algorithms, {SODA} 2014, Portland, Oregon, USA, January 5-7,
  2014}, pages 1053--1072, 2014.

\bibitem[HKN16]{henzinger16}
Monika Henzinger, Sebastian Krinninger, and Danupon Nanongkai.
\newblock Dynamic approximate all-pairs shortest paths: Breaking the {O(mn)}
  barrier and derandomization.
\newblock {\em SIAM Journal on Computing}, 45(3):947--1006, 2016.

\bibitem[HKNS15]{HenzingerKNS15}
Monika Henzinger, Sebastian Krinninger, Danupon Nanongkai, and Thatchaphol
  Saranurak.
\newblock Unifying and strengthening hardness for dynamic problems via the
  online matrix-vector multiplication conjecture.
\newblock In {\em Proceedings of the Forty-Seventh Annual {ACM} on Symposium on
  Theory of Computing, {STOC} 2015, Portland, OR, USA, June 14-17, 2015}, pages
  21--30, 2015.

\bibitem[KLOS14]{KelnerLOS14}
Jonathan~A. Kelner, Yin~Tat Lee, Lorenzo Orecchia, and Aaron Sidford.
\newblock An almost-linear-time algorithm for approximate max flow in
  undirected graphs, and its multicommodity generalizations.
\newblock In {\em Proceedings of the Twenty-Fifth Annual {ACM-SIAM} Symposium
  on Discrete Algorithms, {SODA} 2014, Portland, Oregon, USA, January 5-7,
  2014}, pages 217--226, 2014.

\bibitem[KRV09]{KRV}
Rohit Khandekar, Satish Rao, and Umesh Vazirani.
\newblock Graph partitioning using single commodity flows.
\newblock {\em Journal of the ACM (JACM)}, 56(4):19, 2009.

\bibitem[LRS13]{LeeRS13}
Yin~Tat Lee, Satish Rao, and Nikhil Srivastava.
\newblock A new approach to computing maximum flows using electrical flows.
\newblock In {\em Symposium on Theory of Computing Conference, STOC'13, Palo
  Alto, CA, USA, June 1-4, 2013}, pages 755--764, 2013.

\bibitem[LS14]{LeeS14}
Yin~Tat Lee and Aaron Sidford.
\newblock Path finding methods for linear programming: Solving linear programs
  in {\~{o}}(vrank) iterations and faster algorithms for maximum flow.
\newblock In {\em 55th {IEEE} Annual Symposium on Foundations of Computer
  Science, {FOCS} 2014, Philadelphia, PA, USA, October 18-21, 2014}, pages
  424--433, 2014.

\bibitem[Mad10]{Madry10_stoc}
Aleksander Madry.
\newblock Faster approximation schemes for fractional multicommodity flow
  problems via dynamic graph algorithms.
\newblock In {\em Proceedings of the 42nd {ACM} Symposium on Theory of
  Computing, {STOC} 2010, Cambridge, Massachusetts, USA, 5-8 June 2010}, pages
  121--130, 2010.

\bibitem[Mad16]{Madry16}
Aleksander Madry.
\newblock Computing maximum flow with augmenting electrical flows.
\newblock In {\em {IEEE} 57th Annual Symposium on Foundations of Computer
  Science, {FOCS} 2016, 9-11 October 2016, Hyatt Regency, New Brunswick, New
  Jersey, {USA}}, pages 593--602, 2016.

\bibitem[Mad18]{Madry18_ICM}
Aleksander Madry.
\newblock Gradients and flows: Continuous optimization approaches to the
  maximum flow problem.
\newblock In {\em Proceedings of the International Congress of Mathematicians
  2018 (ICM 2018)}. WORLD SCIENTIFIC, 2018.

\bibitem[NS17]{Saranurak}
Danupon Nanongkai and Thatchaphol Saranurak.
\newblock Dynamic spanning forest with worst-case update time: adaptive, las
  vegas, and {O}(n\({}^{\mbox{1/2 - {\(\epsilon\)}}}\))-time.
\newblock In {\em Proceedings of the 49th Annual {ACM} {SIGACT} Symposium on
  Theory of Computing, {STOC} 2017, Montreal, QC, Canada, June 19-23, 2017},
  pages 1122--1129, 2017.

\bibitem[OSVV08]{OrecchiaSVV08}
L.~Orecchia, L.J. Schulman, U.V. Vazirani, and N.K. Vishnoi.
\newblock On partitioning graphs via single commodity flows.
\newblock In {\em Proc. of ACM STOC}, pages 461--470. ACM, 2008.

\bibitem[Pen16]{Peng16}
Richard Peng.
\newblock Approximate undirected maximum flows in
  \emph{O}(\emph{m}polylog(\emph{n})) time.
\newblock In {\em Proceedings of the Twenty-Seventh Annual {ACM-SIAM} Symposium
  on Discrete Algorithms, {SODA} 2016, Arlington, VA, USA, January 10-12,
  2016}, pages 1862--1867, 2016.

\bibitem[RZ11]{RodittyZ11}
Liam Roditty and Uri Zwick.
\newblock On dynamic shortest paths problems.
\newblock {\em Algorithmica}, 61(2):389--401, 2011.

\bibitem[RZ12]{rodittyZ2}
Liam Roditty and Uri Zwick.
\newblock Dynamic approximate all-pairs shortest paths in undirected graphs.
\newblock {\em SIAM Journal on Computing}, 41(3):670--683, 2012.

\bibitem[San05]{APSPfully4}
Piotr Sankowski.
\newblock Subquadratic algorithm for dynamic shortest distances.
\newblock In {\em International Computing and Combinatorics Conference}, pages
  461--470. Springer, 2005.

\bibitem[She13]{Sherman13}
Jonah Sherman.
\newblock Nearly maximum flows in nearly linear time.
\newblock In {\em 54th Annual {IEEE} Symposium on Foundations of Computer
  Science, {FOCS} 2013, 26-29 October, 2013, Berkeley, CA, {USA}}, pages
  263--269, 2013.

\bibitem[Tho99]{linear-sssp}
Mikkel Thorup.
\newblock Undirected single-source shortest paths with positive integer weights
  in linear time.
\newblock {\em J. ACM}, 46(3):362--394, May 1999.

\bibitem[Tho04]{thorupfully}
Mikkel Thorup.
\newblock Fully-dynamic all-pairs shortest paths: Faster and allowing negative
  cycles.
\newblock In {\em Scandinavian Workshop on Algorithm Theory}, pages 384--396.
  Springer, 2004.

\bibitem[Tho05]{thorup2005worst}
Mikkel Thorup.
\newblock Worst-case update times for fully-dynamic all-pairs shortest paths.
\newblock In {\em Proceedings of the thirty-seventh annual ACM symposium on
  Theory of computing}, pages 112--119. ACM, 2005.

\end{thebibliography}

\end{document}